\newcommand{\refRoman}[1]{\uppercase\expandafter{\romannumeral\getrefnumber{#1}\relax}}
\newcommand{\BOX}[2]{%
\begin{tcolorbox}[
  float*=tp, floatplacement=tp,   
  breakable,                      
  width=\textwidth,               
  colback=gray!6,
  colframe=gray!60,
  boxrule=0.6pt,
  left=6pt,right=6pt,top=6pt,bottom=6pt,boxsep=0pt,
  title={#1}, fonttitle=\bfseries
]#2\end{tcolorbox}%
}
\newcommand{\NUMBOX}[2]{%
\begin{tcolorbox}[
  float*=tp, floatplacement=tp,
  breakable,
  width=\textwidth,
  colback=teal!4,                
  colframe=teal!60,
  boxrule=0.6pt,
  left=6pt,right=6pt,top=6pt,bottom=6pt,boxsep=0pt,
  title={#1}, fonttitle=\bfseries
]#2\end{tcolorbox}%
}
\begin{document}


\title{Quantifying Transient Dynamics in Heterogeneous Networks under Various Inputs}

\author{Xiaoge Bao$^{1,2}$}
\author{Wei P. Dai$^{1,3,\dagger}$}
\author{Jan Nagler$^{4,5}$}
\author{Wei Lin$^{1,2,3,6}$}

\affiliation{$^1$Research Institute of Intelligent Complex Systems, Fudan University, Shanghai, China}
\affiliation{$^2$Institute of Science and Technology for Brain-Inspired Intelligence, Fudan University, Shanghai, China}
\affiliation{$^3$Shanghai Artificial Intelligence Laboratory, Shanghai, China}
\affiliation{$^4$Centre for Human and Machine Intelligence, Frankfurt School of Finance \& Management, Frankfurt, Germany.}
\affiliation{$^5$Deep Dynamics, Frankfurt School of Finance \& Management, Frankfurt, Germany.}
\affiliation{$^6$State Key Laboratory of Medical Neurobiology and MOE Frontiers Center for Brain Science, Fudan University, Shanghai, China}

\email{$^\dagger$corresponding author: weidai@fudan.edu.cn}




\begin{abstract}

\color{magenta}
Transient responses to localized inputs are crucial for predicting and controlling signal propagation in networked systems, including neural processing, power grids, and epidemic control.
However, prevailing theoretical frameworks often assume homogeneous structures with constant or pulse-like inputs, which overlook how structural heterogeneity and input variety govern transient dynamics, producing outcomes that often diverge qualitatively or quantitatively from empirical observations. 
To address this gap, we develop a unified theory that relates input strength and timing to the magnitude and latency of transients in heterogeneous networks.
%
%
Beyond standard spectral analysis, we disentangle self-dynamics from network coupling across input types using Neumann series (walk-sum) expansion, yielding intuitive rules for transient behavior.
%
%
We show that node-to-node propagation amounts to a sum over all directed walks, each weighted by the self-dynamics of the visited nodes via a recursive form.
%
%
We further quantify heterogeneity and find that both response time and strength increase with degree-distribution variance and with the abundance of 
motifs. Together, these results reveal relationships across input types and heterogeneous structures, extend existing theory to more general settings, and provide practical guidelines for optimizing response strength and timing.
\end{abstract}

\begin{abstract}
Understanding how transient dynamics unfold in response to localized inputs is central to predicting and controlling signal propagation in network systems, including neural processing, epidemic intervention, and power-grid resilience. Existing theoretical frameworks typically assume homogeneous network structures and constant or pulse-like inputs, overlooking how heterogeneity in structure and variety of input shape transient responses, often leading to discrepancies between theory and observation. Here, we develop a general theoretical framework that establishes quantitative relationships between the strength and timing of transient dynamics to various inputs in heterogeneous networks. Using a Neumann series expansion, we disentangle the distinct roles of self-dynamics and network structures beyond the scope of standard spectral theory, yielding intuitive and interpretable formulations. We show that node-to-node propagation can be represented as the cumulative effect of all directed walks, each weighted recursively by the self-dynamics of participating nodes. This framework further quantifies how heterogeneity, such as broad degree distributions or additional motifs, amplifies both response strength and time. Our results advance the understanding of transient dynamics across network structures and input types, extend the existing theory to more general settings, and provide practical guidance for optimizing transient responses.
\end{abstract}

\maketitle

\section{INTRODUCTION}

Understanding how transient dynamics are triggered by localized inputs is crucial for elucidating signal propagation in network systems. Propagation characteristics, such as the strength and timing of transient dynamics, play a central role in predicting and controlling signal propagation and are essential for assessing system resilience and stability ~\cite{barzel2013universality, ji2023signal, timme2019propagation, hens2019spatiotemporal, meena2023emergent, yang2023reactivity, krakovska2024resilience}. These dynamics have profound implications for real-world applications, including neural processing~\cite{wang2019hierarchical,wang2020large},  epidemic intervention ~\cite{pastor2015epidemic,iannelli2018reaction,wang2003epidemic}, communication network design~\cite{arenas2008synchronization,ai2024propagation}, and power-grid resilience~\cite{auer2016impact,wang2022measurement}. Existing theoretical frameworks have established connections between network structures and transient responses. Typically formulated under idealized conditions such as thermodynamic limits, these frameworks provide analytical tools for identifying propagation patterns across complex systems and for enabling targeted interventions at nodes or links within specific system classes \cite{barzel2013universality,hens2019spatiotemporal,harush2017dynamic,bao2022impact}.

Despite substantial progress, prevailing frameworks remain constrained by two key limitations: neglecting fine structural details through mean-field approximations, which imply homogeneity, and relying exclusively on deterministic inputs, which overlook the variety of real-world stimuli. These limitations hinder mechanistic understanding and precise prediction of transient dynamics in real finite-size systems that exhibit structural heterogeneity and operate under various input types and conditions \cite{chaudhuri2014diversity, chaudhuri2015large,li2022hierarchical, demirtacs2019hierarchical}. For instance, in theoretical neuroscience, macroscopic models at the neuronal level often assume i.i.d. Gaussian connectivity, justified by the relative insensitivity of order parameters to microscopic details in the large-system limit, consistent with sampling-based experimental approaches \cite{sompolinsky1988chaos, rajan2006eigenvalue, wainrib2013topological}. However, zooming out to the scale of brain regions, system sizes are way smaller and modern experiments can resolve the full matrix of connectivity elements. At this level, a brain region receives various types of inputs, and structural heterogeneity becomes fundamental to its functional state, rendering homogeneous approximations invalid ~\cite{chaudhuri2015large, chaudhuri2015large, li2022hierarchical}. This raises pivotal questions: How do structural heterogeneities shape transient dynamics in finite-size networks under different inputs? More fundamentally, do universal, and interpretable principles exist that govern transient dynamics across diverse configurations?


\par To address these questions, we develop a general framework that explicitly incorporates both structural heterogeneity and input variety. We model dynamical systems as finite-size networks of coupled ordinary differential equations and quantify node-to-node propagation. Our approach advances prior works in two key aspects. First, we extend frameworks restricted to pulse inputs~\cite{wolter2018quantifying,schroder2019dynamic} by quantifying responses to a broader range of inputs, including constant, square, and white-noise types. Second, whereas existing studies primarily examine how system-specific nonlinearities interact with network structures under constant inputs~\cite{barzel2013universality, hens2019spatiotemporal, bao2022impact, harush2017dynamic}, and often rely on mean-field approximations that obscure fine structural details, our framework directly isolates the fundamental role of heterogeneous network structures themselves. In doing so, we establish a foundational theory for transient dynamics in finite-size networks where structural heterogeneity is explicitly resolved rather than averaged away.

\par We propose metrics to estimate the strength and timing of local responses, complementing conventional metrics that are limited to system-wide onset or steady-state measures.
Our estimated metrics exhibit strong numerical agreement with simulations across multiple network classes, including chains, regular lattices, random networks, small-world networks, scale-free networks, and geometric networks. Crucially, through Neumann series expansion \cite{kato2013perturbation}, we disentangle the distinct contributions of self-dynamics and network structures across different inputs, which are otherwise opaque to spectral analysis. This framework provides quantitative and interpretable insights into spatiotemporal signal propagation, enabling a comprehensive understanding of response strength and time across various inputs.

This paper is structured as follows. Section~\hyperref[sec:modelintro]{II} introduces our framework, which incorporates structural heterogeneity and input variety, including the analytically tractable subclass of Negative Strictly Diagonally Dominant (NSDD) systems, which provide the basis for our analysis and subsequent generalizations. Section~\hyperref[sec:quantify]{III} establishes metrics derived via matrix inverses and spectral decompositions, quantifying transient response strength (e.g., amplification, peak response) and time (e.g., time constant, response time). After rigorous validation across classical topologies, the Neumann series expansion effectively disentangles distinct contributions from self-dynamics and network structures across inputs. To elucidate general principles governing propagation, we then analyze progressively complex structures: directed chains and sparse random networks (Sec.~\hyperref[subsec:chain]{IV A}); homogeneous in-degree networks (Sec.~\hyperref[subsec:homo]{IV B}), contrasting path-based simulation and truncation and revealing the role of dominant paths and motifs; heterogeneous in-degree networks (Sec.~\hyperref[subsec:heter]{IV C}), quantifying degree distribution effects and local motif effects in the most general cases. Finally, the Discussion (Sec.~\hyperref[sec:discussion]{V}) synthesizes the relationships between strength and temporal metrics from deterministic, stochastic, and structural aspects, and clarifies the roles of heterogeneities.


\section{Model Introduction}
\label{sec:modelintro}

\subsection{General formalism}
\label{subsec:general}

\par Consider a general dynamical system comprising $N$ interacting components $\mathbf{y}(t) = (y_1(t), \dots, y_N(t))^\top$ governed by  
\begin{equation*}  
\frac{{d}\mathbf{y}}{{d}t} = \mathbf{F}(\mathbf{y}),  
\end{equation*}  
where $\mathbf{F}(\mathbf{y})$ incorporates self-dynamics and pairwise interactions. Assuming the existence of the equilibrium state $\mathbf{y}^*$ that satisfies $\mathbf{F}(\mathbf{y}^*) = \mathbf{0}$, we can characterize local dynamics around the equilibrium using small perturbations $\mathbf{x}(t) \equiv \mathbf{y}(t) - \mathbf{y}^*$. Linearizing $\mathbf{F}$ around $\mathbf{y}^*$ gives the perturbation dynamics as
\begin{equation*}  
\frac{{d}\mathbf{x}}{{d}t} = \mathbf{H}\mathbf{x},  
\end{equation*}  
where $\mathbf{H} \equiv \nabla\mathbf{F}|_{\mathbf{y}^*}$ is the Jacobian matrix. The spectral properties of $\mathbf{H}$ provide a foundational framework for analyzing local stability \cite{meena2023emergent, strogatz2024nonlinear, hirsch2013differential, khalil2002nonlinear}, transient responses \cite{yang2023reactivity, neubert1997alternatives, wolter2018quantifying, schroder2019dynamic, hespanha2018linear}, mode decomposition \cite{wang2019hierarchical, brunton2016extracting, klus2020data}, and other critical features of nonlinear systems near equilibrium states \cite{sontag2013mathematical, franklin2002feedback, kuznetsov1998elements}.  

\par Building on the linearized system, we extend our analysis to systems exhibiting heterogeneity and variety: non-uniformity in $\mathbf{F}$, reflected in the spectral properties of the Jacobian $\mathbf{H}$, and spatiotemporally various inputs $\mathbf{I}(t)$. These motivate the generalized driven linear system  
\begin{equation}\label{main_eq}  
\frac{{d}\mathbf{x}}{{d}t} = \mathbf{H} \mathbf{x} + \mathbf{I}(t),  
\end{equation}  
where $\mathbf{I}(t)$ denotes external inputs that may vary across time and components \cite{tyloo2018robustness, ji2023signal,khalil1996robust, chaudhuri2015large}. We focus on systems for which all eigenvalues of $\mathbf{H}$ have negative real parts, ensuring that $\mathbf{H}$ is invertible and the dynamics are asymptotically stable. The interaction structure is represented by a directed weighted graph derived from $\mathbf{H}$, with weakly connected conditions that should be enforced by the irreducibility of $\mathbf{H} + \mathbf{H}^\top$, which ensures connectivity in the undirected counterpart. Systems that fail this criterion split into disjoint, connected subgraphs, which are then subjected to respective individual analysis. Four frequently encountered input types are examined in our framework: constant, pulse, square, and white noise, predominantly applied to individual nodes. Extensions to multi-node inputs are also possible.

\par The generalized model in Eq.~\eqref{main_eq} is applicable to both linear systems (e.g., linear compartmental models \cite{anderson2013compartmental}, continuous-time Markov chains~\cite{norris1998markov}) and nonlinear systems that are linearized near equilibrium states. Under white noise input, it reduces to a Gaussian linear process \cite{seeger2004gaussian}. Despite the model’s linearity, its transient responses remain analytically intractable due to structural heterogeneity in $\mathbf{H}$ and the variety of input $\mathbf{I}(t)$. These generate transcendental dependencies in weighted term combinations, precluding closed-form results. To address this, we develop a framework that systematically accounts for structural heterogeneity across inputs, focusing on spatial granularity (mesoscale finite-size networks with asymmetric and weighted $\mathbf{H}$, where mean-field approximations break down \cite{barzel2013universality, bao2022impact, hens2019spatiotemporal,harush2017dynamic}) and temporal granularity (finite-time dynamics rather than $t \to 0$ or $t \to \infty$ asymptotics \cite{ren2005consensus, olfati2004consensus, neubert1997alternatives, yang2023reactivity}). These considerations motivate two central questions:
(Q1) Can spectral or matrix-based methods quantitatively characterize transient responses 
($\mathbf{x}(t)$) in structurally heterogeneous, finite-size networks ($\mathbf{H}$) driven by various inputs ($\mathbf{I}(t)$)?
(Q2) Do general, interpretable principles govern transient dynamics across different network configurations and input conditions?

\subsection{Special formalism}
\label{subsec:special}

\par A widely studied realization within the general framework is the Negative Strictly Diagonally Dominant (NSDD) structure \cite{gao1992criteria, kushel2021generalization, liu2010dominant,doroslovavcki2023matrices,sootla2017block, li2021subdirect, horn2012matrix}, defined as $\mathbf{H} \equiv \mathbf{A} - \mathbf{D} - \operatorname{diag}(\beta_i)$, where $\mathbf{A}$ is the adjacency matrix with weighted directed connections, and $A_{ij} \ge 0$ ($i \ne j$) denoting the connection from node $j$ to node $i$, and $A_{ii} = 0$; diagonal matrix $\mathbf{D} = \operatorname{diag}\left(\sum_{j=1}^N A_{ij}\right) \equiv\operatorname{diag}\left(D_i\right)$ represents the nodal in-degree, and $\beta_i > 0$ denotes the self-decay rate. Substituting $\mathbf{H}$ into Eq.~(\ref{main_eq}) yields
\begin{equation}
\begin{aligned}
\frac{d x_i}{d t} & =-\beta_i x_i+\sum_{j=1}^N A_{i j}\left(x_j-x_i\right)+I_i(t), \\
& =\underbrace{-\left(\beta_i+D_i\right) x_i}_{\text {Self-dynamics }}+\underbrace{\sum_{j=1}^N A_{i j} x_j}_{\text {Interactions }}+\underbrace{I_i(t)}_{\text {Inputs }},
\label{eq:NSDD}
\end{aligned}
\end{equation}
where the NSDD property (strict diagonal dominance with $H_{ii} < 0$) ensures Hurwitz stability \cite{horn2012matrix, trefethen2022numerical,strang2000linear,robinson1975stability,guglielmi2018closest, khalil2002nonlinear, hespanha2018linear}. Relaxing the sign constraints on $\beta_i$ and $A_{ij}$ reverts the system to the general formalism, where Hurwitz stability is no longer guaranteed but can be preserved under a moderate amount of negative couplings $A_{ij}<0$ or negative self-decay rate $\beta_{i}<0$. 

\par This structure serves as an analytically tractable foundation for the following reasons. (i) It guarantees stability and well-behaved transient dynamics, enabling rigorous analysis (Appendix~\ref{sec:A}). (ii) This model architecture and its extensions exhibit versatile applicability, spanning disciplines ranging from neuroscience to physiology \cite{wang2019hierarchical, olfati2007consensus, newman2018networks,haddad2005stability, pecora1998master, skardal2015control}. (iii) The term $\sum_j A_{ij}(x_j - x_i)$ represents a diffusion process on the network governed by the graph Laplacian, analogous to spatial diffusion in continuous media \cite{newman2018networks}. (iv) The explicit separation of self-dynamics $(-(\beta_i+D_i) x_i)$), pairwise interactions from the network structure $(A_{ij})$, and external inputs $(I_i(t))$ provides a unified template for comparing linearized nonlinear systems, and also an extension for theoretical results to more general formalisms \cite{wolter2018quantifying, schroder2019dynamic}. The core idea of NSDD is that each node’s self-dynamics must outweigh the total positive input it receives, expressed as 
$(\beta_i + D_i) > \sum_j A_{ij}$.   

\BOX{Box 1.
Intuition through Hurwitz stability}{
A simpler, more intuitive, and more general assumption is \emph{Hurwitz stability}~\cite{khalil2002nonlinear, hespanha2018linear}: namely, that dominant eigenvalues (also all eigenvalues) of \(\mathbf{H}\) have strictly negative real parts, \(\max_j \operatorname{Re}(\lambda_j)<0\). Under this assumption, the system \(\dot{ \mathbf{x}}=\mathbf{H}\mathbf{x}\) forgets initial conditions exponentially fast \((e^{\mathbf{H}t}\!\to 0)\); \(\mathbf{H}\) (and hence \(-\mathbf{H}\)) is invertible; and steady-state quantities, such as the unique solution of the continuous-time Lyapunov equation, are well defined.

This assumption also clarifies why the steady-state (step) gain is the resolvent at zero frequency, \(\!-\mathbf{H}^{-1}\). For a constant input \(\mathbf{I}\), the equilibrium satisfies \(0=\mathbf{H}\mathbf{x}_{\mathrm{ss}}+\mathbf{I}\), hence \(\textbf{x}_{\mathrm{ss}}=-\mathbf{H}^{-1}\mathbf{I}\). Equivalently,
\[
\int_{0}^{\infty} e^{\mathbf{H}s}\,{d}s \;=\; -\,\mathbf{H}^{-1},
\]
so the long-time effect of a sustained drive is mediated by \(\!-\mathbf{H}^{-1}\).

\emph{Intuition in 1d.} For \(\dot x=-\beta x + I(t)\) with \(\beta>0\) (so \(\mathbf{H}=-\beta\) is Hurwitz), a step \(I(t)=a\,\mathbf{1}_{t\ge 0}\) yields
\[
x(t)=\Bigl(x_0-\frac{a}{\beta}\Bigr)e^{-\beta t}+\frac{a}{\beta},
\qquad x_{\mathrm{ss}}=\frac{a}{\beta}=\!-\mathbf{H}^{-1}a.
\]
An impulse-like probe \(I(t)=a\,\delta(t)\) has impulse response \(h(t)=a\,e^{-\beta t}\) for \(t\ge 0\), whose \emph{area} equals the step gain:
\[
\int_{0}^{\infty} h(t)\,{d}t \;=\; \frac{a}{\beta} \;=\; \!-\mathbf{H}^{-1}a.
\]
A square of duration \(t_s\), \(I(t)=a\,\mathbf{1}_{0\le t\le t_s}\), interpolates between impulse- and step-like behavior: as \(t_s\to 0\) the response is impulse-like with peak \(\approx at_s\), while as \(t_s\to\infty\) it saturates to \(a/\beta\). Under white-noise input \(u(t)=\sigma\,\xi(t)\), the stationary variance is \(\mathrm{Var}(x)=\sigma^2/(2\beta)\), again set by the same decay rate \(1/\beta\).
}

\section{GENERIC QUANTIFICATION ACROSS INPUTS}
\label{sec:quantify}

\par Building on the central questions of how networks transform inputs into responses, we establish a general mathematical framework connecting input properties to output characterizations across network structures. This section achieves three interconnected advances: First, we derive exact analytical metrics for typical response characterizations: amplification, peak response, time constant, and response time, which are valid for NSDD structures and four input classes (Fig.~\ref{fig:one}). Second, we rigorously validate these metrics across a broad range of structures, demonstrating numerical accuracy of estimations to structural heterogeneity (Fig.~\ref{fig:two}). Third, we uncover relationships and scaling laws that intrinsically link these metrics through their shared dependence on each other. We also provide simple intuition 
through Hurwitz stability, see Box~1.

\subsection{Constant input}
\label{subsec:constant}

\par We start by analyzing the system's response to constant inputs modeled as the Heaviside step functions $\mathbf{I}_0^{\text{const}}(t)$, a widely-used approach for studying signal propagation in large-scale systems ~\cite{hens2019spatiotemporal, bao2022impact, barzel2013universality}. Full-time course of node $i$ receiving the constant input based on the steady-state $x^{\text{const}}_i(0)$ is $\Delta x^{\text{const}}_i(t) \equiv x^{\text{const}}_i(t) - x^{\text{const}}_i(0) = \left[\mathbf{H}^{-1} \left( e^{\mathbf{H}t} - \mathbf{I}_N \right) \mathbf{I}_0^{\text{const}}(t) \right]_i$
where $\mathbf{I}_N$ is the identity matrix. Under stability, the trajectory $\Delta x^{\text{const}}_i(t)$ converges to its final steady state denoted as $\Delta x^{\text{const}}_i(\infty)$. We characterize the response of node $i$ to a single-node constant input $I_0^{\text{const}}$ at node $m$ using four dynamical metrics (Fig.~\ref{fig:one}): (i) \textit{amplification} $Z_{im}$, defined as the area between the response curve and its final steady state; (ii) \textit{peak response} $R_{im}$, the maximum amplitude; (iii) \textit{time constant} $\tau_{im}$, the time to reach $(1 - 1/e)$ of increasing $\Delta x^{\text{const}}_i(t)$; and (iv) \textit{relative propagation time} $t_{im}$, the time when $\Delta x^{\text{const}}_i(t_{im})/\Delta x^{\text{const}}_i(\infty) = \eta$~\cite{hens2019spatiotemporal, bao2022impact, kittel2017timing, chen2018estimating,iannelli2018reaction, gautreau2007arrival, gautreau2008global,newman2002spread}, where fraction $\eta$ is the given threshold. Analytical expressions of these metrics are computed and estimated, respectively, as:
\begin{align}
Z_{im} &\triangleq [\mathbf{H}^{-2}]_{im} I_0^{\text{const}} \sim O(1/\lambda_1^2), \label{eq:const:Zim} \\[5pt]
R_{im} &\triangleq -[\mathbf{H}^{-1}]_{im} I_0^{\text{const}} \sim O(1/\lambda_1), \label{eq:const:Rim} \\[5pt]
\tau_{im} &\triangleq -\frac{[\mathbf{H}^{-2}]_{im}}{[\mathbf{H}^{-1}]_{im}} \sim O(1/\lambda_1), \label{eq:const:tauim} \\[5pt]
t_{im} &\triangleq -\tau_{im} \ln(1 - \eta) \sim O(1/\lambda_1). \label{eq:const:tim}
\end{align}
Here, $[\cdot]_{im}$ denotes the $(i,m)$ matrix element and $\lambda_1 \equiv \max_j \operatorname{Re}(\lambda_j)$ is the dominant eigenvalue of $\mathbf{H}$. These scaling relations with $\lambda_1$ govern system-level response strength and time. Corresponding spectral decomposition is
\begin{equation*}
\left[\mathbf{H}^{-k}\right]_{im} = \sum_{j=1}^N \frac{u_{im}^{j}}{\lambda_j^k}, \quad
u_{im}^{j} \equiv [\mathbf{U}]_{ij}[\mathbf{U}^{-1}]_{jm},
\end{equation*}
where $k$ is an integer; \( \mathbf{U} \) is the eigenmatrix diagonalizing \( \mathbf{H} \) \((\mathbf{H} = \mathbf{U} \mathbf{\Lambda} \mathbf{U}^{-1},\ \mathbf{\Lambda} = \mathrm{diag}(\lambda_j))\), establishing the metrics via eigenmode projections.

\par Among these metrics, amplification ($Z_{im}$) and peak response ($R_{im}$) admit analytical expressions, while temporal metrics ($\tau_{im}$ and $t_{im}$, labeled by asterisks $*$'s in Fig.~\ref{fig:one}) are estimated under the assumption that the residual response ($\Delta x_i^{\text{const}}(\infty) - \Delta x_i^{\text{const}}(t)$) decays exponentially govern by the time constant $\tau_{im}$. Full derivations are provided in Appendix~\ref{sec:B}. These metrics are well-defined in NSDD systems, where positive constant inputs yield strictly positive, monotonic responses, ensuring temporal solution uniqueness. The metric sign conventions are also rigorously maintained: $\left[\mathbf{H}^{-1}\right]_{im} < 0$ and $\left[\mathbf{H}^{-2}\right]_{im} > 0$ for all reachable node pairs $(i,m)$, while $\left[\mathbf{H}^{-k}\right]_{im} = 0$ (for all integers $k \geq 1$) when no path exists from $m$ to $i$ (Appendix~\ref{sec:A}).
Numerical validation across diverse network topologies in NSDD systems demonstrates high accuracy in metric estimation, with strong Spearman's rank correlations between analytical and simulated values (Fig.~\ref{fig:two}(a); See Supplementary Material (SM) Sec. I). Extensions to more general forms reveal robustness: estimation accuracy remains above $80\%$ even when $20\%$ of the connections are inhibitory ($A_{ij} < 0$) in large, sparse networks operating near the stability boundary (See SM Fig. S32). Limitations arise primarily in extreme cases, such as near chain-like networks with relatively low average degrees and widespread inhibition, where the loss of monotonicity leads to overshoot (See SM Sec. II). Compared with other inputs analyzed subsequently, constant inputs produce more regular time courses, enabling robust metric estimation and easier theoretical analysis.

\par The derived metrics exhibit two distinct scaling relations with the input amplitude: strength metrics ($Z_{im}$, $R_{im}$) scale linearly with input amplitude ($Z_{im}, R_{im} \propto I_0^{\mathrm{const}}$), while temporal metrics ($\tau_{im} = Z_{im}/R_{im}$) remain invariant. This fundamental distinction enables separate structural interpretations: strength metrics quantify absolute intensities and input amplitude, whereas temporal metrics characterize relative efficiency. This dichotomy motivates our subsequent analysis of network structure effects in Sec.~\hyperref[sec:IV]{IV}, where we analyze how network structure shapes these metrics and provide intuitive interpretations.

\begin{figure*}
\includegraphics[width=1.0\linewidth]{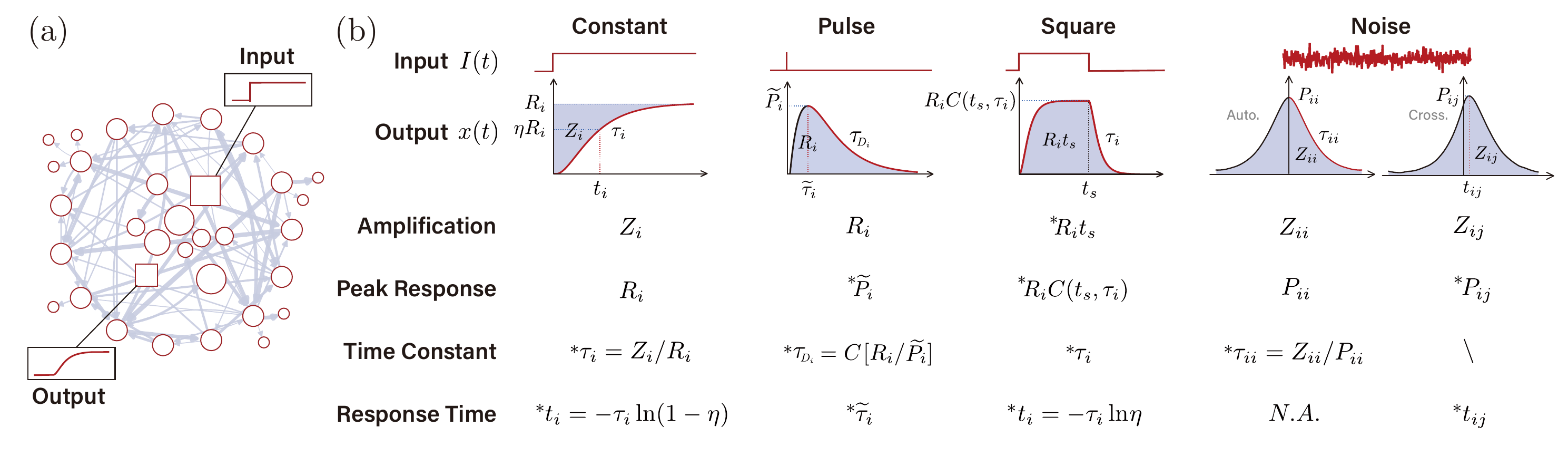}
\vspace{-0.8em}
\caption{\label{fig:one}
\textbf{Quantification of response strength and time in the general linear network model.}
(a) Schematic of node-to-node propagation under constant input. A single node receives a constant input, and the resulting transient responses are quantified for any single node in a heterogeneous network. (b) Summary of metrics quantifying node $i$’s response across four input types. Response strength is quantified by amplification and peak response, while response time is characterized by time constants and response times. Metrics marked with asterisks ($*$'s) indicate approximations derived from transcendental equations, as detailed in Appendix~\ref{sec:A}. Constant input (first column): Amplification ($Z_i$, blue area above curve), peak response ($R_i$, maximum amplitude), time constant ($\tau_i$, rise to $(1 - 1/e)R_i$), and relative propagation time ($t_i$, time to $\eta R_i$). Pulse input (second column):  Amplification ($R_i$, blue area under curve), peak response ($\widetilde{P}_i$, maximum amplitude), decay time constant ($D_i$, drop to $\widetilde{P}_i/e$), and time to peak ($\tilde{\tau}_i$). Square input (third column):  Amplification ($R_i t_s$, blue area under curve), peak response ($R_i C(t_s, \tau_i)$), decay time constant ($\tau_i$, drop to $1/e$ of peak), and response time ($t_i$). Noise input (last two columns): Autocovariance ($i = j$): Amplification ($Z_{ii}$), peak response ($P_{ii}$, zero-lag), time constant ($\tau_{ii}$, decay to $P_{ii}/e$). Crosscovariance ($i \ne j$): Amplification ($Z_{ij}$), peak response ($P_{ij}$), and peak response time ($t_{ij}$, time to peak).}
\end{figure*}
\subsection{Pulse input}
\label{subsec:pulse}
\par Understanding the impulse response of a linear time-invariant (LTI) system is fundamental for characterizing its transient dynamics, as the response to any input can be derived through its convolution with the system’s impulse response \cite{oppenheim1997signals}. Response to the Dirac delta input $\delta \mathbf{I}(t)$ with the total impulse strength $\mathbf{I}_0^{\mathrm{pulse}} \equiv \int_0^\infty \delta\mathbf{I}(t)dt$ is governed by $\Delta x_i^{\mathrm{pulse}}(t) \equiv  x_i^{\mathrm{pulse}}(t) - x_i^{\mathrm{pulse}}(0) = \left[e^{\mathbf{H}t}\mathbf{I}_0^{\mathrm{pulse}}\right]_i$. The response to a single-node pulse input $I_0^{\text{pulse}}$ applied at node $m$ has already been systematically characterized in prior works~\cite{wolter2018quantifying, schroder2019dynamic}. Building on these studies, we adopt the same set of metrics to quantify the temporal and strength properties of the response, as illustrated in the second column of Fig.~\ref{fig:one}(b). Additionally, we derive an estimation for the decay time constant. The four metrics are given, respectively, by: (i) \textit{amplification} $R_{im}$, defined as the area under the response curve; (ii) \textit{peak response} $\widetilde{P}_{im}$, the maximum amplitude; (iii) \textit{decay time constant} $\tau_{D_{im}}$, the time to reach $1/e$ of the peak during the decay phase; and (iv) \textit{peak response time} $\widetilde{\tau}_{im}$, the time at which the peak occurs, where, more concretely,
\begin{align}
R_{im} &\triangleq -[\mathbf{H}^{-1}]_{im} I_0^{\text{pulse}} \sim O(1/\lambda_1),\label{eq:pulse:Rim} \\[5pt]
\widetilde{P}_{im} &\triangleq C(d) P_{im}, \\
&= C(d) \frac{([\mathbf{H}^{-1}]_{im})^2 I_0^{\text{pulse}}}{\sqrt{2[\mathbf{H}^{-3}]_{im}[\mathbf{H}^{-1}]_{im} - ([\mathbf{H}^{-2}]_{im})^2}} \sim O(1), \label{eq:pulse:Pim} \\[5pt]
\tau_{D_{im}} &\triangleq \left(1 - \frac{1}{e}\right) \frac{R_{im}}{\widetilde{P}_{im}} \sim O(1/\lambda_1), \label{eq:pulse:Dim}\\[5pt]
\widetilde{\tau}_{im} &\triangleq \tau_{im}+ \frac{1}{\lambda_1}= -\frac{[\mathbf{H}^{-2}]_{im}}{[\mathbf{H}^{-1}]_{im}} + \frac{1}{\lambda_1} \sim O(1/\lambda_1). \label{eq:pulse:tauim}
\end{align}
Here, the bias term $C(d) = {(\sqrt{d+1} d^d)} /{(e^d d!)}$ ($ \approx1/\sqrt{2\pi}$ when $d$ is large) depends on the shortest path length $d$ between nodes $m$ and $i$. For a given network, $C(d)$ typically provides lower bound estimations on the simulated peak response, and setting $C(d)=1$ gives an upper bound numerically. Similarly, for $\widetilde{\tau}_{im}$, the bias term $(1/\lambda_1) < 0$ results in an lower-bound approximation~\cite{schroder2019dynamic}, and omitting this term gives practical upper bounds.

\begin{figure}
\includegraphics[width=1.0\linewidth]{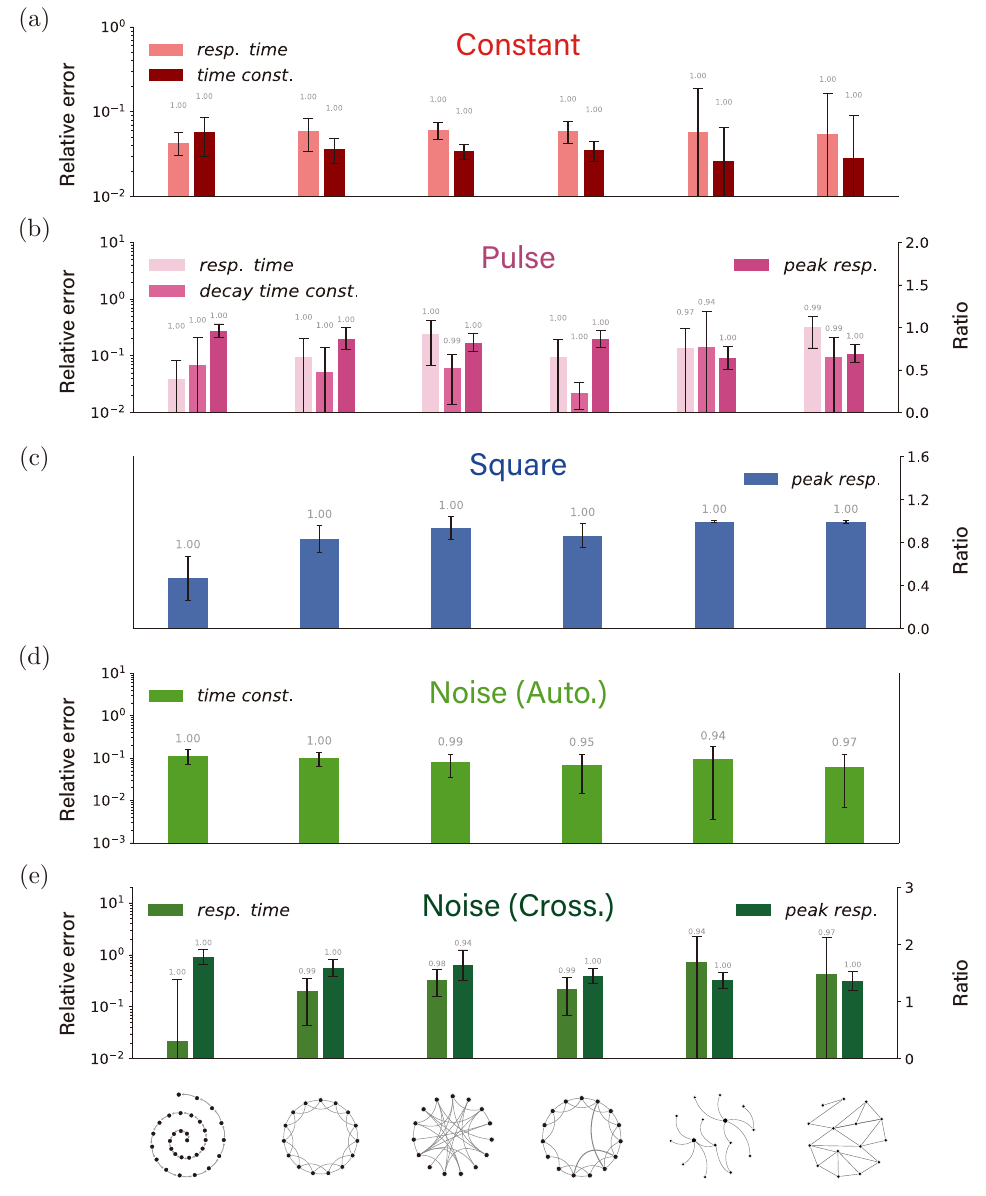}
\caption{\label{fig:two}
\textbf{Accuracy of estimated response metrics across classical network topologies.}  
Network types from left to right:  
(i) Chain (directed for deterministic inputs; undirected for noise inputs),  
(ii) Regular lattice (average degree $\sim 4$),  
(iii) Erdős--Rényi (ER) random network (edge probability $\sim 0.02$),  
(iv) Small-world network (average degree $\sim 2$, rewiring probability $\sim 0.5$),  
(v) Scale-free network (preferential attachment parameter $\sim 1$), and  
(vi) Geometric network (connection radius $\sim 0.2$).  
Input types: (a) Constant; (b) Pulse; (c) Square; (d) Noise (for autocovariance); (e) Noise (for crosscovariance). The abscissa (x-axis) is shared across all panels.
All networks contain $100$ nodes with uniform parameters: self-decay rate $\beta = 1$ and interaction weights set to $1$.  
Inputs are applied as follows: to the first node in the chain, randomly assigned in the lattice, and randomly assigned across $100$ independent instances for randomly generated networks (ER, small-world, scale-free, geometric). Time-related metrics (response time, time constant) use relative error $|t_{\text{sim}} - t_{\text{thr}}| / t_{\text{sim}}$ (left bars; $t_{\text{sim}}$: simulated, $t_{\text{thr}}$: theoretical). Strength-related metrics (peak response) use ratio $P_{\text{thr}} / P_{\text{sim}}$ (right bars). Error bars show mean $\pm$ variance across instances. Most time-related errors remain below $10^0$ ($100\%$), while strength ratios cluster near $\sim 1$ (within one order of magnitude), indicating consistent quantitative agreement.  
Gray labels indicate Spearman’s rank correlation for node-wise ordering preservation, with values close to $1$ reflecting strong rank consistency. In chain-like or sparse networks, nodes whose shortest path from the input node is $\geq 15$ are excluded to avoid numerical artifacts caused by rapid response decay.  
}
\end{figure}

\par Pulse-response metrics build on established methods~\cite{wolter2018quantifying, schroder2019dynamic}, where normalized responses to the single-node pulse are interpreted as probability distributions. This framework ensures non-negative dynamics under positive pulse inputs in NSDD systems, providing a well-grounded basis for interpretation (Appendix~\ref{sec:A}, \cite{wolter2018quantifying}). We extend prior works by defining the decay time constant $\tau_{D_{im}}$ through exponential assumptions (Appendix~\ref{sec:B}). Numerical validation across networks demonstrates great performance (Fig.~\ref{fig:two}(b); See SM Sec. I), with low relative error and strong rank correlations, particularly in sparse networks with weak network interactions ($\alpha/\lambda_1\ \rightarrow 0$ regime where $\alpha$ refers to identical interaction weight \cite{schroder2019dynamic}). 

\par Pulse-input dynamics inherit properties from constant-input responses through their derivative relationship ($d\Delta x_i^{\text{const}}(t) /dt = \Delta x_i^{\text{pulse}}(t)$) under the same input location and amplitude: pulse amplification equals the peak response under constant input (denoted as $R_{im}$ in Eqs. (\ref{eq:const:Rim}) and (\ref{eq:pulse:Rim})). Temporal metrics share complementary interpretations: the constant-input time constant $\tau_{im}$ (Eq. (\ref{eq:const:tauim})) aligns with the pulse-input peak response time $\widetilde{\tau}_{im}$ (Eq. (\ref{eq:pulse:tauim})). In addition, for LTI systems, this equivalence extends to covariance \cite{oppenheim2017signals}: single-node pulse responses mirror the crosscovariance function with time-lag $s$: $\langle \mathbf{x}, \mathbf{I} \rangle = e^{\mathbf{H}s}\mathbf{Q}$ when spectral density matrix $\mathbf{Q}\equiv S_{\mathbf{I}}(w)$ (Fourier transform of the autocovariance function $\mathbb{E}\left[\mathbf{I}(t) \mathbf{I}(t+\tau)^{\top}\right]$) contains only a single non-zero diagonal element $I_0^{\text{pulse}}$ at node $m$ (Appendix~\ref{sec:B}). This existing mathematical equivalence enables direct comparison among inputs while preserving consistent interpretation.
\subsection{Square input}
\label{subsec:square}
\par Square inputs combine analytical simplicity with biological relevance, offering precise temporal control for modeling finite-duration stimuli in physiological experiments \cite{spall2000adaptive, billings2013nonlinear}. The full time course exhibits biphasic dynamics: (i) a rising phase corresponding to a truncated constant-input response during stimulation period $t_{{s}}$, followed by (ii) a decay phase that mirrors the remaining portion of the constant-input response. We characterize the decay phase by the time constant $\tau_{im}$, which measures the time it takes for the response to drop to $1/e$ of its initial value in response to a single-node square input $I_0^{\text{square}}$ at node $m$. We then quantify two strength metrics: the \textit{amplification} $R_{{im}} t_{{s}}$ (representing total integrated response) and the \textit{peak response} $R_{{im}} C(t_{{s}}, \tau_{{im}})$ (quantifying maximum amplitude), where $R_{im} = -[\mathbf{H}^{-1}]_{{im}}I_0^{\text{square}}$ (see Eqs.~\eqref{eq:const:Rim} and \eqref{eq:pulse:Rim}).

Precisely,
\begin{align}
R_{{im}}t_s &\triangleq -t_s [\mathbf{H}^{-1}]_{{im}}I_0^{\text{square}}, \label{eq:square:amp}\\[5pt]
R_{{im}}C(t_s,\tau_{{im}}) &\triangleq -(1 - e^{-t_s/\tau_{{im}}})[\mathbf{H}^{-1}]_{{im}}I_0^{\text{square}}.\label{eq:square:peak}
\end{align}

\par As such, systematic validation across NSDD systems confirms metric robustness (Fig.~\ref{fig:one}(c)): amplification exhibits negligible error ($<0.1\%$, omitted for clarity), while peak responses achieve near-unity agreement ratios ($>0.8$) under typical topologies (Fig.~\ref{fig:two}(c); Appendix~\ref{sec:B}). For unit input duration ($t_s = 1$), Eq.~\eqref{eq:square:amp} establishes the equivalence linking impulse-integrated amplification (Eq.~\eqref{eq:pulse:Rim}) to constant-input peak response (Eq.~\eqref{eq:const:Rim}). Correspondence for peak response of unit duration (Eq.~\eqref{eq:square:peak}) extends to impulse-response peaks (Eq.~\eqref{eq:pulse:Pim}; See SM Fig. S17). The asymptotic scaling $C(t_s, \tau_{im}) \sim t_s/\tau_{im}$ (for $t_s \to 0$) and $C(t_s, \tau_{im}) \to 1$ (for $t_s \to \infty$) emerges naturally from $\tau_{im}$-dominated dynamics, confirming time constant (Eq.~\eqref{eq:const:tauim}) as universal regulators of transient dynamics.

\subsection{Noise input}
\label{subsec:noise}

\par White noise input, characterized by a flat power spectral density, serves as a fundamental tool to probe broadband system dynamics (e.g., in neural processing \cite{chaudhuri2015large, joglekar2018inter}). To rigorously model its discontinuous and unbounded nature, we reformulate the system dynamics from Eq. (\ref{main_eq}) as the stochastic differential equation:
\begin{equation*}
    d\mathbf{x} = \mathbf{H}\mathbf{x}\,dt + d\boldsymbol{\beta}(t),
\end{equation*}
where \( d\boldsymbol{\beta}(t) = \mathbf{I}(t)dt \), and \( \boldsymbol{\beta}(t) \) is the Brownian motion process with a zero mean and covariance structure as: $\mathbb{E}[\mathbf{I}(\tau)\mathbf{I}(s)^\top] = \mathbf{Q}\delta(\tau-s).$ Here, \( \mathbf{Q} \) defines the input spectral density matrix, and also the Fourier transform of the autocovariance function $\mathbb{E}\left[\mathbf{I}(\tau) \mathbf{I}(s)^{\top}\right]$. The time-dependent solution, derived via Itô calculus to accommodate the unbounded and discontinuous variation of \( \boldsymbol{\beta}(t) \), is \cite{sarkka2019applied}:
\begin{equation*}
    \mathbf{x}^{\text{noise}}(t) = e^{\mathbf{H}(t-t_0)}\mathbf{x}^{\text{noise}}(t_0) + \int_{t_0}^t e^{\mathbf{H}(t-\tau)} \, d\boldsymbol{\beta}(\tau).
\end{equation*}
In steady state (\( t \to \infty \)), the stationary covariance function becomes:
\begin{equation*}
    \mathbf{C}(\tau) \triangleq \mathbb{E}[\mathbf{x}(t)\mathbf{x}(t-\tau)^\top] = 
    \begin{cases}
        \mathbf{P}_{\infty} e^{-\mathbf{H}^\top \tau}, & \tau \leq 0, \\
        e^{\mathbf{H} \tau} \mathbf{P}_{\infty}, & \tau > 0,
    \end{cases}
\end{equation*}
satisfying \( \mathbf{C}(\tau) = \mathbf{C}(-\tau)^\top \) with time-lag $\tau$. The steady-state covariance \( \mathbf{P}_{\infty} \) corresponds to the Lyapunov equation:
\begin{equation}
    \mathbf{H}\mathbf{P}_{\infty} + \mathbf{P}_{\infty}\mathbf{H}^{\top} + \mathbf{Q} = 0,
\label{eq:lyapunov}
\end{equation}
and admits equivalent representations:
\begin{equation}
\begin{aligned}
    \mathbf{P}_{\infty} = \int_{0}^\infty e^{\mathbf{H}\tau} \mathbf{Q} e^{\mathbf{H}^\top \tau} \, d\tau.
\label{eq:lyapunov_int}
\end{aligned}
\end{equation}
For scalar systems (\( {x}\in\mathbb{R} \)) with a stable eigenvalue \( \lambda < 0 \), this reduces to \( \mathbf{C}(\tau) = \frac{Q}{2|\lambda|} e^{\lambda|\tau|} \). This covariance function rigorously quantifies steady-state variability and frequency-selective sensitivity under stochastic forcing (Appendix~\ref{sec:B}; \cite{sarkka2019applied}).

\par The covariance function contains two distinct components: diagonal elements (autocovariances, $C_{ii}(\tau)$) quantifying self-evolutions and off-diagonal elements (crosscovariances, $C_{ij}(\tau)$) capturing pairwise relations. Autocovariance functions are even symmetric with maxima at zero lag. NSDD systems exhibit monotonic decay with a large self-decay rate and strictly positive values (Appendix~\ref{sec:A}). When noise input with spectral density $I_0^{\text{noise}}$ is applied only to node $m$, the autocovariance dynamics at node $i$ are characterized by three metrics (fourth column in Fig. \ref{fig:one}(b)):
(i) \textit{amplification} \(Z_{ii}^m\), total integrated covariance, (ii) \textit{peak response} \(P_{ii}^m\), maximum instantaneous covariance at zero lag, and (iii) \textit{decay time constant} \(\tau_{ii}^m\), \(1/e\) relaxation time \cite{van2021microscopic,chaudhuri2014diversity,chaudhuri2015large}, where

\begin{align}  
Z_{ii}^m &\triangleq -2\left[\mathbf{H}^{-1}\mathbf{P}_{\infty}\right]_{ii}^m  \sim {O}(1/\lambda_1^2),  \label{eq:auto:Zim}\\[5pt]  
P_{ii}^m &\triangleq \left[\mathbf{P}_{\infty}\right]_{ii}^m  \sim {O}(1/\lambda_1),  \label{eq:auto:Pim}\\[5pt]  
\tau_{ii}^m &\triangleq -\frac{\left[\mathbf{H}^{-1}\mathbf{P}_{\infty}\right]_{ii}^m}{\left[\mathbf{P}_{\infty}\right]_{ii}^m} \sim {O}(1/\lambda_1).\label{eq:auto:tauim}   
\end{align}  
 The eigenmode decomposition for steady-covariance is
\begin{equation}
\left[\mathbf{P}_{\infty}\right]_{ij}^m = -\sum_{p,q} \frac{u_{im}^p u_{jm}^q}{\lambda_p + \lambda_q}I_0^{\text{noise}}.
\label{eq:lyapunov_eigen}
\end{equation}
\par The metrics of crosscovariance dynamics between input node $m$ and node pair $(i,j)$ (last column in Fig.~\ref{fig:one}(b)) are: (i) \textit{amplification} $Z_{ij}^m$, area under the crosscovariance curve; (ii) \textit{peak response} $P_{ij}^m$, maximal value; and (iii) \textit{peak response time} $t_{ij}^m$, time to maximum, where 
\begin{align}
Z_{ij}^{m} &\triangleq -M_{ij}^{m(1)}\sim O(1/\lambda_1^2), \label{eq:cross:Zij} \\[5pt]
P_{ij}^{m} &\triangleq \frac{(M_{ij}^{m(1)})^2 }
{\sqrt{4M_{ij}^{m(1)}M_{ij}^{m(3)}- 2\left(M_{ij}^{m(2)}\right)^2}} \sim O(1/\lambda_1), \label{eq:cross:Pij} \\[5pt]
t_{ij}^{m} &\triangleq -\frac{M_{ij}^{m(2)}}{M_{ij}^{m(1)}} \sim O(1/\lambda_1),\label{eq:cross:tij}
\end{align}
with $M_{ij}^{m(n)} \equiv \left[\mathbf{H}^{-n} \mathbf{P}_{\infty}\right]_{ij}^m +(-1)^{n+1}  \left[\mathbf{H}^{-n} \mathbf{P}_{\infty}\right]_{ji}^m$ and $n=1,2,3$. Estimation methods parallel those for pulse inputs, with complete derivations in Appendix~\ref{sec:B}. In NSDD systems, crosscovariance also preserves strict positivity (Appendix~\ref{sec:A}).

\par We validate all metrics in NSDD systems, demonstrating high accuracy and rank correlation (Fig.~\ref{fig:two}(d, e); Appendix~\ref{sec:B}). Compared with deterministic inputs, noise-driven responses depend critically on the steady-state covariance $\mathbf{P}_{\infty}$ - computable through the Lyapunov equation (Eqs.~(\ref{eq:lyapunov}) and (\ref{eq:lyapunov_int})) or its eigenmode (Eq.~(\ref{eq:lyapunov_eigen})), though both approaches lack intuitive interpretation of their dependence with $\mathbf{H}$. Through Wiener-Khinchin theorem \cite{sarkka2019applied}, $\mathbf{P}_{\infty} = \mathbf{C}(\tau = 0)=\mathcal{F}^{-1}\!\left[\,S_{\mathbf{x}}(\omega)\,\right]_{\tau = 0}$ admits the representation:
\begin{equation}
\mathbf{P}_{\infty} = \frac{1}{2\pi} \int_{-\infty}^\infty (\mathbf{H} - \mathrm{i}\omega \mathbf{I}_N)^{-1} \mathbf{Q} (\mathbf{H} + \mathrm{i}\omega \mathbf{I}_N)^{-\top} d\omega,
\label{wiener}
\end{equation}
which reduces single-node inputs at $m$ between node pair $(i,j)$ to:
\[
[\mathbf{P}_{\infty}]_{ij}^m = \frac{I_0^{\text{noise}}}{2\pi} \int_{-\infty}^\infty [(\mathbf{H} - \mathrm{i}\omega \mathbf{I}_N)^{-1}]_{im} [(\mathbf{H} + \mathrm{i}\omega \mathbf{I}_N)^{-1}]_{jm} d\omega.
\]
While explicitly relating $\mathbf{P}_{\infty}$ to $\mathbf{H}$, this formulation remains analytically opaque due to its complex-integral nature. This limitation motivates our matrix expansion and complex analysis in subsequent sections, where we unravel how network structure governs transient responses.

\par Across input classes, we find most metrics share inverse dependencies on the dominant eigenvalue $\lambda_1$, reflecting system-wide coordination between strength and temporal variations. However, critical refinements arise in heterogeneous settings across inputs: (i)~heterogeneous connectivity encoded in element-wise inverse terms $[\mathbf{H}^{-n}]_{im},n=1,2,3$, and steady-state covariance $[\mathbf{P}_{\infty}]_{ij}^m$, (ii)~spectral dispersion of $\lambda_j$ and non-uniform eigenmode participation $u_{im}^{j}$ that might localize temporal or strength features \cite{chaudhuri2014diversity, fyodorov2025nonorthogonal}, and (iii)~input-specific alignment ($[\cdot]_{im}$) governing response profiles. Reconciling these global spectral principles with localized structural and input details motivates the structure-aware framework developed in Sec.~\hyperref[sec:IV]{IV}.

\NUMBOX{Box 2.
How to probe your network and use our framework for Hurwitz systems}{
Fix a stable network and linearize its dynamics as $\dot{\mathbf{x}} = \mathbf{H}\mathbf{x} + \mathbf{I}(t)$ with $\mathbf{H}$ satisfying Hurwitz stability. \\

A \emph{constant (step) probe} $\mathbf{I}(t)=a\,e_m\,\mathbf{1}_{t\ge0}$ is the standard baseline: it reveals both the static-gain geometry and the dominant timescale in a single shot. In steady state, the peak response from source $m$ to target $i$ is $-a\left[\mathbf{H}^{-1}\right]_{i m}\,$ (Eq.~\eqref{eq:const:Rim}), while the approach to steady state reflects the spectral gap (typical relaxation $\tau\sim 1/|\lambda_1(\mathbf{H})|$). Because the step integrates all effects of directed walks through the network, it is robust and easy to estimate, making it effective for mapping who influences whom and by how much.\\

A \emph{pulse (impulse-like) probe} is linked to the step by differentiation: the impulse response is the time derivative of the step response. This gives explicit translation rules across inputs. In particular, the pulse \emph{amplification} (area under the target’s transient, Eq.~\eqref{eq:pulse:Rim}) coincides with the step \emph{peak} gain (Eq.~\eqref{eq:const:Rim}), and the pulse \emph{peak response time} (Eq.~\eqref{eq:pulse:Pim}) tracks the step \emph{time constant} (Eq.~\eqref{eq:const:tauim}). Pulses emphasize latency and sharpness, localizing propagation delays along paths and distinguishing fast feedforward topological routes from slower, loop-mediated ones.\\

A \emph{square probe} of duration $t_s$ is the difference of two steps separated by $t_s$, so its response interpolates between impulse- and step-like regimes. For $t_s\!\ll\!\tau$ it behaves like a pulse; for $t_s\!\gg\!\tau$ it approaches the step. Sweeping $t_s$ is thus a titration of the network’s intrinsic time scale: the dependence of peak size and timing on $t_s$ identifies dominant decay rates and reveals when longer feedback walks contribute beyond shortest paths. Because squares are straightforward to implement, $t_s$ acts as a practical control parameter for balancing high temporal resolution (information capacity) and high signal-to-noise level.\\

\emph{White-noise forcing}, $\mathbb{E}[\mathbf{I}(\tau)\mathbf{I}(s)^\top] = \mathbf{Q}\delta(\tau-s),$ with $\mathbf{I}(t)$ a zero-mean white noise vector, connects deterministic probes to fluctuations. In steady state the covariance $\mathbf{P}_{\infty}$ solves the Lyapunov equation $\mathbf{H} \mathbf{P}_{\infty} + \mathbf{P}_{\infty} \mathbf{H}^\top + \mathbf{Q} = 0$, and autocovariance and crosscovariance functions
 follow by propagating $\mathbf{P}_{\infty}$ through $e^{\mathbf{H}t}$. The same directed walks that determine step and pulse gains set covariance peaks, areas, and correlation times (again scaling with $1/|\lambda_1(\mathbf{H})|$). The picture in the frequency domain is equivalent via the Wiener-Khinchin theorem while its path-based view further exposes how much each sub-walk modulates the ongoing variability for each frequency (Eq.~\eqref{inv_expand}).\\

 In summary, a constant step provides a direct readout of static gains $-\left[\mathbf{H}^{-1}\right]_{im}$ and dominant relaxation times with minimal overhead. Pulses sharpen latency estimates. Squares of duration $t_s$ bridge impulse- and step-like behavior by sweeping $t_s$. White-noise forcing projects the same transfer properties into second-order statistics, where structural sensitivity appears as covariance patterns determined during steady-state.\\


\emph{Connection between metrics.} The derivative $d\Delta x_i^{\text{const}}(t) /dt = \Delta x_i^{\text{pulse}}(t)$ link implies that pulse amplification matches the step peak; short squares behave like pulses while long squares recover steps; and noise covariances are the stochastic counterpart of deterministic gains via the Wiener-Khinchin theorem (Eq.~\eqref{wiener}). In practice, one needs to verify Hurwitz stability; if it holds, all formulas apply. If $\mathbf{H}$ is further NSDD, additional qualitative estimates (e.g., monotone, sign-definite responses) are guaranteed. However, the core inferences about response strength and time already hold under the Hurwitz assumption alone.\\

Taken together, the four input classes furnish a consistent characterization of network structures and dynamics, and can be used to reveal path-dependent bottlenecks.
}

\section{IMPACT OF NETWORK STRUCTURES ON TRANSIENT DYNAMICS}
\label{sec:IV}

\par Our theoretical framework (Fig.~\ref{fig:one}), employing matrix and spectral formulations, quantifies how structural heterogeneity ($\mathbf{H}$) and input variety ($\mathbf{I}(t)$) shape transient responses (Q1). This framework is validated in NSDD systems (Fig.~\ref{fig:two}) and further generalized in SM Sec. II. To probe your network and use our framework for Hurwitz systems, see Box.~2. A critical gap remains: How does heterogeneous network structure ($A_{ij}$) interact with self-decay rates ($\beta_i$) across inputs ($I(t)$) to shape responses (Q2)? While eigen-decompositions and matrix inversions yield general solutions, they obscure interpretable relationships and become computationally prohibitive for large-scale systems. Through systematic expansion and truncation, we analytically disentangle the interactions governing transient responses. This derivation reveals how specific structures shape the relationship between response strength and temporal metrics across different input types under the uniform self-decay rate setting in NSDD systems: $\dot{x_i}=-\beta x_i+\sum_{j=1}^N A_{i j}\left(x_j-x_i\right)+I_i(t)$.

\subsection{Directed chain to sparse random networks.}
\label{subsec:chain}

\par We start analyzing from a simple case: propagation in a directed chain with interaction weights $ A_{d \to d+1}=\alpha$ (Appendix~\ref{sec:C}, Fig.~\ref{fig:three}(a)). Strength metrics ($Z,R,P$) basically decay geometrically with the shortest path length $d$, and scale proportionally with input strength $I_0$:
\begin{equation}
\text{Strength} \sim \left(1 + \frac{\beta}{\alpha}\right)^{-d}I_0,
\label{chain:strength}
\end{equation}
while temporal metrics ($\tau,t$) scale linearly:
\begin{equation}
\text{Time} \sim \frac{d}{\alpha + \beta}.
\label{chain:time}
\end{equation}
These scaling relationships reveal two distinct structural effects: strength attenuation, dominated by the ratio $\beta/\alpha$, and temporal accumulation, determined by the inverse $(\alpha + \beta)^{-1}$, which leads to the distinct role of $\alpha$ and $\beta$: weak coupling ($\alpha \ll \beta$) results in rapid geometric decay, accompanied by slow temporal growth, governed by $d/\beta$. Strong coupling ($\alpha \gg \beta$) yields gradual decay and slow linear time governed by $d/\alpha$. These scaling laws naturally generalize to sparse ER random networks when considering shortest path lengths (shaded areas in Fig.~\ref{fig:three}(b)), and temporal metrics are more robust to the variation of interaction weights when self-decay rate dominants ($\beta \geq 1$ or $\beta \geq \alpha$, Appendix~\ref{sec:C}). This simple case thereby disentangles how self-decay rate $\beta$ and interaction weight $\alpha$ jointly govern input propagation along the chain.

\begin{figure}
\includegraphics[width=1.0\linewidth]{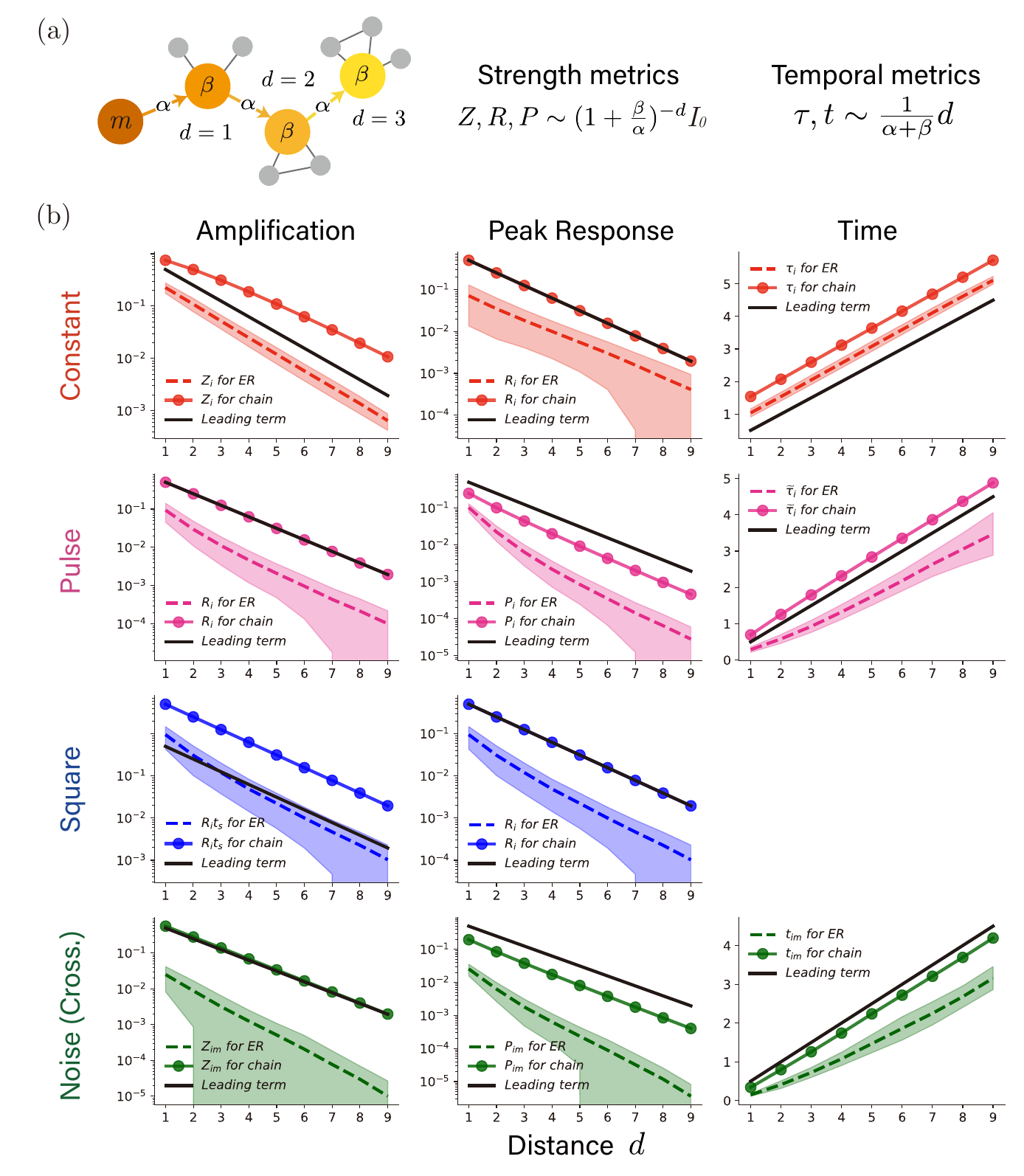}
\caption{\label{fig:three}
\textbf{From directed chains to sparse random networks.}  
(a) Schematic relationships: Directed chain topology with uniform self-decay rate $\beta$, uniform interaction weight $\alpha$, and distinct source-target shortest path length $d$. For the leading terms in metrics, strength metrics ($Z$, $R$, $P$) exhibit geometric decay with $d$, while temporal metrics ($\tau$, $t$) show linear path dependence.  
(b) Combined validation: Theoretical predictions (solid black: analytic leading terms for the chain; dotted: numerical chain simulations) and sparse ER random networks (dashed: ensemble mean of $100$ realizations; shading: $\pm 1$ SD; connection probability of random networks: $\sim 0.02$). Alignment enables direct structural comparison. Parameters: $\beta = \alpha = 1.0$, stimulus duration $t_s = 10$ for square input, input strength $I_0$ normalized to unity in simulation.
}
\end{figure}

\subsection{Homogeneous in-degree}
\label{subsec:homo}
\begin{figure*}
\includegraphics[width=1.0\linewidth]{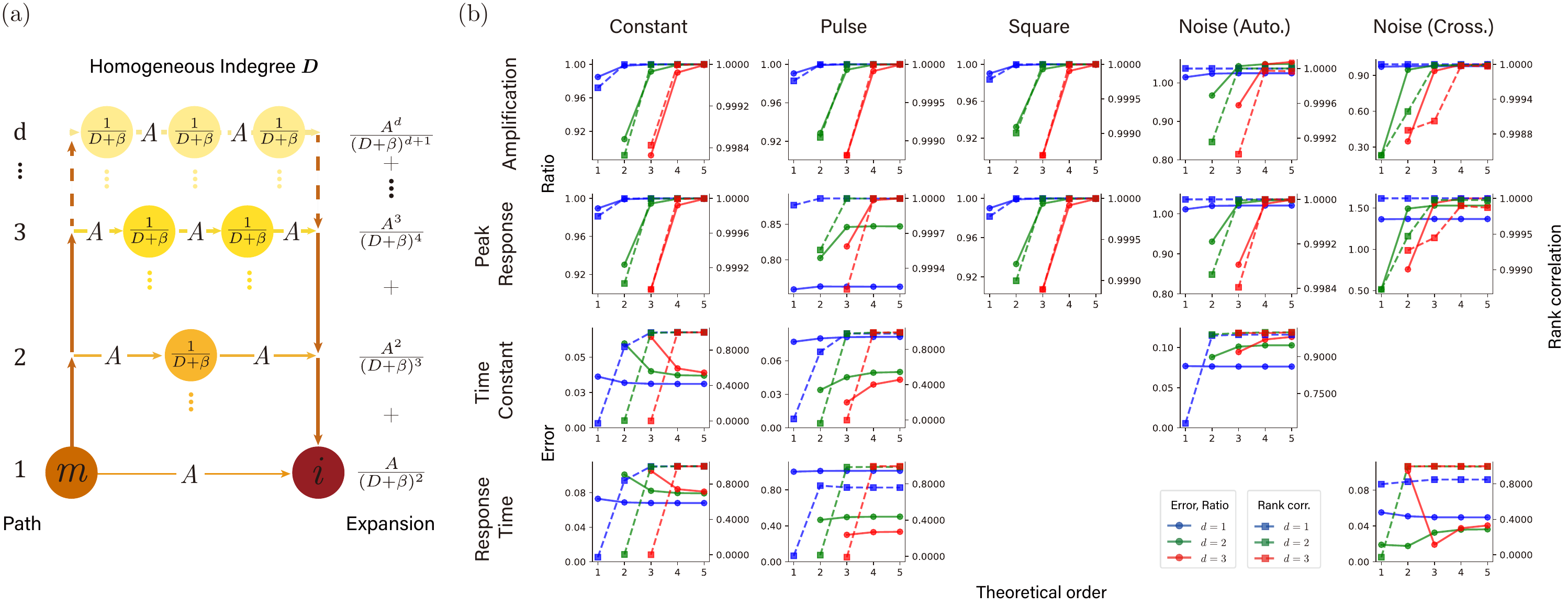}
\caption{\label{fig:four}
\textbf{Path length vs. truncation order in homogeneous in-degree networks.}  
(a) Relations for dynamical metric dependence on path length $d$: Peak response $R_{im}$ (constant input) as weighted sums of $(D+\beta)^{-(d+1)} \times A^d$ terms, where $D$ (homogeneous in-degree) and $\beta$ (uniform self-decay) combine multiplicatively. The $A^d$ factor accounts for path multiplicity (all length-$d$ paths between nodes), while temporal metrics are derived from ratio relationships.  
(b) Truncation order effects: Strength metric ratios (circles, left axis; $\text{Ratio} = {P_{\text{trunc}}}/{P_{\text{sim}}}$) and temporal metric relative errors (circles, left axis; $\text{Error} = {|t_{\text{trunc}} - t_{\text{sim}}|} /{t_{\text{sim}}}$) vs. rank correlations (squares, right axis). Colors denote shortest path lengths (blue: $d=1$, green: $d=2$, red: $d=3$) and values are averaged from $100$ network realizations. Strength metrics require $\geq d$-order truncation (ratio $>0.9$, error $<10\%$, rank correlation $>0.9$); temporal metrics need $\geq(d+1)$-order (error $<10\%$, rank correlation $>0.8$). Crosscovariance $C_{im}^m(\tau)$ truncated separately in $\mathbf{H}$ and $\mathbf{P}_{\infty}$. Parameters: $\beta=10$ (uniform self-decay rate, satisfying $\beta > 2D$); $\alpha=0.1$ (identical interaction weight); $N=100$ (network size); $p=0.08$ (connection probability). 
}
\end{figure*}

\par We generalize directed chains and sparse random networks to homogeneous in-degree NSDD systems, where all nodes share identical in-degree ($D_i\equiv\sum_{j=1}^N A_{i j} = D$). This configuration enables structural diversity through heterogeneous walks while enforcing uniform self-dynamics, a design paradigm characteristic of artificial neural networks and synthetic biological circuits \cite{van1996chaos, renart2010asynchronous, brunel2000dynamics, litwin2012slow}. For clarity, we use the term \textit{chains} to denote the acyclic subset of walks, while \textit{walks} refer to the general case that may include revisiting nodes. Fig.~\ref{fig:four}(a) shows an input propagation example from source $m$ to target $i$ ($m \ne i$) through multiple walk lengths $d$ ($d \geq 1$), with peak response $R_{im}$ for constant input expanding as a weighted sum of $(\beta+D)^{-d}$ terms. Basic elements of metrics can be expanded in terms of walk length $p$ using the Neumann series expansion. Concretely,
\begin{equation*}
\begin{aligned}
 &\left[\mathbf{H}^{-n} \right]_{im} = (-1)^n \sum_{p=1}^{\infty}\left(\frac{C_{p+n-1}^{n-1}}{(\beta+D)^{p+n}}\right)\left[\mathbf{A}^p\right]_{im}, \\
 & \left[ \mathbf{P}_{\infty} \right]_{ij}^{m} = \sum_{p,q=1}^{\infty} \left( \frac{C_{p+q}^p}{(2(\beta+D))^{p+q+1}}\right)\left[\mathbf{A}^p\right]_{im} \left[\mathbf{A}^q\right]_{jm},
 \end{aligned}
\end{equation*}
where $\left[\mathbf{A}^p\right]_{im} \equiv \sum_{j_1,\ldots,j_{p-1}} A_{ij_1}A_{j_1j_2}\cdots A_{j_{p-1}m}$ quantifies the cumulative influence through all directed walks of length $p$ from source $m$ to target $i$, with the summation running over all possible intermediate nodes $\{j_1,...,j_{p-1}\}$. These expansions reveal how walk diversity ($\left[\mathbf{A}^p\right]_{im}$) links the self-dynamics ($\beta+D$) in shaping the responses (Appendix~\ref{sec:D}). The term $1/(\beta + D)$ acts as a weighting factor that modulates the contribution of more distant walks. While $\beta > 2D$ (strong decay dominance) can guarantee convergence for expansion through Gerschgorin's theorem, practical implementations often tolerate weaker decay rates, especially when prioritizing the ranking order (See SM Figs. S36 and S37).

\par The series expansions naturally motivate truncation rules that identify dominant contributions and simplify metrics while preserving accuracy. We analyze simulated results for node pairs (source $m$, target $i$) with walk length $d$ and truncate expansions at order $p$ in random networks (Fig.~\ref{fig:four}(b)). This order $p$ determines the maximal walk length ($[\mathbf{A}^p]_{im}$) incorporated in the metrics. Although all expansions converge asymptotically, required truncation depths differ between metric classes: strength metrics (e.g., constant-input peak response $R_{im}$) need at least $p=d$, while temporal metrics (e.g., constant-input time constant $\tau_{im}$) require at least $p=d+1$. This distinction originates from their mathematical forms. Notice that $R_{im} = \sum_{p=1}^\infty \left({[\mathbf{A}^p]_{im}}/{(\beta+D)^{p+1}}\right)$ is dominated by minimal-length walks ($p=d$), whereas 
\begin{equation}  
\tau_{im} = \left( \sum_{p=1}^\infty \frac{(1+p)[\mathbf{A}^p]_{im}}{(\beta+D)^{p+2}} \right) \!\Bigg/\! \left( \sum_{p=1}^\infty \frac{[\mathbf{A}^p]_{im}}{(\beta+D)^{p+1}} \right)  
\label{eq:tau_calculation}  
\end{equation}  
demands $p=d+1$ terms to resolve the $(d+1)$-vs-$d$ balance between numerator and denominator. Temporal metrics are more sensitive to longer walk lengths than strength metrics, which may limit the generalizability of localized approximations for these metrics (see Discussion Sec.~\hyperref[sec:discussion]{V}).


We present three canonical metrics, describing amplification and illustrating how network structure maps onto responses in the case of direct propagation ($d=1, A_{im} \ne 0$) after truncation ($p=1$). Concretely,
\begin{equation*}  
\begin{aligned}  
Z_{im} &= \sum_{p=1}^{\infty}\left(\frac{1+p}{(\beta+D)^{p+2}}\right)\left[\mathbf{A}^p\right]_{im} \approx \frac{2A_{im}}{(\beta+D)^3}, \\ 
Z_{ii}^m &= 2R_{im}\left[\mathbf{P}_{\infty}\right]_{mi}^m + 2\sum_{r \neq m} R_{ir}\left[\mathbf{P}_{\infty}\right]_{ri}^m, \\
&\approx \frac{ A_{im}^2}{2(\beta+D)^4}
+ \sum_{r \neq m} \frac{ A_{ir} A_{rm} A_{im}}{2(\beta+D)^5},\\
Z_{im}^m &= R_{mm}\left[\mathbf{P}_{\infty}\right]_{mi}^m + \sum_{r \neq m} R_{mr}\left[\mathbf{P}_{\infty}\right]_{ri}^m \\
&+ R_{im}\left[\mathbf{P}_{\infty}\right]_{mm}^m + \sum_{r \neq m} R_{ir}\left[\mathbf{P}_{\infty}\right]_{rm}^m, \\
&\approx \frac{5 A_{im}}{4(\beta+D)^3}
+ \sum_{r \neq m} \frac{ A_{mr} A_{rm} A_{im}}{4(\beta+D)^5}+ \sum_{r \neq m} \frac{A_{ir} A_{rm}}{4(\beta+D)^4}.
\end{aligned}  
\end{equation*} 
Apart from the direct link $ A_{im}$, feedforward (FF) motifs ($ A_{ir} A_{rm} A_{im}$) also govern $Z_{ii}^m$ and indirect pathways ($ A_{mr} A_{rm} A_{im}$ and $ A_{ir} A_{rm}$) also govern $Z_{im}^m$ (Appendix~\ref{sec:D}). 

\par When dominant walks of length $p$ exist between source-target pairs (i.e., $\left[\mathbf{A}^p\right]_{im} \gg \left[\mathbf{A}^q\right]_{im}$ for $q \neq p$), strength metric scaling reveals universal alignment patterns across input classes when we omit effects of other walks. By isolating these terms including dominant walks and omitting $(\beta+D)$-independent coefficients, we obtain:

\begin{align}
& Z_{im} \sim Z_{im}^m \sim \frac{1}{(\beta+D)^{p+2}}\left[\mathbf{A}^p\right]_{im}, \label{eq:homo:Zim} \\
& R_{im} \sim P_{im}^m \sim \frac{1}{(\beta+D)^{p+1}}\left[\mathbf{A}^p\right]_{im},\label{eq:homo:Rim}\\
& P_{im} \sim \frac{1}{(\beta+D)^p}\left[\mathbf{A}^p\right]_{im} ,\label{eq:homo:Pim}\\
& P_{i i}^m \sim \frac{1}{(\beta+D)^{2 p+1}}\left[\mathbf{A}^p\right]_{im}^2, \label{eq:homo:Pii}\\
& Z_{i i}^m \sim \frac{1}{(\beta+D)^ {2 p+2}}\left[\mathbf{A}^p\right]_{im}^2.\label{eq:homo:Zii}
\end{align}

Three fundamental relationships emerge under the dominant walks case: (i) Constant-input amplification ($Z_{im}$) scales with noise-driven crosscovariance ($Z_{im}^m$), as Eq.~(\ref{eq:homo:Zim}) shows; (ii) Peak responses under constant ($R_{im}$), amplification under pulse and unit square inputs all map to noise-driven crosscovariance peaks ($P_{im}^m$), as Eq.~(\ref{eq:homo:Rim}) shows; (iii) Impulse-response peaks ($P_{im}$) serve as natural reference units ($\sim O(1)$), with metric differences arising solely through $1/(\beta+D)$ scaling (Eq.~(\ref{eq:homo:Pim})). These relationships reveal universal walk-length-dependent scaling underlying transient dynamics across different inputs.

\subsection{Heterogeneous in-degree}
\label{subsec:heter}

\par Heterogeneous in-degree distributions are a ubiquitous feature of real-world networks, spanning biological, transportation, and social systems \cite{roxin2011role, larremore2011predicting, wang2011synchronous, barrat2008dynamical}. In large-scale settings, global dynamical patterns can be captured by input-specific response profiles shaped by localized structural features \cite{barzel2013universality, harush2017dynamic, bao2022impact, hens2019spatiotemporal}. However, such heterogeneity induces asymmetries in signal propagation that mean-field approximations cannot capture accurately, particularly in finite-size networks. To systematically quantify structural heterogeneities, we develop a generalized expansion for arbitrary in-degree distributions. This framework naturally yields walk-length–decomposed solutions:

\begin{figure}
\includegraphics[width=1.0\linewidth]{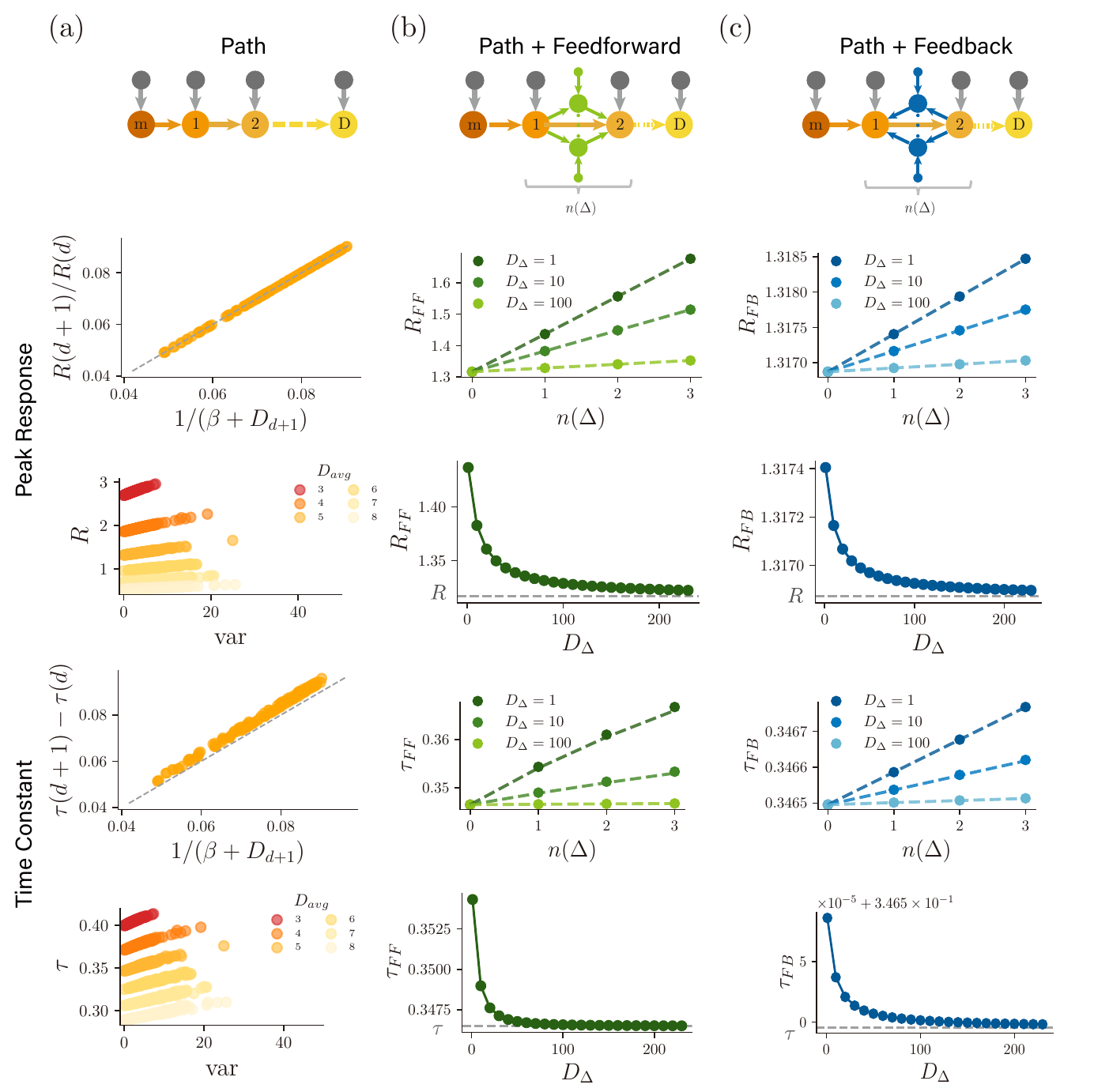}
\caption{\label{fig:five}
\textbf{Constant input propagation along a single path under heterogeneous degree configurations and triangle motifs.} Constant input at the first node $m$ of the single path produces propagation laws: ${R(d+1)}/{R(d)} = {A_{d{\to}d+1}} /({\beta + D_{d+1}}),$ and $\quad (\tau(d+1) - \tau(d)) = {1}/({\beta + D_{d+1}})$, where $d$ is the shortest path length. (a) Path-degree means reduce $R,\tau$ while variance enhances them. (b, c) Feedforward/feedback triangles amplify $R_{FF/FB},\tau_{FF/FB} \propto n(\Delta)$ with distinct slopes, where $n(\Delta)$ represents the number of triangular motifs. Dots represent simulations, and dotted lines represent theory. Large triangle-node degrees $D_\Delta$ suppress motifs effects, recovering single-path dynamics $R,\tau$ (rows $2,4$, and $n(\Delta)=1$). Note that the y-axis does not start from zero for better visualization of slope differences. Parameters: self-decay rate $\beta=10$, total path length $D=5$, input strength $I_0 = 10^6$, and unit weight along the chain $A_{d{\to}d+1} = 1$, for all $d$.
}
\end{figure}

\begin{equation}
\begin{aligned}
\left[\mathbf{H}^{-k}\right]_{im} &= (-1)^k \sum_{w \in \mathcal{W}(m \to i)} h_w^k {\mathcal{A}_w}, \\
\left[\mathbf{P}_{\infty}\right]_{ij}^m &= I_0 \sum_{\substack{w \in \mathcal{W}(m \to i) \\ v \in \mathcal{W}(m \to j)}} p(w,v) {\mathcal{A}_w \mathcal{A}_v},
\label{cauchy}
\end{aligned}
\end{equation}
where $\mathcal{W}(m \to i)$ denotes the set of all walks from node $m$ to node $i$ (with at least one edge for $m \ne i$). The walk term $\mathcal{A}_w \equiv \prod_{t=0}^{n-1} A_{w_{t+1} w_{t}}$ corresponds to the product of edge weights along the walk $w = (w_0, w_1, \dots, w_n)$ with $w_0 = m$ and $w_n = i$, while the weight terms, for example, $h_w^1 \equiv 1/\prod_{v \in w} (\beta + D_v)$ for $\mathbf{H}^{-1}$ represents the product of $(\beta + D_v)$ terms over all node occurrences in $\mathcal{W}(m \to i)$ (including multiplicities for revisited nodes; other expressions of weight terms are shown in Appendix~\ref{sec:E}). Although the degree configuration determines the weight terms $h_w^k$ and $p(w,v)$, the arrangement order of nodes along a walk does not affect weight values. The presence of nodes with strong walk centrality between $m$ and $i$ can substantially alter the weights assigned to different walks \cite{fink2011maximum}. {Overall, through Eq.~\eqref{cauchy}, all metrics can be expressed as walk-based decompositions, with their weights determined by self-dynamics of the nodes involved.}

Substitution of the expansion into Eqs.~(\ref{eq:const:Rim}) and (\ref{eq:const:tauim}) establishes the following relationships under constant input along the setting of one individual directed walk:
  
\begin{align}  
\frac{R(d+1)}{R(d)} &= \frac{A_{{d} \to {d+1}}}{\beta + D_{{d+1}}}, \label{strength:base}\\  
\tau(d+1) - \tau(d) &= \frac{1}{\beta + D_{{d+1}}},
\label{time:base}
\end{align}  

where both the multiplicative ratio between two sequential peak responses (which is always smaller than $1$ in the NSDD setup) and the additive latency between two sequential time constant are modulated by the subsequent node's in-degree $D_{{d+1}}$. These relationship provide the basic laws of propagation rooted in the minimal structural complexity. Heterogeneous in-degrees thus maintain global scaling structure while enabling local modulation through nodal degrees along the directed walk. We also investigate self-responses ($i=m$; Appendix~\ref{sec:E}), finding that deterministic metrics depend primarily on the self-degree $D_m$, while noise-driven cases incorporate stronger influences from other nodes, like reciprocal motifs.

\par Governed by propagation laws, signal propagation unfolds through walk-length iteration, with nodal in-degree configurations modulating its responses. Two key statistical effects emerge: (i) Increased mean in-degree $\langle D \rangle$ suppresses responses via degree-dependent damping, reducing both peak responses ($R \sim 1/(\beta+\langle D \rangle)^d$) and time constants ($\tau \sim 1/(\beta+\langle D \rangle)$); 
(ii) For fixed $\langle D \rangle$, heterogeneous degree distributions enhance signal propagation -- increasing variance $\sigma_D$ monotonically amplifies $R$ and $\tau$, with both metrics minimized exclusively at homogeneity ($\sigma_D = 0$) as shown in Fig.~\ref{fig:five}(a) and derived analytically in Appendix~\ref{sec:E}. The universal role of degree heterogeneity across diverse systems is further discussed in Discussion (Sec.~\hyperref[sec:discussion]{V}). 

\par To further investigate motif effects, we extend the baseline propagation laws for chains (Eqs.~(\ref{strength:base}) and (\ref{time:base})) by incorporating additional motifs. In principle, the quantifiable influence of motifs of any order (i.e., with arbitrary numbers of edges) can be derived, since their contributions can always be decomposed into walk-based node-to-node propagation (Appendix~\ref{sec:E}, Eq.~\eqref{cauchy}). For clarity, we highlight two representative cases, feedforward (FF) and feedback (FB) triangular motifs, to illustrate their distinct effects. As such,
\begin{equation*}  
\begin{aligned}  
R_{\mathrm{FF}}(d) &= R(d)\left(1 + \frac{n(\Delta)}{\beta + D_{\Delta}}\right),\\  
\tau_{\mathrm{FF}}(d) &= \tau(d) + \frac{n(\Delta)}{(\beta + D_{\Delta})(\beta + D_{\Delta} + n(\Delta))}, \\  
R_{\mathrm{FB}}(d) &= \frac{R(d)}{1 - n(\Delta) D_{\Delta}^{\times}}, \\ 
\tau_{\mathrm{FB}}(d) &= \tau(d) + \frac{n(\Delta) D_{\Delta}^{\times} D_{\Delta}^{+}}{1 - n(\Delta) D_{\Delta}^{\times}},  
\end{aligned}  
\label{triangle}  
\end{equation*}
where $n(\Delta)$ counts the number of triangular motifs, $D_{\Delta}$ denotes the assumed homogeneous in-degree at motif nodes (excluding chain nodes), with $D_{\Delta}^{\times} \equiv \prod_{k \in \Delta} (\beta + D_k)^{-1}$ and $D_{\Delta}^{+} \equiv \sum_{k \in \Delta} (\beta + D_k)^{-1}$. Numerical validation (Fig.~\ref{fig:five}(b, c)) in the strong self-decay regime ($\beta + D_{\Delta} \gg 1$) confirms that time constants increase approximately linearly with $n(\Delta)$, with a slope scaling as $(\beta + D_{\Delta})^{-2}$ for FF motifs, which is larger than the corresponding slope $\sim D_{\Delta}^{\times} D_{\Delta}^{+}$ for FB motifs under the same parameters. Similarly, peak responses scale proportionally with $n(\Delta)$, modulated by $R(d)(\beta + D_{\Delta})^{-1}$ for FF motifs, which also yields larger values compared to $\sim R(d) D_{\Delta}^{\times}$ for FB motifs. The mechanistic divergence emerges through distinct physical walks: FF motifs introduce an additional effect via off-pathway nodes, while FB motifs enable signal amplification through coherent reinforcement along loops (Appendix~\ref{sec:E}). Both effects are suppressed algebraically with increasing $D_{\Delta}$ through degree-dependent damping ($\sim D_{\Delta}^{-1}$ scaling), restoring baseline chain dynamics shown by $R(d)$ and $\tau(d)$.

\begin{figure}
\includegraphics[width=1.0\linewidth]{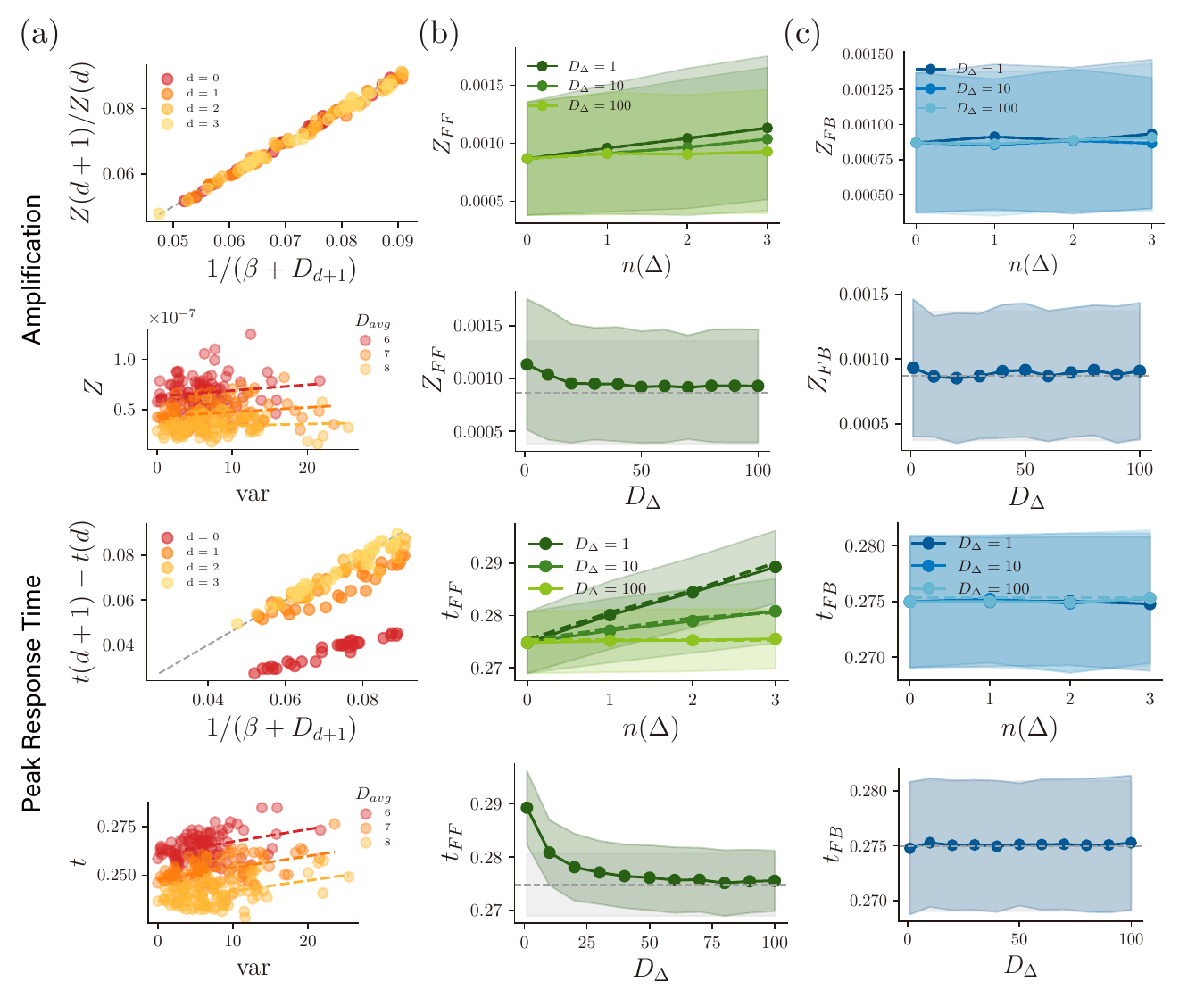}
\caption{\label{fig:six}
\textbf{White-noise input propagation along a single path under heterogeneous degree configurations and triangular motifs.} A white-noise input applied to the first node $m$ of the path produces propagation laws analogous to the constant-input case for crosscovariance between source and target pairs: 
${Z(d+1)}/{Z(d)} \to {A_{d \to d+1}}/{(\beta + D_{d+1})}$ and 
$(t(d+1) - t(d)) \to {1}/{(\beta + D_{d+1})}$ as $d \to \infty$, 
where $d$ is the shortest path length. 
(a) Increasing the mean path degree reduces $Z,t$, while increasing variance enhances them. (b, c) Feedforward and feedback triangles amplify $Z_{FF/FB}$ and $t_{FF/FB}$ proportionally to $n(\Delta)$, the number of triangular motifs, but with distinct slopes (dashed: averages over $1000$ realizations; shading: $\pm 1$ SD). Large triangle-node degrees $D_\Delta$ suppress motif effects, recovering single-path dynamics $Z,t$ (rows 2 and 4, $n(\Delta)=3$). Compared with the impact of feedforward triangles ($\sim O(1/\beta^4)$), the impact of feedback triangles ($\sim O(1/\beta^6)$) grows more slowly and is obscured by fluctuations. 
Parameters: self-decay rate $\beta=10$, total path length $D=5$, input strength $I_0=100$, and unit chain weight $A_{d \to d+1}=1$ for all $d$.
}
\end{figure}

\par The results for crosscovariance between source $m$ and target $i$ are qualitatively similar. Although the covariance computation retains information from preceding degrees, leading to a more intricate iterative form (Appendix~\ref{sec:E}), it converges to Eqs. (\ref{strength:base}) and (\ref{time:base}) for sufficiently long walk length. This convergence likewise reveals the amplifying effect of degree distribution variance to both strength and timing (Fig.~\ref{fig:six}(a)). Triangles also yield comparable effects, with amplification of both strength and timing as the number of triangles increases, though the modest contribution of FB triangles is masked by fluctuations. Again, the jamming effect of large nodal in-degrees persists in this setting (Fig.~\ref{fig:six}(b, c)).

\section{Discussion}
\label{sec:discussion}

\subsection*{Summary}
\label{subsec:summary}


In summary, we establish a general framework quantifying transient network dynamics, answering two pivotal problems (Q1 and Q2) in heterogeneous settings, and also answering some of the important questions raised in Timme and Nagler 2019 \cite{timme2019propagation}, such as connection between deterministic and stochastic dynamics and full understanding of deterministic local dynamics in the general linear settings. In our work, analytical solutions derived from matrix inverses and spectral analysis directly link nodal response metrics, strength (amplification and peak response) and timing (time constants and response time), across different inputs. Second, systematic matrix expansions reveal structure-dependent governing principles of signal propagation: chain and sparse random networks obey universal scaling laws for strength (Eq.~(\ref{chain:strength})) and temporal metrics (Eq.~(\ref{chain:time})); homogeneous in-degree networks exhibit distinct sensitivities to path lengths for strength and temporal metrics, with different effects across metrics emerging from two aspects: path-dominated propagation (Eqs.~\eqref{eq:homo:Zim}--\eqref{eq:homo:Zii}) and motif-dominated propagation; analysis for heterogeneous in-degree configurations provides the most general case, further helping to map degree statistics ($\langle D\rangle$, $\sigma_D$) and motifs to response modulation. This framework enables the quantitative characterization of transient dynamics and provides design principles to optimize networks for signal propagation, highlighting both its predictive and explanatory power.

\subsection*{Relationship between metrics}
\label{subsec:relation}


In real systems, various inputs, ranging from deterministic stimuli to stochastic fluctuations, could act upon the same underlying systems, producing correspondingly diverse responses. This naturally motivates the intuition that unified response metrics and laws might exist across input types due to the same systems. However, such equivalences are mathematically nontrivial and remain uncharacterized for transient dynamics. Establishing cross-input metric relationships is therefore crucial for developing a general framework that quantifies intrinsic system properties. Our framework addresses this by identifying metric relationships across four input classes, organized into three aspects: deterministic properties, stochastic properties, and structural constraints.


(i) \textit{Deterministic input relationships}. Under identical input amplitude ($I_0$) and location ($m$), there are mathematical equivalences between metrics for Eq.~(\ref{main_eq}). First, the constant-input response connects to the pulse-input response through exact temporal differentiation: ${d}\Delta x_i^{\text{const}}(t) /{dt} = \Delta x_i^{\text{pulse}}(t)$, directly linking the peak of constant input (Eq.~(\ref{eq:const:Rim})) to the pulse-response amplification (Eq.~(\ref{eq:pulse:Rim})). Second, square-input responses are truncated versions of constant-input responses, inheriting the same time constants during the relaxation period. The corresponding equivalences in the NSDD system (Eq.~(\ref{eq:NSDD})) reveal operational correspondences between seemingly distinct metrics: the constant-input time constant (Eq.~(\ref{eq:const:tauim})) becomes operationally equivalent to the pulse-input peak response time (Eq.~(\ref{eq:pulse:tauim})). A special case emerges for square inputs with unit duration ($t_s = 1$), where dual metric equivalences occur: square and pulse amplifications achieve numerically estimation identity (Eq.~(\ref{eq:square:amp}) $\approx$ Eq.~(\ref{eq:pulse:Rim})), and peak responses of square and pulse inputs numerically converge through isomorphic temporal evolution (Eq.~(\ref{eq:square:peak}) $\approx$ Eq.~(\ref{eq:pulse:Pim})). 


(ii) \textit{Stochastic input relationships}. Under identical input amplitude ($I_0^{\text{deter}} = I_0^{\text{noise}}$) and location ($m$), relationships between deterministic inputs and stochastic inputs emerge from two organizing principles. First, homogeneous networks with dominant pathways exhibit direct stochastic-deterministic metric correspondence: crosscovariance amplification quantitatively matches constant-input amplification (Eqs.~(\ref{eq:homo:Zim}) and (\ref{eq:homo:Rim})), also corresponding to the principle in chain-structure (Eq.~(\ref{chain:strength})). In heterogeneous settings, when the walk length is large enough, the iterative forms along a single walk are similar to the case under the constant input (Fig.~\ref{fig:six}). Second, first-order motif ($p=1$) analysis establishes that (a) For direct propagation ($A_{im} \ne 0$), autocovariance peaks for white-noise inputs (Eq.~\eqref{p_ij}) and deterministic responses (Eq.~\eqref{eq:approx_metrics}) follow direct pathway $m \to i$, while all other stochastic metrics (i.e., amplification and time constant for autocovariance $C_{ii}^m(\tau)$ (Eq.~\eqref{homo:auto_peak})) are additionally governed by feedforward motifs ($m \to i$, $m\to r\to i$; Fig.~\ref{noise_motif}), and metrics for crosscovariance $C_{ij}^m(\tau)$ (Eq.~\eqref{homo:cross_amp}) are governed by the diverging $(1,1)$ motif $m \to i, m \to j$, and also diverging $(1,2)$ motifs, including $m \to r \to j, m \to i$ and $m \to r \to i, m \to j$. (b) For self-node responses ($m = i$), deterministic dynamics depend mainly on its in-degree $D_i$ (Eqs.~\eqref{self:deter_inv}, \eqref{self:deter}), whereas autocovariance $C_{ii}^i(\tau)$ (Eqs.~\eqref{homo:self_amp}, \eqref{homo:self_time}) is additionally more sensitive to reciprocal motifs ($m \to k \to m$; Fig.~\ref{fig:mm}).

{Based on these relationships between inputs, the constant input makes it ideal for probing intrinsic system dynamics and extending theoretical results to other input types. For more intuitive understanding and practical usage of these types of inputs, please see Box.~2.} 

(iii) \textit{Structural dependency relationships}. The relationships emerge across three network settings. Chain and sparse random networks exhibit universal scaling laws: strength metrics (amplification, peak response) follow geometric decay with path length (Eq.~(\ref{chain:strength})), while temporal metrics (time constant, response time) scale linearly with path length (Eq.~(\ref{chain:time})). Homogeneous in-degree networks operate through two regimes: a dominant-path regime, which unifies strength metrics across input classes (Eqs.~\eqref{eq:homo:Zim}-~\eqref{eq:homo:Zii}), and a motif-driven regime where deterministic and stochastic responses diverge due to distinct motif dominance (Fig.~\ref{noise_motif}). Heterogeneous in-degree configurations exhibit dual statistical dependence: an increased mean degree $\langle D\rangle$ suppresses both peak response and time constant, while increased degree variance $\sigma_D$ amplifies these metrics across inputs. Additional motifs on the walks could enhance both strength and timing in different ways (Figs.~\ref{fig:five} and \ref{fig:six}).

\subsection*{Role of degree heterogeneity}
\label{subsec:degree_heter}

Unlike homogeneous networks where each node receives similar inputs, heterogeneity in degree configurations, such as those following a power-law distribution, plays a critical role in shaping the dynamical behaviors~\cite{roxin2011role, hens2019spatiotemporal, barzel2013universality}, functionality~\cite{arenas2008synchronization, nishikawa2003heterogeneity, demirtacs2019hierarchical}, robustness and resilience ~\cite{larremore2011predicting, auer2016impact, wang2022measurement} of real systems. In such networks, hub nodes, though rare, have a significant influence on spreading processes and can either facilitate or suppress propagation dynamics~\cite{barrat2008dynamical, barzel2013universality, hens2019spatiotemporal, bao2022impact, harush2017dynamic, pastor2015epidemic}. Our results demonstrate that in NSDD systems, hub nodes act as avert roles ($\sim D^{-1}$) in both strength domain and time domain, accelerating response decay (Eq.~(\ref{strength:base})) and suppressing the growth of time constants along propagation pathways (Eq.~(\ref{time:base})). Simultaneously, in-degree heterogeneity, described by variance under fixed mean, amplifies signal propagation by enhancing both strength and timing, arising from local structure-dependent iteration. 


\subsection*{Extension of framework}
\label{subsec:extension}



To maintain focus on generalizable principles, we strategically leave two aspects for future developments:

First, expanding the framework's input variety analysis to oscillatory inputs and colored noise inputs is critical for modelling real-world signal processing, especially in neural systems. This framework can also be extended to multi-node input scenarios, particularly for convergent motifs where multiple sources project to a single target, providing insight into signal integration and causal inference \cite{luo2021architectures, hu2013motif, battaglia2012dynamic, shao2025impact}. Second, incorporating inhibitory connections can induce structure-dependent sign reversals in the system’s response trace; in particular, the positivity of the response trace is no longer guaranteed (see SM Fig. S35). Additionally, inhibition significantly influences the initial phase of the response, potentially leading to non-monotonic behaviors such as overshoots. These effects underscore the need for refined sensitivity metrics, \textit{reactivity} indices, to assess whether impulse responses initially grow before decaying, especially in systems operating near the stability boundary~\cite{neubert1997alternatives,yang2023reactivity,krakovska2024resilience,hellmann2016survivability, fyodorov2025nonorthogonal}.


Furthermore, the numerical accuracy of estimated metrics can still be enhanced for specific topologies. Chain networks exemplify that higher-order temporal refinements of impulse responses ($\tilde{\tau}_{im}^{(p)} = -{[\mathbf{H}^{-(p+1)}]_{im}}/{[\mathbf{H}^{-p}]_{im}}$) reduce estimation errors asymptotically as $p \to \infty$ (Appendix~\ref{sec:C}). However, such refinements lack natural generalization to arbitrary topologies. Therefore, we retain the first-order estimator ($p=1$), prioritizing consistent accuracy across tested networks (Fig.~\ref{fig:two}) and simple spectral interpretability through series expansions.

Although our framework centers on linearized dynamics near equilibrium, it suggests natural insights for extension to nonlinear dynamical behaviors. For example, in homogeneous in-degree networks, we observe that temporal metrics exhibit increased sensitivity to longer path lengths compared to strength metrics. This motivates extending degree-based mean-field approaches (DBMF)~\cite{hens2019spatiotemporal} to explicitly incorporate higher-order motif interactions (at least second-order) beyond first-order degree approximations~\cite{bao2022impact}. Such extensions can establish more precise relationships between local topology features and collective dynamics, especially for temporal response properties.

\subsection*{Application}
\label{subsec:application}

The analytical metrics derived from our framework for heterogeneous networks under various inputs closely match numerical benchmarks across canonical structures, and can be decomposed into interpretable topological components that can guide real-world network design. This renders the framework highly applicable across diverse domains. 


(i) \textit{Structural heterogeneity.}
Our framework excels in finite-size networks that exhibit structural heterogeneity (e.g., asymmetric and weighted connectivity, local motifs) as found in coarse-grained multi-regional models of the mouse, primate, and human cortex, among others \cite{nozari2024macroscopic, wang2019hierarchical, chaudhuri2015large, demirtacs2019hierarchical}. Here, anatomical heterogeneity critically shapes functional states, where predicting localized transients is essential for linking structural connectivity to functional connectivity or effective connectivity during spontaneous and evoked activity~\cite{gollo2014mechanisms,chaudhuri2015large,wang2019hierarchical, luo2025mapping}. This capability proves crucial for networks with empirically observed features like hubs, inter-areal asymmetries, local motifs, and hierarchical gradients that govern directional propagation patterns~\cite{brockmann2013hidden, chaudhuri2015large, li2022hierarchical, barzel2013universality} and cognitive specialization~\cite{demirtacs2019hierarchical, huntenburg2018large, joglekar2018inter}. Our framework enables a quantitative description of all these effects through the decomposition into diverse walks (Eq.~\eqref{cauchy}).

While traditional artificial network models often assume structural homogeneity or randomness~\cite{scabini2023structure}, recent studies have demonstrated that even minor topological variations can directly impact deep learning performance~\cite{jiang2024network}. Specifically, structural properties of a network's line graph (e.g., high modularity, short average path length, and distinct spectral signatures) facilitate efficient parameter interactions and stable gradient flow, thereby enhancing learning efficiency and generalization. Conversely, overly homogeneous or excessive hub dominant structures degrade performance through inefficient learning dynamics~\cite{jiang2024network}. Our framework can help clarify how fine-grained structural heterogeneity directly influences model performance in artificial network models.

(ii) \textit{Various setups of local inputs.}
Our work could provide intuitive understanding and useful metrics across different setups of local inputs for
assessing real-world systems.
In neuroscience and synthetic biology, square inputs (on or off) or noise inputs probing at different nodes in the network are well suited for mapping connectivity and dynamics~\cite{nag2023dynamic, rajan2016recurrent, friston2011functional, cornelius2013realistic,luo2025mapping}.
In infrastructure and ecological systems, noise analysis provides insight into transient dynamics and system vulnerability to local disturbances~\cite{gu2020performance,hastings2018transient}.
In social or behavioral interventions, impulse-like or short-square nudges on different units are used to assess immediate responses~\cite{shah2023dynamics,gilarranz2017effects}.
These types of local inputs are mathematically linked within our framework, which also reveals how input location and amplitude interact with network structure to generate diverse transient responses and reflects the system’s vulnerability or resilience.

The emergence of these diverse and structured transient responses is well exemplified by the heterogeneous cortical networks in mammalian brains.
Computational models based on empirically measured connectivity matrix have shown that the hierarchical cortical network exhibits a timescale hierarchy that is consistent with the experimental observations, regardless of input types.
In addition, the information flow can be reconfigured with respect to the cortical hierarchy (e.g., sensory cortices are usually of low hierarchical order, while associative cortices' are higher) depending on which cortical region receives the input. ~\cite{demirtacs2019hierarchical,wasmuht2018inherent,li2022hierarchical,van2021microscopic,wong2006recurrent,chaudhuri2015large,tang2024stimulus,joglekar2018inter,chaudhuri2014diversity}.


(iii) \textit{Strength-timing trade-off.}
The response strength-timing relationship provides a basic design principle for network optimization (Eqs.~\eqref{strength:base} and \eqref{time:base}). Network structure has already shown distinct effects of signal propagation in time and strength profiles, exemplified by \textit{balanced amplification} in neural systems: tuning of feedforward or feedback excitation against local inhibition for stable, selective signal enhancement in cortical processing~\cite{murphy2009balanced,chaudhuri2015large, joglekar2018inter}. Analogous trade-offs guide recurrent neural network design for temporal tasks (sequential decisions, credit assignment), where connectivity modulation via gain or structured recurrence controls response latency, dynamic range, and noise robustness~\cite{mastrogiuseppe2018linking,sussillo2009generating,maass2002real}. Strategic tuning of recurrent coupling strength in rate-based network models establishes an optimal balance between memory retention and signal decay, preserving short-term memory while maintaining dynamical stability~\cite{rajan2016recurrent}. Thus, this principle bridges biological computation and synthetic system design for temporal control, where our framework holds the potential to extend these insights to understand neural computations, as well as applications in social~\cite{gelardi2021temporal} and biological systems~\cite{shi2021dynamics}.

\clearpage

\newtheorem{theorem}{Theorem}
\newtheorem{lemma}{Lemma}
\newtheorem{definition}{Definition}
\newcommand{\reflabeq}[2]{
  \ifcsname r@#2\endcsname
    \label{#2}
  \fi
  \begin{equation}
    #1
    \reflabeq{#2}
  \end{equation}
}
\appendix

\section{PROPERTIES OF NEGATIVE STRICTLY DIAGONALLY DOMINANT (NSDD) MATRICES} \label{sec:A}

\renewcommand\theequation{A\arabic{equation}}
\setcounter{equation}{0}

\par In the NSDD system, the linear matrix $\mathbf{H} \equiv \mathbf{A} - \mathbf{D} - \operatorname{diag}(\beta_i)$ is strictly diagonally dominant with negative diagonal entries, ensuring all eigenvalues have negative real parts. Moreover, $-\mathbf{H}$ is an $M$-matrix, a structure with many desirable properties, as shown below.

\begin{lemma}
[All nodal dynamics are positive under positive pulse and constant inputs; under constant inputs, they increase monotonically]
The NSDD system exhibits positive activity across all nodes after positive inputs $I_0 >0$: the time course $\Delta x^{\text{const}}_i(t)$ satisfies $\Delta x^{\text{const}}_i(t) > 0$ and ${d\Delta x^{\text{const}}_i(t)}/{dt} = \Delta x^{\text{pulse}}_i(t) > 0$ for all $i \in \{1, \dots, N\}$.
\label{lemma1}
\end{lemma}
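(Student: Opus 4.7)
The plan is to exploit the fact that, because $A_{ij}\ge 0$ for $i\neq j$ and $A_{ii}=0$, the matrix $\mathbf{H}=\mathbf{A}-\mathbf{D}-\mathrm{diag}(\beta_i)$ is a \emph{Metzler matrix}: all its off-diagonal entries are non-negative, and only its diagonal is negative. Metzler matrices generate non-negative semigroups, so $e^{\mathbf{H}t}$ has non-negative entries for every $t\ge 0$. From this single structural observation, both parts of the lemma will follow by standard manipulations of the variation-of-constants formula.

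First I would establish the non-negativity of the semigroup. Choose $c>\max_i(\beta_i+D_i)$; then $\mathbf{M}\equiv \mathbf{H}+c\mathbf{I}_N$ has all entries non-negative, so each power $\mathbf{M}^k$ is entry-wise non-negative, and hence
\begin{equation*}
e^{\mathbf{H}t}=e^{-ct}\,e^{\mathbf{M}t}=e^{-ct}\sum_{k=0}^\infty \frac{t^k}{k!}\,\mathbf{M}^k \;\ge\; 0
\end{equation*}
entry-wise for every $t\ge 0$. For \emph{strict} positivity at a reachable target $i$ from the input support $\mathrm{supp}(\mathbf{I}_0)$, note that if there is a directed walk of length $d$ from some $m\in\mathrm{supp}(\mathbf{I}_0)$ to $i$ in the graph of $\mathbf{A}$, then $[\mathbf{M}^d]_{im}\ge [\mathbf{A}^d]_{im}>0$, which forces $[e^{\mathbf{H}t}]_{im}>0$ for every $t>0$. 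Under the weak-connectivity assumption stated in Section~\ref{subsec:general}, every component reachable from the input support (which is the whole graph in the standard setup) therefore receives a strictly positive contribution.

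With the semigroup controlled, the pulse statement is immediate: $\Delta\mathbf{x}^{\text{pulse}}(t)=e^{\mathbf{H}t}\mathbf{I}_0^{\text{pulse}}$, and with $\mathbf{I}_0^{\text{pulse}}\ge 0$ having at least one strictly positive component, the argument above yields $\Delta x_i^{\text{pulse}}(t)>0$ for all $t>0$ and every reachable $i$. For the constant-input case, I use the variation-of-constants representation
\begin{equation*}
\Delta\mathbf{x}^{\text{const}}(t)=\int_0^t e^{\mathbf{H}s}\mathbf{I}_0^{\text{const}}\,ds,
\end{equation*}
whose integrand is pointwise non-negative and strictly positive on an interval of positive length at every reachable coordinate, giving $\Delta x_i^{\text{const}}(t)>0$ for $t>0$. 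Monotonicity is then free: differentiating under the integral gives
\begin{equation*}
\frac{d}{dt}\Delta x_i^{\text{const}}(t)=\bigl[e^{\mathbf{H}t}\mathbf{I}_0^{\text{const}}\bigr]_i=\Delta x_i^{\text{pulse}}(t)>0,
\end{equation*}
which is exactly the identity asserted in the lemma and simultaneously establishes strict monotonicity.

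The only genuinely delicate step is upgrading the non-negativity $e^{\mathbf{H}t}\ge 0$ to strict positivity at every reachable node, since it requires identifying a walk of some length $d$ from the support of the input to the target and tracking that this contribution is not cancelled. In the NSDD setting this is automatic because all off-diagonal entries of $\mathbf{H}$ have the same (non-negative) sign, so no cancellations occur in $\mathbf{M}^k$; this is precisely where the positivity of the off-diagonal pattern of $\mathbf{H}$, as opposed to mere Hurwitz stability, is used, and it is also the step that would fail once inhibitory couplings $A_{ij}<0$ are introduced, consistent with the overshoot phenomena discussed in the main text.
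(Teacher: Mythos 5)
Your proposal is correct, and while it rests on the same central device as the paper's proof --- the diagonal shift $\mathbf{M}=\mathbf{H}+c\mathbf{I}_N$ with $c>\max_i(\beta_i+D_i)$, which makes $\mathbf{M}$ entry-wise non-negative and hence $e^{\mathbf{H}t}=e^{-ct}e^{\mathbf{M}t}\ge 0$ --- it is organized around a different decomposition and ends up more self-contained. The paper works with the deviation from the \emph{final} steady state, $\overline{\Delta x^{\text{const}}_i(t)}=[e^{\mathbf{H}t}\,\mathbf{H}^{-1}\mathbf{I}_0^{\text{const}}]_i$, which forces it to first establish $[\mathbf{H}^{-1}]_{im}\le 0$ via the theory of non-singular $M$-matrices, and to outsource the monotonicity step (positivity of the pulse response) to the cited reference of Wolter et al. Your variation-of-constants representation $\Delta\mathbf{x}^{\text{const}}(t)=\int_0^t e^{\mathbf{H}s}\mathbf{I}_0^{\text{const}}\,ds$ bypasses both: positivity of the constant-input response and of its derivative follow in one stroke from the non-negativity of the semigroup, with no appeal to $M$-matrix inverse-positivity and no external citation, and the identity $d\Delta x_i^{\text{const}}/dt=\Delta x_i^{\text{pulse}}$ is obtained for free from the fundamental theorem of calculus rather than assumed. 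Your strict-positivity upgrade via $[\mathbf{M}^d]_{im}\ge[\mathbf{A}^d]_{im}>0$ along a length-$d$ walk is sound and is in fact more careful than the paper on one point: the lemma's blanket claim ``for all $i$'' silently presupposes that every node is reachable from the input support by a \emph{directed} walk, which weak connectivity (irreducibility of $\mathbf{H}+\mathbf{H}^\top$) does not guarantee; for unreachable nodes the response is identically zero, as the paper itself concedes in its Lemma~2. You flag this reachability caveat explicitly, which the paper's proof of this lemma does not. (Incidentally, the paper's displayed chain of equalities ends with a ``$>0$'' where its own setup requires ``$<0$''; that sign slip does not affect the comparison.)
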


\begin{proof}
Based on the definition, it is equivalently to prove that $\overline{\Delta x^\text{const}_i(t)} \equiv \Delta x^\text{const}_i(t)-\Delta x^\text{const}_i(\infty)=[e^{Ht}\,\overline{\Delta \mathbf{x}^\text{const}(0)}]_i< 0$ and that $\overline{\Delta x^\text{const}_i(t)}$ increases monotonically.

First, we prove that the initial value $\overline{\Delta x^\text{const}_i(0)} = \left[\mathbf{H}^{-1} \mathbf{I}_0^\text{const}\right]_i < 0$, i.e., $\left[\mathbf{H}^{-1}\right]_{im} \leqslant 0$ for all $i, m$ in the case of a single-node input to $m$ with positive scalar ${I}_0^\text{const}$.  

In the NSDD system, the matrix $-\mathbf{H}$ is a non-singular $M$-matrix, which can be expressed as $-\mathbf{H} = \mathbf{S} - \mathbf{B}$, where $\mathbf{B} = (b_{ij})$ with $b_{ij} \geqslant 0$. The diagonal matrix $\mathbf{S}$ satisfies $S_{ii} \geq \max |\lambda_i(\mathbf{B})|$, meaning each diagonal entry exceeds the largest absolute eigenvalue of $\mathbf{B}$. Since the inverse of a non-singular $M$-matrix is non-negative, we have $\left[\mathbf{H}^{-1}\right]_{im} \leqslant 0$ for all $i, m$.

Next, we prove that $\overline{\Delta x^\text{const}_i(t)} < 0$. The expression can be expanded as
$$
\overline{\Delta x^\text{const}_i(t)} =\sum_{j=1}^N\left[e^{\mathbf{H}t}\right]_{ij} \overline{\Delta x^\text{const}_j(0)}.
$$
Define $\mathbf{C} = \mathbf{H} + b\mathbf{I}_N$, where $b > \max \{|H_{ii}|\}$; then $\left[\mathbf{C}\right]_{ii} > 0$ for all $i$, and $\left[\mathbf{C}\right]_{ij} \geq 0$ for all $i \ne j$. Hence, all entries of $\mathbf{C}^n$ are non-negative for any $n \geq 0$, and there exists $n^*$ such that for all $n > n^*$, $\left[\mathbf{C}^n\right]_{ij} > 0$. Thus, 
$$
\left[e^{\mathbf{C} t}\right]_{ij} = \sum_n \frac{t^n \left[\mathbf{C}^n\right]_{ij}}{n!} > 0.
$$

Now,
\begin{align}
\overline{\Delta x^{\text{const}}_i(t)} &= \sum_{j=1}^N \left[e^{\mathbf{H}t}\right]_{ij} \overline{\Delta x^{\text{const}}_j(0)}, \\
&= \sum_{j=1}^N \left[e^{ \left(\mathbf{C} - b\mathbf{I}_N\right)t}\right]_{ij} \overline{\Delta x^{\text{const}}_j(0)}, \\
&= \sum_{j=1}^N e^{-bt} \left[e^{\mathbf{C}t}\right]_{ij} \overline{x^{\text{const}}_j(0)} > 0.
\end{align}

Finally, note that ${d\Delta x^{\text{const}}_i(t)}/{dt} = \Delta x^{\text{pulse}}_i(t) > 0$ under the same input nodes and amplitudes, with the positivity of $\Delta x^{\text{pulse}}_i(t)$ established in \cite{wolter2018quantifying}. This monotonic behavior ensures the uniqueness of the solution for the temporal metrics we defined.
\end{proof}

\begin{lemma}[Signatures of $\mathbf{H}$ at negative integer powers]  In the NSDD system, the matrix powers of $\mathbf{H}$ satisfy the following:  
$\left[ \mathbf{H}^{-p} \right]_{km} \geqslant 0$ for all positive even integers $p$, and  
$\left[ \mathbf{H}^{-p} \right]_{km} \leqslant 0$ for all positive odd integers $p$.  
Moreover, $\left[ \mathbf{H}^{-p} \right]_{km} = 0$ indicates that an input at node $m$ cannot reach node $k$.
\end{lemma}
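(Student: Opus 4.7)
The plan is to exploit the two structural facts about $\mathbf{H}$ that were invoked in the proof of Lemma~\ref{lemma1}: that $-\mathbf{H}$ is a non-singular M-matrix, and that $-\mathbf{H}=\mathbf{S}-\mathbf{A}$ with $\mathbf{S}=\operatorname{diag}(\beta_i+D_i)$ positive diagonal and $\mathbf{A}\geqslant 0$ entry-wise. The sign statement and the reachability statement can then be handled in turn.

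For the signs, I would write $[\mathbf{H}^{-p}]_{km}=(-1)^p[(-\mathbf{H})^{-p}]_{km}$. Since $-\mathbf{H}$ is a non-singular M-matrix, its inverse is entry-wise non-negative, and the product of non-negative matrices stays non-negative. Hence $(-\mathbf{H})^{-p}=[(-\mathbf{H})^{-1}]^p\geqslant 0$ entry-wise for every positive integer $p$, which directly yields $[\mathbf{H}^{-p}]_{km}\geqslant 0$ when $p$ is even and $[\mathbf{H}^{-p}]_{km}\leqslant 0$ when $p$ is odd.

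For the reachability claim, I would use the Neumann series. Strict diagonal dominance gives $\rho(\mathbf{S}^{-1}\mathbf{A})<1$, so
\begin{equation*}
(-\mathbf{H})^{-1}=\sum_{n=0}^{\infty}\mathbf{S}^{-1}(\mathbf{A}\mathbf{S}^{-1})^n,
\end{equation*}
a sum of entry-wise non-negative matrices. The $(k,m)$ entry of the $n$-th summand is a positively weighted sum over directed walks of length $n$ from $m$ to $k$ in the graph of $\mathbf{A}$, so $[(-\mathbf{H})^{-1}]_{km}>0$ exactly when $m\to k$ is reachable (and $[(-\mathbf{H})^{-1}]_{kk}\geqslant 1/(\beta_k+D_k)>0$ from the $n=0$ term). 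For $p\geqslant 2$, I would expand
\begin{equation*}
[(-\mathbf{H})^{-p}]_{km}=\sum_{\ell_1,\ldots,\ell_{p-1}}[(-\mathbf{H})^{-1}]_{k\ell_1}\,[(-\mathbf{H})^{-1}]_{\ell_1\ell_2}\cdots[(-\mathbf{H})^{-1}]_{\ell_{p-1}m},
\end{equation*}
a sum of non-negative terms. If $m\to k$ is unreachable, every summand must vanish, since otherwise concatenating the sub-walks witnessed by the nonzero factors would produce a walk $m\to k$, contradicting unreachability. Conversely, if $m\to k$ is reachable, setting $\ell_1=\cdots=\ell_{p-1}=k$ produces the strictly positive product $[(-\mathbf{H})^{-1}]_{kk}^{p-1}[(-\mathbf{H})^{-1}]_{km}>0$, completing the equivalence.

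The main obstacle I anticipate is the bookkeeping in the walk-counting argument: one needs to track the diagonal $\mathbf{S}^{-1}$ factors interleaved between adjacency matrices in $(\mathbf{A}\mathbf{S}^{-1})^n$ carefully enough to ensure that positivity of a summand rigorously corresponds to the existence of a directed walk. Beyond that, every step reduces to standard M-matrix theory combined with the Neumann expansion already used elsewhere in the paper.
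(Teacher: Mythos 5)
Your proof is correct and follows essentially the same route as the paper's: non-negativity of $(-\mathbf{H})^{-1}$ from M-matrix theory yields the alternating signs of $\mathbf{H}^{-p}$ (the paper does this by iterating the product $\mathbf{H}^{-(p+1)}=\mathbf{H}^{-p}\mathbf{H}^{-1}$, you do it via $(-\mathbf{H})^{-p}=[(-\mathbf{H})^{-1}]^p\geqslant 0$, which is the same idea), and the Neumann/walk expansion handles reachability. If anything, your handling of the reachability claim for $p\geqslant 2$ — noting that a nonzero summand would concatenate into a walk $m\to k$, and exhibiting the strictly positive term with $\ell_1=\cdots=\ell_{p-1}=k$ — is more complete than the paper's, which simply asserts that the zero pattern of $\mathbf{H}^{-1}$ is preserved under powers.
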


\begin{proof}
From Lemma~\ref{lemma1}, we know that $\left[\mathbf{H}^{-1}\right]_{km} \leqslant 0$ for all $k, m$. Then, we have:
\begin{equation}
\begin{aligned}
\left[\mathbf{H}^{-2}\right]_{km} &= \sum_j \left[\mathbf{H}^{-1}\right]_{kj} \left[\mathbf{H}^{-1}\right]_{jm} \geqslant 0, \\
\left[\mathbf{H}^{-3}\right]_{km} &= \sum_j \left[\mathbf{H}^{-2}\right]_{kj} \left[\mathbf{H}^{-1}\right]_{jm} \leqslant 0, \\
&\vdots
\end{aligned}
\end{equation}
This pattern holds iteratively, establishing the sign structure of $\mathbf{H}^{-p}$ for all positive integers $p$.

\par Consider the Neumann series expansion
\begin{equation}
(\mathbf{I}_N - \mathbf{A})^{-1} = \sum_{q=0}^{\infty} \mathbf{A}^q,
\end{equation}
which implies that if node $k$ cannot be reached from node $m$, then $\left[\mathbf{A}^q\right]_{km} = 0$ for all $q$, since each power $q$ represents walks of length $q$ between nodes. 

Given that $\mathbf{H} = \mathbf{A} - \mathbf{D} - \operatorname{diag}_{i \in \{1, \ldots, N\}}(\beta_i)$, the additional diagonal terms do not affect the connectivity between node pairs. Therefore, $\left[\mathbf{H}^{-1}\right]_{km} = 0$ if and only if node $m$ cannot influence node $k$ through the network. This property is preserved for any power of $\mathbf{H}$, and thus $\left[\mathbf{H}^{-p}\right]_{km} = 0$ for all $p$ when no path exists from $m$ to $k$.
\end{proof}

\begin{lemma}[Initial decrease of input node under pulse input]  
In the NSDD system, the response of the input node initially decreases: the solution $x^{pulse}_m(t)$ decreases at the onset of a pulse input.  
\label{lemma2}
\end{lemma}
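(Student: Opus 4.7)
The plan is to reduce the statement to an elementary sign check at $t=0^+$, exploiting the fact that a Dirac pulse loads all of its weight onto node $m$ instantaneously. First I would model the input as $\mathbf{I}(t)=I_0^{\text{pulse}}\delta(t)\,\mathbf{e}_m$ with $I_0^{\text{pulse}}>0$ and $\mathbf{x}(0^-)=\mathbf{0}$. Integrating Eq.~(\ref{main_eq}) over $[0^-,0^+]$ absorbs the delta into the state, giving $\mathbf{x}(0^+)=I_0^{\text{pulse}}\mathbf{e}_m$, so that $x_m(0^+)=I_0^{\text{pulse}}>0$ while $x_j(0^+)=0$ for every $j\neq m$.

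Next I would compute the one-sided derivative at $t=0^+$ directly from the homogeneous ODE $\dot{\mathbf{x}}=\mathbf{H}\mathbf{x}$, which governs the dynamics for $t>0$. Taking the $m$-th component and using that only $x_m$ is nonzero at $t=0^+$,
\begin{equation*}
\dot{x}^{\text{pulse}}_m(0^+) \;=\; \sum_{j=1}^{N} H_{mj}\,x_j(0^+) \;=\; H_{mm}\,I_0^{\text{pulse}}.
\end{equation*}
By the NSDD assumption, $H_{mm}=-(\beta_m+D_m)<0$, and since $I_0^{\text{pulse}}>0$ this yields $\dot{x}^{\text{pulse}}_m(0^+)<0$. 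Hence $x^{\text{pulse}}_m$ is strictly decreasing at the onset of the pulse, which is exactly the claim.

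The only subtlety, if any, lies in formalizing the delta limit. To avoid distributional arguments, one can instead take the pulse as the limit $\varepsilon\to 0^+$ of rectangular inputs of height $I_0^{\text{pulse}}/\varepsilon$ on $[0,\varepsilon]$. During the loading phase the equation for $x_m$ is $\dot{x}_m=H_{mm}x_m+\sum_{j\ne m}H_{mj}x_j+I_0^{\text{pulse}}/\varepsilon$; a Duhamel estimate shows that the off-diagonal terms contribute only $O(\varepsilon)$, so $x_m(\varepsilon)=I_0^{\text{pulse}}+O(\varepsilon)$ and $x_j(\varepsilon)=O(\varepsilon)$ for $j\neq m$. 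Passing $\varepsilon\to 0$ recovers $\mathbf{x}(0^+)=I_0^{\text{pulse}}\mathbf{e}_m$ and the sign of $\dot{x}_m(0^+)$. I expect no deep obstacle: the argument only uses the negativity of the diagonal entry $H_{mm}$, so the lemma in fact extends beyond the strict NSDD setting to any Hurwitz $\mathbf{H}$ with $H_{mm}<0$.
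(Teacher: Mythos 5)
Your proposal is correct, and it reaches the conclusion by a genuinely more elementary route than the paper. The paper works with the full expression $\tfrac{d}{dt}\Delta x^{\text{pulse}}_m(t)=[\mathbf{H}e^{\mathbf{H}t}]_{mm}I_0^{\text{pulse}}$, shifts $\mathbf{H}=\mathbf{C}-b\mathbf{I}_N$ to make $\mathbf{C}$ entrywise nonnegative, invokes strict diagonal dominance to bound $[\mathbf{H}]_{mm}[e^{\mathbf{C}t}]_{mm}+\sum_{j\ne m}[\mathbf{H}]_{mj}[e^{\mathbf{C}t}]_{jm}$ by $\sum_{j\ne m}[\mathbf{H}]_{mj}\bigl([e^{\mathbf{C}t}]_{jm}-[e^{\mathbf{C}t}]_{mm}\bigr)$, and then argues via leading-order Taylor terms that each difference is negative for small $t$. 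You instead evaluate the same derivative exactly at $t=0^+$, where $e^{\mathbf{H}t}$ collapses to the identity, obtaining $\dot{x}_m(0^+)=H_{mm}I_0^{\text{pulse}}<0$ in one line; since $t\mapsto[\mathbf{H}e^{\mathbf{H}t}]_{mm}$ is continuous, negativity persists on an initial interval, which is the content of the lemma. (It would be worth stating that continuity step explicitly rather than leaving it implicit.) Your version is cleaner and, as you note, isolates the only hypothesis actually needed, namely $H_{mm}<0$, so the conclusion extends beyond strict diagonal dominance; the paper's longer route uses the full NSDD structure but, as written, still only delivers the same qualitative "sufficiently small $t$" statement, and its final Taylor comparison is considerably less tight than your direct evaluation. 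Your regularization of the delta by rectangular pulses is a nice optional rigor point consistent with the paper's own definition $\Delta x^{\text{pulse}}_i(t)=[e^{\mathbf{H}t}\mathbf{I}_0^{\text{pulse}}]_i$.
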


\begin{proof}
We aim to show that $d\Delta x^{\text{pulse}}_m(t)/dt = \left[\mathbf{H} e^{\mathbf{H}t} \mathbf{I}_0^{\text{pulse}}\right]_m < 0$.

\begin{align}
&\frac{d}{dt} \Delta x_m^{\text{pulse}}(t) = \sum_j \left[\mathbf{H} e^{\mathbf{H} t}\right]_{mj} \left[\mathbf{I}_0\right]_j \\
&= \left[\mathbf{H} e^{\mathbf{H} t}\right]_{mm} I_0^{\text{pulse}} = \sum_j \left[\mathbf{H}\right]_{mj} \left[e^{\mathbf{H} t}\right]_{jm} I_0^{\text{pulse}}.
\end{align}

Define $\mathbf{C} = \mathbf{H} + b \mathbf{I}_N$, where $b > \max \{ |H_{ii}| \}$. Then $\left[\mathbf{C}\right]_{ii} > 0$ for all $i$ and $\left[\mathbf{C}\right]_{ij} \geq 0$ for all $i \neq j$. Substituting $\mathbf{H} = \mathbf{C} - b \mathbf{I}_N$, we get:
\begin{equation}
\begin{aligned}
&\sum_j \left[\mathbf{H}\right]_{mj} \left[e^{\mathbf{C} t - b \mathbf{I}_N t}\right]_{jm} I_0^{\text{pulse}} \\
&= \sum_j e^{-bt} \left[\mathbf{H}\right]_{mj} \left[e^{\mathbf{C} t}\right]_{jm} I_0^{\text{pulse}} \\
&= I_0^{\text{pulse}} e^{-bt} \left( \left[\mathbf{H}\right]_{mm} \left[e^{\mathbf{C} t}\right]_{mm} + \sum_{j \neq m} \left[\mathbf{H}\right]_{mj} \left[e^{\mathbf{C} t}\right]_{jm} \right).
\end{aligned}
\end{equation}

Since the first two terms are positive, we focus on the summation terms. Using the diagonal dominance of $\mathbf{H}$, we obtain:

\begin{equation}
\begin{aligned}
\left[\mathbf{H}\right]_{mm} \left[e^{\mathbf{C} t}\right]_{mm} &+ \sum_{j \neq m} \left[\mathbf{H}\right]_{mj} \left[e^{\mathbf{C} t}\right]_{jm}  \\
&< \sum_{j \ne m} \left[\mathbf{H}\right]_{mj} \left( \left[e^{\mathbf{C} t}\right]_{jm} - \left[e^{\mathbf{C} t}\right]_{mm} \right).
\end{aligned}
\end{equation}

Because $\left[\mathbf{H}\right]_{mj} > 0$, we examine the sign of the difference. Expanding $\left[e^{\mathbf{C} t}\right]_{jm}$ and $\left[e^{\mathbf{C} t}\right]_{mm}$ in Taylor series, we find that for small $t$, the leading order term of the difference is dominated by:

\[
\frac{t^d [\mathbf{C}^d]_{jm}}{d!} - 1,
\]
where $d$ is the shortest path length from node $m$ to node $j$. Therefore, for sufficiently small $t$, we have $t^d < \frac{d!}{[\mathbf{C}^d]_{jm}}$, implying the expression is negative.

\end{proof}

\begin{lemma}[Autocovariance and crosscovariance are positive]  
If $\mathbf{H}$ is diagonalizable and both nodes $i$ and $j$ are reachable from the input source $m$, then in the NSDD system, both the autocovariance and crosscovariance satisfy $C_{ij}(\tau) = \left[e^{\mathbf{H} \tau}\, \mathbf{P}_{\infty}\right]_{ij} > 0$, for all $i, j$.
\end{lemma}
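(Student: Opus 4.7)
The plan is to combine the integral form of the Lyapunov solution, Eq.~\eqref{eq:lyapunov_int}, with the Metzler-shift trick already used in Lemma~\ref{lemma1}. For the single-node noise input at $m$, one has $\mathbf{Q}=I_0^{\mathrm{noise}}\, e_m e_m^{\top}$, hence
\begin{equation*}
[\mathbf{P}_{\infty}]_{kj} = I_0^{\mathrm{noise}}\int_0^{\infty} [e^{\mathbf{H}s}]_{km}\,[e^{\mathbf{H}s}]_{jm}\,ds,
\end{equation*}
so positivity of $\mathbf{P}_{\infty}$, and ultimately of $C_{ij}(\tau)$, reduces to controlling the signs of individual entries of $e^{\mathbf{H}s}$.

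Next, I reuse the shift $\mathbf{C}=\mathbf{H}+b\mathbf{I}_N$ with $b>\max_i|H_{ii}|$, so that $\mathbf{C}_{ij}\ge 0$ and, by the commutativity of scalar matrices, $e^{\mathbf{H}s}=e^{-bs}\,e^{\mathbf{C}s}$. Expanding $e^{\mathbf{C}s}=\sum_{n\ge 0} s^n\mathbf{C}^n/n!$, each $[\mathbf{C}^n]_{km}$ is a sum of weighted walks of length $n$ from $m$ to $k$, so $[e^{\mathbf{H}s}]_{km}\ge 0$ always, and $[e^{\mathbf{H}s}]_{km}>0$ for every $s>0$ whenever $m$ reaches $k$ (take $n$ equal to any walk length; the $n=0$ contribution covers $k=m$). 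Consequently $[\mathbf{P}_{\infty}]_{kj}\ge 0$ with strict inequality whenever both $k$ and $j$ are reachable from $m$, since the integrand is then a product of two positive continuous functions of $s$.

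I then expand $C_{ij}(\tau) = [e^{\mathbf{H}\tau}\mathbf{P}_{\infty}]_{ij} = \sum_k [e^{\mathbf{H}\tau}]_{ik}\,[\mathbf{P}_{\infty}]_{kj}$; every summand is non-negative by the previous step. Under the hypothesis that $i$ is reachable from $m$, the $k=i$ term contributes $[e^{\mathbf{H}\tau}]_{ii}\,[\mathbf{P}_{\infty}]_{ij}\ge e^{-b\tau}\,[\mathbf{P}_{\infty}]_{ij}>0$, where the lower bound on the first factor comes from the $n=0$ term in the expansion of $e^{\mathbf{C}\tau}$, and the second factor is strictly positive since both $i$ and $j$ are reachable from $m$. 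Adding a strictly positive term to a non-negative sum gives $C_{ij}(\tau)>0$ for $\tau>0$; for $\tau\le 0$, the identity $\mathbf{C}(\tau)=\mathbf{C}(-\tau)^{\top}$ stated in Sec.~\ref{subsec:noise} reduces the case to the one already handled.

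The main obstacle is maintaining strict, rather than merely non-negative, positivity across an infinite-time integral followed by a matrix product; the walk interpretation of $[\mathbf{C}^n]_{km}$ handles this by tying the reachability assumption to a concrete walk length at which the entry becomes strictly positive, preventing any cancellation from spoiling the conclusion. The diagonalizability hypothesis is not actually used by this route; it would be needed only if one preferred the eigenmode form of $\mathbf{P}_{\infty}$ in Eq.~\eqref{eq:lyapunov_eigen} to the integral representation.
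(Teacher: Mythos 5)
Your proposal is correct and takes essentially the same route as the paper: both arguments rest on the integral representation $[\mathbf{P}_{\infty}]_{kj}=I_0\int_0^{\infty}[e^{\mathbf{H}s}]_{km}[e^{\mathbf{H}s}]_{jm}\,ds$ together with entrywise non-negativity of $e^{\mathbf{H}s}$ obtained from the Metzler shift $\mathbf{C}=\mathbf{H}+b\mathbf{I}_N$. If anything, your version is slightly tighter than the paper's, which asserts $[\mathbf{P}_{\infty}]_{kj}>0$ for every summation index $k$ (false for $k$ unreachable from $m$) rather than, as you do, isolating the strictly positive $k=i$ term inside an otherwise non-negative sum; your observation that diagonalizability is not actually needed is also correct.
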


\begin{proof}
To show that $\left[e^{\mathbf{H} \tau} \mathbf{P}_{\infty}\right]_{ij} > 0$, it suffices to prove that $\sum_k [e^{\mathbf{H} \tau}]_{ik} [\mathbf{P}_{\infty}]_{kj} > 0$.  
In the NSDD system, $[e^{\mathbf{H} \tau}]_{ik} \geq 0$, so we only need to prove that $[\mathbf{P}_{\infty}]_{kj} > 0$.

\par Since $\mathbf{H}$ is diagonalizable, we have:
\begin{equation}
\begin{aligned}
[\mathbf{P}_{\infty}]_{kj} 
&= \int_{-\infty}^{t} 
\!\left[ e^{\mathbf{H}(t-\tau)}\, \mathbf{Q}\, e^{\mathbf{H}^{\top}(t-\tau)} \right]_{kj} d\tau \\[4pt]
&= \int_{-\infty}^{t} 
\sum_{m} [e^{\mathbf{H}(t-\tau)}]_{k m}\, Q_{m m}\, [e^{\mathbf{H}^{\top}(t-\tau)}]_{m j}\, d\tau \\[4pt]
&= Q_{m m}\int_{-\infty}^{t} 
\sum_{m} [e^{\mathbf{H}(t-\tau)}]_{k m}\, [e^{\mathbf{H}(t-\tau)}]_{j m}\, d\tau,
\end{aligned}
\end{equation}

\par If $k = j$, this becomes $[\mathbf{P}_{\infty}]_{kk} = \int_{-\infty}^t [e^{\mathbf{H}(t - \tau)}]_{km}^2\, d\tau > 0$.  
If $k \ne j$, then $[e^{\mathbf{H}(t - \tau)}]_{km} > 0$ for $(t - \tau) > 0$, ensuring positivity of the integral.

\par Note that the derivative of autocovariance is not always positive, implying non-monotonic decay in certain cases, particularly with strong interactions or feedback loops. The derivative is given by:
\begin{equation}
\begin{aligned}
C_{ii}'(\tau) &= \left[\mathbf{H} e^{\mathbf{H} \tau} \mathbf{P}_{\infty}\right]_{ii}, \\
&= \sum_j [\mathbf{H}]_{ij} C_{ji}(\tau), \\
&= [\mathbf{H}]_{ii} C_{ii}(\tau) + \sum_{j \ne i} [\mathbf{H}]_{ij} C_{ji}(\tau), \\
&= \left(-\sum_{j \ne i} A_{ij} - \beta\right) C_{ii}(\tau) + \sum_{j \ne i} A_{ij} C_{ji}(\tau), \\
&= \sum_{j \ne i} A_{ij} \left(C_{ji}(\tau) - C_{ii}(\tau)\right) - \beta C_{ii}(\tau).
\end{aligned}
\end{equation}

\par This expression indicates that to ensure monotonic decay of $C_{ii}(\tau)$, one can either increase the self-decay rate $\beta$, or ensure $C_{ji}(\tau) < C_{ii}(\tau)$, meaning self-dynamics dominate.
\end{proof}

\section{DERIVATION OF METRICS ACROSS INPUTS} \label{sec:B}

\renewcommand\theequation{B\arabic{equation}}
\setcounter{equation}{0}

\subsection*{Constant input}
\label{sec:B:constant}
\par In this subsection, we analyze the system’s response to constant input (i.e., Heaviside step function), a typical and analytically tractable case often used in large-scale complex systems.

Assume the system is initially at steady state at $t = 0$, where $\mathbf{I}(0)$ satisfies $0 = \dot{\mathbf{x}}(0) = \mathbf{H} \mathbf{x}(0) + \mathbf{I}(0)$. Define variations from this steady state as $\Delta \mathbf{x}(t) \equiv \mathbf{x}(t) - \mathbf{x}(0)$ and $\Delta \mathbf{I}(t) \equiv \mathbf{I}(t) - \mathbf{I}(0)$, yielding the dynamics  
$\Delta \dot{\mathbf{x}}(t) = \mathbf{H} \Delta \mathbf{x}(t) + \Delta \mathbf{I}(t)$.

\par For constant input $\mathbf{I}_0^\text{const} \equiv \Delta \mathbf{I}(t)$, the system experiences a constant perturbation that eventually stabilizes to a final steady state $\Delta \mathbf{x}(\infty)$. By asymptotic stability,  
\[
0 = \Delta \dot{\mathbf{x}}(\infty) = \mathbf{H} \Delta \mathbf{x}(\infty) + \mathbf{I}_0^\text{const},
\]  
and the \textit{peak response} at node $i$ is  
\begin{equation}
\refstepcounter{equation}\label{peak_response}
\tag{B\arabic{equation}}
    R_i \triangleq \Delta x_i(\infty) = -[\mathbf{H}^{-1} \mathbf{I}_0^\text{const}]_i.
\end{equation}

\par Define $\overline{\Delta \mathbf{x}(t)} \equiv \Delta \mathbf{x}(t) - \Delta \mathbf{x}(\infty)$ as the deviation from final steady state. Then,  
\[
\dot{\overline{\Delta \mathbf{x}(t)}} = \mathbf{H} \overline{\Delta \mathbf{x}(t)},
\]  
with solution  
\begin{equation}
\refstepcounter{equation}\label{general_dynamic}
\tag{B\arabic{equation}}
    \overline{\Delta x^\text{const}_i(t)} \equiv [e^{\mathbf{H}t}\overline{\Delta \mathbf{x}(0)}]_i,
\end{equation}
where $\overline{\Delta \mathbf{x}(0)} = -\Delta \mathbf{x}(\infty) = \mathbf{H}^{-1} \mathbf{I}_0^\text{const}$.

\par Note: the form of $\overline{\Delta \mathbf{x}(t)}$ can be regarded as the evolution in the impulse response case (see Eq. (\ref{pulse:eq})), but with multiple nonzero elements in $\overline{\Delta \mathbf{x}(0)}$, since each column of $\mathbf{H}^{-1}$ typically contains multiple nonzero entries due to the weak connectivity of the graph.

\par The evolution can also be expressed as $\overline{\Delta x_i^\text{const}(t)} = \sum_j {(u_{i m}^j e^{\lambda_j t} I_0)}/{\lambda_j}$ in response to a scalar input $I_0$ at node $m$, provided that $\mathbf{H}$ is diagonalizable. Here, $u_{i m}^j = [\mathbf{U}]_{ij} [\mathbf{U}^{-1}]_{jm}$, where $\mathbf{U}$ and $\mathbf{U}^{-1}$ are the eigenmatrix and its inverse, respectively. If $\mathbf{H}$ is a normal or real symmetric matrix, $\mathbf{U}^{-1}$ can be replaced by $\mathbf{U}^*$ or $\mathbf{U}^\top$, respectively.

\par Substituting $\overline{\Delta \mathbf{x}(0)}$ into Eq.~(\ref{general_dynamic}) and adding $\Delta \mathbf{x}(\infty)$ yields the full time-dependent activity at node $i$:
\begin{equation}
\refstepcounter{equation}\label{constant_course}
\tag{B\arabic{equation}}
\Delta x^\text{const}_i(t) \equiv \left[\mathbf{H}^{-1}(e^{\mathbf{H}t} - \mathbf{I}_N)\, \mathbf{I}_0^\text{const} \right]_i.
\end{equation}

\par In Fig.~\ref{constant_sketch}(a), we show the time course of a single node receiving constant input (red line), followed by the propagation of a constant input to other nodes along a directed chain, which responds more slowly and weakly (gray lines). To characterize temporal properties of $\Delta x^\text{const}_i(t)$ under constant input (Eq. (\ref{constant_course})), we introduce two temporal metrics: the \textit{relative propagation time} ${t}_{i}$, defined by the relative threshold ${\eta} = {\Delta x^\text{const}_i({t}_{i})}/{\Delta x^\text{const}_i(\infty)}$ \cite{hens2019spatiotemporal, bao2022impact}, and the \textit{absolute propagation time} $\widetilde{t}_{i}$, defined by the absolute threshold $\widetilde{\eta} = \Delta x^\text{const}_i(\widetilde{t}_{i})$.

\begin{figure}[!ht]
\centerline{\includegraphics[scale=0.8]{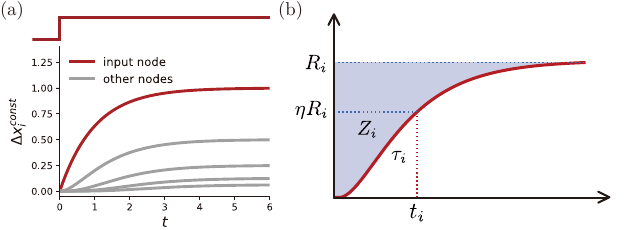}}
    \caption{
    Quantifying the impact of constant input on network dynamics.  
(a) Schematic showing the constant input and the resulting responses from the input node and other network nodes. The input node responds immediately with a stronger reaction, while other nodes respond more slowly and weakly.  
(b) Illustration of response metrics under constant input. Key metrics include peak response ($R_i$), amplification ($Z_i$), time constant ($\tau_i$), and relative propagation time (${t}_i$) at threshold ${\eta} R_i$.
}
\label{constant_sketch}
\end{figure}

\par Thresholds ${\eta}$ $(0 < {\eta} < 1)$ and $\widetilde{\eta}$ $(0 < \widetilde{\eta} < R_i)$ are chosen based on the timescale of interest: smaller values highlight early responses, while larger values capture slower, sustained dynamics. The half-maximum relative threshold ($\eta \sim \frac{1}{2}$) is commonly used in physics and chemistry, corresponding to the system's half-response time or half-period.
The time constant, corresponding to a $(1/e)$-fraction of the relative response, is also fundamental in physics, engineering, and neuroscience for describing convergence speed to steady state \cite{chaudhuri2014diversity}. The absolute propagation time $\widetilde{t}_i$ is particularly relevant when the absolute threshold itself is meaningful, such as in the global workspace theory of consciousness \cite{murphy2009balanced, joglekar2018inter}.

\par For the NSDD system, we prove that $\overline{\Delta x^\text{const}_{i}(t)}$ monotonically increases and remains negative under positive constant inputs. This confirms the positivity and monotonic growth of $\Delta x^\text{const}_{i}(t)$ (Appendix~\ref{sec:A}). If negative entries $A_{ij}$ are included, the traces may not remain positive. However, the relative response time can still be estimated as ${\eta} = {|\Delta x^\text{const}_i ({t}_{i})|}/{|\Delta x^\text{const}_i(\infty)|}$, see Supplementary Material (SM) Sec. II.

\par Based on this, we approximate the activity variation of node $i$ using a single exponential function with \textit{time constant} $\tau_{i}$:
\begin{equation}
	\overline{\Delta x^\text{const}_{i}(t)} \approx [e^{-t / \tau_{i}}\overline{\Delta \mathbf{x}(0)}]_i.
\end{equation}

\par The time constant is derived by integrating the response over time. The integral, termed the response \textit{amplification}, is
\begin{equation}
   Z_i \triangleq - \int_{0}^{+\infty}\overline{\Delta x^\text{const}_{i}(t)} \, dt = [\mathbf{H}^{-2} \mathbf{I}_0^\text{const}]_i,
\end{equation}
and the time constant $\tau_i$ is approximated by:
\begin{equation}
\refstepcounter{equation}\label{const_time}
\tag{B\arabic{equation}}
\tau_i = -\frac{\left[\mathbf{H}^{-2} \mathbf{I}_0^\text{const}\right]_i}{\left[\mathbf{H}^{-1} \mathbf{I}_0^\text{const}\right]_i}.
\end{equation}

\par The time constant $\tau_i$ characterizes how quickly the response evolves: for decaying $\overline{\Delta x^\text{const}_i(t)}$, it marks the time to reach $1/e \approx 36.8\%$ of the initial value $\overline{\Delta x^\text{const}_i(0)}$; for growing $\Delta x^\text{const}_i(t)$, it corresponds to reaching $(1 - 1/e)$ of the steady-state value $\Delta x^\text{const}_i(\infty)$.

\par The relative propagation time is:
\begin{equation}\label{eq_time}
	{t}_i = -\tau_i \ln(1 - {\eta}),
\end{equation}
and the absolute propagation time is:
\begin{equation}\label{eq_time_abs}
	\widetilde{t}_i = -\tau_i \ln\left(1 - \frac{\widetilde{\eta}}{R_i}\right).
\end{equation}

\par The system's evolution can also be approximated by a sigmoid-shaped curve:
\begin{equation}
	\overline{\Delta x^\text{const}_{i}(t)} \approx \left[\frac{2}{1 + e^{t / \tau_i^s}} \, \overline{\Delta \mathbf{x}(0)}\right]_i.
\end{equation}
The sigmoid time constant is given by $\tau_i^s = {\tau_i}/{(2 \ln 2)}$. The corresponding relative and absolute propagation times are:
\begin{align*}
{t}_i^s &= -\frac{\tau_i}{2 \ln 2} \ln\left(\frac{1 - {\eta}}{1 + {\eta}}\right), \quad \\
\widetilde{t}_i^s &= -\frac{\tau_i}{2 \ln 2} \ln\left(\frac{\Delta x^\text{const}_i(\infty) - \widetilde{\eta}}{\Delta x^\text{const}_i(\infty) + \widetilde{\eta}}\right).
\end{align*}
\par For a fixed ${\eta}$, both exponential and sigmoid models yield a relative propagation time proportional to the time constant $\tau_i$. The absolute propagation time additionally depends on the peak response. The peak response is determined by the product of the input amplitude and $\mathbf{H}^{-1}$, while the amplification results from multiplying the input amplitude by $\mathbf{H}^{-2}$. Their ratio gives the time constant. Fig.~\ref{constant_sketch}(b) summarizes these metrics for quantifying response dynamics under constant input. Numerical simulations across a broad range of network topologies and interaction weights are shown in SM Sec. II.  
\par All the metrics receiving  constant input at source $m$ targeted with $i$ are
\begin{equation}
\begin{aligned}
Z_{im} &\triangleq [\mathbf{H}^{-2}]_{im} I_0^{\text{const}} = \sum_{j=1} \frac{u_{im}^j}{\lambda_j^2} I_0^{\text{const}} \sim O(1/\lambda_1^2), \\
R_{im} &\triangleq -[\mathbf{H}^{-1}]_{im} I_0^{\text{const}} = -\sum_{j=1} \frac{u_{im}^j}{\lambda_j} I_0^{\text{const}} \sim O(1/\lambda_1), \\  
\tau_{im} &\triangleq -\frac{[\mathbf{H}^{-2}]_{im}}{[\mathbf{H}^{-1}]_{im}} = -\frac{\sum_{j=1} \frac{u_{im}^j}{\lambda_j^2}}{\sum_{j=1} \frac{u_{im}^j}{\lambda_j}} \sim O(1/\lambda_1), \\ 
t_{im} &\triangleq -\tau_{im} \ln(1 - \eta) \sim O(1/\lambda_1).
\end{aligned}
\label{constant}
\end{equation}

\subsection*{Pulse input}
\label{sec:B:pulse}  
\par Understanding the impulse response of a linear time-invariant system is essential for analyzing its dynamics. The system's output to any input can be constructed from its impulse response in LTI systems.

\par We first consider the case without external input $\mathbf{I}(t)$. With initial condition $\mathbf{x}_0 \triangleq \Delta \mathbf{x}(0)$ at $t = 0$, we analyze the response defined as $\Delta \mathbf{x}(t) \equiv \mathbf{x}(t) - \mathbf{x}(0)$. The dynamics follow $\Delta \dot{\mathbf{x}}(t) = \mathbf{H} \Delta \mathbf{x}(t)$ with solution $\Delta x_i^\text{pulse}(t) \equiv [e^{\mathbf{H}t}\, \mathbf{x}_0]_i$.

\par Alternatively, introducing an external input $\delta \mathbf{I}(t)$ modeled as a Dirac delta function, and defining $\mathbf{I}_0^\text{pulse} \equiv \int_0^{\infty} \delta \mathbf{I}(t)\, dt$, the response becomes:
\begin{equation}
\refstepcounter{equation}\label{pulse:eq}
\tag{B\arabic{equation}}
\begin{aligned}
\Delta x_i^\text{pulse}(t) &\equiv \left[\int_0^t \exp[\mathbf{H}(t - \tau)]\, \delta \mathbf{I}(\tau)\, d\tau\right]_i, \\
&= [e^{\mathbf{H}t}\, \mathbf{I}_0^\text{pulse}]_i.
\end{aligned}
\end{equation}

\par These two methods produce identical outputs when $\mathbf{x}_0 = \mathbf{I}_0^\text{pulse}$. In practice, since a Dirac delta is not implementable in simulation, we approximate it by applying $\mathbf{I}_0^\text{pulse}$ at the first time step.

\par Additionally, if $\mathbf{H}$ is diagonalizable, the response to an input $I_0$ at node $m$ can also be written as:  
\begin{equation}
\Delta x_i^\text{pulse}(t) = \sum_j u_{i m}^j e^{\lambda_j t} I_0.
\label{eq:eigen_pulse}
\end{equation}  

\begin{figure}[!ht]
\centerline{\includegraphics[scale=0.8]{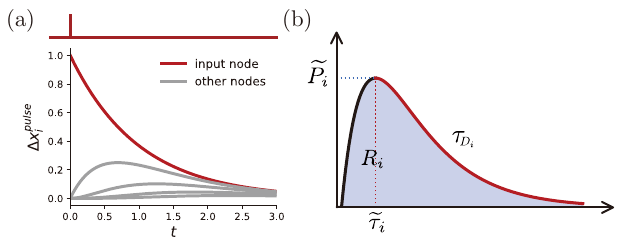}}
    \caption{Quantifying the impact of a pulse input on network dynamics.  
(a) Schematic showing the pulse input and resulting responses from the input node and other network nodes. The input node exhibits immediate decay from the initial input, whereas other nodes rise first and then decay, reaching lower peaks.  
(b) Illustration of response metrics for pulse input. Key metrics include the amended peak response (${\widetilde{P}_i}$), amplification ($R_i$), amended peak time ($\widetilde{\tau}_i$), and decay time constant ($\tau_{D_i}$).}
\label{pulse_sketch}
\end{figure}

\par The complete time courses are shown in Fig.~\ref{pulse_sketch}(a). The input node responds to a single pulse with immediate decay (Appendix~\ref{sec:A}), while responses of other nodes increase first and decay slowly in the NSDD system. Although the full temporal evolution can be described analytically, calculating specific temporal metrics is challenging due to the presence of transcendental equations. Methods have been developed to characterize system responses using effective probability distributions \cite{wolter2018quantifying, schroder2019dynamic}. This approach normalizes the response into a probability distribution, providing a compact view of transient dynamics, briefly introduced below.

\par In the NSDD system, the response of each node to a positive input remains positive over time \cite{wolter2018quantifying}. Normalizing the response trajectory by the total response strength, defined as \textit{amplification}
\begin{equation}
\refstepcounter{equation}\label{pulse_amp}
\tag{B\arabic{equation}}
R_i \triangleq \int_0^{\infty} \Delta x^\text{pulse}_i(t)\, dt = -[\mathbf{H}^{-1} \mathbf{I}_0^\text{pulse}]_i, 
\end{equation}
yields the \textit{probability density}  
\[
\rho_i(t) \triangleq \frac{\Delta x_i^\text{pulse}(t)}{R_i}.
\]
\textit{Peak response time} is then defined as the expected value:
\begin{equation}
\refstepcounter{equation}\label{pulse_time}
\tag{B\arabic{equation}}
\tau_i\triangleq \int_0^{\infty} t \rho_i(t)\, dt = -\frac{\left[\mathbf{H}^{-2} \mathbf{I}_0^\text{pulse}\right]_i}{\left[\mathbf{H}^{-1} \mathbf{I}_0^\text{pulse}\right]_i}.
\end{equation}

\par Notably, the expressions for Eqs.~(\ref{const_time}) and (\ref{pulse_time}) are similar under the same input amplitude and location. For constant input, the time constant is estimated when the response reaches $(1 - 1/e)$ of the steady state; for pulse input, it estimates the time to peak response.

\par Additionally, the duration and amplitude of the response are characterized by the \textit{standard deviation} $\sigma_i$ and the \textit{peak response} $P_i$. The peak response is defined as the ratio of amplification to standard deviation. In analogy to the normal distribution, where the total area is fixed at $1$, a higher peak implies a narrower spread (shorter duration), and a lower peak implies a broader response.

\begin{equation}
\begin{aligned}
\sigma_i & \triangleq \sqrt{\int_0^{\infty}\left(t-\tau_i \right)^2 \rho_i(t) d t}  \\
&=\sqrt{\frac{2\left[\mathbf{H}^{-3} \mathbf{I}^\text{pulse}_0\right]_i}{\left[\mathbf{H}^{-1} \mathbf{I}^\text{pulse}_0\right]_i}-\left(\frac{\left[\mathbf{H}^{-2} \mathbf{I}_0^\text{pulse}\right]_i}{\left[\mathbf{H}^{-1} \mathbf{I}_0^\text{pulse}\right]_i}\right)^2}, \\
P_i&\triangleq \frac{R_i}{\sigma_i}\\ &=\frac{\left(\left[\mathbf{H}^{-1} \mathbf{I}_0^\text{pulse}\right]_i\right)^2}{\sqrt{2\left[\mathbf{H}^{-3} \mathbf{I}^\text{pulse}_0\right]_i\left[\mathbf{H}^{-1} \mathbf{I}^\text{pulse}_0\right]_i-\left(\left[\mathbf{H}^{-2} \mathbf{I}^\text{pulse}_0\right]_i\right)^2}}.
\end{aligned}
\end{equation}

\par A refined method \cite{schroder2019dynamic} improves accuracy by correcting estimation bias:

\begin{equation}
\begin{aligned}
\widetilde{\tau}_i &\triangleq \tau_i + \frac{1}{\lambda_1}, \\
\widetilde{P}_i &\triangleq \frac{\sqrt{d+1}\, d^d}{e^d\, d!} P_i = \frac{P_i}{\sqrt{2\pi}} + O(d^{-1}),
\end{aligned}
\label{refined}
\end{equation}
where $\widetilde{\tau}_i$ is the refined peak time, and $\widetilde{P}_i$ is the refined peak response. The correction involves $\lambda_1 \equiv \max_j \operatorname{Re}(\lambda_j)<0$, which is the dominant eigenvalue of $\mathbf{H}$. The first expression in $\widetilde{P}_i$ applies for small shortest path lengths $d$, while the asymptotic form applies for large $d$.
\par These corrections are based on the observation that the response approximates the form $\Delta x^\text{pulse}_i(t) = t^d e^{\lambda_1 t}$, where $d$ is the shortest path length from the input node to node $i$. The rising phase scales with $t^d$, and the decay phase is governed by $\lambda_1$. This form is similar to the alpha function seen in chain-like structures (Appendix~\ref{sec:C}). The refined estimators are especially suited for weakly coupled systems, where the identical coupling strength $\alpha$ satisfies $\alpha / \lambda_1 \rightarrow 0$ \cite{schroder2019dynamic}.

\par Notably, the impulse response curve is often asymmetric: rising sharply to its peak and then decaying more gradually, typically following an exponential-form trend. To characterize this decay phase, we estimate a decay rate once the peak response and its timing have been identified. We define a \textit{decay time constant} $\tau_{D_i}$, approximating the descent using a single exponential function:

\[
\int_{\widetilde{\tau}_i}^{+\infty} [e^{\mathbf{H}t}\mathbf{I}_0^\text{pulse}]_i\, dt = \widetilde{P}_i \int_0^{+\infty} e^{-t/\tau_{D_i}}\, dt,
\]
which yields:
\[
\tau_{D_i} = -\frac{[e^{\mathbf{H} \widetilde{\tau}_i} \mathbf{H}^{-1} \mathbf{I}_0^\text{pulse}]_i}{\widetilde{P}_i}.
\]

\par This expression involves a matrix exponential and two estimated metrics ${\widetilde{\tau}_i}$ and ${\widetilde{P}_i}$, which is hard for analysis. Additional insight can be gained by estimating the area under the impulse response from $\widetilde{\tau}_i$ to infinity, approximating $(1-1/e) R_i$. This leads to a simplified estimate for the decay rate:
\begin{equation}
    \tau_{D_i} = \left(1 - \frac{1}{e}\right) \frac{R_i}{\widetilde{P}_i}.
\end{equation}

\par Three limitations for estimations might emerge, especially for the temporal metrics: (i) strong interactions may induce multi-peak responses in recurrent loop structures (See SM Fig. S13), (ii) hub-nonhub node pairs in scale-free networks can exhibit obvious asymmetric time course, which leads to biased estimations for peak response time (See SM Figs. S11, S14), and (iii) inhibitory connections may violate non-negativity assumptions (See SM Figs. S15, S16). However, these limitations mainly affect numerical accuracy but do not alter the response order.

\par All the metrics receiving  pulse input at source $m$ targeted with $i$ are
\begin{equation}
\begin{aligned}
R_{im} &\triangleq -[\mathbf{H}^{-1}]_{im} I_0^{\text{pulse}} = -\sum_{j=1} \frac{u_{im}^j}{\lambda_j} I_0^{\text{pulse}} \sim O(1/\lambda_1), \\
P_{im} &\triangleq C(d) \frac{([\mathbf{H}^{-1}]_{im})^2 I_0^{\text{pulse}}}{\sqrt{2[\mathbf{H}^{-3}]_{im}[\mathbf{H}^{-1}]_{im} - ([\mathbf{H}^{-2}]_{im})^2}} \\
&= C(d) \frac{\left(\sum_p \frac{u_{im}^p}{\lambda_p}\right)^2 I_0^{\text{pulse}}}{\sqrt{\sum_{p,q} \frac{u_{im}^p u_{im}^q}{\lambda_p^2 \lambda_q(2\lambda_p - \lambda_q)}}} \sim O(1), \\
\tau_{D_{im}} &\triangleq \left(1 - \frac{1}{e}\right) \frac{R_{im}}{\widetilde{P}_{im}} \sim O(1/\lambda_1), \\
\widetilde{\tau}_{im} &\triangleq -\frac{[\mathbf{H}^{-2}]_{im}}{[\mathbf{H}^{-1}]_{im}} + \frac{1}{\lambda_1} = -\frac{\sum_j \frac{u_{im}^j}{\lambda_j^2}}{\sum_j \frac{u_{im}^j}{\lambda_j}} + \frac{1}{\lambda_1} \sim O(1/\lambda_1). \label{pulse}
\end{aligned}
\end{equation}

\subsection*{Square input}
\label{sec:B:square}
\par Square inputs are commonly used for their simplicity and analytical tractability, especially in neuroscience and image processing. Conceptually, a square input is a truncated constant input. Assuming exponential evolution with the corresponding time constant (Eq. (\ref{const_time})), the \textit{peak response} is estimated as:
\begin{equation}
R_{i}C(t_s,\tau_{i}) \triangleq -(1 - e^{-t_s / \tau_i}) [\mathbf{H}^{-1} \mathbf{I}_0^\text{square}]_i.
\end{equation}
As $t_s \to \infty$, $C(t_s,\tau_{i}) \to 1$.

\par \textit{Amplification} combines two parts: the truncated constant-input phase and the decay phase, yielding:
\begin{equation}
R_it_s \triangleq -t_s [\mathbf{H}^{-1} \mathbf{I}_0^\text{square}]_i.
\end{equation}
All the metrics are shown in Fig. \ref{square}(b).

\begin{figure}[!ht]
    \centerline{\includegraphics[scale=0.8]{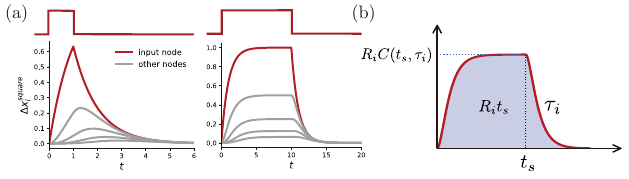}}
    \caption{Quantifying the impact of a square input on network dynamics.  
(a) Schematic showing the square input and corresponding responses from the input node and other nodes for both short and long durations. For short durations, the response shape resembles the impulse response in downstream nodes.  
(b) Illustration of response metrics for a square input with duration time $t_s$. Key metrics include the peak response ($R_{i}C(t_s,\tau_{i})$), amplification ($R_it_s$), and time constant ($\tau_i$).
}
\label{square}
\end{figure}

\par Under unit input amplitude and duration conditions ($I_0 = 1$, $t_s = 1$), amplification equals both the pulse-integrated amplification and the constant-input peak response $R_i$ (Eqs. \eqref{peak_response}, \eqref{pulse_amp}). Amplification increases linearly with $t_s$ at rate $R_i$ and remains robust across parameter regimes (See SM Fig.~S17). As shown in Fig.~\ref{square}(a), for short durations, the response closely resembles the impulse response. For unit-duration square input, peak responses match those of impulse input, excluding the input node (See SM Fig. S17).  

\par Peak response accuracy may degrade under: weak interactions and long paths ($\alpha \ll \beta$; See SM Fig. S17(k), S18(d)); or short duration input ($t_s \ll \tau_{im}$), where non-exponential transients dominate \cite{schroder2019dynamic}.

\subsection*{Noise input}
\label{sec:B:noise}
\par We consider white noise input in this subsection with zero mean and spectral density $\mathbf{Q}$. As white noise is discontinuous and unbounded, we adopt the Itô interpretation and rewrite the equation as $d\mathbf{x} = \mathbf{H}\mathbf{x}\,dt + d\boldsymbol{\beta}$, where $\mathbf{I} = d\boldsymbol{\beta}/dt$ and $\boldsymbol{\beta}$ is Brownian motion. The complete solution is given by:
\begin{equation}
\mathbf{x}^{\text{noise}}(t) = e^{\mathbf{H}t}\mathbf{x}^{\text{noise}}(0) + \int_0^t e^{\mathbf{H}(t-\tau)}\, d\boldsymbol{\beta}(\tau).
\end{equation}

\par Taking expectations and covariances yields:
\begin{equation}
\begin{aligned}
&\mathbb{E}[\mathbf{x}(t)] 
= e^{\mathbf{H}t}\, \mathbb{E}[\mathbf{x}(0)] 
\triangleq \mathbf{m}(t), \\[6pt]
&\mathbb{E}\!\left[(\mathbf{x}(t) - \mathbf{m}(t))(\mathbf{x}(t) - \mathbf{m}(t))^{\top}\right] \\ 
&= e^{\mathbf{H}t}\, 
\mathbb{E}\!\left[(\mathbf{x}(0) - \mathbf{m}(0))(\mathbf{x}(0) - \mathbf{m}(0))^{\top}\right]
e^{\mathbf{H}^{\top}t}\\ &+ \int_0^t 
e^{\mathbf{H}(t - \tau)}\, \mathbf{Q}\, e^{\mathbf{H}^{\top}(t - \tau)}\, d\tau 
\triangleq \mathbf{P}(t).
\end{aligned}
\end{equation}

\par At steady state, we obtain:
\begin{equation}
\begin{aligned}
\frac{d \mathbf{m}(t)}{dt} &= \mathbf{H} \mathbf{m}(t) = 0, \\
\frac{d \mathbf{P}(t)}{dt} &= \mathbf{H} \mathbf{P}(t) + \mathbf{P}(t) \mathbf{H}^{\top} + \mathbf{Q} = 0,
\end{aligned}
\end{equation}
yielding $\mathbf{m}_\infty = 0$ and the Lyapunov equation $\mathbf{H} {\mathbf{P}}_\infty + {\mathbf{P}}_\infty \mathbf{H}^{\top} + \mathbf{Q} = 0$, where ${\mathbf{P}}_\infty$ can be solved numerically.

\par Taking the limit $t \to \infty$, the stationary covariance becomes:
\begin{equation}
\mathbf{C}(\tau) \triangleq \mathbb{E}[\mathbf{x}(t)\mathbf{x}(t - \tau)] =
\begin{cases}
{\mathbf{P}}_\infty \exp(-\mathbf{H} \tau)^{\top}, & \tau \leq 0 \\
\exp(\mathbf{H} \tau)\, {\mathbf{P}}_\infty, & \tau > 0
\end{cases}
\end{equation}
with $\mathbf{C}(\tau) = \mathbf{C}(-\tau)^{\top}$. The steady-state covariance ${\mathbf{P}}_\infty$ can also be expressed as $\int_{-\infty}^{t} e^{\mathbf{H}(t - \tau)}\, \mathbf{Q}\, e^{\mathbf{H}^{\top}(t - \tau)}\, d\tau$, or $\mathcal{L}^{-1}\left((s\mathbf{I}_N - \mathbf{H})^{-1} \mathbf{Q} (s\mathbf{I}_N - \mathbf{H}^{\top})^{-1}\right)$ via inverse Laplace transform. The complete derivation of the stationary covariance can be found in \cite{sarkka2019applied}.

\par If $\mathbf{Q} = \sigma^2 \mathbf{I}_N$ (i.e., uncorrelated white noise for all nodes) and $\mathbf{H}$ is normal ($\mathbf{H}\mathbf{H}^{\top} = \mathbf{H}^{\top}\mathbf{H}$), then ${\mathbf{P}}_\infty = -\sigma^2(\mathbf{H} + \mathbf{H}^{\top})^{-1}$. For a $1D$ system, the variance simplifies to $\mathbf{C}(\tau) = -{Q} e^{\lambda|\tau|}/(2\lambda)$. 

\par Crosscovariance between input and activity can also be derived.

\begin{equation}
\tilde{\mathbf{C}}(s) \triangleq \mathbb{E}[\mathbf{x}(t)\mathbf{I}^{\top}(t - s)] =
\begin{cases}
e^{\mathbf{H}s}\, \mathbf{Q}, & s \geq 0, \\
0, & s < 0.
\end{cases}
\end{equation}

\par When the input is applied only at node $m$, with strength $I_0^{\text{noise}}$, and $\mathbf{H}$ is diagonalizable, the $(p,q)$-th element of the covariance matrix in the eigendecomposition is given by:
\begin{equation}
\begin{aligned}
C_{pq}^m(\tau) &\triangleq \mathbb{E}[x_p(t)\, x_q(t - \tau)] \\
&=
\begin{cases}
- \sum_j \sum_k \frac{u_{pm}^j u_{qm}^k}{\lambda_j + \lambda_k} e^{-\lambda_k \tau} I_0^{\text{noise}}, & \tau \leq 0, \\
- \sum_j \sum_k \frac{u_{pm}^j u_{qm}^k}{\lambda_j + \lambda_k} e^{\lambda_j \tau} I_0^{\text{noise}}, & \tau > 0.
\end{cases}
\label{cov_diag}
\end{aligned}
\end{equation}

\par The corresponding element of the crosscovariance matrix between input and activity is \cite{oppenheim2017signals}:
\begin{equation}
\tilde{{C}}_{im}(s) \triangleq \mathbb{E}[x_i(t)\, I_m(t - s)] =
\begin{cases}
\sum_j u_{im}^j e^{\lambda_j s} I_0^{\text{noise}}, & s \geq 0, \\
0, & s < 0.
\label{cov_im}
\end{cases}
\end{equation}

\par For $s \geq 0$, the crosscovariance between the input and the activity as a function of lag $s$ (Eq. (\ref{cov_im})) resembles the impulse response (Eq. (\ref{eq:eigen_pulse})) when input strength $I_0$ and location $m$ are identical. This correspondence holds for LTI systems, allowing impulse-response-based metrics to be directly applied to crosscovariance analysis.

\par The diagonal entries of the covariance matrix (Eq. (\ref{cov_diag})) represent autocovariance, which are even functions, attaining their maximum at zero lag by the Cauchy-Schwarz inequality. Assuming an exponential form with \textit{peak response} $P_{ii} \triangleq [\mathbf{P}_{\infty}]_{ii}$ and decay governed by a \textit{time constant} $\tau_{ii}$, we write:
\begin{equation}
C_{ii}(\tau) \approx [\mathbf{P}_{\infty}]_{ii} e^{-\tau / \tau_{ii}}.
\end{equation}

\par This form characterizes the time constant of the autocovariance and, upon normalization, the autocorrelation. \textit{Amplification} is defined as: $Z_{ii} \triangleq 2\int_0^\infty C_{ii}(\tau)\, d\tau = -2[\mathbf{H}^{-1} \mathbf{P}_{\infty}]_{ii}$, and the time constant is expressed as:
\begin{equation}
\tau_{ii} = -\frac{[\mathbf{H}^{-1} \mathbf{P}_{\infty}]_{ii}}{[\mathbf{P}_{\infty}]_{ii}}.
\end{equation}

\par \textit{Relative response time} for autocovariance is then:
\begin{equation}
\bar{t}_{ii} = -\tau_{ii} \ln \bar{\eta},
\end{equation}
where $\bar{\eta} = \frac{C_{ii}(\bar{t}_i)}{C_{ii}(0)}$. While the structure of $\mathbf{P}_{\infty}$ is implicit, its spectral form under input $I_0^{\text{noise}}$ at node $m$ yields more clear form:
\[
Z_{ii}^m = 2\int_0^\infty C_{ii}^m(\tau)\, d\tau = 2\sum_j \sum_k \frac{u_{im}^j u_{im}^k}{\lambda_j + \lambda_k} \frac{1}{\lambda_j} I_0^{\text{noise}},
\]
and the corresponding time constant:
\[
\tau_{ii}^m = -\frac{\sum_j \sum_k \frac{u_{im}^j u_{im}^k}{\lambda_j + \lambda_k} \frac{1}{\lambda_j}}{\sum_j \sum_k \frac{u_{im}^j u_{im}^k}{\lambda_j + \lambda_k}}.
\]

\par For off-diagonal elements of the covariance matrix (Eq. (\ref{cov_diag})), $C_{ij}(\tau)$ is asymmetric. To characterize the temporal properties of crosscovariance, we apply the effective probability distribution framework by treating the response as a probability distribution. \textit{Amplification} is defined as:
\begin{equation}
Z_{ij}\triangleq \int_{-\infty}^{\infty} C_{ij}(\tau)\, d\tau = -\left[\mathbf{P}_{\infty} \mathbf{H}^{-\top} + \mathbf{H}^{-1} \mathbf{P}_{\infty}\right]_{ij}.
\end{equation}

\par We define the normalized \textit{probability density} $\rho_{ij}(\tau) = C_{ij}(\tau) / Z_{ij}$, and estimate the \textit{peak response time} via the expected value:
\begin{equation}
t_{ij}\triangleq \frac{\left[\mathbf{P}_{\infty} (\mathbf{H}^{-\top})^2\right]_{ij} - \left[\mathbf{H}^{-2} \mathbf{P}_{\infty}\right]_{ij}}{\left[\mathbf{P}_{\infty} \mathbf{H}^{-\top}\right]_{ij} + \left[\mathbf{H}^{-1} \mathbf{P}_{\infty}\right]_{ij}}.
\end{equation}

\par \textit{Peak response} is given by:
\begin{widetext}
\begin{equation}
P_{ij} \triangleq \frac{\left(\left[\mathbf{P}_{\infty} \mathbf{H}^{-\top}\right]_{ij} + \left[\mathbf{H}^{-1} \mathbf{P}_{\infty}\right]_{ij}\right)^2 }{\sqrt{4\left(\left[\mathbf{P}_{\infty} (\mathbf{H}^{-\top})^3\right]_{ij} + \left[\mathbf{H}^{-3} \mathbf{P}_{\infty}\right]_{ij}\right)\left(\left[\mathbf{P}_{\infty} \mathbf{H}^{-\top}\right]_{ij} + \left[\mathbf{H}^{-1} \mathbf{P}_{\infty}\right]_{ij}\right) - 2\left(\left[\mathbf{P}_{\infty} (\mathbf{H}^{-\top})^2\right]_{ij} - \left[\mathbf{H}^{-2} \mathbf{P}_{\infty}\right]_{ij}\right)^2}}.
\end{equation}
\end{widetext}
\par For diagonalizable $\mathbf{H}$ with input applied at node $m$ with strength $I_0^{\text{noise}}$, the metrics simplify as:
\begin{widetext}
\begin{align}
Z_{ij}^{m} &= \sum_{p,q} \frac{u_{ij}^m(p,q)}{\lambda_p \lambda_q} I_0^{\text{noise}}, \\
t_{ij}^{m} &= -\frac{\sum_{p,q} u_{ij}^m(p,q) \left(\frac{1}{\lambda_p \lambda_q}\left(\frac{1}{\lambda_p} - \frac{1}{\lambda_q}\right)\right)}{\sum_{p,q} u_{ij}^m(p,q) \frac{1}{\lambda_p \lambda_q}}, \\
P_{ij}^{m} &= \frac{\left(\sum_{p,q} u_{ij}^m(p,q) \frac{1}{\lambda_p \lambda_q}\right)^2 I_0^{\text{noise}}}{\sqrt{4 \sum_{p,q} u_{ij}^m(p,q)\left(\frac{1}{\lambda_p^3} + \frac{1}{\lambda_q^3}\right)\frac{1}{\lambda_p + \lambda_q} \sum_{p,q} u_{ij}^m(p,q) \frac{1}{\lambda_p \lambda_q} - 2 \left(\sum_{p,q} u_{ij}^m(p,q) \frac{1}{\lambda_p \lambda_q} \left(\frac{1}{\lambda_p} - \frac{1}{\lambda_q}\right)\right)^2}},
\end{align}
\end{widetext}
where $u_{ij}^m(p,q) \equiv u_{im}^p u_{jm}^q = [\mathbf{U}]_{ip}[\mathbf{U}^{-1}]_{pm} [\mathbf{U}]_{jq}[\mathbf{U}^{-1}]_{qm}$. Compared to the explicit variance formula for one-dimensional systems, we include a correction term of $1/\sqrt{2}$ for peak response estimation.

\par All the metrics receiving noise input at source $m$ for describing autocovariance are
\begin{widetext}
\begin{equation} 
\begin{aligned}  
Z_{ii}^m &\triangleq -2\left[\mathbf{H}^{-1}\mathbf{P}_{\infty}\right]_{ii}^m = 2\sum_{j,k} \frac{u_{im}^j u_{im}^k}{(\lambda_j + \lambda_k)\lambda_j}I_0^{\text{noise}} \sim {O}(1/\lambda_1^2), \\ 
P_{ii}^m &\triangleq \left[\mathbf{P}_{\infty}\right]_{ii}^m = -\sum_{j,k} \frac{u_{im}^j u_{im}^k}{\lambda_j + \lambda_k}I_0^{\text{noise}} \sim {O}(1/\lambda_1), \\ 
\tau_{ii}^m &\triangleq -\frac{\left[\mathbf{H}^{-1}\mathbf{P}_{\infty}\right]_{ii}^m}{\left[\mathbf{P}_{\infty}\right]_{ii}^m} = \frac{\sum_{j,k} \frac{u_{im}^j u_{im}^k}{(\lambda_j + \lambda_k)\lambda_j}}{\sum_{j,k} \frac{u_{im}^j u_{im}^k}{\lambda_j + \lambda_k}} \sim {O}(1/\lambda_1).  
\end{aligned}  
\label{eq:auto_metrics} 
\end{equation}
\end{widetext}

\par For crosscovariance, metrics are
\begin{widetext}
\begin{equation}
\begin{aligned}
Z_{ij}^{m} &\triangleq -\left[\mathbf{P}_{\infty} \mathbf{H}^{-\top} + \mathbf{H}^{-1} \mathbf{P}_{\infty} \right]_{ij}^m = \sum_{p, q} \frac{u_{ij}^m(p, q)}{\lambda_p \lambda_q} I_0^{\text{noise}} \sim O(1/\lambda_1^2), \\
P_{ij}^{m} &\triangleq \frac{\left(\left[\mathbf{P}_{\infty} \mathbf{H}^{-\top}\right]_{ij}^m + \left[\mathbf{H}^{-1} \mathbf{P}_{\infty}\right]_{ij}^m\right)^2}
{\sqrt{4 \left(\left[\mathbf{P}_{\infty} (\mathbf{H}^{-\top})^3\right]_{ij}^m + \left[\mathbf{H}^{-3} \mathbf{P}_{\infty}\right]_{ij}^m\right) \left(\left[\mathbf{P}_{\infty} \mathbf{H}^{-\top}\right]_{ij}^m + \left[\mathbf{H}^{-1} \mathbf{P}_{\infty}\right]_{ij}^m\right) - 2\left(\left[\mathbf{P}_{\infty} (\mathbf{H}^{-\top})^2\right]_{ij}^m - \left[\mathbf{H}^{-2} \mathbf{P}_{\infty}\right]_{ij}^m\right)^2}}, \\
&= \frac{\left(\sum_{p, q} u_{ij}^m(p, q) \frac{1}{\lambda_p \lambda_q}\right)^2 I_0^{\text{noise}}}
{\sqrt{4 \sum_{p, q} u_{ij}^m(p, q) \left( \frac{1}{\lambda_p^3} + \frac{1}{\lambda_q^3} \right) \frac{1}{\lambda_p + \lambda_q} \sum_{p, q} u_{ij}^m(p, q) \frac{1}{\lambda_p \lambda_q}
- 2 \left( \sum_{p, q} u_{ij}^m(p, q) \frac{1}{\lambda_p \lambda_q} \left( \frac{1}{\lambda_p} - \frac{1}{\lambda_q} \right) \right)^2}} \sim O(1/\lambda_1), \\
t_{ij}^{m} &\triangleq \frac{\left[\mathbf{P}_{\infty} (\mathbf{H}^{-\top})^2\right]_{ij}^m - \left[\mathbf{H}^{-2} \mathbf{P}_{\infty}\right]_{ij}^m}{\left[\mathbf{P}_{\infty} \mathbf{H}^{-\top}\right]_{ij}^m + \left[\mathbf{H}^{-1} \mathbf{P}_{\infty}\right]_{ij}^m} 
= -\frac{\sum_{p, q} u_{ij}^m(p, q) \left( \frac{1}{\lambda_p \lambda_q} \left( \frac{1}{\lambda_p} - \frac{1}{\lambda_q} \right) \right)}
{\sum_{p, q} u_{ij}^m(p, q) \frac{1}{\lambda_p \lambda_q}} \sim O(1/\lambda_1).
\label{cross}
\end{aligned}
\end{equation}
\end{widetext}

\begin{widetext}
\begin{figure*}[t]
\centering
\includegraphics[scale=0.8]{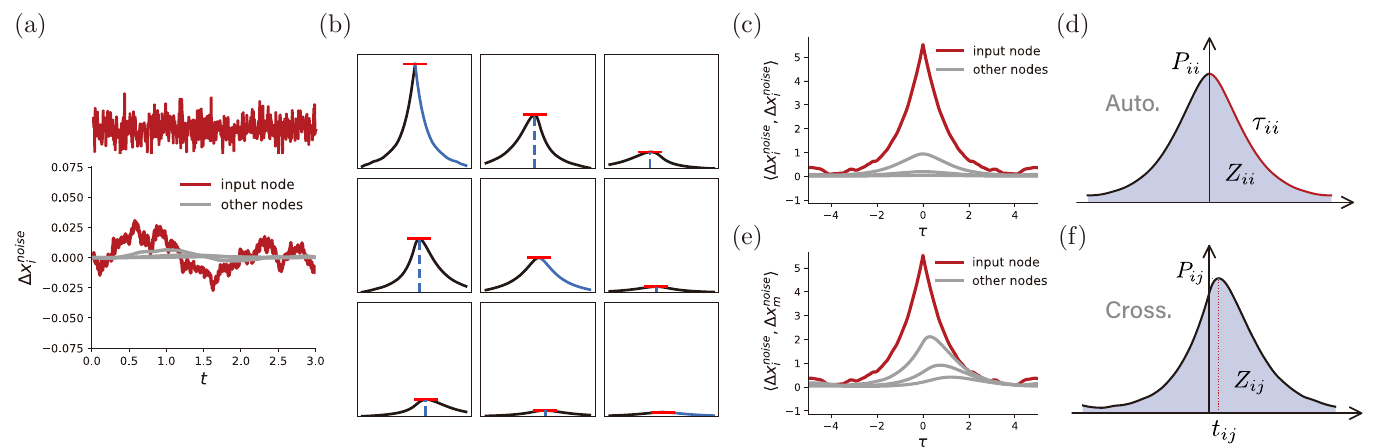}
\caption{Quantifying the impact of white noise input on network dynamics.  
(a) Noise input and resulting responses from the input node and other nodes, showing stochastic fluctuations.  
(b) Covariance matrix ($3 \times 3$) with diagonal elements representing autocovariance and off-diagonal elements representing crosscovariance. 
(c,d) Autocovariance corresponds to diagonal entries in (b). Key metrics include peak response ($P_{ii}$), amplification ($Z_{ii}$), and time constant ($\tau_{ii}$).  
(e,f) Crosscovariance corresponds to off-diagonal entries in (b). Key metrics include peak response ($P_{ij}$), amplification ($Z_{ij}$), and peak response time ($t_{ij}$).}
\label{noise_sketch}
\end{figure*}
\end{widetext}

\clearpage

\section{CHAIN STRUCTURE}
\label{sec:C}

\renewcommand\theequation{C\arabic{equation}}
\setcounter{equation}{0}
\makeatletter
\makeatother
\subsection*{Homogeneous directed chain}
\label{sec:C:chain}
\par For the homogeneous directed chain, the matrix $\mathbf{H}$ takes the form:
\begin{equation}
\left(\begin{array}{cccc}
-\beta & & & \\
\alpha & -(\beta+\alpha) & & \\
& \ddots & \ddots & \\
& & \alpha & -(\beta+\alpha)
\end{array}\right),
\end{equation}
where $\alpha$ denotes interaction weight and $\beta$ the self-decay rate. The eigenvalues of $\mathbf{H}$ are $-\beta$ (with eigenvector $(1,1,\dots,1)$) and $-(\beta + \alpha)$ (with multiplicity $N-1$ and eigenvector $(0,\dots,0,1)$).
\par The analytical time courses for unit pulse input is given by:
\begin{equation}
\refstepcounter{equation}\label{homo:indegree}
\tag{C\arabic{equation}}
x_i(t)= \begin{cases}
e^{-\beta t}, & d = 1, \\
e^{-\beta t}\left(1 - e^{-\alpha t} \sum_{j=0}^{d-2} \frac{(\alpha t)^j}{j!} \right), & d \geq 2.
\end{cases}
\end{equation}
\par The corresponding metric expressions are:
\begin{equation}
\begin{aligned}
& -\left[\mathbf{H}^{-1}\right]_{dm} = \frac{\alpha^{d}}{(\alpha + \beta)^{d} \beta}, \\
&\left[\mathbf{H}^{-2}\right]_{dm} = \frac{\alpha^{d}(\alpha + (d+1)\beta)}{(\alpha + \beta)^{d+1} \beta^2}, \\
& -\frac{\left[\mathbf{H}^{-2}\right]_{dm}}{\left[\mathbf{H}^{-1}\right]_{dm}} = \frac{\alpha + (d+1)\beta}{\alpha \beta + \beta^2}, \\
& \frac{\left(\left[\mathbf{H}^{-1}\right]_{dm}\right)^2}{\sqrt{2\left[\mathbf{H}^{-3}\right]_{dm} \left[\mathbf{H}^{-1}\right]_{dm} - \left(\left[\mathbf{H}^{-2}\right]_{dm}\right)^2}} \\
&= \frac{(\alpha + \beta)}{\sqrt{\alpha^2 + 2\alpha\beta + (d+1)\beta^2}} \left( \frac{\alpha}{\alpha + \beta} \right)^{d}, \\
&\left[ \mathbf{P}_{\infty}\right]_{{dm }}^m =\frac{1}{2 \beta}\left(\frac{\alpha}{2 \beta+\alpha}\right)^d.
\end{aligned}
\end{equation}
Here, $m = 0$ denotes the first node of the chain, and $d = 1, 2, 3, \dots, N$ refers to the $d$-th node on the chain.

\par The directed chain serves as a minimal structure to reveal how metrics scale with path length $d$. The main metrics are:
\begin{equation}
\begin{aligned}
 &\ln R_{dm} = -d\ln\left(1 + \frac{\beta}{\alpha} \right)  - \ln \beta, \\
 &\ln P_{dm} = -d\ln\left(1 + \frac{\beta}{\alpha} \right)  - \frac{1}{2} \ln\left( (\alpha + \beta)^2 + d\beta^2 \right) \\
& \quad \quad+ \ln(\alpha + \beta), \\
&\ln Z_{dm}  = -d\ln\left(1 + \frac{\beta}{\alpha} \right)  + \ln(\alpha + (d+1)\beta)  \\
&\quad \quad- \ln(\alpha + \beta)- 2\ln \beta, \\
&Z_{dm}^m = \frac{1}{\beta^2}\left(1+\frac{\beta}{\alpha}\right)^{-d}-\frac{1}{2 \beta^2}\left(1+\frac{2\beta}{\alpha}\right)^{-d}, \\
 &\ln \left[\mathbf{P}_{\infty}\right]_{dm}^m = -\ln\left(1 + 2\frac{\beta}{\alpha} \right) d - \ln (2\beta),\\
 &\tau_{dm} = \frac{1}{\alpha + \beta} d + \frac{1}{\beta},\\
 &t_{dm}^m = \frac{\frac{1}{2 \beta^2} \frac{1}{\beta+\alpha}\left(\frac{\alpha}{\beta+\alpha}\right)^d\left(\left(2+\frac{\alpha}{\beta}\right) d+\left(1+\frac{\alpha}{\beta}\right)\left(\tfrac{\beta+\alpha}{2\beta+\alpha}\right)^d\right)}{\frac{1}{\beta^2}\left(\frac{\alpha}{\beta+\alpha}\right)^d-\frac{1}{2 \beta^2}\left(\frac{\alpha}{2 \beta+\alpha}\right)^d}.
\end{aligned}
\end{equation}

\par Through the metrics, we notice that the scaling of response strength (amplification and peak response) with path length follows approximately $(1 + \beta/\alpha)^{-d}$, and temporal metrics (time constant) basically scale as ${d}/{(\alpha + \beta)}$, especially when $\beta$ dominants.

\par We define $S \equiv - \ln(1+\frac{\beta}{\alpha})d $ and $T \equiv \frac{1}{\alpha+\beta}d$. We then perturb the identical weight $\alpha$ by $\Delta \alpha$ to become $\alpha + \Delta \alpha$, and find
\begin{equation}
\begin{aligned}
& |\Delta S|=|S(\alpha+\Delta \alpha)-S(\alpha)|=S\left(\frac{\alpha}{\delta \alpha}\right), \\
& |\Delta T|=|T(\alpha+\Delta \alpha)-T(\alpha)|=T(\alpha) \delta \alpha,
\end{aligned}
\end{equation}
where $0<\delta \alpha \equiv \left(1+\frac{\alpha+\beta}{\Delta \alpha}\right)^{-1}<1$. From these expressions, we conclude that:
\begin{enumerate}
\item As $d$ increases, both $|\Delta S|$ and $|\Delta T|$ increase.
\item For $\alpha > \beta$ and $\beta > 1$, $|\Delta S| > |\Delta T|$ and ${|\Delta S|}/{S} \approx {|\Delta T|}/{T} = \delta \alpha$.
\item For $\beta > \alpha$, $|\Delta S| > |\Delta T|$ and ${|\Delta S|}/{S} > {|\Delta T|}/{T} = \delta \alpha$.
\end{enumerate}

\subsection*{Alpha function}
\label{sec:C:alpha}
\par We notice that the excitatory postsynaptic potentials (EPSPs) modelling, both the amplitude (efficacy) and the temporal dynamics (time constant) of postsynaptic conductance changes, are linked with the chain structure. Interestingly, the evolution of EPSPs strongly resembles the impulse response, reflecting a fundamental aspect of synaptic transmission. The conductance dynamics governed by the synaptic time constant $\tau$ are \cite{rall1967distinguishing}:
\begin{equation}
\ddot{g} + \frac{2}{\tau} \dot{g} + \frac{1}{\tau^2} g = G_{\text{norm}} u(t),
\end{equation}
which is equivalent to the two-dimensional system:
\begin{equation}
\left\{
\begin{array}{l}
\frac{d z}{dt} = -\frac{z}{\tau} + G_{\text{norm}} u(t), \\
\frac{d g}{dt} = -\frac{g}{\tau} + z(t),
\end{array}
\right.
\end{equation}
with $z \equiv \frac{g}{\tau} + \dot{g}$. For a pulse input $u(t)$, the solution (alpha function) becomes:
\begin{equation}
g(t) = G_{\text{norm}}\, t\, e^{-t/\tau}, \quad 
\dot{g}(t) = G_{\text{norm}} \left( e^{-t/\tau} - \frac{t}{\tau} e^{-t/\tau} \right).
\end{equation}
The peak of $g(t)$ occurs at $t = \tau$.  If we set $G_{\text{norm}} = \frac{g_{\text{peak}}}{\tau / e}$, the peak response is $g(\tau) = G_{\text{norm}}\, \tau\, e^{-1} = g_{\text{peak}}$.

\par This system can be rewritten in compact form:
\begin{equation}
\left(
\begin{array}{c}
\dot{f} \\
\dot{g}
\end{array}
\right)
=
\left(
\begin{array}{cc}
-\frac{1}{\tau} & 0 \\
1 & -\frac{1}{\tau}
\end{array}
\right)
\left(
\begin{array}{c}
f \\
g
\end{array}
\right), \quad
\left\{
\begin{array}{l}
f = \frac{c}{\tau} e^{-t/\tau}, \\
g = \frac{c t}{\tau} e^{-t/\tau},
\end{array}
\right.
\end{equation}
assuming a single pulse. Generalizing to an $N$-dimensional directed chain yields:
\begin{equation}
\dot{\mathbf{x}}
=
\begin{pmatrix}
-(\beta+\alpha) & & & \\
\alpha & \ddots & & \\
& \ddots & \ddots & \\
& & \alpha & -(\beta+\alpha)
\end{pmatrix}
\mathbf{x},
\label{alpha}
\end{equation}
with a solution for node $d$:
\begin{equation}
\refstepcounter{equation}\label{alpha:sol}
\tag{C\arabic{equation}}
x_d(t) = \frac{\alpha^{d-1}\, t^{d-1}\, e^{-(\beta + \alpha)t}}{(d-1)!},
\end{equation}
where $\alpha$ denotes the interaction weight. This form closely resembles the response $\Delta x^{\text{pulse}}_i(t) = t^d e^{\lambda_1 t}$ described in \cite{schroder2019dynamic}, which is employed to amend the metrics for impulse responses.

\subsection*{High-order estimations}
\label{sec:C:highorder}
\par The primary difference in the solution for the homogeneous directed chain and the alpha function arises from the difference at the first node. Even in a simple two-node model, the resulting dynamics can differ markedly. In the first scenario, inspired by the alpha function, the matrix has repeated eigenvalues $\lambda_1 = \lambda_2 = -(\beta + \alpha)$ and is non-diagonalizable. Applying a pulse input to the first node leads to a peak response at the second node occurring at time $1/(\beta + \alpha)$. While the estimated metrics of time constant at any integer order $k$ can be computed as:
\begin{equation}
\refstepcounter{equation}\label{order:alpha}
\tag{C\arabic{equation}}
-\frac{[\mathbf{H}^{-(k+1)}]_{im}}{[\mathbf{H}^{-k}]_{im}} = \left( \frac{k+1}{k} \right) \frac{1}{\alpha + \beta}.
\end{equation}
This expression indicates that estimation improves with increasing order $k$ (Fig.~\ref{high_order}(a) and (c)). 

\par In the second scenario, considering the homogeneous directed chain, the second node responds most strongly at time $-\ln(\beta / (\alpha + \beta)) / \alpha$, which approximates $1/(\alpha + \beta)$ for large $\beta$. The corresponding estimate is:
\begin{equation}
\refstepcounter{equation}\label{order:homo}
\tag{C\arabic{equation}}
-\frac{[\mathbf{H}^{-(k+1)}]_{im}}{[\mathbf{H}^{-k}]_{im}} = \frac{1}{\alpha + \beta} + \frac{\alpha}{\beta} \cdot \frac{1}{(\alpha + \beta) - \beta \left( \frac{\beta}{\alpha + \beta} \right)^{k-1}}.
\end{equation}
As $k \to \infty$, the expression converges to $(1+\alpha /\beta)/(\alpha + \beta) \approx 1 / (\alpha + \beta)$ for large $\beta$ (Fig.~\ref{high_order}(d)). However, for strong interactions (large $\alpha$), the theoretical estimations exhibit a consistent bias relative to simulation results (Fig.~\ref{high_order}(b)), even when high orders are considered. This suggests that additional bias correction terms are needed to improve metric accuracy. It is worth noting that convergence to high accuracy with increasing order $k$ holds in this simplified case, but does not necessarily extend to general network topologies. This is the main reason why we choose order $p=1$, as we adhere to the principle of focusing on the generalizability of the metrics, and this choice naturally ensures the easy interpretability of the metrics.


\begin{figure}[!ht]
    \centerline{\includegraphics[scale=0.45]{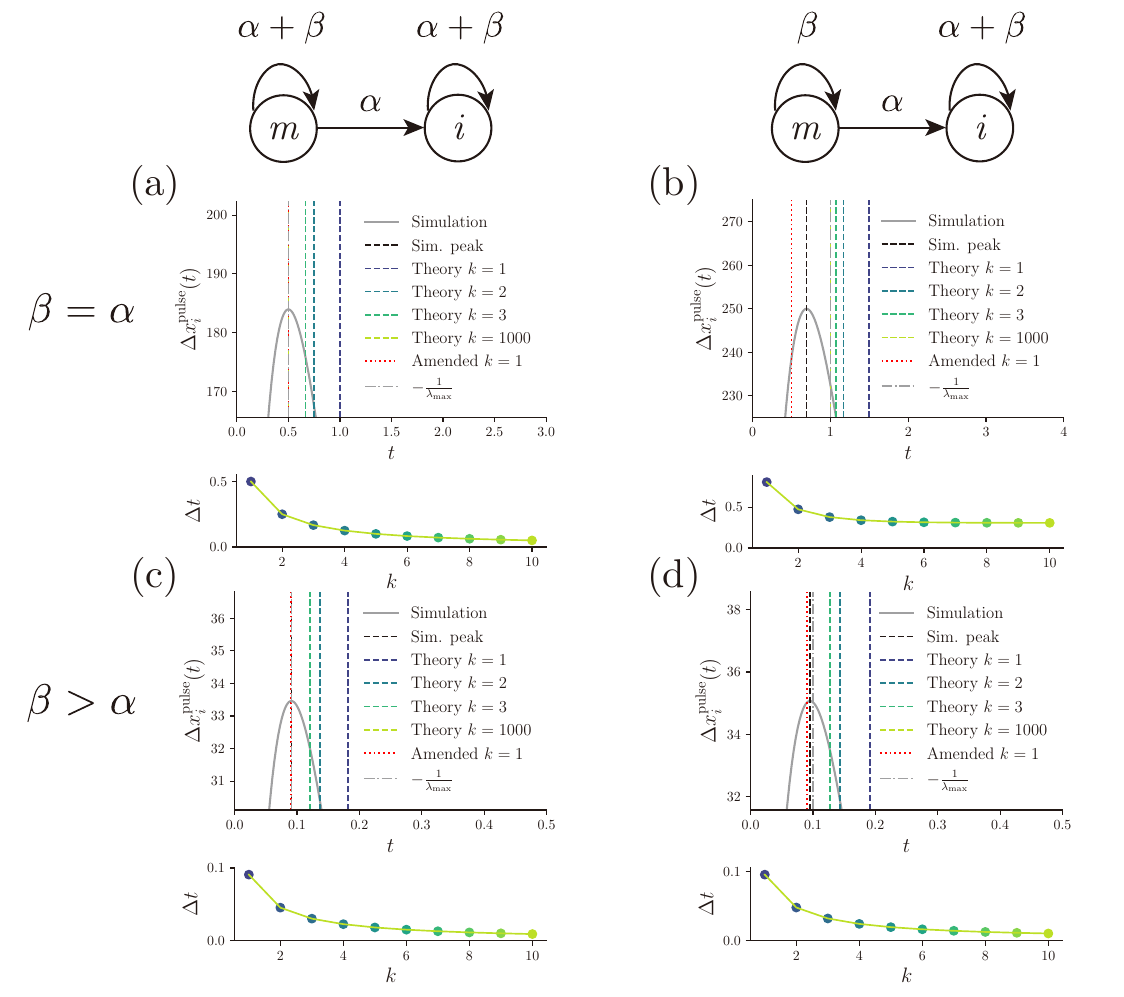}}
    \caption{Order dependence of estimated peak response time in two-node models.  
Panels (a) and (c) correspond to the alpha-function-inspired model (Eq. (\ref{alpha:sol})), while (b) and (d) depict the model based on Eq. (\ref{homo:indegree}). The estimation follows the form $-[\mathbf{H}^{-(k+1)}]_{im} / [\mathbf{H}^{-k}]_{im}$, with the order $k$ varied to assess accuracy. The refined estimator $\widetilde{\tau}_i$ (Eq. 
(\ref{refined})) is also included. In (a) and (b), the self-decay $\beta = 1$ matches the interaction $\alpha = 1$, while in (c) and (d), $\beta = 10$ exceeds $\alpha = 1$. A pulse input is applied to the first node, and the response of the second node is observed. In (a), (c), and (d), higher-order $k$ yields accurate estimates (Eqs. (\ref{order:alpha}) and (\ref{order:homo})), with both the refined estimate and $-1/\max_j \operatorname{Re}(\lambda_j)$ aligning well with simulations. In (b), although increasing $k$ improves accuracy, a notable bias persists relative to the simulated peak time.}
\label{high_order}
\end{figure}

\section{HOMOGENEOUS IN-DEGREE NETWORKS}
\label{sec:D}

\renewcommand\theequation{D\arabic{equation}}
\setcounter{equation}{0}
\makeatletter
\makeatother

\subsection*{Expansion for homogeneous in-degree}
\label{sec:D:expansion}
\par We firstly analyze the $\mathbf{H}^{-1}$ expansion under homogeneous in-degree conditions ($\mathbf{D} = DI$) in NSDD systems. The course converges when  
\begin{equation}
\max_{\lambda \in \sigma\left(\frac{\mathbf{A}-\mathbf{D}}{\beta}\right)} \left|\operatorname{Re}(\lambda)\right| < 1,
\end{equation}
where $\sigma(\cdot)$ denotes matrix spectrum. This requires all eigenvalues $\lambda$ of $\mathbf{A}-\mathbf{D}$ to satisfy $|\operatorname{Re}(\lambda)| < \beta$. By Gershgorin's theorem, the spectral bound  
\begin{equation}
\max_{\lambda \in \sigma(\mathbf{A}-\mathbf{D})} |\operatorname{Re}(\lambda)| \leq 2D_{\text{max}}
\end{equation}
holds for any adjacency matrix $\mathbf{A}$, where $D_{\text{max}}$ is the maximum node degree. Thus, $\beta > 2D_{\text{max}}$ provides a sufficient (non-necessary) convergence criterion. In the homogeneous setting, $\beta > 2D$ can guarantee the convergence.

\par The expansion is
\begin{equation}
\refstepcounter{equation}\label{homo:first}
\tag{D\arabic{equation}}
\begin{aligned}
\mathbf{H}^{-1}&=
\left(\mathbf{A}-(\beta+D)\mathbf{I}_N\right)^{-1}, \\
&=-\sum_{p=0}^{\infty}  \frac{\mathbf{A}^p}{(\beta + D)^{p + 1}}.  
\end{aligned}
\end{equation}



Cases for $\mathbf{H}^{-2}$ and $\mathbf{H}^{-3}$ are similar. 
\begin{equation}
\refstepcounter{equation}\label{homo:second}
\tag{D\arabic{equation}}
\begin{aligned}
\mathbf{H}^{-2} & =\frac{1}{\beta^2} \sum_{p=0}^{\infty} \frac{p(\mathbf{A}-D\mathbf{I}_N)^{p-1}}{\beta^{p-1}},  \\
& = \sum_{p=0}^{\infty}  \frac{(p+1)\mathbf{A}^p}{(\beta + D)^{p + 2}},
\end{aligned}
\end{equation}
and
\begin{equation}
\refstepcounter{equation}\label{homo:third}
\tag{D\arabic{equation}}
\begin{aligned}
\mathbf{H}^{-3} & =-\frac{1}{2\beta^3} \sum_{p=0}^{\infty} \frac{(p+1)(p+2)(\mathbf{A}-D\mathbf{I}_N)^{p}}{\beta^{p}},  \\
& =- \sum_{p=0}^{\infty}  \frac{(p+1)(p+2)\mathbf{A}^p}{2(\beta + D)^{p + 3}}.
\end{aligned}
\end{equation}

The expansions for $\mathbf{H}^{-1}, \mathbf{H}^{-2}$ and $\mathbf{H}^{-3}$ cover all metrics for deterministic inputs. Here, we do not consider the bias terms, as they are system-wide parameters primarily introduced to compensate for numerical inaccuracies.

\par To investigate metrics under noise inputs, we derive the expansion for the steady-state covariance: 
\begin{equation}
\refstepcounter{equation}\label{p_ij}
\tag{D\arabic{equation}}
\begin{aligned}
\left[\mathbf{P}_{\infty}\right]_{ij}^m &= \frac{I_0}{2\pi} \int_{-\infty}^\infty [(\mathbf{H} - \mathrm{i}\omega \mathbf{I}_N)^{-1}]_{im} [(\mathbf{H} + \mathrm{i}\omega \mathbf{I}_N)^{-1}]_{jm} d\omega \\
&=I_0\sum_{p,q=0}^\infty \frac{(p+q)!}{p!q!} \frac{[\mathbf{A}^p]_{im}[\mathbf{A}^q]_{jm}}{(2(\beta+D))^{p+q+1}} = \left[\mathbf{P}_{\infty}\right]_{ji}^m.
\end{aligned}
\end{equation}
And thus,
\begin{equation}
\begin{aligned}
\left[\mathbf{P}_{\infty}\right]_{im}^m =\left[\mathbf{P}_{\infty}\right]_{mi}^m 
=I_0\sum_{p=0}^\infty \frac{[\mathbf{A}^p]_{im}}{(2(\beta+D))^{p+1}},
\end{aligned}
\end{equation}
which is similar with Eq.~\eqref{homo:first}.
\par Next, we present expansions of metrics to identify the dominant terms for direct propagation ($d=1,\left[\mathbf{A}\right]_{im}\ne0$), including the first term ($p=1$) for strength metrics and the first two terms ($p=2$) for temporal metrics, under decay-dominant conditions ($\beta > 2D$), assuming unit input amplitude ($I_0 = 1$) for simplicity. We start from the metrics for deterministic inputs.

\begin{equation}
\begin{aligned}
Z_{im} &\approx \frac{2[\mathbf{A}]_{im}}{(\beta + D)^{3}}, \quad
R_{im} \approx \frac{[\mathbf{A}]_{im}}{(\beta + D)^{2}}, \\[6pt]
\tau_{im} &\approx \frac{1}{\beta + D}
\left(2 + \frac{1}{\dfrac{[\mathbf{A}]_{im}}{[\mathbf{A}^2]_{im}}(\beta + D) + 1}\right), \\ 
P_{im} &\approx \frac{[\mathbf{A}]_{im}}{\sqrt{2}(\beta + D)}.
\end{aligned}
\label{eq:approx_metrics}
\end{equation}

From this expansion, we find that to increase the strength metrics ($Z$, $R$, $P$) while decreasing the temporal metric ($\tau$), one can increase the direct link $\left[\mathbf{A}\right]_{im}$ and reduce $\left[\mathbf{A}^2\right]_{im}$ properly, without altering the in-degree $D$.

Next, we expand the amplification for noise inputs, and we start from the autocovariance:

\begin{equation}
\refstepcounter{equation}\label{homo:auto_peak}
\tag{D\arabic{equation}}
\begin{aligned}
Z_{ii}^m &= -2\left[\mathbf{H}^{-1} \mathbf{P}_{\infty}\right]_{i i}^m =
2\displaystyle R_{im} \left[ \mathbf{P}_{\infty}\right]_{mi}^m
+ 2\sum_{r \neq m} R_{ir} \left[ \mathbf{P}_{\infty}\right]_{ri}^m\\
&\approx\frac{\left[ \mathbf{A}\right]_{im}^2}{2(\beta+D)^4}
+ \sum_{r \neq m} \frac{\left[ \mathbf{A}\right]_{ir}\left[ \mathbf{A}\right]_{rm}\left[ \mathbf{A}\right]_{im}}{2(\beta+D)^5}.
\end{aligned}
\end{equation}
Here, $R_{im} \equiv -\left[\mathbf{H}^{-1}\right]_{im}=\sum_{p=0} (\left[\mathbf{A}^p\right]_{im} / (\beta +D)^{p+1})$ (Eqs. \eqref{peak_response} and \eqref{pulse_amp}). For the amplification ($i \neq m$), it includes the feedforward triangles: $\left[ \mathbf{A}\right]_{ir}\left[ \mathbf{A}\right]_{rm}\left[ \mathbf{A}\right]_{im}$ (orange part of $Z_{ii}^m$ in Fig. \ref{noise_motif}), where $\left[ \mathbf{A}\right]_{im} \ne 0$ modulates the $m \to r\to i$ pathway. 

Specially, the amplification of autocovariance for the source node $m$ is
 \begin{equation}
\begin{aligned}
 Z_{mm}^m &= -\left[\mathbf{H}^{-1} \mathbf{P}_{\infty}\right]_{mm}^m \\
 &=
\displaystyle R_{mm} \left[ \mathbf{P}_{\infty}\right]_{mm}^m
+ \sum_{r \neq m} R_{mr} \left[ \mathbf{P}_{\infty}\right]_{rm}^m,\\
&\approx\frac{1}{2(\beta+D)^2}
+ \sum_{r \neq m} \frac{\left[ \mathbf{A}\right]_{mr}\left[ \mathbf{A}\right]_{rm}}{4(\beta+D)^4}.
\end{aligned}
\end{equation}
When nodes $i$ and $m$ overlap, the motif reduces to a reciprocal motif: $\left[ \mathbf{A} \right]_{mr} \left[ \mathbf{A} \right]_{rm}$ (orange part of $Z_{mm}^m$ in Fig. \ref{noise_motif}).

For crosscovariance $(i \ne j \ne m)$, the amplification metric is given by
\begin{widetext}
\begin{equation}
\refstepcounter{equation}\label{homo:cross_amp}
\tag{D\arabic{equation}}
\begin{aligned}
Z_{ij}^m &= -\left[\mathbf{H}^{-1} \mathbf{P}_{\infty}\right]_{ji}^m - \left[\mathbf{H}^{-1} \mathbf{P}_{\infty}\right]_{ij}^m = \sum_r R_{jr} \left[\mathbf{P}_{\infty}\right]_{ri}^m + \sum_r R_{ir} \left[\mathbf{P}_{\infty}\right]_{rj}^m, \\
&= R_{jm} \left[\mathbf{P}_{\infty}\right]_{mi}^m + \sum_{r \ne m} R_{jr} \left[\mathbf{P}_{\infty}\right]_{ri}^m + R_{im} \left[\mathbf{P}_{\infty}\right]_{mj}^m + \sum_{r \ne m} R_{ir} \left[\mathbf{P}_{\infty}\right]_{rj}^m,\\
&\approx
\frac{\left[ \mathbf{A}\right]_{im}\left[ \mathbf{A}\right]_{jm}}{2(\beta+D)^4}
+ \sum_{r \neq m} \frac{\left[ \mathbf{A}\right]_{jr}\left[ \mathbf{A}\right]_{rm}\left[ \mathbf{A}\right]_{im}}{4(\beta+D)^5}+ \sum_{r \neq m} \frac{\left[ \mathbf{A}\right]_{ir}\left[ \mathbf{A}\right]_{rm}\left[ \mathbf{A}\right]_{jm}}{4(\beta+D)^5},\\
Z_{im}^m &= R_{mm} \left[\mathbf{P}_{\infty}\right]_{mi}^m + \sum_{r \ne m} R_{mr} \left[\mathbf{P}_{\infty}\right]_{ri}^m + R_{im} \left[\mathbf{P}_{\infty}\right]_{mm}^m + \sum_{r \ne m} R_{ir} \left[\mathbf{P}_{\infty}\right]_{rm}^m,\\
&\approx
\frac{5\left[ \mathbf{A}\right]_{im}}{4(\beta+D)^3}
+ \sum_{r \neq m} \frac{\left[ \mathbf{A}\right]_{mr}\left[ \mathbf{A}\right]_{rm}\left[ \mathbf{A}\right]_{im}}{4(\beta+D)^5}+ \sum_{r \neq m} \frac{\left[ \mathbf{A}\right]_{ir}\left[ \mathbf{A}\right]_{rm}}{4(\beta+D)^4}.
\end{aligned}
\end{equation}
\end{widetext}
Besides the diverging $(1,1)$ motif $\left[ \mathbf{A} \right]_{im} \left[ \mathbf{A} \right]_{jm}$, $Z_{ij}^m$ also contains diverging $(1,2)$ motifs, including $\left[ \mathbf{A} \right]_{jr} \left[ \mathbf{A} \right]_{rm} \left[ \mathbf{A} \right]_{im}$ and $\left[ \mathbf{A} \right]_{ir} \left[ \mathbf{A} \right]_{rm} \left[ \mathbf{A} \right]_{jm}$ (orange part and yellow part respectively of $Z_{ij}^m$ in Fig. \ref{noise_motif}). If $i=m$ or $j=m$, which means nodes $j$ and $m$ (or $i$ and $m$) overlap, the motif simplifies to a direct link $\left[ \mathbf{A} \right]_{im}$, a second-order chain $\left[ \mathbf{A} \right]_{ir} \left[ \mathbf{A} \right]_{rm}$, and a composite motif combining reciprocal and diverging $(1,1)$ motifs: $\left[ \mathbf{A} \right]_{mr} \left[ \mathbf{A} \right]_{rm} \left[ \mathbf{A} \right]_{im}$ (orange part of $Z_{im}^m$ in Fig. \ref{noise_motif}).

The peak response time is
\begin{equation}
t_{ij}^m = \frac{\left[\mathbf{H}^{-2} \mathbf{P}_{\infty}\right]_{ij}^m - \left[\mathbf{H}^{-2} \mathbf{P}_{\infty}\right]_{ji}^m}{Z_{ij}^m},
\end{equation}
which captures the asymmetry of the temporal profile between nodes \( i \) and \( j \). A positive \( t_{ij}^m \) indicates that node \( i \) responds after node \( j \), whereas a negative \( t_{ij}^m \) implies that \( i \) leads \( j \). This directionality provides insight into the effective propagation sequence of activity driven by source node \( m \).

The second-order term in the numerator can be written as
\begin{equation}
\begin{aligned}
&\left[\mathbf{H}^{-2} \mathbf{P}_{\infty}\right]_{ij}^m = \sum_k \left[\mathbf{H}^{-2}\right]_{ik} \left[\mathbf{P}_{\infty}\right]_{kj}^m, \\
&= \left[\mathbf{H}^{-2}\right]_{im} \left[\mathbf{P}_{\infty}\right]_{mj}^m + \sum_{k \ne m }\left[\mathbf{H}^{-2}\right]_{ik} \left[\mathbf{P}_{\infty}\right]_{kj}^m, \\
& = I_0\sum_{p=0}^{\infty}  \frac{(p+1)\left[\mathbf{A}^p\right]_{im}}{(\beta + D)^{p + 2}} \sum_{q=0}^{\infty}  \frac{\left[\mathbf{A}^q\right]_{jm}}{(2(\beta + D))^{q + 1}} \\
&+ I_0\sum_{k \ne m }\sum_{p=0}^{\infty}  \frac{(p+1)\left[\mathbf{A}^p\right]_{ik}}{(\beta + D)^{p + 2}}\sum_{p,q=0}^\infty \frac{(p+q)!}{p!q!} \frac{[\mathbf{A}^p]_{km}[\mathbf{A}^q]_{jm}}{(2(\beta+D))^{p+q+1}}, \\
& \approx I_0\left(\frac{\left[\mathbf{A}\right]_{im}\left[\mathbf{A}\right]_{jm}}{2(\beta+D)^5} + \sum_{k \ne m } \frac{\left[\mathbf{A}\right]_{ik}\left[\mathbf{A}\right]_{km}\left[\mathbf{A}\right]_{jm}}{2(\beta+D)^6}\right).
\label{imbalance}
\end{aligned}
\end{equation}
The sign of the peak response time \( t_{ij}^m \) is governed by the imbalance between \( \left[\mathbf{H}^{-2}\mathbf{P}_{\infty}\right]_{ij}^m \) and \( \left[\mathbf{H}^{-2}\mathbf{P}_{\infty}\right]_{ji}^m \) (imbalance between $\left[\mathbf{A}\right]_{jm}\sum_{k \ne m }\left[\mathbf{A}\right]_{ik}$ and $\left[\mathbf{A}\right]_{im}\sum_{k \ne m }\left[\mathbf{A}\right]_{jk}$ if we only care about $p=1$ truncation), revealing the relative temporal ordering of nodal responses.

\begin{figure}[!ht]
\centerline{\includegraphics[scale=0.6]{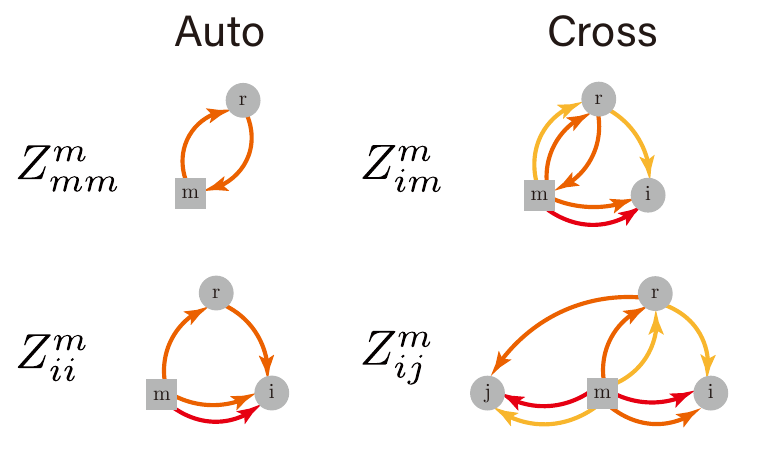}}
    \caption{Dominant motifs ($p=1$) governing amplification for autocovariance ($Z_{ii}^m$) and crosscovariance ($Z_{ij}^m$) under direct propagation ($d=1$, $[\mathbf{A}]_{im} \ne 0$), with colors representing different multiplicative terms. Nodes are labeled as: source ($m$), target ($i$ for autocovariance, $i$ and $j$ for crosscovariance), and adjacent node ($r$).}
\label{noise_motif}
\end{figure}

\section{HETEROGENEOUS IN-DEGREE NETWORKS}
\label{sec:E}

\renewcommand\theequation{E\arabic{equation}}
\setcounter{equation}{0}
\makeatletter
\makeatother

\subsection*{Expansion for heterogeneous in-degree}
\label{sec:E:heter}

\par Previous analyses considered homogeneous in-degree configurations, but realistic networks exhibit heterogeneous in-degrees that modifies signal propagation. We now derive expansions for NSDD systems with arbitrary nodal in-degrees while maintaining uniform decay rates ($\beta_i = \beta$). The matrix elements governing deterministic responses admit generalized expansions:
\begin{equation}
\mathbf{H}^{-n} = (-1)^n \sum_{p=1}^{\infty}\left(\frac{C_{p+n-1}^{n-1}}{(\beta+D)^{p+n}}\right)\mathbf{A}^p
\end{equation}

We start from the expansion for $\mathbf{H}^{-1}$. For $i \neq m$, the element-wise expansion reveals degree-dependent expansion:
\begin{widetext}
\begin{equation}
\refstepcounter{equation}\label{H_inv}
\tag{E\arabic{equation}}
\begin{aligned}
&\left[\mathbf{H}^{-1}\right]_{i m} =-\frac{1}{\beta^2} \Biggl[
\left(1-\frac{D_i+D_m}{\beta}+\frac{D_i^2+D_m^2+D_i D_m}{\beta^2}-\cdots\right)[\mathbf{A}]_{i m}  \\
& +\frac{1}{\beta}\left(1-\frac{D_i+D_m}{\beta}+\cdots\right)[\mathbf{A}^2]_{i m} +\frac{1}{\beta^2}\left(1-\frac{D_i+D_m}{\beta}+\cdots\right)[\mathbf{A}^3]_{i m} \\
& +\left(-\frac{1}{\beta^2}\right)\left(1-\frac{D_i+D_m}{\beta}+\cdots\right)[\mathbf{A} \mathbf{D} \mathbf{A}]_{i m} +\frac{1}{\beta^3}\left(1-\frac{D_i+D_m}{\beta}+\cdots\right)[\mathbf{A} \mathbf{D}^2 \mathbf{A}]_{i m} + \cdots \Biggr], \\
&=-\frac{1}{\beta^2} \sum_{d=0}^{\infty} \sum_{j_1, \cdots, j_d \in\{m \rightarrow i\}} \frac{1}{\beta^d}\sum_{k=0}^{\infty} \frac{(-1)^k}{\beta^k}\left(\sum_{p_1+\cdots+p_{d+2}=k} D_{j_1}^{p_1} \cdots D_m^{p_{d+2}}\right)A_{m \rightarrow j_1 \rightarrow \cdots \rightarrow i}, \\
&= -\frac{1}{\beta^2} \sum_{d=0}^{\infty} \sum_{j_1  \cdots j_d \in\{m \rightarrow i\}} \frac{1}{\beta^d} \frac{A_{m \rightarrow j_1 \rightarrow \cdots \rightarrow i}}{\left(1+\frac{D_{j_1}}{\beta}\right)\cdots\left(1+\frac{D_{j_d} }{\beta}\right)\left(1+\frac{D_i}{\beta}\right)\left(1+\frac{D_m}{\beta}\right)}, \\
&= -\sum_{d=0}^{\infty} \sum_{j_1  \cdots j_d \in\{m \rightarrow i\}} \frac{A_{m \rightarrow j_1 \rightarrow \cdots \rightarrow i}}{\left({\beta}+{D_{j_1}}\right)\cdots\left({\beta}+{D_{j_d} }\right)\left(\beta+{D_i}\right)\left({\beta}+{D_m}\right)}, \\
&\equiv - \sum_{w \in \mathcal{W}(m \to i)} h_w^1 {\mathcal{A}_w},
\end{aligned}
\end{equation}
\end{widetext}
where $h_w^1 \equiv \prod_{v \in w}(\beta+D_i)^{-1}$, and $\mathcal{A}_w = A_{m \rightarrow j_1 \rightarrow \cdots \rightarrow i} \equiv \prod_{t=0}^{k-1} A_{w_{t+1} w_{t}}$ corresponds to the product of edge weights along the walk $w = (w_0, w_1, \dots, w_k)$ with $w_0 = m$ and $w_k = i$.

Under homogeneous conditions, this reduces to:
\begin{equation}
\left[\mathbf{H}^{-1}\right]_{im} = -\sum_{d=0}^\infty \frac{[\mathbf{A}^{d+1}]_{im}}{(\beta+D)^{d+2}},
\end{equation}
matching Eq.~(\ref{homo:first}).

The critical identity for these expansions is established via induction:
\begin{equation*}
\sum_{k=0}^{\infty}\left(-\frac{1}{\beta}\right)^k \sum_{p_1+\cdots=k} D_1^{p_1} \cdots D_n^{p_n}=\frac{1}{\left(1+\frac{D_1}{\beta}\right) \cdots\left(1+\frac{D_n}{\beta}\right)} 
\end{equation*}

{Inductive step:} Assume validity for $n$ nodes. For $n+1$ nodes:
\begin{equation}
\refstepcounter{equation}\label{H_second_inv}
\tag{E\arabic{equation}}
\begin{aligned}
& \sum_{k=0}^{\infty}\left(-\frac{1}{\beta}\right)^k \sum_{p_1+\cdots+p_{n+1}=k} D_1^{p_1} \cdots D_n^{p_n} D_{n+1}^{p_{n+1}}, \\
&=\sum_{k=0}^{\infty}\left(-\frac{1}{\beta}\right)^k \sum_{p_{n+1}=0}^k D_{n+1}^{p_{n+1}} \sum_{p_1+\cdots+p_n=k-p_{n+1}} D_1^{p_1} \cdots D_n^{p_n}, \\
&= \frac{1}{\left(1+\frac{D_1}{\beta}\right) \cdots\left(1+\frac{D_n}{\beta}\right)}\left(1-\frac{D_{n+1}}{\beta}+\frac{D_{n+1}^2}{\beta^2}+\cdots\right), \\
&= \frac{1}{\left(1+\frac{D_1}{\beta}\right) \cdots\left(1+\frac{D_n}{\beta}\right)\left(1+\frac{D_{n+1}}{\beta}\right)}.
\end{aligned}
\end{equation}

The quadratic inverse operator exhibits diverse path weights:

\begin{widetext}
\begin{equation}
\begin{aligned}
&\left[\mathbf{H}^{-2}\right]_{i m}
= \frac{1}{\beta^3} \Biggl[
\left(2-\frac{3(D_i+D_m)}{\beta}+\cdots\right)[\mathbf{A}]_{im} + \cdots \Biggr],  \\
&=\frac{1}{\beta^3} \sum_{d=0}^{\infty} \sum_{j_1, \cdots, j_d \in\{m \rightarrow i\}} \frac{1}{\beta^d}\sum_{k=0}^{\infty} \frac{(-1)^k}{\beta^k}(k+d+2)\left(\sum_{p_1+\cdots+p_{q+2}=k} D_{j_1}^{p_1} \cdots D_m^{p_{q+2}}\right)A_{m \rightarrow j_1 \rightarrow \cdots \rightarrow i}, \\
&= \frac{1}{\beta^3} \sum_{d=0}^{\infty} \sum_{j_1  \cdots j_d \in\{m \rightarrow i\}} \frac{1}{\beta^d} \frac{(d+2)+(d+1)\left(\frac{D_{j_1}}{\beta}+\cdots+\frac{D_{j_d}}{\beta}+\frac{D_i}{\beta}+\frac{D_m}{\beta}\right)+d\left(\frac{D_{j_1}}{\beta} \frac{D_{j_{2}}}{\beta}+\cdots\right)+\cdots}{\left(1+\frac{D_{j_1}}{\beta}\right)^2\cdots\left(1+\frac{D_{j_d} }{\beta}\right)^2\left(1+\frac{D_i}{\beta}\right)^2\left(1+\frac{D_m}{\beta}\right)^2} A_{m \rightarrow j_1 \rightarrow \cdots \rightarrow i}, \\
&\equiv \sum_{w \in \mathcal{W}(m \to i)} h_w^2 {\mathcal{A}_w},
\end{aligned}
\end{equation}
\end{widetext}
where \[h_w^2 \equiv \frac{1}{\beta^{d+3}} \frac{ \sum_{r=0}^{d+2}(d+2-r) \sum_{\substack{T \subset S \\|T|=r}} \prod_{t \in T} \frac{D_t}{\beta}}{\left(1+\frac{D_{j_1}}{\beta}\right)^2\cdots\left(1+\frac{D_{j_d} }{\beta}\right)^2\left(1+\frac{D_i}{\beta}\right)^2\left(1+\frac{D_m}{\beta}\right)^2}.\]

Homogeneous reduction confirms consistency, matching Eq. \eqref{homo:second}:
\begin{equation}
\left[\mathbf{H}^{-2}\right]_{im} = \sum_{d=0}^\infty \frac{d+2}{(\beta+D)^{d+3}}[\mathbf{A}^{d+1}]_{im}.
\end{equation}

The cubic operator is introduced as:

\begin{widetext}
\begin{equation}
\begin{aligned}
&\left[\mathbf{H}^{-3}\right]_{i m} =-\frac{1}{\beta^4} \Biggl[
\left(3-\frac{6(D_i+D_m)}{\beta}+\cdots\right)[\mathbf{A}]_{im} + \cdots \Biggr] , \\
&=-\frac{1}{\beta^4} \sum_{d=0}^{\infty} \sum_{j_1, \cdots, j_d \in\{m \rightarrow i\}} \frac{1}{\beta^d}\sum_{k=0}^{\infty} \frac{(-1)^k}{\beta^k}\frac{(k+d+2)(k+d+3)}{2}\left(\sum_{p_1+\cdots+p_{q+2}=k} D_{j_1}^{p_1} \cdots D_m^{p_{q+2}}\right)A_{m \rightarrow j_1 \rightarrow \cdots \rightarrow i}, \\
&\equiv -\frac{1}{\beta^4} \sum_{d=0}^{\infty} \sum_{j_1  \cdots j_d \in\{m \rightarrow i\}} \frac{1}{\beta^d} \left(\frac{1}{2} h_{d+2}+\frac{2 d+5}{2} g_{d+2}+\frac{(d+2)(d+3)}{2} f_{d+2}\right) A_{m \rightarrow j_1 \rightarrow \cdots \rightarrow i}, \\
&\equiv -\sum_{w \in \mathcal{W}(m \to i)} h_w^3 {\mathcal{A}_w},
\end{aligned}
\end{equation}
\end{widetext}
where 
\begin{equation}
\begin{aligned}
f_{d+2}&\triangleq \sum_{k=0}^{\infty}\left(-\frac{1}{\beta}\right)^k\left(\sum_{p_1+\cdots+p_{d+2}=k} D_{j_1}^{p_1} \cdots D_{j_d}^{p_d} D_i^{p_i} D_m^{p_m}\right),\\
&=\frac{1}{\left(1+\frac{D_{j_1}}{\beta}\right)\cdots\left(1+\frac{D_{j_d}}{\beta}\right)\left(1+\frac{D_i}{\beta}\right)\left(1+\frac{D_m}{\beta}\right)}, \\
g_{d+2} &\triangleq \sum_{k=0}^{\infty}\left(-\frac{1}{\beta}\right)^k k \left(\sum_{p_1+\cdots+p_{d+2}=k} D_{j_1}^{p_1} \cdots D_{j_d}^{p_d} D_i^{p_i} D_m^{p_m}\right), \\
&= \frac{\sum_{r=0}^{d+2}r \sum_{\substack{T \subset S=\left\{j_1, j_2, \ldots, j_d, i, m\right\} \\|T|=r}} \prod_{t \in T} \frac{D_t}{\beta}}{\left(1+\frac{D_{j_1}}{\beta}\right)^2\cdots\left(1+\frac{D_{j_d} }{\beta}\right)^2\left(1+\frac{D_i}{\beta}\right)^2\left(1+\frac{D_m}{\beta}\right)^2}, \\
h_{d+2} &\triangleq \sum_{k=0}^{\infty}\left(-\frac{1}{\beta}\right)^k k^2 \left(\sum_{p_1+\cdots+p_{d+2}=k} D_{j_1}^{p_1} \cdots D_{j_d}^{p_d} D_i^{p_i} D_m^{p_m}\right), \\
&=\frac{C_1 \tilde{h}_1+C_2^1 \tilde{h}_2^1+C_2^2 \tilde{h}_2^2+C_3 \tilde{h}_3+\cdots}{\left(1+\frac{D_{j_1}}{\beta}\right)^3 \cdots\left(1+\frac{D_{j_d}}{\beta}\right)^3\left(1+\frac{D_i}{\beta}\right)^3\left(1+\frac{D_m}{\beta}\right)^3}, \\
h_w^3 &\triangleq \frac{1}{\beta^{d+4}}  \left(\frac{1}{2} h_{d+2}+\frac{2 d+5}{2} g_{d+2}+\frac{(d+2)(d+3)}{2} f_{d+2}\right).
\end{aligned} \end{equation}

Corresponding parameters are
\begin{equation}
\begin{aligned}
C_k &= \frac{(d+2) C_{2 d+2}^{k-2}- C_{2 d+2}^{k-1}}{ C_{d+1}^{\frac{k-1}{2}}}, C_k^1 = \left(\frac{k}{2}\right)^2, \\
C_k^2 &= \frac{(d+2)^2 C_{2 d+2}^{k-2}-(d+2) C_{2 d+2}^{k-1}-\left(\frac{k}{2}\right)^2 C_{d+2}^{\frac{k}{2}}}{C_{d+2}^{\frac{k}{2}-1} C_{d-\frac{k}{2}+3}^2}, \\
\tilde{h}_k &= \sum \frac{D_{j_1 \cdots}^2 D_{j_{\frac{k-1}{2}}}^2D_{j_{\frac{k+1}{2}}}}{\beta^k},\tilde{h}_k^1 = \sum \frac{D_{j_1 \cdots}^2 D_{j_{\frac{k}{2}}}^2}{\beta^k}, \\
h_k^2 &=\sum \frac{D_{j_1}^2 \cdots  D_{j_{\frac{k}{2}-1}}^2 D_{j_{\frac{k}{2}}} D_{j_{\frac{k}{2}+1}}}{\beta^k}.  
\end{aligned} \end{equation}

Homogeneous reduction confirms consistency, matching Eq. \eqref{homo:third}:
\begin{equation}
\begin{aligned}
\left[\mathbf{H}^{-3}\right]_{im} 
& =- \sum_{d=0}^{\infty}  \frac{(d+1)(d+2)\left[\mathbf{A}^d\right]_{im}}{2(\beta + D)^{d + 3}}.
\end{aligned}
\end{equation}

For noise input, the main part is to expand the steady-covariance matrix $\mathbf{P}_{\infty}$. 
\begin{equation}
\begin{aligned}
& {\left[\mathbf{P}_{\infty}\right]_{i j}^m }=\left[\mathbf{P}_{\infty}\right]_{ji}^m \\
&=\frac{I_0}{2 \pi} \int_{-\infty}^{\infty}[(\mathbf{H}-(\mathrm{i} \omega) \mathbf{I}_N)^{-1}]_{i m}\left[(\mathbf{H}+(\mathrm{i}\omega) \mathbf{I}_N)^{-1}\right]_{j m} d \omega.
\end{aligned}
\end{equation}

The corresponding expansion in the integral (See Eq.~\eqref{H_inv}) is
\begin{widetext}
\begin{equation}
\refstepcounter{equation}
\label{inv_expand}
\tag{E\arabic{equation}}
\begin{aligned}
 \left[(\mathbf{H}-(\mathrm{i} \omega) \mathbf{I}_N)^{-1}\right]_{i m}&= - \sum_{d_i=0}^{\infty} \sum_{i_1  \cdots i_{d_i} \in\{m \rightarrow i\}}  \frac{A_{m \rightarrow i_1 \rightarrow \cdots \rightarrow i}}{\left({\beta}+{D_{i_1}}+\mathrm{i}w\right)\cdots\left({\beta}+{D_{i_d} }+\mathrm{i}w\right)\left({\beta}+{D_i}+\mathrm{i}w\right)\left({\beta}+{D_m}+\mathrm{i}w\right)}, \\
 \left[(\mathbf{H}+(\mathrm{i}\omega) \mathbf{I}_N)^{-1}\right]_{j m}&= - \sum_{d_j=0}^{\infty} \sum_{j_1  \cdots j_{d_j} \in\{m \rightarrow j\}}  \frac{A_{m \rightarrow j_1 \rightarrow \cdots \rightarrow j}}{\left({\beta}+{D_{j_1}}-\mathrm{i}w\right)\cdots\left({\beta}+{D_{j_d} }-\mathrm{i}w\right)\left({\beta}+{D_j}-\mathrm{i}w\right)\left({\beta}+{D_m}-\mathrm{i}w\right)}.
\end{aligned}
\end{equation}
\end{widetext}
After substitution, the expansion will be 

\begin{widetext}
\begin{equation}
\begin{aligned}
& \left[ \mathbf{P}_{\infty}\right]_{i j}^m=\frac{I_0}{2 \pi} \int_{-\infty}^{\infty}\left[\sum_{d_i} \sum_{i_1,...,i_{d_i}}\frac{A_{m \rightarrow i_1 \rightarrow \cdots \rightarrow i}}{\left(\beta+D_{i_1}+\mathrm{i} \omega\right) \cdots\left(\beta+D_m+\mathrm{i} \omega\right)}\right]\left[\sum_{d_j} \sum_{j_1,...,j_{d_j}} \frac{A_{m \rightarrow j_1 \rightarrow \cdots \rightarrow j}}{\left(\beta+D_{j_1}-\mathrm{i}w ) \cdots\left(\beta+D_{m}- \mathrm{i} \omega\right)\right.}\right] d \omega. 
\end{aligned}
\end{equation}
\end{widetext}

We employ Cauchy's residue theorem to evaluate this integral of rational functions, following the approach outlined in \cite{brown2009complex}.
\begin{widetext}
\begin{equation}
\refstepcounter{equation}\label{P_inf_span}
\begin{aligned}
\left[ \mathbf{P}_{\infty}\right]_{i j}^m&=I_0 \sum_{d_i, d_j} \sum_{\substack{i_1 \cdots i_{d_i} \\
j_1 \cdots j_{d_j}}} \sum_{i_p}\left(\prod_{j_q} \frac{1}{2 \beta+D_{i_p}+D_{j_q}} \prod_{\substack{i_k \\ k \neq p}} \frac{1}{D_{i_k}-D_{i_p}}\right) A_{m \rightarrow i_1 \rightarrow \cdots \rightarrow i} A_{m \rightarrow j_1 \rightarrow \cdots \rightarrow j}. \\
&\equiv I_0 \sum_{\substack{w \in \mathcal{W}(m \to i) \\ v \in \mathcal{V}(m \to j)}} p(w,v) {\mathcal{A}_w \mathcal{A}_v},
\end{aligned}
\end{equation}
\end{widetext}
where 
\begin{align*}
p(w,v) &\equiv -\mathrm{i} \sum_{j_q} \operatorname{Res}_{w \rightarrow-\mathrm{i}\left(\beta+D_{j_q}\right)} f(w), \\
&=\sum_{i_p}\left(\prod_{j_q} \frac{1}{2 \beta+D_{i_p}+D_{j_q}} \prod_{\substack{i_k \\ k \neq p}} \frac{1}{D_{i_k}-D_{i_p}}\right),\\
&= \sum_{j_q}\left(\prod_{i_p} \frac{1}{2 \beta+D_{i_p}+D_{j_q}} \prod_{\substack{j_k \\ k \neq q}} \frac{1}{D_{j_k}-D_{j_q}}\right),\\
f(w) &\equiv \prod_{i_p} \frac{1}{ \beta+D_{i_p}+\mathrm{i}w} \prod_{\substack{j_k}} \frac{1}{\beta+D_{j_k}-\mathrm{i}w}.
\end{align*}

Although the expression may at first seem to require distinct degrees, since terms of the form $(D_{i_k}-D_{i_p})$ appear in the denominator, a closer look shows that these differences cancel once the full expression is reduced to a common denominator \cite{berrut2004barycentric, stoer1980introduction, brown2009complex}. Intuitively, this means that the formula does not truly depend on the degrees being distinct. When two or more degrees coincide, the cancellation ensures that the expression remains well-defined, and their effect is captured by the residue theorem through the multiplicity of the pole.


\begin{figure}
\centering
\includegraphics[scale=0.55]{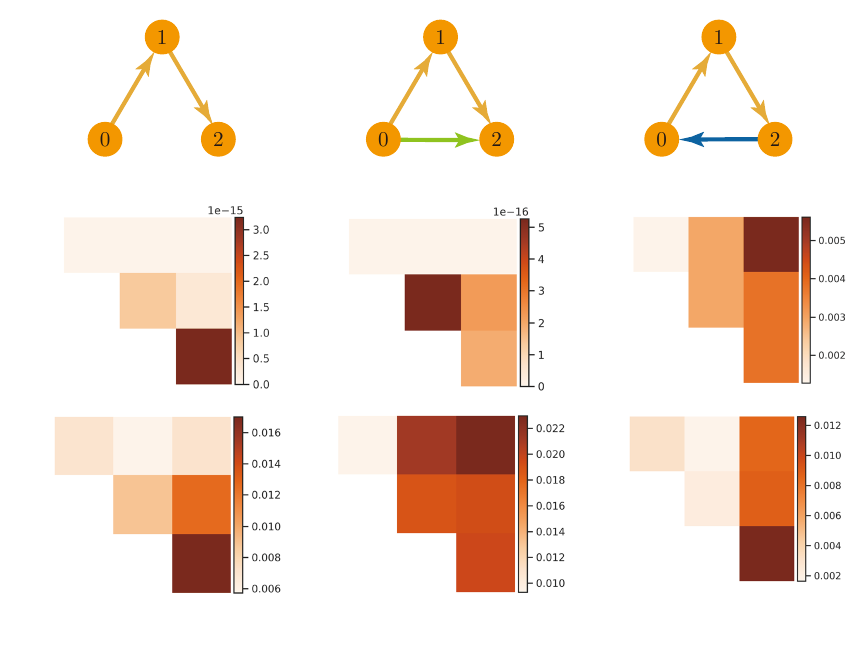}
\caption{\label{fig:P_inf}
Steady-state covariance expansions ($\mathbf{P}_{\infty}$) for three-node motifs (chain, feedforward, and feedback triangles), all with input at node $0$: Top row shows motif schematics; middle row displays relative error between Lyapunov equation solutions and theoretical expansions; bottom row presents relative error between time-lag-zero covariance values averaged for $100$ rounds and theoretical expansions. Parameters: $\beta=10$, edge weights $A_{0\to1}=1$, $A_{1\to2}=2$, $A_{2\to1}=4$.   
}
\end{figure}

\par We compare the expansion for steady covariance matrix $\mathbf{P}_{\infty}$ under three-node motifs as an example: chain (denoted as $\mathbf{P}_{\infty}^{chain}$), feedforward triangles (denoted as $\mathbf{P}_{\infty}^{FF}$) and feedback triangles (denoted as $\mathbf{P}_{\infty}^{FB}$). For the chain structure, only one path goes through node $1$ to node $2$ from source $0$:
\begin{equation}
\begin{aligned}
\left[\mathbf{P}_{\infty}^{chain}\right]_{12}^0 
=\frac{x_{01}+x_{12}+x_{02}}{x_{00}x_{01}x_{02}x_{11}x_{12}} [\mathbf{A}]_{10} [\mathbf{A}]_{10} [\mathbf{A}]_{21} I_0,
\end{aligned}
\end{equation}
where $x_{ij} \equiv  2\beta+D_i+D_j$.

When considering the feedforward triangle, the difference is that it additionally considers the other path:
\begin{equation}
\begin{aligned}
\left[\mathbf{P}_{\infty}^{FF}\right]_{12}^0 
=\left[\mathbf{P}_{\infty}^{chain}\right]_{12}^0+\frac{x_{01}+x_{02}}{x_{00}x_{01}x_{02}x_{12}} [\mathbf{A}]_{10} [\mathbf{A}]_{21} I_0.
\end{aligned}
\end{equation}

For feedback triangles, we keep the dominant terms for simplicity:

\begin{equation}
\begin{aligned}
\left[\mathbf{P}_{\infty}^{FB}\right]_{12}^0 
\approx \frac{(x_{01}+x_{12}+x_{02})(D_1-D_0)[\mathbf{A}]_{10} [\mathbf{A}]_{10} [\mathbf{A}]_{21} I_0}{(A_{\Delta}-x_{00}x_{01}x_{02})(A_{\Delta}-x_{10}x_{11}x_{12})},
\end{aligned}
\end{equation}
where $A_{\Delta} \equiv [\mathbf{A}]_{10} [\mathbf{A}]_{21} [\mathbf{A}]_{02} $.

Fig.~\ref{fig:P_inf} presents numerical tests of steady-state covariance expansions ($\mathbf{P}_{\infty}$) for the motifs. Theoretical predictions closely match both Lyapunov solutions and time-lag-zero simulated covariances, with negligible relative errors, confirming the accuracy of the expansion.

\subsection*{Self-responses}

In this subsection, we analyze the self-response case, where the source and target nodes coincide, i.e., \( i = m \).

\begin{figure*}
\centering
\includegraphics[scale=0.4]{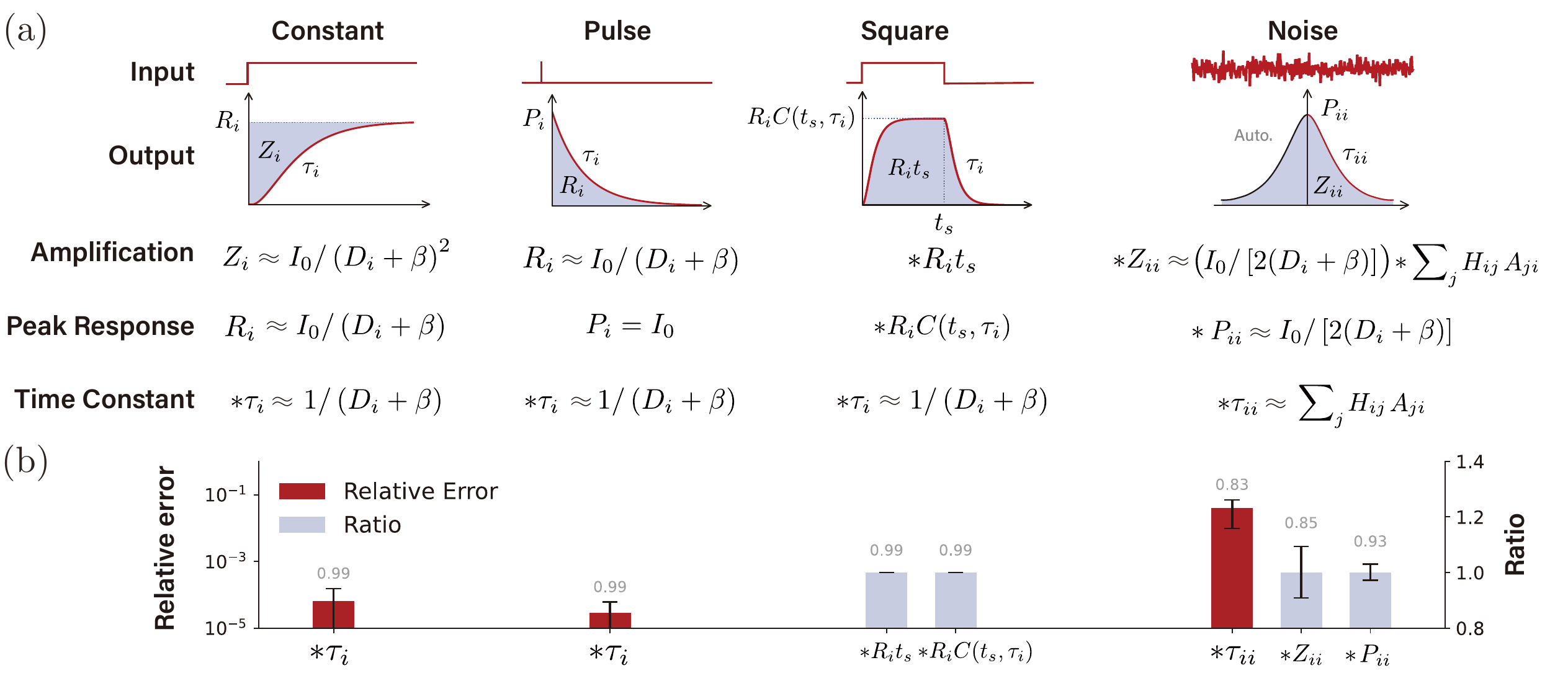}
\caption{\label{fig:mm}
Summary of self-response metrics for source node \( i \) under four input types. Metrics marked with an asterisk (*) are approximations derived from transcendental equations, with relative error $<10\%$ and estimation ratio $\approx 1$. Gray values indicate rank correlations. All metrics are truncated at leading-order terms ($p=1$). Parameters: $\beta = 10$, homogeneous weight $\alpha = 0.1$; results averaged over $100$ realizations of Erdős-Rényi random networks with connection probability $p=0.05$. Other numerical validations are presented in SM Sec. IV.
}
\end{figure*}

We begin with $\mathbf{H}^{-1}$:

\begin{equation}
\begin{aligned}\label{self:deter_inv}
&\left[\mathbf{H}^{-1}\right]_{ii} = \left[(\mathbf{A} - \mathbf{D} - \beta \mathbf{I}_N)^{-1}\right]_{ii}, \\
&= -\frac{1}{\beta}\left[\left(\mathbf{I}_N + \frac{1}{\beta} \mathbf{D}\right)^{-1} + \frac{1}{\beta} \left(\mathbf{A} + \frac{1}{\beta} \mathbf{A}^2 + \cdots\right)\right]_{ii}, \\
&= -\left[\frac{1}{\beta+D_i}+\sum_{d=1}^{\infty} \sum_{j_1  \cdots j_d \in\{i \rightarrow i\}} \frac{A_{i \rightarrow j_1 \rightarrow \cdots \rightarrow i}}{\left({\beta}+D_{j_1}\right)\cdots\left({\beta}+{D_{j_d}}\right)} \right], \\
&\equiv -\sum_{d=0}^{\infty} \sum_{j_1  \cdots j_d \in\{i \rightarrow i\}} \frac{A_{i \rightarrow j_1 \rightarrow \cdots \rightarrow i}}{\left({\beta}+D_{j_1}\right)\cdots\left({\beta}+{D_{j_d}}\right)\left({\beta}+{D_i}\right)}.
\end{aligned}
\end{equation}

This result generalizes to $\mathbf{H}^{-2}$:
\begin{widetext}
\begin{equation}   
\refstepcounter{equation}
\label{self:deter}
\tag{E\arabic{equation}}
\begin{aligned} 
&\left[\mathbf{H}^{-2}\right]_{ii} =  \frac{1}{(\beta+D_i)^2}+ \\
&\sum_{d=1}^{\infty} \sum_{j_1  \cdots j_d \in\{i \rightarrow i\}}\frac{1}{\beta^{d+3}} \frac{(d+2)+(d+1)\left(\frac{D_{j_1}}{\beta}+\cdots+\frac{D_{j_d}}{\beta}+\frac{D_i}{\beta}\right)+d\left(\frac{D_{j_1}}{\beta} \frac{D_{j_{2}}}{\beta}+\cdots\right)+\cdots}{\left(1+\frac{D_{j_1}}{\beta}\right)^2\cdots\left(1+\frac{D_{j_d} }{\beta}\right)^2\left(1+\frac{D_i}{\beta}\right)^2\left(1+\frac{D_i}{\beta}\right)^2}A_{i \rightarrow j_1 \rightarrow \cdots \rightarrow i}.
\end{aligned}
\end{equation}
\end{widetext}

\medskip

Expansion of the steady-state covariance matrix has similar forms with Eq. \eqref{P_inf_span} under unit amplitude ($I_0=1$) for simplicity.
\begin{widetext}
\begin{equation}
\begin{aligned}
 \left[ \mathbf{P}_{\infty}\right]_{i i}^i &= \frac{1}{2(\beta+D_i)} +  2\sum_{d_i} \sum_{
j_1 \cdots j_{d_j}} \sum_{j_k}\left( \frac{1}{2 \beta+D_{i}+D_{j_k}} \prod_{\substack{j_q \\ q \neq k}} \frac{1}{D_{j_q}-D_{j_k}}\right) A_{i \rightarrow j_1 \rightarrow \cdots \rightarrow i}  \\
&+\sum_{d_i, d_j=1} \sum_{\substack{i_1 \cdots i_{d_i} \\
j_1 \cdots j_{d_j}}} \sum_{i_p}\left(\prod_{j_q} \frac{1}{2 \beta+D_{i_p}+D_{j_q}} \prod_{\substack{i_k \\ k \neq p}} \frac{1}{D_{i_k}-D_{i_p}}\right) A_{i \rightarrow i_1 \rightarrow \cdots \rightarrow i} A_{i \rightarrow j_1 \rightarrow \cdots \rightarrow i}.
\end{aligned}
\end{equation}
\end{widetext}
When the decay rate is dominant ($\beta \gg D_{\text{max}}$) or the degree heterogeneity is large ($|D_i - D_j| \gg 0$), the summation terms become negligible compared to the first term. Therefore, retaining only the first term provides a good approximation in the heterogeneous setting, especially for the source propagating to its adjacency.
\begin{equation}
P_{ii}^i  =\left[\mathbf{P}_{\infty}\right]_{ii}^i\approx \frac{I_0}{2(\beta + D_i)}.
\end{equation}

The amplification is:
\begin{equation}
\refstepcounter{equation}\label{homo:self_amp}
\tag{E\arabic{equation}}
\begin{aligned}
Z_{ii}^i &= -\left[\mathbf{H}^{-1} \mathbf{P}_{\infty}\right]_{ii} \\
&= R_{ii} \left[\mathbf{P}_{\infty}\right]_{ii}^i + \sum_{j \ne i} R_{ij} \left[\mathbf{P}_{\infty}\right]_{ji}^i \\
&\approx \frac{I_0}{2(\beta + D_i)^2} + \sum_{j \ne i} R_{ij} \sum_{d=0}^{\infty} \sum_{j_1,\cdots,j_d \in \{i \rightarrow j\}} \sum_{j_q} \\
&\frac{1}{(2\beta + D_i +D_{j_q})}\prod_{\substack{j_h \\ h \neq q}}\frac{1}{D_{j_h} - D_{j_q}} A_{ji} I_0,
\end{aligned}
\end{equation}
where $R_{ij} \equiv -[\mathbf{H}^{-1}]_{ij}$ (see Eq.~\eqref{general_dynamic}).

The corresponding time constant is:
\begin{equation}
\refstepcounter{equation}\label{homo:self_time}
\tag{E\arabic{equation}}
\begin{aligned}
\tau_{ii}^i &= \frac{Z_{ii}^i}{P_{ii}^i} \\
&\approx \frac{1}{\beta + D_i} + \sum_{j \ne i} R_{ij} \sum_{d=0}^{\infty} \sum_{j_1,\cdots,j_d \in \{i \rightarrow j\}}\sum_{j_q \ne i} \\
&\frac{1}{(2\beta + D_i +D_{j_q})}\prod_{\substack{j_h \\ h \neq q}}\frac{1}{D_{j_h} - D_{j_q}} A_{ji}.
\end{aligned}
\end{equation}

\medskip

If we restrict attention to the adjacency-level effects ($p=1$ truncation), the metrics simplify to:
\begin{equation}
\begin{aligned}
Z_{ii}^i &\approx \frac{I_0}{2(\beta + D_i)}\left(\frac{1}{\beta + D_i} + \sum_{j \ne i} \frac{R_{ij} [\mathbf{A}]_{ji}}{2\beta + D_i + D_j} \right) \\
&\equiv \frac{I_0}{2(\beta + D_i)} \sum_j A_{ji} H_{ij},
\end{aligned}
\end{equation}
where \( H_{ii} \equiv 1 / (\beta+D_i) \) and \( H_{ij} \equiv R_{ij} / (2\beta + D_i + D_j) \); $A_{ii} = 1$ and $A_{ji} =[\mathbf{A}]_{ji}$.

The time constant then becomes:
\begin{equation}
\begin{aligned}
\tau_{ii}^i &\approx \frac{1}{\beta + D_i} + \sum_{j \ne i} \frac{R_{ij} [\mathbf{A}]_{ji}}{2\beta + D_i + D_j} \\
&= \sum_j A_{ji} H_{ij},
\end{aligned}
\end{equation}
which highlights the importance of reciprocal motifs $[\mathbf{A}]_{ij}[\mathbf{A}]_{ji}$ in shaping the response. Fig.~\ref{fig:mm} summarizes self-response metrics for a source node under four input types. The results show that leading-order expansions ($p=1$) already yield accurate estimates, with approximation errors below $10\%$ and estimation ratios close to unity. Consistent rank correlations across metrics further support the reliability of the framework, with additional validations provided in SM Sec. IV. Notably, the first-order effect vanishes in simple graphs without self-loops, i.e., $[\mathbf{A}]_{ii} = 0$. This structure explains why temporal information seems to be well captured under the $p=1$ truncation (as the contribution for $p=2$ is zero), particularly in heterogeneous degree settings.

\subsection*{Iterative characterization under heterogeneous in-degree configurations}
\label{sec:E:iterative}
We analyze the constant-input propagation behavior along a single path under heterogeneous in-degree conditions and regard it as a baseline model due to its simplicity. The transient response to constant input along the path can be characterized by the following metrics:

\begin{equation}
\begin{aligned}
Z(d) &= \frac{1}{\beta^{d+3}} \frac{A_{m \to \cdots \to i} \sum_{r=0}^{d+2}(d+2-r) \sum_{\substack{T \subset S \\|T|=r}} \prod_{t \in T} \frac{D_t}{\beta}}{\left(1+\frac{D_{j_1}}{\beta}\right)^2\cdots\left(1+\frac{D_i}{\beta}\right)^2\left(1+\frac{D_m}{\beta}\right)^2}, \\
R(d) &=  \frac{1}{\beta^{d+2}} \frac{A_{m \rightarrow \cdots \rightarrow i}}{\left(1+\frac{D_{j_1}}{\beta}\right)\cdots\left(1+\frac{D_i}{\beta}\right)\left(1+\frac{D_m}{\beta}\right)}, \\
\tau(d) &= \frac{Z(d)}{R(d)}.
\end{aligned}
\end{equation}

By comparing the response metrics at path length $d$ and $d+1$, we obtain the following iterative relations:

\begin{equation}  
\refstepcounter{equation}\label{iterative_whole}
\begin{aligned}  
&\frac{Z(d+1)}{Z(d)} =\left( \frac{1}{\beta+{D_{d+1}}}+ \frac{1}{\left(\beta+D_{d+1}\right)^2} \frac{R(d)}{Z(d)}\right)A_{{d} \to {d+1}}, \\
&\quad \quad \quad \quad \quad \to \frac{A_{{d} \to {d+1}}}{\beta+{D_{d+1}}},d\to \infty, \\
&\frac{R(d+1)}{R(d)} = \frac{A_{{d} \to {d+1}}}{\beta + D_{{d+1}}}, \\  
&\tau(d+1) - \tau(d) = \frac{1}{\beta + D_{{d+1}}}.
\end{aligned}  
\end{equation}
\begin{figure*}
\centering
\includegraphics[scale=0.40]{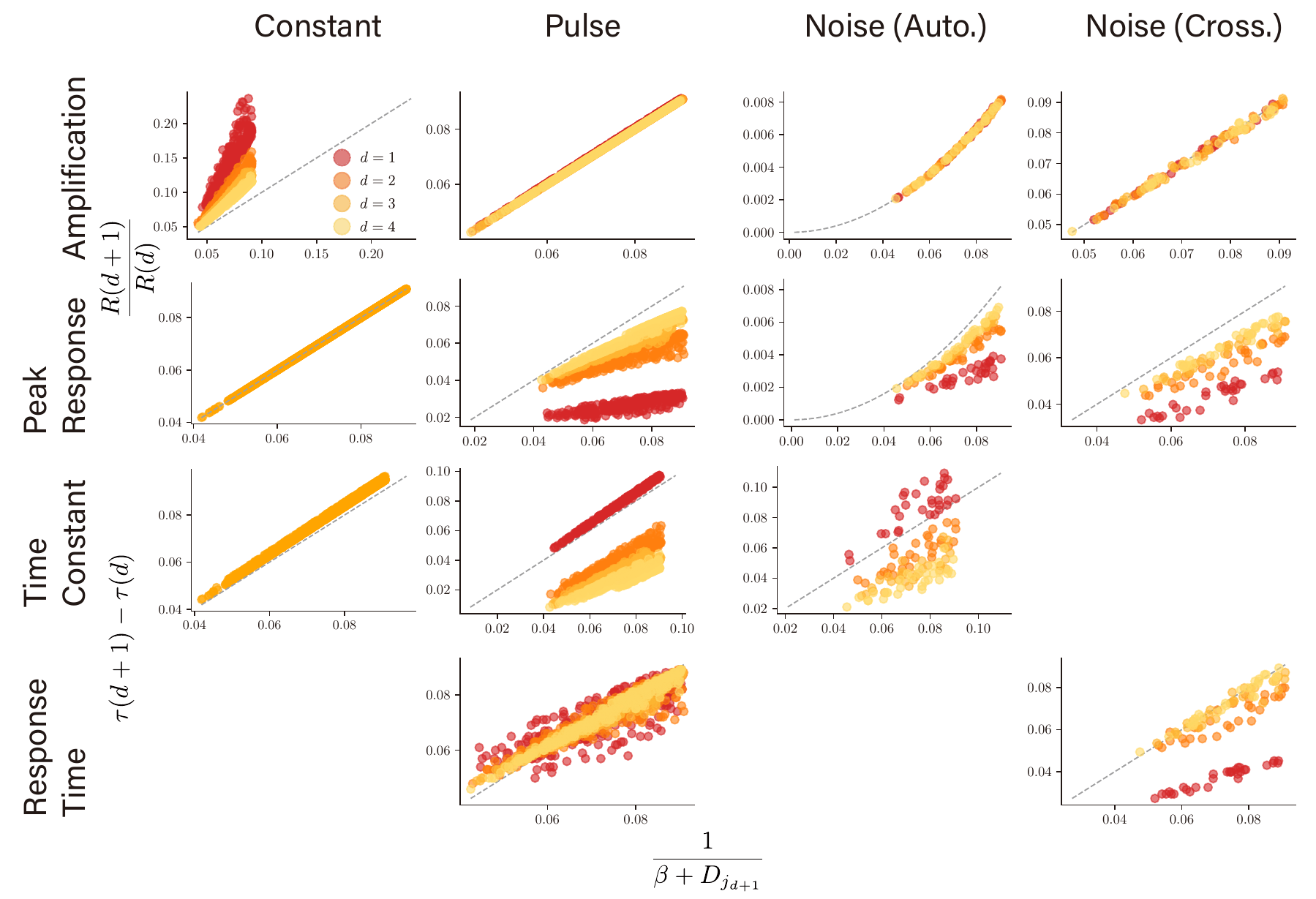}
\caption{\label{fig:diff}
Propagation laws for heterogeneous degree configurations along a single chain. Strength metrics ($Z,R,P$) follow the scaling relation $R(d+1)/R(d) \approx 1/(\beta + D_{d+1})$, particularly when $d$ is large, apart from autocovariance, which obeys a squared form, $1/(\beta + D_{d+1})^2$. Temporal metrics satisfy $\tau(d+1) - \tau(d) \approx 1/(\beta + D_{d+1})$. Parameters: $\beta = 10$, chain length $=5$, uniform edge weights $A_{d \to {d+1}}=1$, with input applied to the first node.}
\end{figure*}

\begin{figure*}
\centering
\includegraphics[scale=0.40]{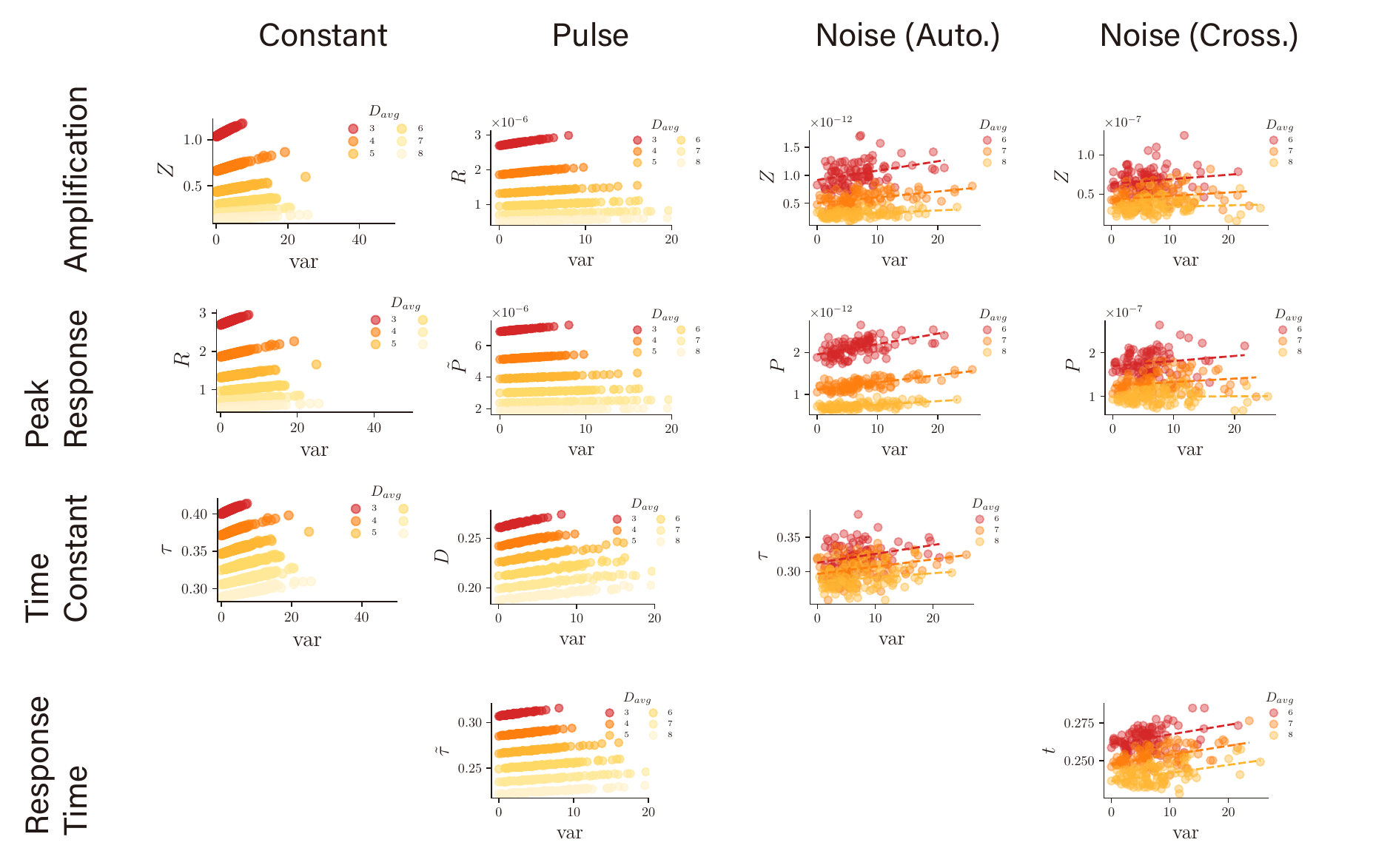}
\caption{\label{fig:heter}
Impact of mean and variance of degree distributions on a single chain. All metrics indicate that larger mean degrees suppress the metrics, whereas larger variance enhances them. Parameters: $\beta = 10$, chain length $=5$, uniform edge weights $A_{d \to {d+1}}=1$ along the chain, with input applied to the first node.}
\end{figure*}

This behavior which appears ``memoryless'' can be interpreted as an iterative process, depending only on the current layer $d$ and the next layer $d+1$, with no regard for the preceding layers. The main modulation is governed by local variables $D_{d+1}$. In this view, the response metrics for a path of length $d$ take the form without loss of generality assuming unit linkage between chain $A_{i \to {i+1}}=1$ for all $i$:
\begin{equation}
\refstepcounter{equation}\label{iterative}
\tag{E\arabic{equation}}
\begin{aligned}
R\left(D_{1}, \cdots, D_{d}\right) &= f\left(D_{1}\right) \cdots f\left(D_{d}\right), \\
\tau\left(D_{1}, \cdots, D_{d}\right) &= f\left(D_{1}\right) + \cdots + f\left(D_{d}\right),
\end{aligned}
\end{equation}
where $f(D) = 1 / (\beta + D)$. Eq.~(\ref{iterative}) suggests that a higher average degree $\langle D \rangle$ leads to lower response values $R$ and $\tau$.

In contrast, the case of crosscovariance under white-noise input differs slightly, as the covariance computation necessarily retains contributions from earlier degree information. However, these additional effects diminish as the path length grows, so that for sufficiently long paths the behavior converges to the same ``memoryless'' form.
\begin{widetext}
\begin{equation}
\begin{aligned}
Z_{noise}(d+1)&=Z_{noise}(d) \frac{A_{d \rightarrow d+1}}{\beta+{D}_{d+1}}+\frac{1}{2(\beta+{D}_m)(\beta+{D}_{d+1})}\prod_{v=0}^d \frac{A_{v \rightarrow v+1}}{2 \beta+D_m+D_{v+1}}, \\
t_{noise}(d+1) -t_{noise}(d) &= \frac{1}{\beta + {D}_{d+1}} \frac{1}{\left(1-\frac{p_{0 \rightarrow d+1}}{\sum_j^d p_{0\to j} h_{j \to d}^1}\right)} + \frac{1}{\beta + {D}_{d+1}} \frac{1}{\left(1-\frac{\sum_j^d p_{0\to j} h_{j \to d}^1}{p_{0 \rightarrow d+1}}\right)} \\
&+ \frac{1}{\frac{\sum_j^d p_{0\to j} h_{j \to d}^1}{p_{0 \rightarrow d+1}} -1}\frac{\sum_j^d p_{0\to j} h_{j \to d}^2}{\sum_j^d p_{0\to j} h_{j \to d}^1},
\end{aligned}
\end{equation}
\end{widetext}
where 
\begin{equation}
\begin{aligned}
p_{k \to j} &\equiv \prod_{v \in\{k \to j\}} {(2\beta + D_m + D_v)}^{-1},\\ 
h_{k \to j}^1 &\equiv \prod_{v \in\{k \to j\}} {(\beta + D_v)}^{-1}, \\
h_{k \to j}^2 &\equiv \frac{1}{\beta^{d+3}} \frac{ \sum_{r=0}^{j-k+1}(j-k+1-r) \sum_{\substack{T \subset S \\|T|=r}} \prod_{t \in T} \frac{D_t}{\beta}}{\left(1+\frac{D_{k}}{\beta}\right)^2\cdots\left(1+\frac{D_j}{\beta}\right)^2}. 
\end{aligned}    
\end{equation}
Since ${p_{0 \rightarrow d+1}}/{\sum_j^d p_{0\to j} h_{j \to d}^1} \to 0$ as $d \to \infty$, $Z_{noise}(d+1)/Z_{noise}(d) \to {A_{d \rightarrow d+1}}/{(\beta+{D}_{d+1})}$ and $(t_{noise}(d+1) -t_{noise}(d)) \to {(\beta + {D}_{d+1})^{-1}}$ for sufficiently large $d$ (see Fig.~\ref{fig:six}).

To isolate the effect of degree variance, we consider the case where the average degree is fixed:
\[
D_{1}+\cdots+D_{d}=d\bar{D}_d.
\]
Using the method of Lagrange multipliers with
\[
L = \tau - \lambda\left(\sum_i D_{i} - \bar{D}_d\right),
\]
and solving $\frac{\partial L}{\partial D_{i}} = f'\left(D_{i}\right) - \lambda = 0$, we obtain the condition
\[
\lambda = f^{\prime}(D_{i}) = -\left(\beta+D_{i}\right)^{-2}, \quad \forall i.
\]
Thus, the minimum value of $\tau$ is
\[
\tau = \frac{d}{\beta + \bar{D}_d}
\]
achieved when all degrees are equal: $D_{1} = \cdots = D_{d} = \bar{D}_d$, consistent with Jensen's inequality. The fixed estimation bias for the time constant does not alter the conclusion.

A similar analysis applies to the peak response $R$. Defining the Lagrangian
\[
L = R - \eta\left(\sum_i D_{i} - \bar{D}_d\right),
\]
and solving $\frac{\partial L}{\partial D_{i}} = f\left(D_{1}\right) \cdots f^{\prime}\left(D_{i}\right) \cdots f\left(D_{d}\right) - \eta = 0$, we find
\[
\eta = -\left(\beta + D_{i}\right)^{-1} R, \quad \forall i.
\]
This yields the minimum peak response
\[
R = \left(\frac{1}{\beta + \bar{D}_d}\right)^d
\]
for the homogeneous in-degree configuration. Since $\ln R$ is a convex function of $D$, this homogeneous configuration indeed minimizes the peak response.

Eq. \eqref{iterative} can alternatively be expressed as
\begin{equation}
\begin{aligned}
R(d) &= \left[\beta^{d}\left(d \sum_{n=1}^{d} \sum_{j=1}^n \frac{(-1)^{j-1}}{n} s^{(n-j)} \mu^{(j)}+1\right)\right]^{-1},\\
\tau(d)&=\frac{d\left[\sum_{n=1}^{d} \sum_{j=1}^n \frac{d-n}{n}(-1)^{j-1} s^{(n-j)} \mu^{(j)}+1\right]}{\beta\left[d \sum_{n=1}^{d} \sum_{j=1}^n \frac{(-1)^{j-1}}{n} s^{(n-j)} \mu^{(j)}+1\right]},
\end{aligned}
\end{equation}
where $s^{(k)} \equiv \sum_{1 \leqslant i_1 < \cdots < i_k \leqslant n} (D_{i_1}/\beta) \cdots (D_{i_k}/\beta)$ denote elementary symmetric polynomials and $\mu^{(k)} \equiv n^{-1} \sum_{i=1}^n (D_i/\beta)^k$ represent raw moments. For propagation only restricted to source $m$ and target $i$, the metrics simplify to
\begin{equation}
\begin{aligned}
R(D_m,D_i) &= \frac{1}{\beta^2} \frac{1}{(\mu+1)^2-\sigma^2}, \\
\tau(D_m,D_i) &= \frac{1}{\beta} \frac{2(\mu+1)}{(\mu+1)^2-\sigma^2},
\end{aligned}
\end{equation}
with $\mu \equiv \tfrac{1}{2}(D_m/\beta + D_i/\beta)$ and $\sigma^2 \equiv \tfrac{1}{2}[(D_m/\beta - \mu)^2 + (D_i/\beta - \mu)^2]$ being the mean and variance of $\{D_m, D_i\}$. These dependencies align with the general cases: increasing $\mu$ reduces both $P$ and $\tau$, while increasing $\sigma^2$ elevates both.

\subsection*{Effects of motifs}
\label{sec:E:motif}
The single chain is the baseline model, to which we can add motifs to observe their effects. In this subsection, we mainly focus on the effects of triangular motifs (feedforward and feedback triangles).

\subsection*{Effects of Feedforward (FF) Motifs}

We examine how the presence of feedforward (FF) triangles alters signal propagation. The propagation metrics in the presence of FF motifs are given by:
\begin{equation}
\refstepcounter{equation}\label{FF}
\tag{E\arabic{equation}}
\begin{aligned}
R_{\mathrm{FF}}(d) &= R(d)\frac{x}{y}, \\
\tau_{\mathrm{FF}}(d) &= \tau(d) + x - y,
\end{aligned}
\end{equation}
where \( x = {1}/({\beta + D_{\Delta}}) \), \( y = {1}/({\beta + D_{\Delta} + n(\Delta)}) \), and hence \( y < x \). Here, \( D_{\Delta} \) denotes the uniform in-degree assumed of the node participating in the triangle motif but not on the main chain path, and \( n(\Delta) \) represents the number of FF motifs.

According to Eq.~(\ref{FF}), increasing the number of feedforward triangles \( n(\Delta) \) enhances both the peak response \( R_{FF} \to \infty \) and the time constant \( \tau_{FF} \to \tau(d)+x \), indicating that FF motifs facilitate stronger and more sustained signal propagation. On the other hand, increasing the uniform in-degree \( D_{\Delta} \) reduces this effect, as it leads to
\[
R_{\mathrm{FF}}(d) \to R(d), \quad \tau_{\mathrm{FF}}(d) \to \tau(d),
\]
effectively attenuating the influence of the FF motifs. This illustrates the jamming effect of nodal in-degree: as \( D_{\Delta} \) grows, the additional flow introduced by the FF triangles converges back toward the main path, reducing their impact.

For the white-noise input, we illustrate the effect of FF triangles using a simple toy model: node $0$ is the perturbed source, node $2$ is the target, and node $1$ represents the set of identical FF nodes forming the triangles. All edge weights are set to $1$.
\begin{equation}
Z_{FF}^{noise} = Z^{noise}+n(\Delta)\sum_{j=0,1,2} h^1_{j \to 2}p_{0\to j},
\end{equation}
where $Z^{noise}$ corresponds to the case without triangular effects, i.e., $h^1_{0 \to 2}p_{2\to 2}+h^1_{2 \to 2}p_{0\to 2}$. We find that amplification grows with the number of triangles, but the contribution from triangles, $n(\Delta)\sum_{j=0,1,2} h^1_{j \to 2}p_{0\to j} \sim O(1/\beta^4)$, vanishes as the degree of node $1$ increases, verifying the jamming effect of nodal in-degree again.

\subsection*{Effects of Feedback (FB) Motifs}

The analysis of feedback motifs is more intricate due to the presence of recurrent loops within their structure. These loops effectively split the signal into multiple paths with varying increases geometrically in path length. The resulting response metrics can be expressed as:

\begin{equation}
\begin{aligned}
R_{\mathrm{FB}}(d) &= R(d) \sum_{k=0}^{\infty} \left(n({\Delta}) D_{\Delta}^{\times}\right)^k, \\
&= R(d) \frac{1}{1 - n({\Delta}) D_{\Delta}^{\times}}, \\
\tau_{\mathrm{FB}}(d) &= \frac{Z_{\mathrm{FB}}(d)}{R_{\mathrm{FB}}(d)}, \\
&= \tau(d) + \frac{n(\Delta) D_{\Delta}^{\times} D_{\Delta}^{+}}{1 - n({\Delta}) D_{\Delta}^{\times}},
\end{aligned}
\end{equation}
where \( n(\Delta) \) denotes the number of triangular motifs,  
\( D_{\Delta}^{\times} \equiv \prod_{k \in \Delta} (\beta + D_k)^{-1} \),  
and \( D_{\Delta}^{+} \equiv \sum_{k \in \Delta} (\beta + D_k)^{-1} \).

For the white-noise input, we consider a similar toy model with FB triangles, obtained by reversing the directions of the edges between $0$ and $1$ and between $1$ and $2$. All edge weights are set to $1$.
\begin{equation}
\begin{aligned}
Z_{FB}^{noise} &=\frac{h^1_{0 \rightarrow 2} p_{0 \rightarrow 0}+h^1_{2 \rightarrow 2} p_{0 \rightarrow 2}+n(\Delta) h_{FB} p_{FB}}
{\left(1-n(\Delta) h_{FB}\right)\left(1-n(\Delta) p_{FB}\right)},\\
&\approx Z^{noise} + n(\Delta) h_{FB} p_{FB},
\end{aligned}
\end{equation}
where $h_{FB} \equiv h^1_{0 \to 1 \to 2}$ and $p_{FB} \equiv p_{0 \to 1 \to 2}$. Similarly, amplification grows with the number of motifs, but here the slope is much smaller, with $h_{FB} p_{FB} \sim O(1/\beta^6)$. Also, this contribution vanishes when the in-degree of node $1$ increases.

\medskip

For higher-order feedback motifs \( \mathcal{M}_k \) with \( k \) edges, appended to the main propagation chain, the expressions naturally extend to:

\begin{equation}
\begin{aligned}
R_{\mathcal{M}_k}(d) &= R(d) \frac{1}{1 - n(\mathcal{M}_k) D_{\mathcal{M}_k}^{\times}}, \\
\tau_{\mathcal{M}_k}(d) &= \tau(d) + \frac{n(\mathcal{M}_k) D_{\mathcal{M}_k}^{\times} D_{\mathcal{M}_k}^{+}}{1 - n(\mathcal{M}_k) D_{\mathcal{M}_k}^{\times}},
\end{aligned}
\end{equation}
where \( n(\mathcal{M}_k) \) represents the number of such motifs, and the products and sums in \( D_{\mathcal{M}_k}^{\times} \) and \( D_{\mathcal{M}_k}^{+} \) are taken over all nodes within the motif.

These results highlight how recurrent motifs can enhance both the amplitude and duration of the response by effectively increasing the number of signal propagation routes and their persistence. The jamming effects of nodal in-degree still hold, as 
$R_{\mathrm{FB}}(d) \to R(d), \tau_{\mathrm{FB}}(d) \to \tau(d)$ when nodal in-degree of the motif increases.



\bibliography{apstemplate}
\end{document}


\maketitle
\tableofcontents 

\clearpage

\section{Numerical accuracy test in NSDD system}
\label{supp:num}
\subsection{Constant input}
\label{supp:constant}
\begin{figure}[!ht]
\centerline{\includegraphics[scale=0.65]{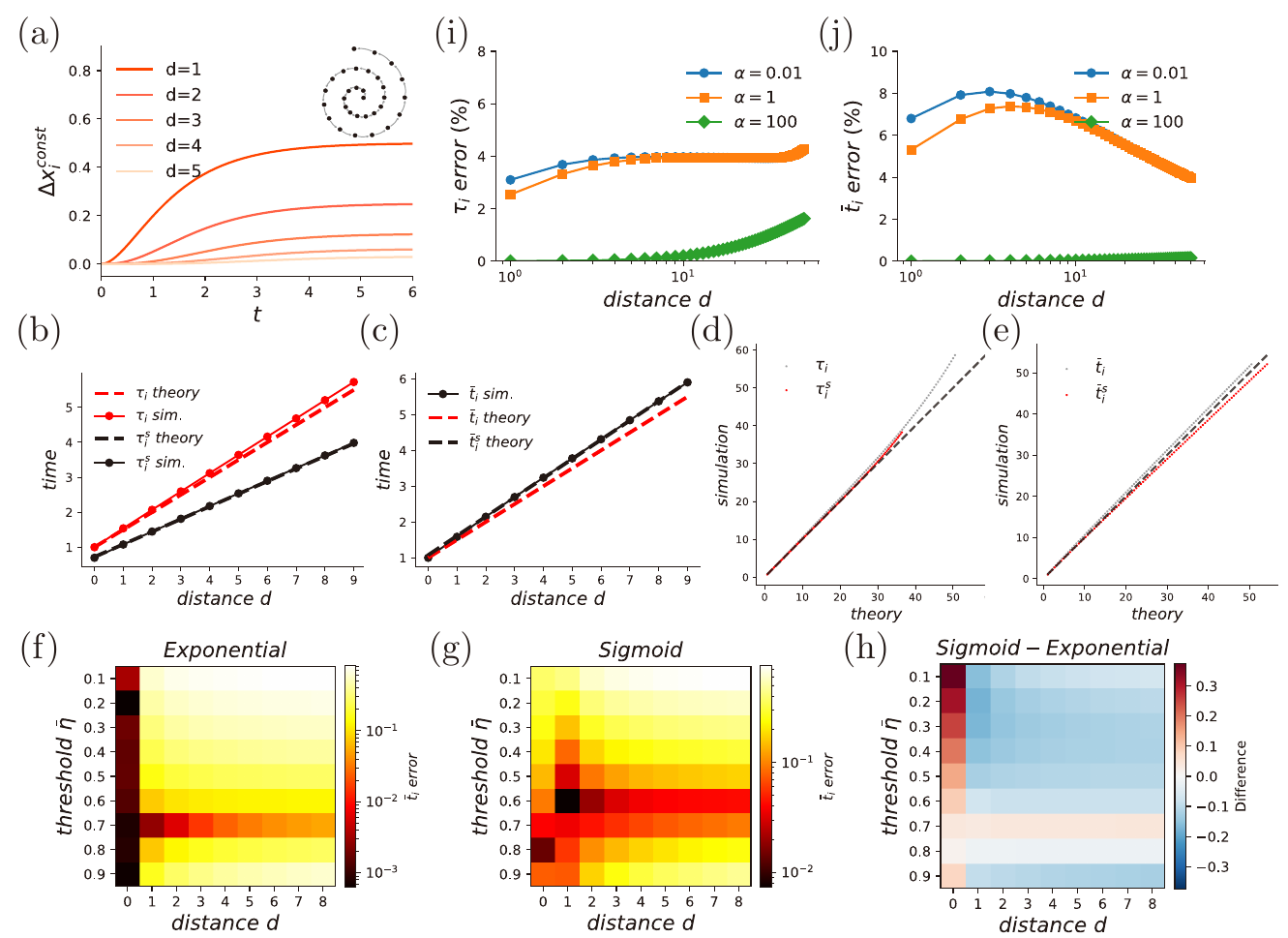}}
\caption{
Constant perturbation spreading through a directed chain of length $100$. All panels (a) to (h) use NSDD system parameters: topology matrix entries $\alpha = 1$ and self-decay parameter $\beta = 1$. The threshold $\bar{\eta} = 1-1/e$ is used in panels (c), (e), and (i). 
(a) Time courses for nodes at different shortest path lengths from input. 
(b) Simulated vs. theoretical time constants $\tau_i$ and $\tau_i^s$ (exponential and sigmoidal functions; dashed lines show simulations). 
(c) Simulated vs. theoretical relative propagation times $\bar{t}_i$ and $\bar{t}_i^s$ (exponential and sigmoidal; sigmoid shows better fit). 
(d) Simulated vs. theoretical time constants for exponential ($\tau_i$) and sigmoidal ($\tau_i^s$) functions (dotted lines indicate perfect match). 
(e) Simulated vs. theoretical relative propagation times for exponential ($\bar{t}_i$) and sigmoidal ($\bar{t}_i^s$) estimations. 
(f) Relative error in exponential fitting (minimum error near $\bar{\eta} \approx 0.7$). 
(g) Relative error in sigmoid fitting (minimum error near $\bar{\eta} \approx 0.6$). 
(h) Sigmoid vs. exponential relative error comparison (blue indicates superior sigmoid performance). 
(i) Time constant relative error vs. topological weights $\alpha$ (error decreases with larger $\alpha$, increases with distance). 
(j) Propagation time relative error vs. $\alpha$ (optimal distance for minimal error). 
Relative error for panels (f)-(j) is defined as $|\text{sim. } t - \text{thr. } t|/\text{sim. } t$.
}
\label{constant_chain}
\end{figure}

\begin{figure}[!ht]
\centerline{\includegraphics[scale=0.5]{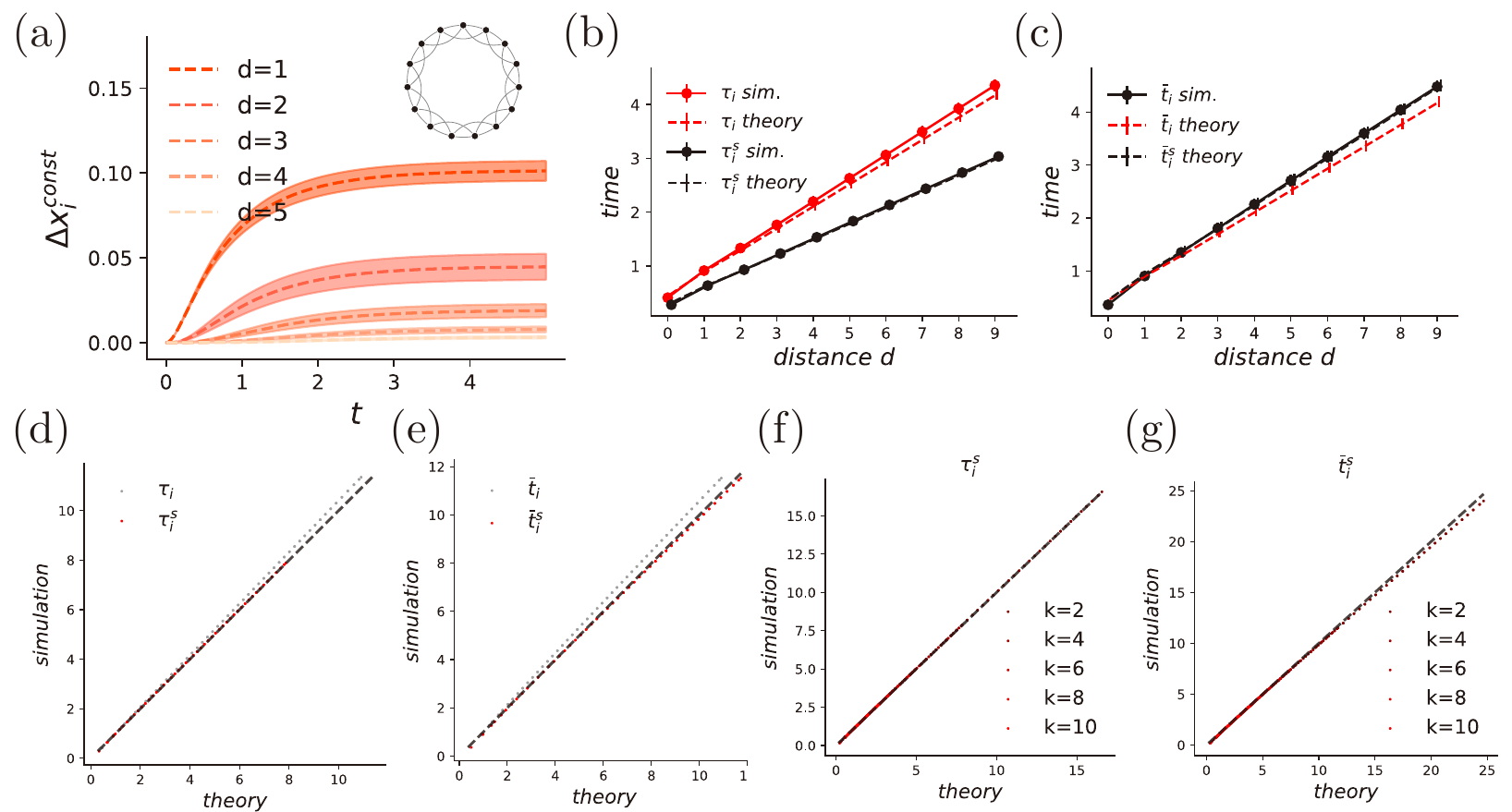}}
    \caption{
Constant perturbation spreads through a ring lattice of length $100$, with one input projecting to any node. The NSDD system uses topology matrix entries $\alpha = 1$ and self-decay parameter $\beta = 1$ for all panels (a) to (g), with the threshold $\bar{\eta} = 1-1/e$ in panels (c), (e), and (g). Nodes have degree $4$ in panels (a) to (e). (a) Time courses vs. shortest path lengths from input. (b) Simulated versus theoretical time constants $\tau_i$ and $\tau_i^s$ for exponential and sigmoidal functions (dashed lines: theoretical results). (c) Simulated versus theoretical relative propagation times $\bar{t}_i$ and $\bar{t}_i^s$ for exponential and sigmoidal functions (sigmoid shows better fit). (d) Simulated vs. theoretical time constants for exponential ($\tau_i$) and sigmoidal ($\tau_i^s$) functions (dotted lines: perfect matches). (e) Simulated versus theoretical relative propagation times for exponential ($\bar{t}_i$) and sigmoidal ($\bar{t}_i^s$) estimations. (f) Simulated vs. theoretical time constants for sigmoidal functions ($\tau_i^s$) across different node degrees. (g) Simulated versus theoretical relative propagation times for sigmoidal functions ($\bar{t}_i^s$) across different node degrees.
}
\label{constant_regular}
\end{figure}

\begin{figure}[!ht]
\centerline{\includegraphics[scale=0.78]{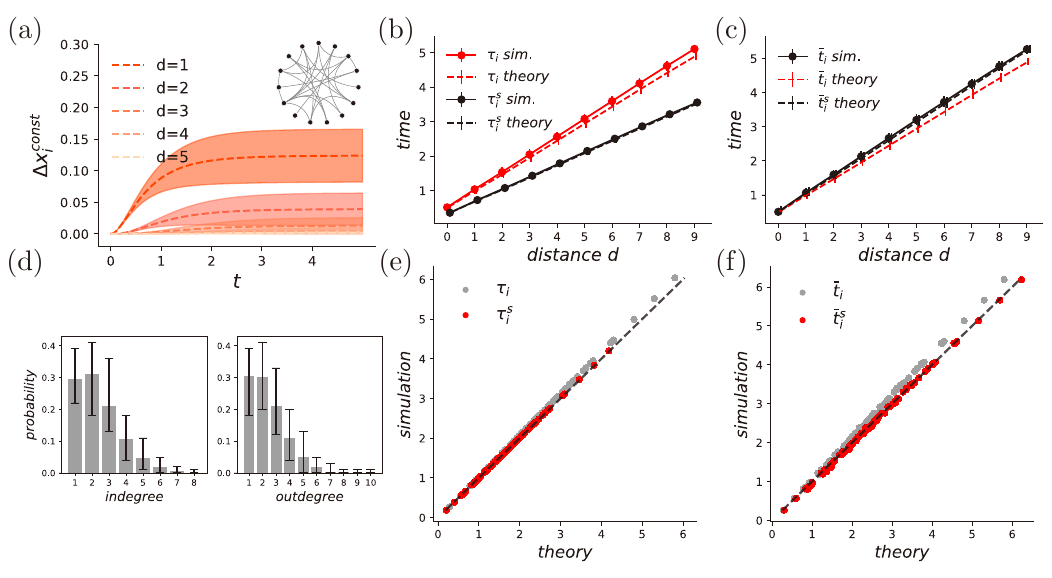}}
    \caption{
Constant perturbation spreading through an Erdős-Rényi random network ($N=100$ nodes) over $100$ simulation rounds, with one input projecting to random nodes. NSDD system parameters: $\alpha = 1$ (edge weights), $\beta = 1$ (self-decay), $\bar{\eta} = 1-1/e$ (decay threshold; c,f). (a) Time courses vs. shortest path lengths from input (shaded area: std. for same-path-length nodes). (b) Simulated vs. theoretical time constants $\tau_i$ and $\tau_i^s$ for exponential and sigmoidal functions (vertical lines: error ranges). (c) Simulated versus theoretical relative propagation times $\bar{t}_i$ and $\bar{t}_i^s$ for exponential and sigmoidal functions (sigmoid shows better fit; small errors). (d) In-degree and out-degree distributions. (e) Scatter plot: simulated versus theoretical time constants for exponential ($\tau_i$) and sigmoidal ($\tau_i^s$) functions across all rounds. (f) Scatter plot: simulated versus theoretical relative propagation times for exponential ($\bar{t}_i$) and sigmoidal ($\bar{t}_i^s$) functions across all rounds.
}
\label{constant_random}
\end{figure}

\begin{figure}[!ht]
\centerline{\includegraphics[scale=0.7]{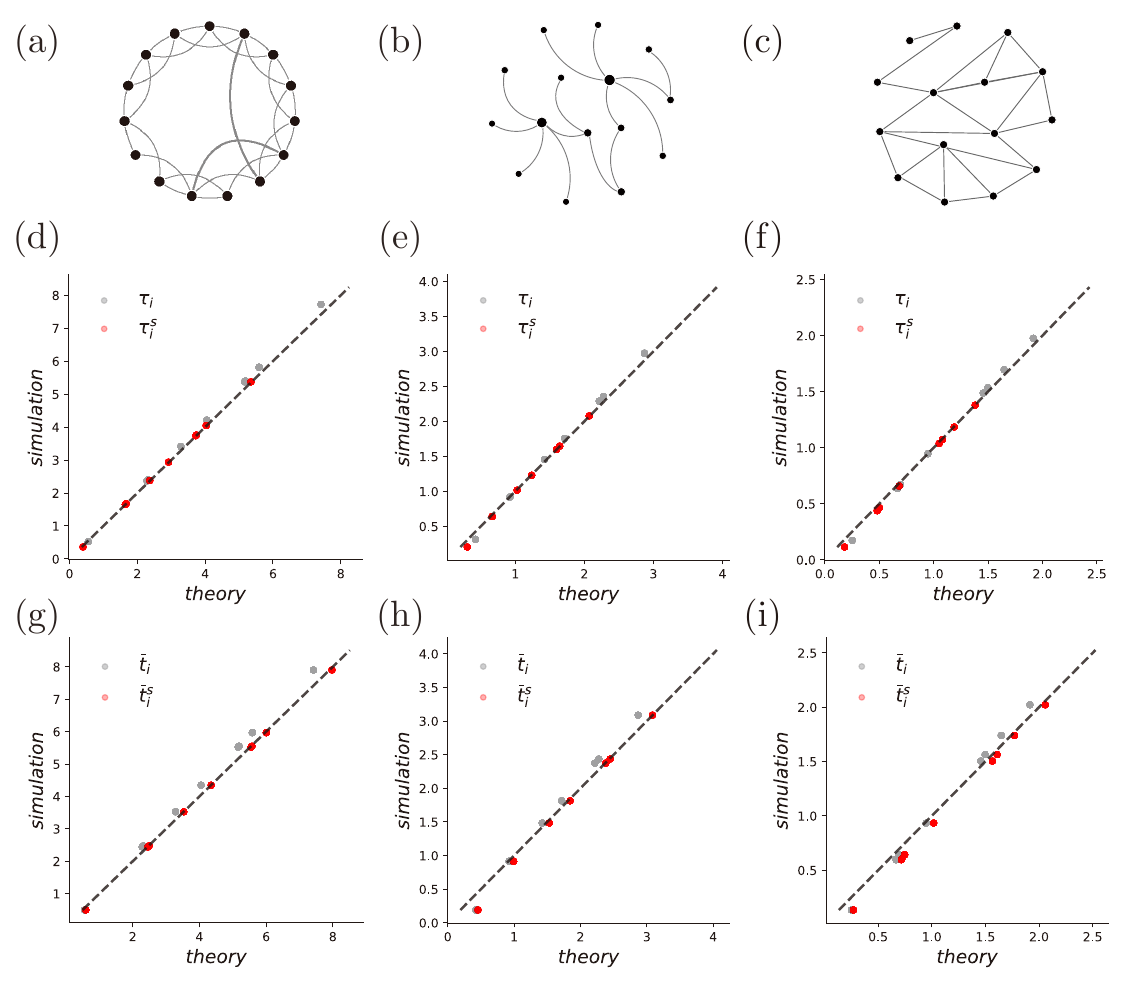}}
    \caption{
Constant perturbation spreading through small-world, scale-free, and geometric networks ($N=100$ nodes). Each network type simulated for $100$ rounds with random node perturbation. NSDD parameters: $\alpha = 1$ (edge weights), $\beta = 1$ (self-decay). (a) Small-world network sketch: initial ring with connections to $2$ nearest neighbors, edge rewiring probability $0.5$. (b) Scale-free network sketch: each new node attaches via $1$ edge to existing nodes. (c) Geometric network sketch: edges created between nodes within distance threshold $0.2$. (d-f) Simulated vs. theoretical time constants for exponential ($\tau_i$) and sigmoidal ($\tau_i^s$) functions across all rounds (small-world: d; scale-free: e; geometric: f). (g-i) Simulated vs. theoretical relative propagation times for exponential ($\bar{t}_i$) and sigmoidal ($\bar{t}_i^s$) functions across all rounds (small-world: g; scale-free: h; geometric: i).
}
\label{}
\end{figure}

\begin{figure}[!ht]
    \centerline{\includegraphics[scale=0.8]{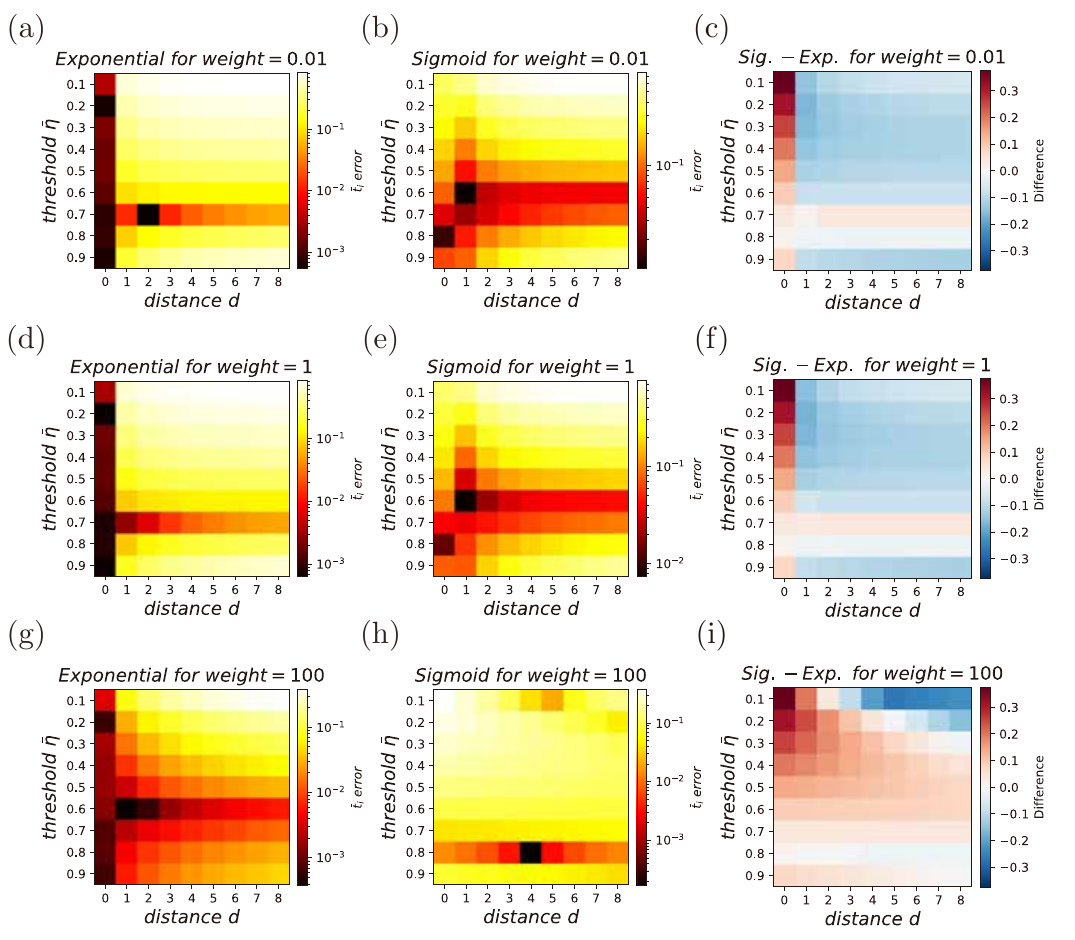}}
    \caption{
Relative error for thresholds $\bar{\eta}$ and distances $d$ at edge weights $\alpha = 0.01, 1, 100$ in a directed chain with constant inputs. The self-decay parameter $\beta = 1$. The sigmoid model yields better fits for smaller $\alpha$ values (c, f), while the exponential model is better for larger $\alpha$ (i).
}
\label{Constant_time_nine}
\end{figure}

\begin{figure}[!ht]
    \centerline{\includegraphics[scale=0.32]{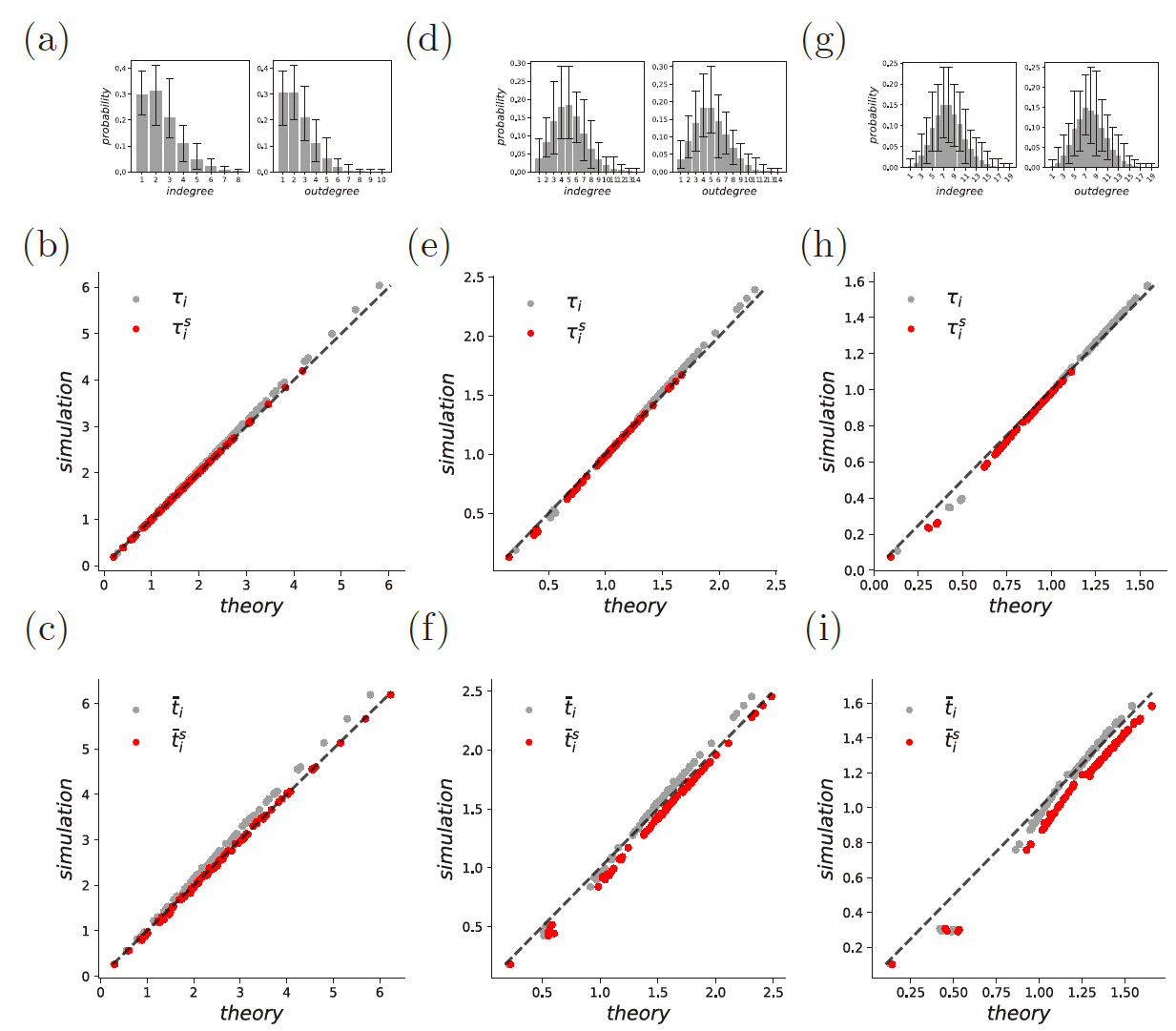}}
    \caption{Constant perturbation spreads through in the ER random networks with different edge connecting probability. Results for probability $0.02$ (a-c), $0.05$ (d-f), and $0.08$ (g-i). (a,d,g) Degree distribution for in-degree and out-degree. (b,e,h) Comparison between simulated and theoretical results for time constants for both exponential ($\tau_i$) and sigmoidal ($\tau_i^s$) functions for all rounds. (c,f,i) Comparison between simulated and theoretical results for relative time for both exponential ($\bar{t}_i$) and sigmoidal ($\bar{t}_i^s$) functions for all rounds. All theoretical results fit well with the simulated results.}
\label{constant_random_more_p}
\end{figure}

\begin{figure}[!ht]
    \centerline{\includegraphics[scale=0.6]{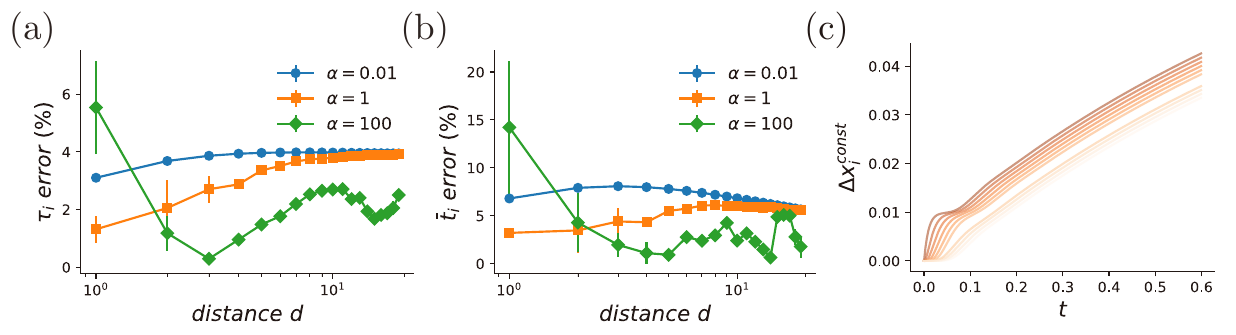}}
    \caption{
Relative error analysis for constant perturbation in Erdős-Rényi random networks ($p=0.02$) with different interaction weights. (a) Time constant ($\tau$) relative error at $\alpha = 0.01, 1, 100$. (b) Propagation time relative error at the threshold $\bar{\eta} = 1-1/e$ for $\alpha = 0.01, 1, 100$. (c) Time courses at large $\alpha$ ($100$): flat curve segment corresponds to multi-peak impulse response (Fig. \ref{pulse_error}); estimations maintain high accuracy despite this feature.
}
\label{constant_error}
\end{figure}
\clearpage

\subsection{Pulse input}
\label{supp:pulse}
\begin{figure}[!ht]
\centerline{\includegraphics[scale=0.7]{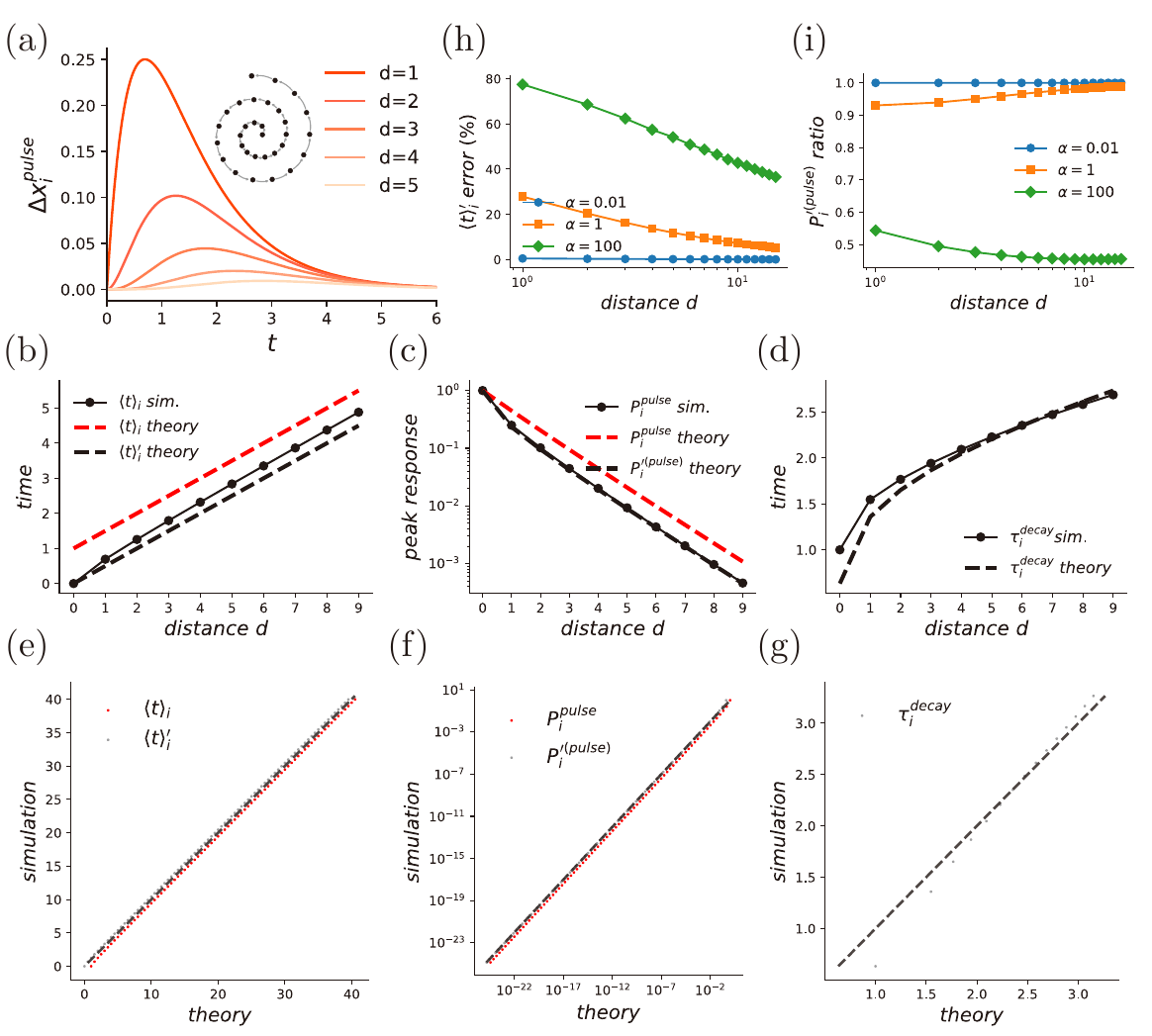}}
    \caption{
Impulse response in a directed chain ($N=80$). NSDD parameters: $\alpha = 1$, $\beta = 1$ (panels a-g). (a) Time course vs. shortest path length from input. (b) Peak response time vs. path length $d$: simulated results vs. theoretical predictions vs. amended predictions. (c) Peak response vs. path length $d$: simulated results vs. theoretical predictions vs. amended predictions. (d) Simulated vs. theoretical decay time constants. (e) Simulated vs. theoretical and amended peak response time. (f) Simulated vs. theoretical and amended peak response. (g) Simulated decay time constants vs. theoretical predictions for $d<15$ (differences increase with distance due to sharp decay; dashed lines: perfect matches). (h) Time constant relative error ($|\text{sim. } t - \text{thr. } t|/\text{sim. } t$) for $\alpha = 0.01, 1, 100$: error decreases with distance, minimal at smallest $\alpha$. (i) Peak response ratio ($\text{thr. } P/\text{sim. } P$) for $\alpha = 0.01, 1, 100$: ratio tends to $1$ at smallest $\alpha$.
}
\label{pulse_chain}
\end{figure}

\begin{figure}[!ht]
\centerline{\includegraphics[scale=0.36]{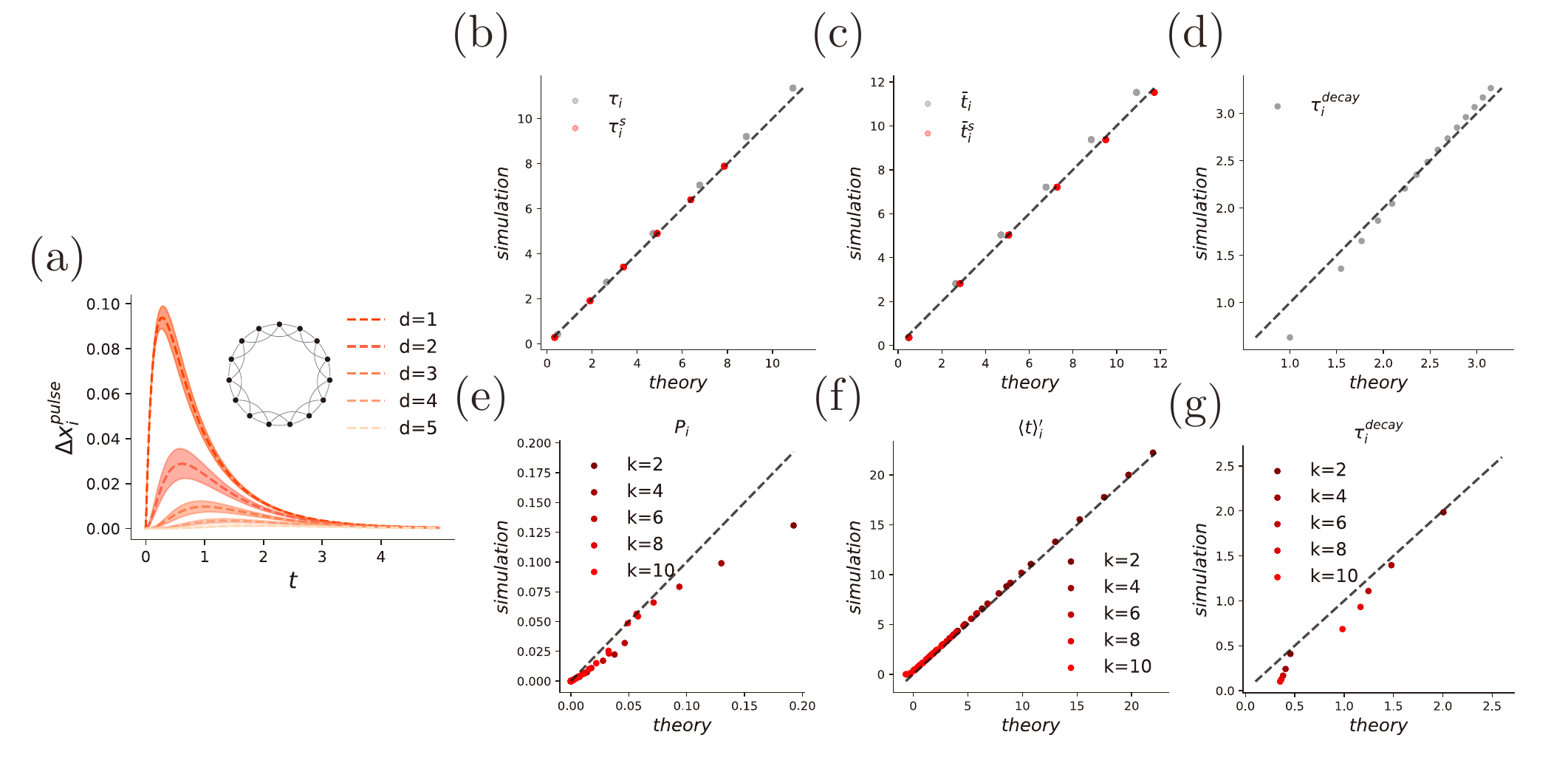}}
    \caption{
Impulse response in a ring lattice ($N=80$). NSDD parameters: $\alpha = 1$, $\beta = 1$. Node degree = 4 for panels (b)-(d). (a) Time course vs. shortest path length from input. (b) Simulated vs. theoretical and amended peak response. (c) Simulated vs. theoretical and amended peak response time. (d) Simulated decay time constants vs. theoretical predictions. (e) Simulated vs. theoretical amended peak response across different node degrees (bias increases with degree). (f) Simulated vs. theoretical amended peak response time across different node degrees. (g) Simulated decay time constants vs. theoretical predictions across different node degrees.
}
\label{pulse_regular}
\end{figure}

\begin{figure}[!ht]
\centerline{\includegraphics[scale=0.5]{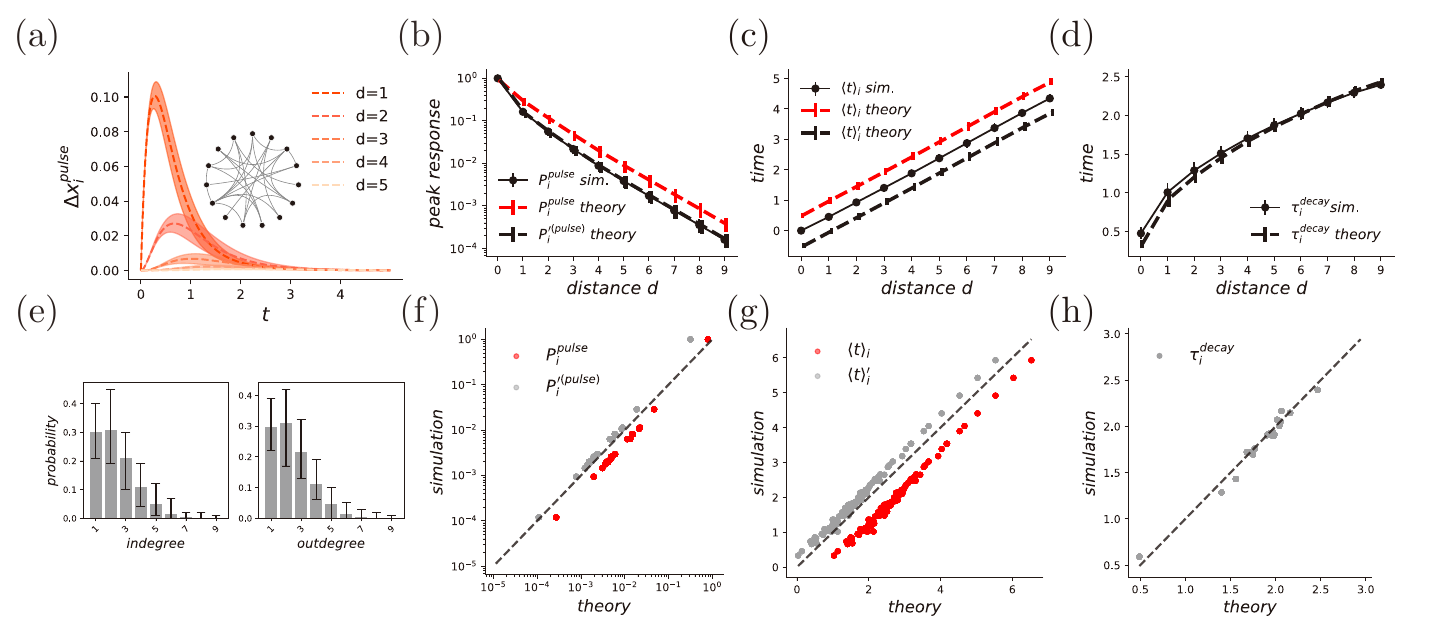}}
    \caption{
Impulse response in an Erdős-Rényi random network ($N=80$). Results from $100$ simulation rounds with NSDD parameters: $\alpha = 1$, $\beta = 1$. (a) Time course vs. shortest path length from input. (b) Peak response vs. path length $d$: simulated results vs. theoretical predictions vs. amended predictions. (vertical lines: error ranges). (c) Peak response time vs. path length $d$: simulated results vs. theoretical predictions vs. amended predictions. (simulations lie between initial and amended predictions). (d) Simulated decay time constants vs. theoretical predictions. (e) In-degree and out-degree distributions. (f) Simulated vs. theoretical and amended peak response across all rounds. (g) Simulated vs. theoretical and amended peak response time across all rounds. (h) Simulated decay time constants vs. theoretical predictions across all rounds.
}
\label{pulse_random}
\end{figure}

\begin{figure}[!ht]
\centerline{\includegraphics[scale=0.55]{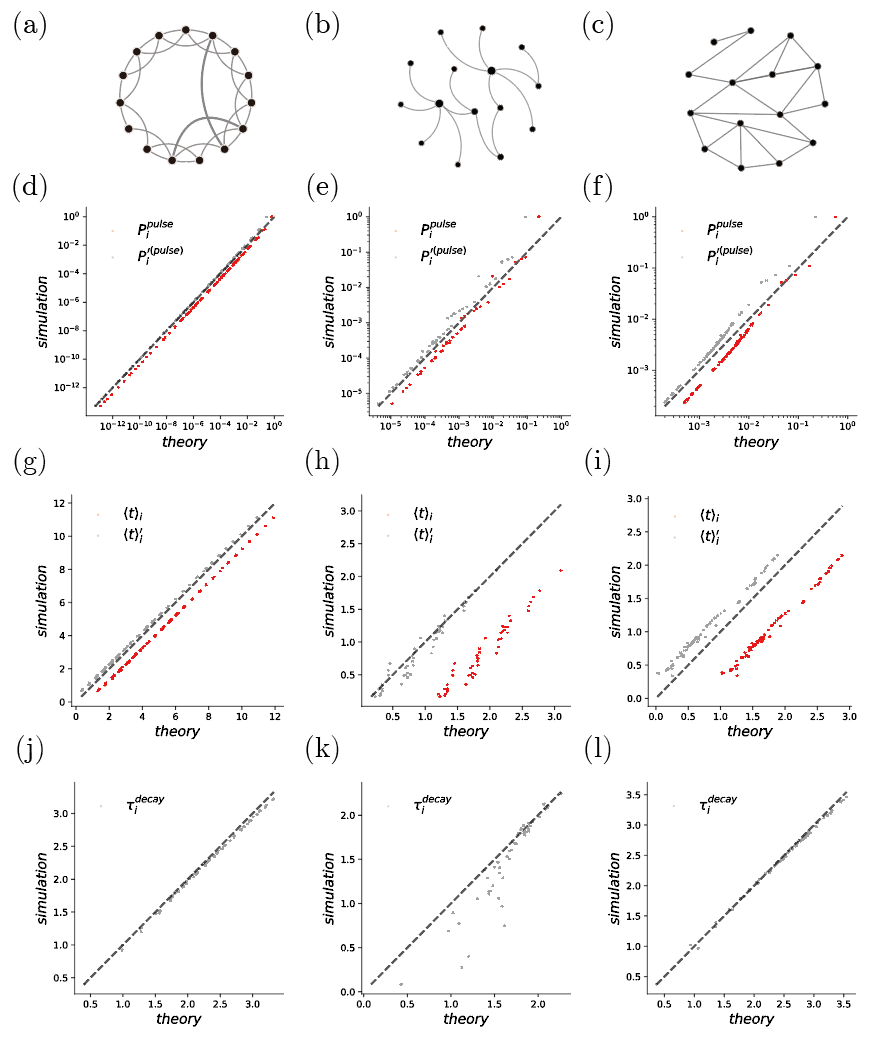}}
    \caption{
Pulse perturbation in small-world, scale-free, and geometric networks ($N=100$). Each network simulated for $100$ rounds. NSDD parameters: $\alpha = 1$, $\beta = 1$. (a) Small-world sketch: initial ring with connections to $2$ nearest neighbors, edge rewiring probability $0.5$. (b) Scale-free sketch: each new node attaches via $1$ edge to existing nodes. (c) Geometric sketch: edges created between nodes within distance threshold $0.2$. (d-f) Simulated vs. theoretical peak response for initial theory ($P_i^{\text{pulse}}$) and amended theory ($P_i^{\prime\text{(pulse)}}$) across all rounds (d: small-world; e: scale-free; f: geometric). (g-i) Simulated vs. theoretical peak response time for initial theory ($\langle t\rangle^{\prime}_i$) and amended theory ($\langle t\rangle_i$) across all rounds (g: small-world; h: scale-free; i: geometric). (j-l) Simulated vs. theoretical peak response time from decay time constant ($\tau_i^{\text{decay}}$) across all rounds (j: small-world; k: scale-free; l: geometric).
}
\label{pulse_other}
\end{figure}

\begin{figure}[!ht]
    \centerline{\includegraphics[scale=0.42]{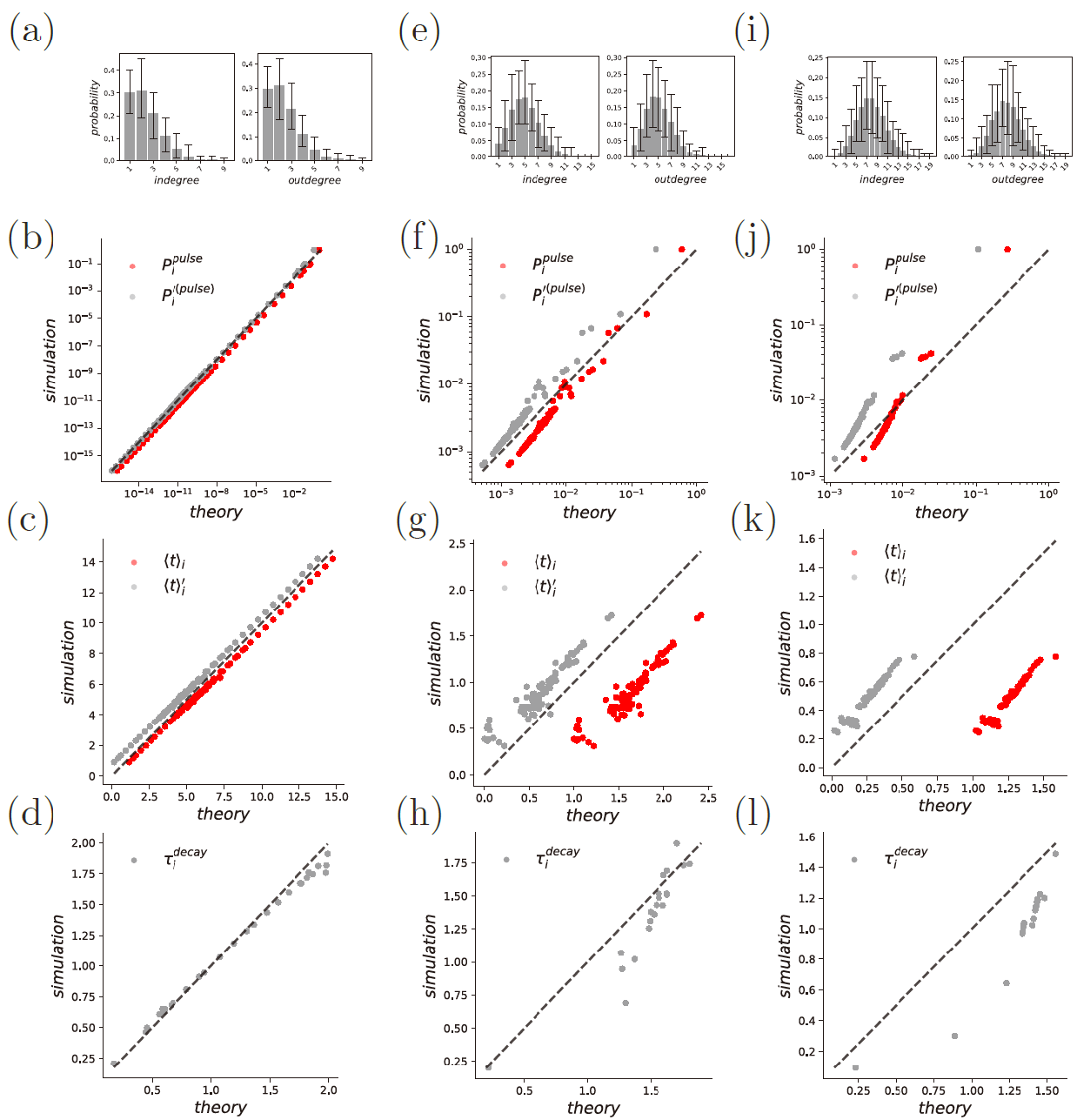}}
    \caption{
Pulse perturbation in Erdős-Rényi random networks with edge probabilities $p=0.02$ (a-d), $p=0.05$ (e-h), and $p=0.08$ (i-l). (a,e,i) In-degree and out-degree distributions. (b,f,j) Simulated vs. theoretical peak response for initial ($P_i^{\text{pulse}}$) and amended ($P_i^{\prime\text{(pulse)}}$) theories across all rounds. (c,g,k) Simulated vs. theoretical peak response time for initial ($\langle t\rangle^{\prime}_i$) and amended ($\langle t\rangle_i$) theories across all rounds. (d,h,l) Simulated decay time constants vs. theoretical predictions ($\tau_i^{\text{decay}}$) across all rounds. At higher $p$ values (e.g., $0.08$), theoretical predictions show systematic deviations: peak responses exhibit fixed-slope discrepancies and decay time constants are overestimated.
}
\label{pulse_random_more_p}
\end{figure}

\begin{figure}[!ht]
    \centerline{\includegraphics[scale=0.7]{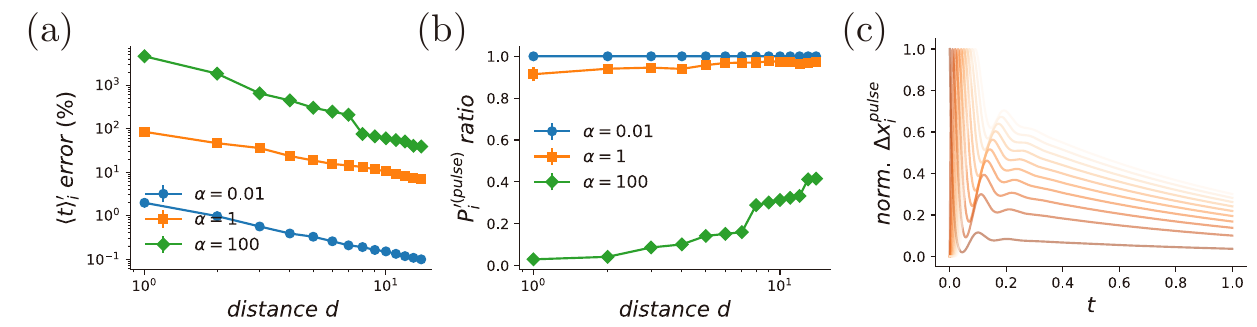}}
\caption{
Estimation error for pulse perturbation in Erdős-Rényi random networks ($p=0.02$) at edge weights $\alpha = 0.01, 1, 100$. (a) Peak response time relative error: increases with $\alpha$ (largest at $\alpha=100$). (b) Peak response ratio (theoretical/simulated): decreases with $\alpha$ (smallest at $\alpha=100$). (c) Time course at $\alpha=100$: recurrent loops generate multi-peak responses where estimations fail.
}
\label{pulse_error}
\end{figure}

\begin{figure}[!ht]
    \centerline{\includegraphics[scale=0.88]{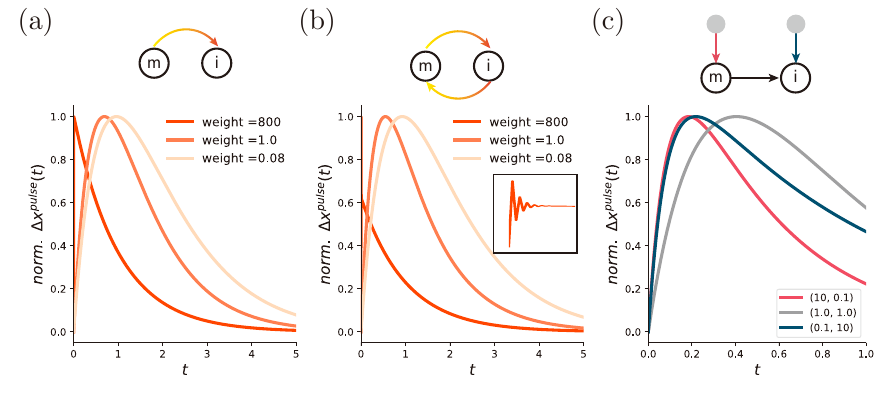}}
    \caption{
Toy model illustrating combined effects of topology, interaction weight, and degree heterogeneity. (a) Two-node directed chain: large interaction weights ($\alpha$) create significant asymmetry, explaining peak response time estimation errors at high $\alpha$ due to distribution skewness. (b) Two-node bidirectional chain: asymmetry-induced bias with damped oscillations affects peak response time estimation accuracy. (c) Degree heterogeneity effect: constant in-degree product maintained; greater heterogeneity increases asymmetry, explaining reduced estimation accuracy in scale-free networks.
}
\label{pulse_toy}
\end{figure}

\begin{figure}[!ht]
    \centerline{\includegraphics[scale=0.48]{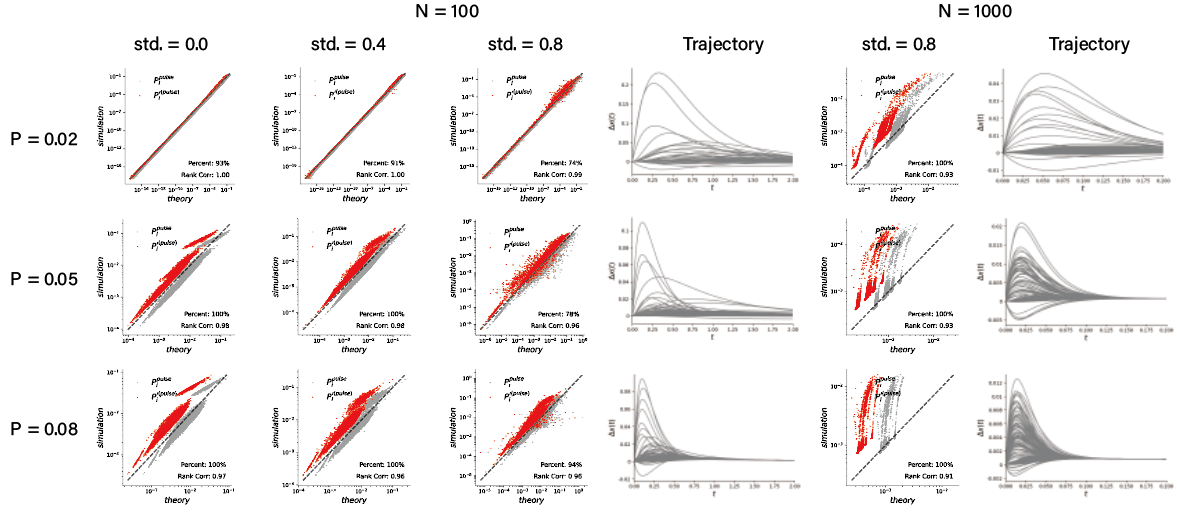}}
    \caption{
Peak response for pulse perturbation in Erdős-Rényi random networks with Gaussian weight distributions (mean = $1.0$) at different standard deviations. Network sizes: $100$ and $1000$ nodes. Edge probabilities: $0.02$ (first row), $0.05$ (second row), $0.08$ (third row). Simulated vs. theoretical extended relative time comparisons for exponential ($\bar{t}_i$) and sigmoidal ($\bar{t}_i^s$) functions over $100$ rounds. Key observations: (1) Theoretical estimates match simulations closely in dense networks (higher average degrees). (2) For larger networks (size $1000$), simulated results consistently exceed estimates. (3) At std. = $0.8$, negative weights distort nodal traces (especially in sparse networks); negative estimates and negligible responses excluded (inclusion percentages shown in lower-right corners).
}
\label{pulse_small_mean_peak}
\end{figure}

\begin{figure}[!ht]
    \centerline{\includegraphics[scale=0.48]{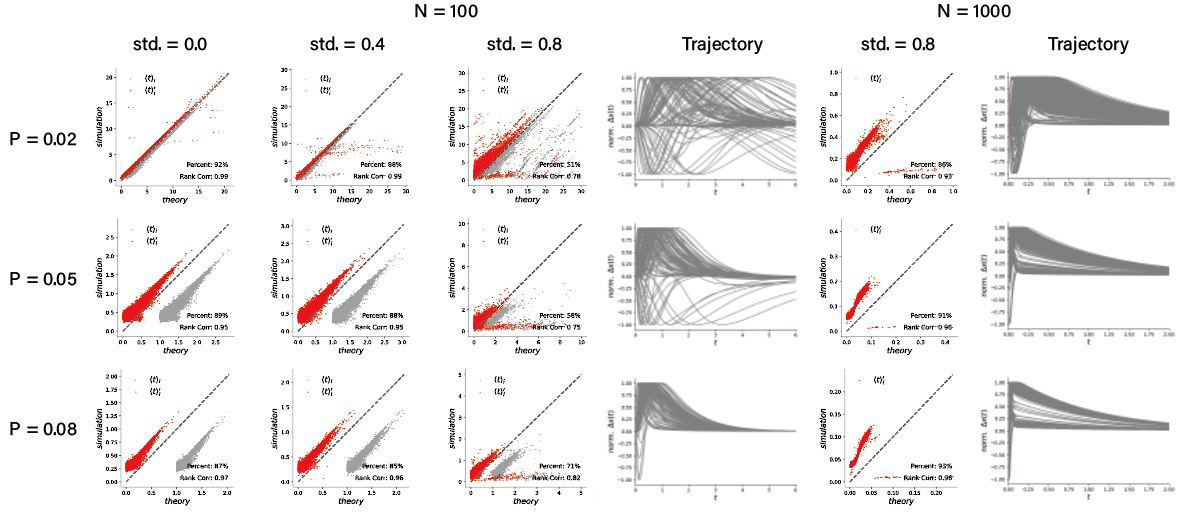}}
    \caption{Peak response time for pulse perturbation spreading in ER random networks. Networks of size $N = 100$ and $N = 1000$ with edge probabilities $p = 0.02$ (top row), $0.05$ (middle), and $0.08$ (bottom). Edge weights follow Gaussian distributions (mean $= 1.0$) with varied standard deviations. Results show extended peak response times from $100$ simulation rounds. Theoretical estimates match simulations closely in sparse networks (low $p$) without negative links. For $std. = 0.8$, negative weights significantly alter nodal traces, particularly in sparse networks. Negative theoretical values and negligible responses are excluded (inclusion percentages in bottom-right).}
\label{pulse_small_mean_time}
\end{figure}
\clearpage

\subsection{Square input}
\label{supp:square}
\begin{figure}[!ht]
    \centerline{\includegraphics[scale=0.78]{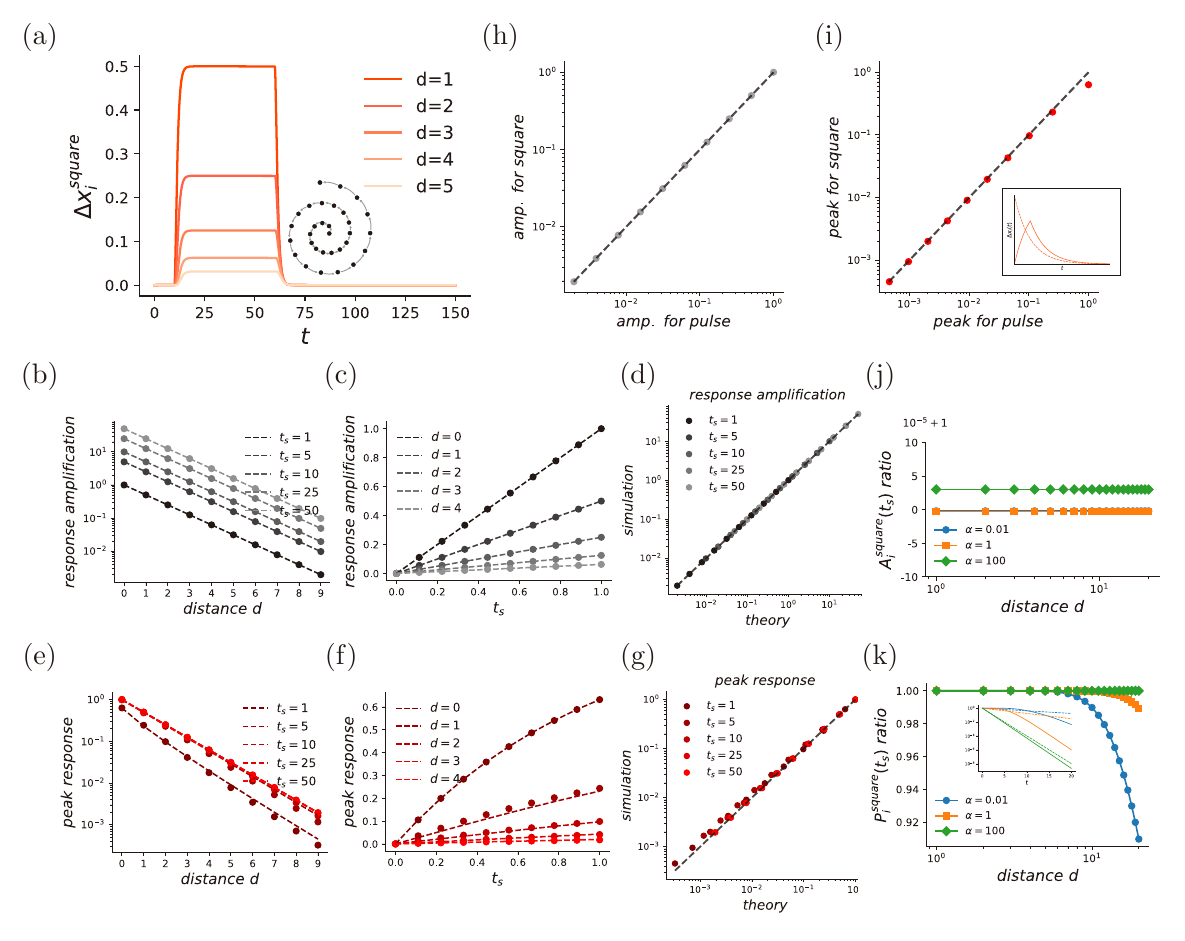}}
    \caption{Dynamics of square input propagation in a directed chain ($N=100$) with NSDD parameters ($\alpha=1$, $\beta=1$). Input is applied to the first node. 
(a) Time courses at nodes with varying distances from the source. 
(b) Response amplification vs. distance for different duration lengths ($t_s$). 
(c) Response amplification vs. $t_s$ for different distances. 
(d) Theoretical vs. simulated amplification across $t_s$. 
(e) Peak response vs. distance for different $t_s$. 
(f) Peak response vs. $t_s$ for different distances. 
(g) Theoretical vs. simulated peak response (simulations show larger peaks at small $t_s$). 
(h) Impulse response amplification matches unit-step ($t_s=1$) square input (equal input strength). 
(i) Impulse peak response aligns with unit-step input apart from the first node; inset shows response comparison at input node. 
(j) Relative error in amplification theory for $\alpha = 0.01, 1, 100$ (accurate predictions). 
(k) Theoretical/simulated peak ratio for $\alpha = 0.01, 1, 100$; deviations occur for small $\alpha$ at large distances due to early evolution effects (inset: exponential fitting).}
\label{Square_chain}
\end{figure}

\begin{figure}[!ht]
    \centerline{\includegraphics[scale=0.55]{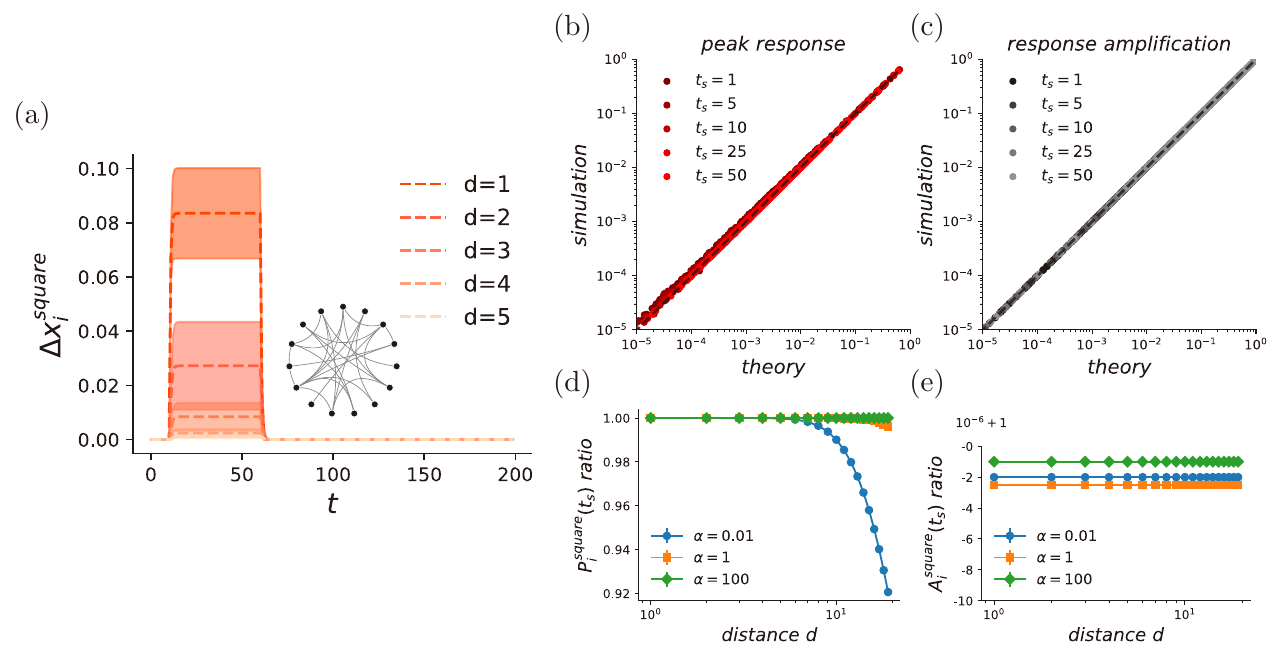}}
    \caption{Propagation dynamics of square inputs in random networks ($N=100$) with NSDD parameters ($\alpha=1$, $\beta=1$). A single input is applied at random nodes. 
(a) Time courses at varying distances from input node; shading indicates response deviations for several realizations. 
(b) Theoretical vs. simulated amplification across duration lengths ($t_s$). 
(c) Theoretical vs. simulated peak response across $t_s$. 
(d) Theoretical/simulated peak ratio (thr. P/sim. P) for $\alpha = 0.01, 1, 100$, showing deviations at small $\alpha$ and large distances. 
(e) Theoretical/simulated amplification ratio for $\alpha = 0.01, 1, 100$, demonstrating accurate estimation.}
\label{square_random}
\end{figure}

\begin{figure}[!ht]
    \centerline{\includegraphics[scale=0.65]{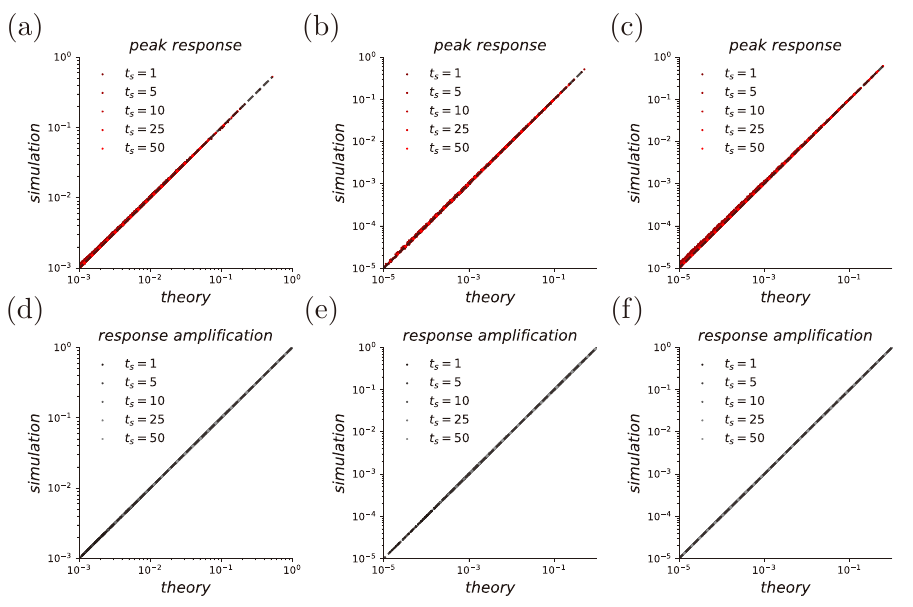}}
    \caption{Propagation dynamics of square perturbations in three networks (each $N=100$): small-world (a,d), scale-free (b,e), and geometric (c,f). All simulations use $\alpha=1$, $\beta=1$ and $100$ rounds, with any single-node inputs. Network specifications: (a,d) Small-world: Initial ring with $2$ nearest neighbors per node, edge rewiring probability $0.5$; (b,e) Scale-free: Preferential attachment ($1$ edge per new node); (c,f) Geometric: Edge creation threshold radius $0.2$. Panels show theoretical estimations of (a-c) peak response and (d-f) response amplification.}
\label{square_other}
\end{figure}
\clearpage

\subsection{Noise input}
\label{supp:noise}

\begin{figure}[!ht]
    \centerline{\includegraphics[scale=0.5]{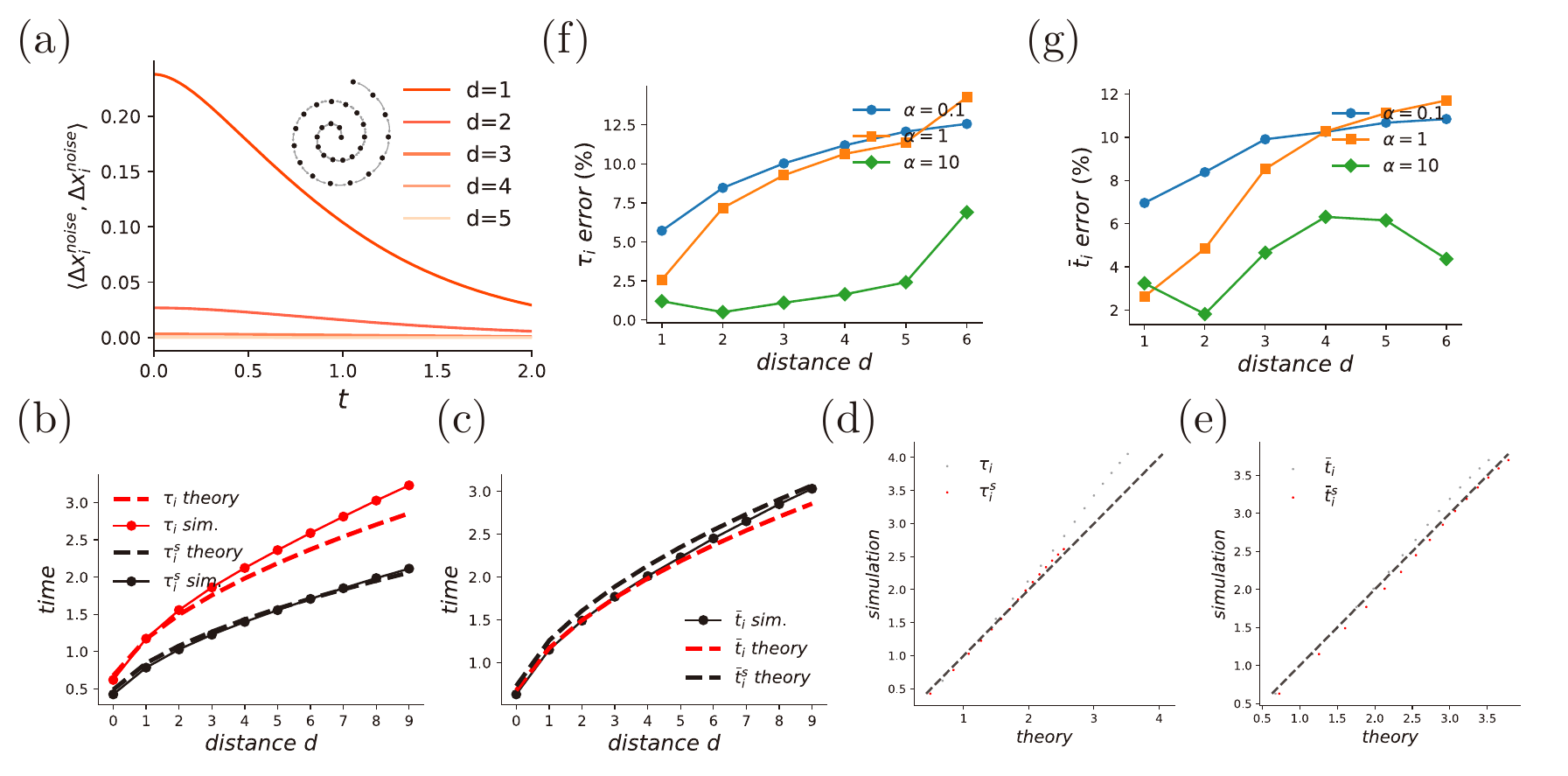}}
    \caption{White noise response dynamics in an undirected chain ($N=100$) with NSDD parameters ($\alpha=1$, $\beta=1$). Input targets the first node. Threshold $\bar{\eta} = 1/e$ applies to panels (c,e,g). 
(a) Autocovariance dynamics for different shortest path distance $d$ from input node. 
(b) Simulated and theoretical time constant with exponential form $\tau_i$ and sigmoidal form $\tau_i^s$ vs. distance $d$. 
(c) Simulated and theoretical relative propagation time with exponential form $\bar{t}_{i}$ and sigmoidal form $\bar{t}_{i}^s$ vs. distance $d$. 
(d) Theoretical vs. simulated time constants (initial/amended). 
(e) Theoretical vs. simulated relative propagation times (initial/amended). 
(f) Relative error of $\tau$ ($|\text{sim.} - \text{thr.}|/\text{sim.}$) for $\alpha = 0.1, 1, 10$: error increases with $d$ and minimizes at largest $\alpha$. 
(g) Relative error of $\bar{t}_{i}$ for $\alpha = 0.1, 1, 10$: similar trend as (f).}
\label{auto_chain}
\end{figure}

\begin{figure}[!ht]
    \centerline{\includegraphics[scale=0.44]{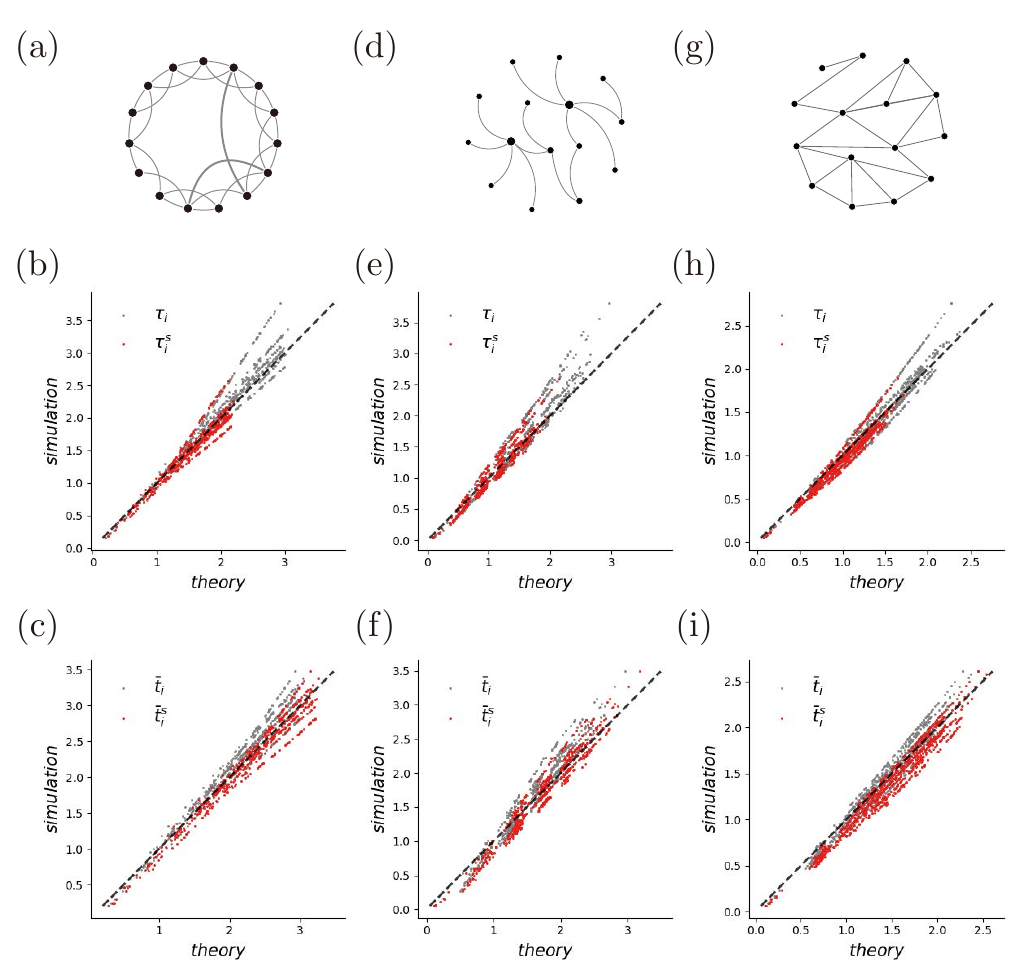}}
    \caption{Autocovariance dynamics of white noise perturbations in three networks: small-world (a-c), scale-free (d-f), and geometric (g-i). Simulations use $N=100$, NSDD parameters ($\alpha=1$, $\beta=1$), with $10$ random network instances each run for $10$ rounds. Single-node perturbations applied randomly. Network schematics: Small-world: initial ring with $2$ nearest neighbors, edge rewiring probability $0.5$; Scale-free: preferential attachment ($1$ edge per new node); Geometric: connection radius threshold $0.2$; (d-f) Time constant comparison: Theoretical vs. simulated values for exponential ($\tau_i$) and sigmoidal ($\tau_i^s$) models across all networks and rounds; (g-i) Relative propagation time comparison: Theoretical vs. simulated values for exponential ($\bar{t}_i$) and sigmoidal ($\bar{t}_i^s$) models at threshold $\bar{\eta} = 1/e$}
\label{noise_auto_other}
\end{figure}

\begin{figure}[!ht]
    \centerline{\includegraphics[scale=0.5]{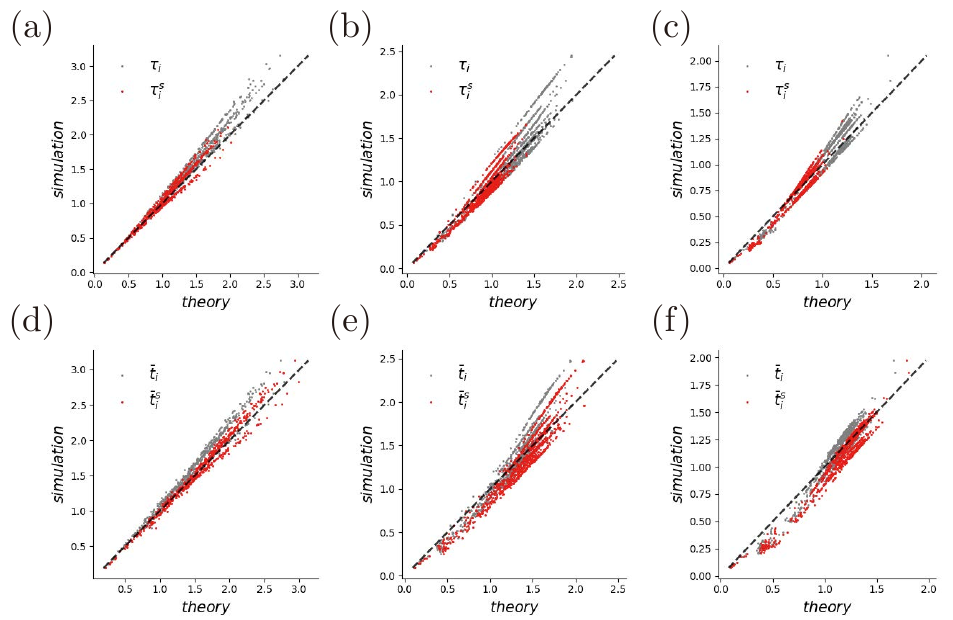}}
    \caption{Auto-covariance dynamics of white noise perturbations in ER random networks with varying edge probabilities $p$. Results compare simulations and theory for: $p = 0.02$ (a, d); $p = 0.05$ (b, e); $p = 0.08$ (c, f). (a-c) Time constants: theoretical vs. simulated for exponential ($\tau_i$) and sigmoidal ($\tau_i^s$) forms; (d-f) Relative propagation times: theoretical vs. simulated for exponential ($\bar{t}_i$) and sigmoidal ($\bar{t}_i^s$) forms. All theoretical predictions show excellent agreement with simulations.}
\label{noise_random}
\end{figure}

\begin{figure}[!ht]
    \centerline{\includegraphics[scale=0.45]{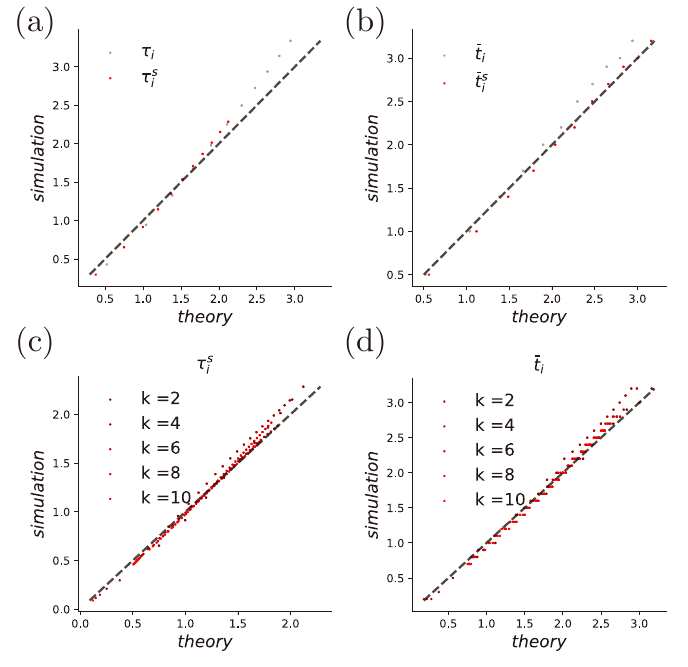}}
    \caption{White noise dynamics in a ring lattice network ($N=100$) with NSDD parameters ($\alpha=1$, $\beta=1$). Perturbations target random nodes. Threshold $\bar{\eta} = 1/e$ applies to (b,d). (a) Time constants ($\tau$): Theoretical vs. simulated for exponential ($\tau_i$) and sigmoidal ($\tau_i^s$) models (degree $k=4$); dotted lines indicate perfect match. (b) Relative propagation times ($\bar{t}$): Theoretical vs. simulated for exponential ($\bar{t}_i$) and sigmoidal ($\bar{t}_i^s$) forms ($k=4$); (c) Time constants ($\tau_i^s$): Theoretical vs. simulated across degrees $k = 2,4,6,8,10$; (d) Relative propagation times ($\bar{t}_i^s$): Theoretical vs. simulated across $k = 2,4,6,8,10$}
\label{noise_auto_regular}
\end{figure}

\begin{figure}[!ht]
    \centerline{\includegraphics[scale=0.4]{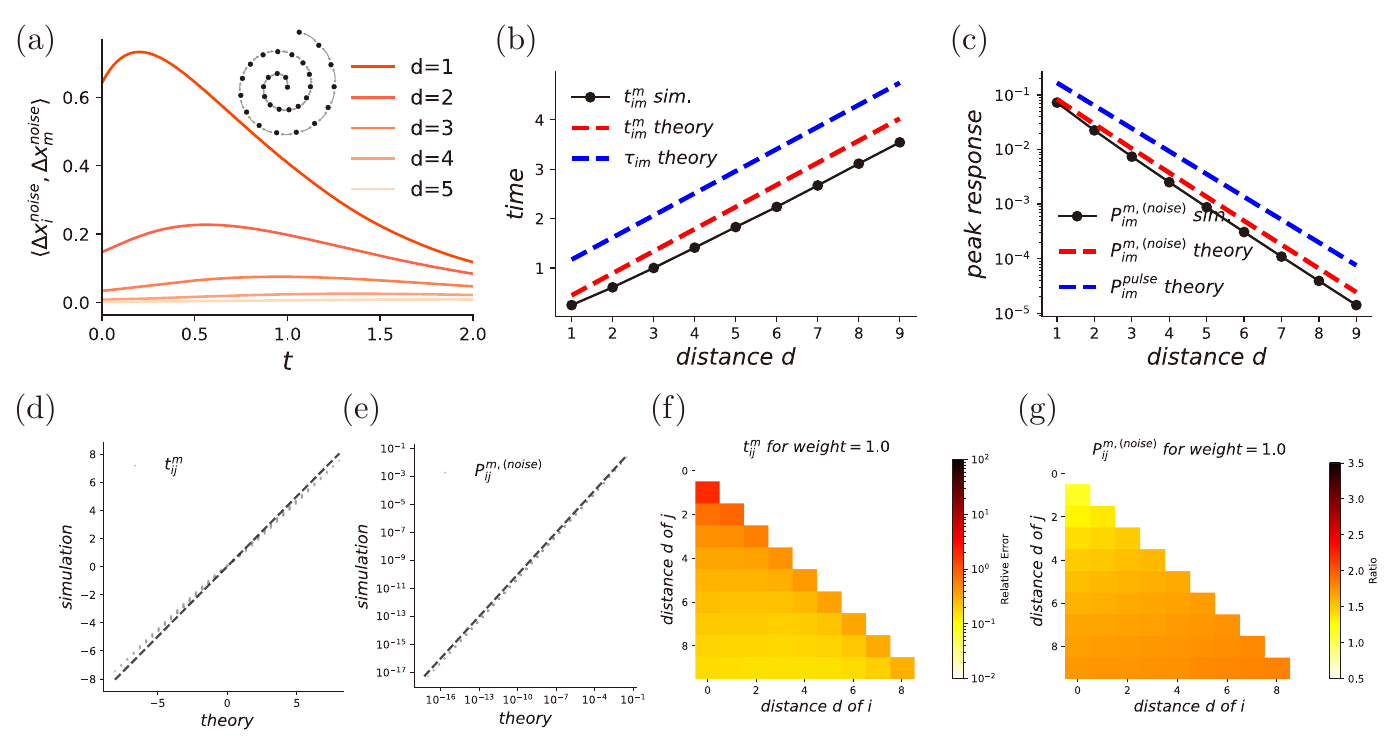}}
    \caption{White noise response dynamics in an undirected chain ($N=100$) with NSDD parameters ($\alpha=1$, $\beta=1$). Input targets the first node. Threshold $\bar{\eta} = 1/e$ applies to panels (c,e,g). (a) Crosscovariance dynamics for different shortest path distance $d$ from input node. (b) Simulated and theoretical peak times between source–target node pairs ($m,i$), denoted as $t_{im}^m$, plotted against distance $d$. The corresponding time constants $\tau_{im}$ under constant input are also shown, exhibiting trends that closely align with those of $t_{im}^m$. (c) Simulated and theoretical peak response ($P_{im}^{m,(noise)}$) vs. $d$. The corresponding peak response $P_{im}^{pulse}$ under pulse input are also shown, exhibiting trends that closely align with those of $P_{im}^{m,(noise)}$. (d) Theoretical vs. simulated peak times (initial/amended). (e) Theoretical vs. simulated peak response (initial/amended). (f) Relative error of $t_{ij}^m$ ($|\text{sim} - \text{thr}|/\text{sim}$) for node pairs $(i,j)$ when receiving input at node $m$: error increases with distance from input. (g) Peak amplitude ratio ($\text{thr } P/\text{ sim } P$) for node pairs: accuracy higher near input. Only lower half of symmetric estimation matrix shown.}
\label{cov_chain}
\end{figure}

\begin{figure}[!ht]
    \centerline{\includegraphics[scale=0.45]{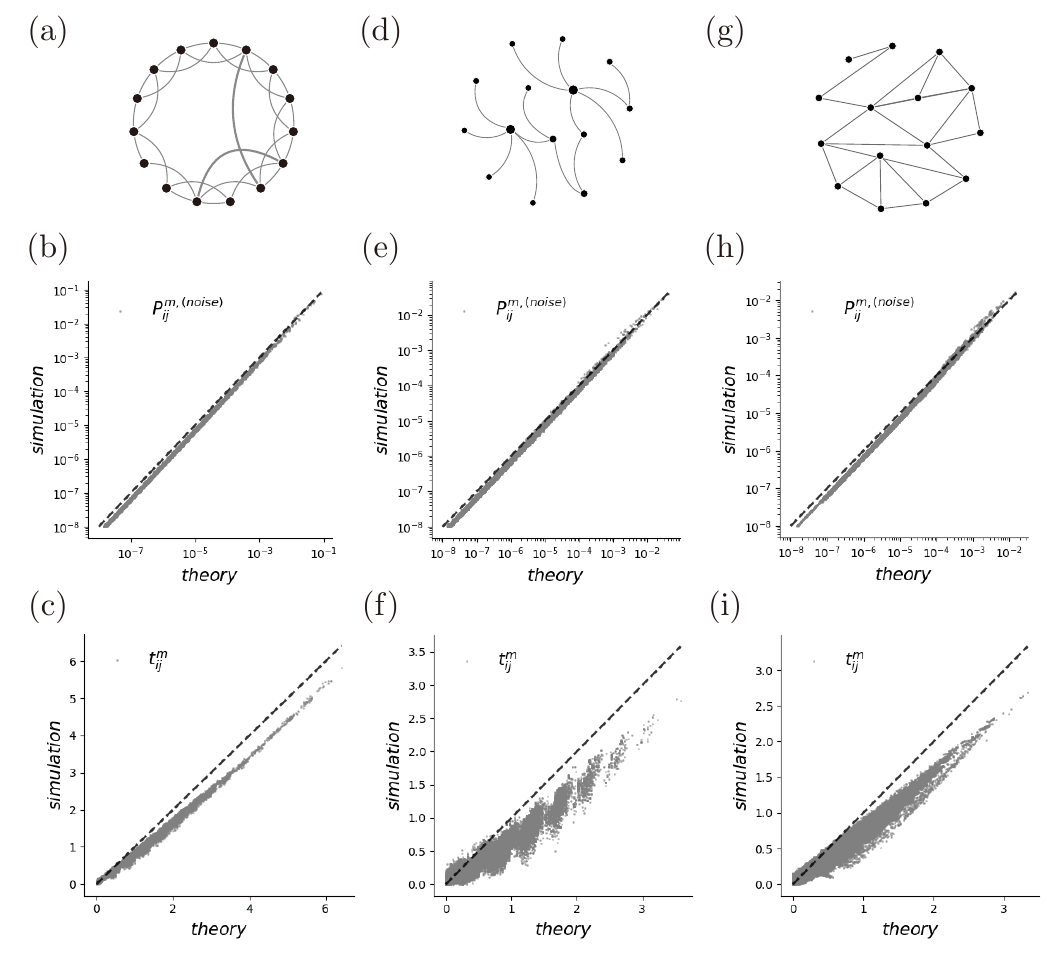}}
    \caption{Crosscovariance dynamics following white noise perturbations in three networks. Simulations use $N=100$, NSDD parameters ($\alpha=1$, $\beta=1$), with $10$ random instances per network type, each run for $10$ rounds. (a-c) Small-world schematic: initial ring with $2$ nearest neighbors, edge rewiring probability $0.5$; (d-f) Scale-free schematic: preferential attachment ($1$ edge per new node); (g-i) Geometric schematic: connection radius threshold $0.2$; (b,e,h) Peak response ($P_{ij}^{m,({noise})}$): Theoretical vs. simulated values. (c,f,i) Peak response time ($t_{ij}^m$): Theoretical vs. simulated values. Theoretical peak response time estimates are systematically higher than simulated values.}
\label{cov_other}
\end{figure}

\begin{figure}[!ht]
    \centerline{\includegraphics[scale=0.34]{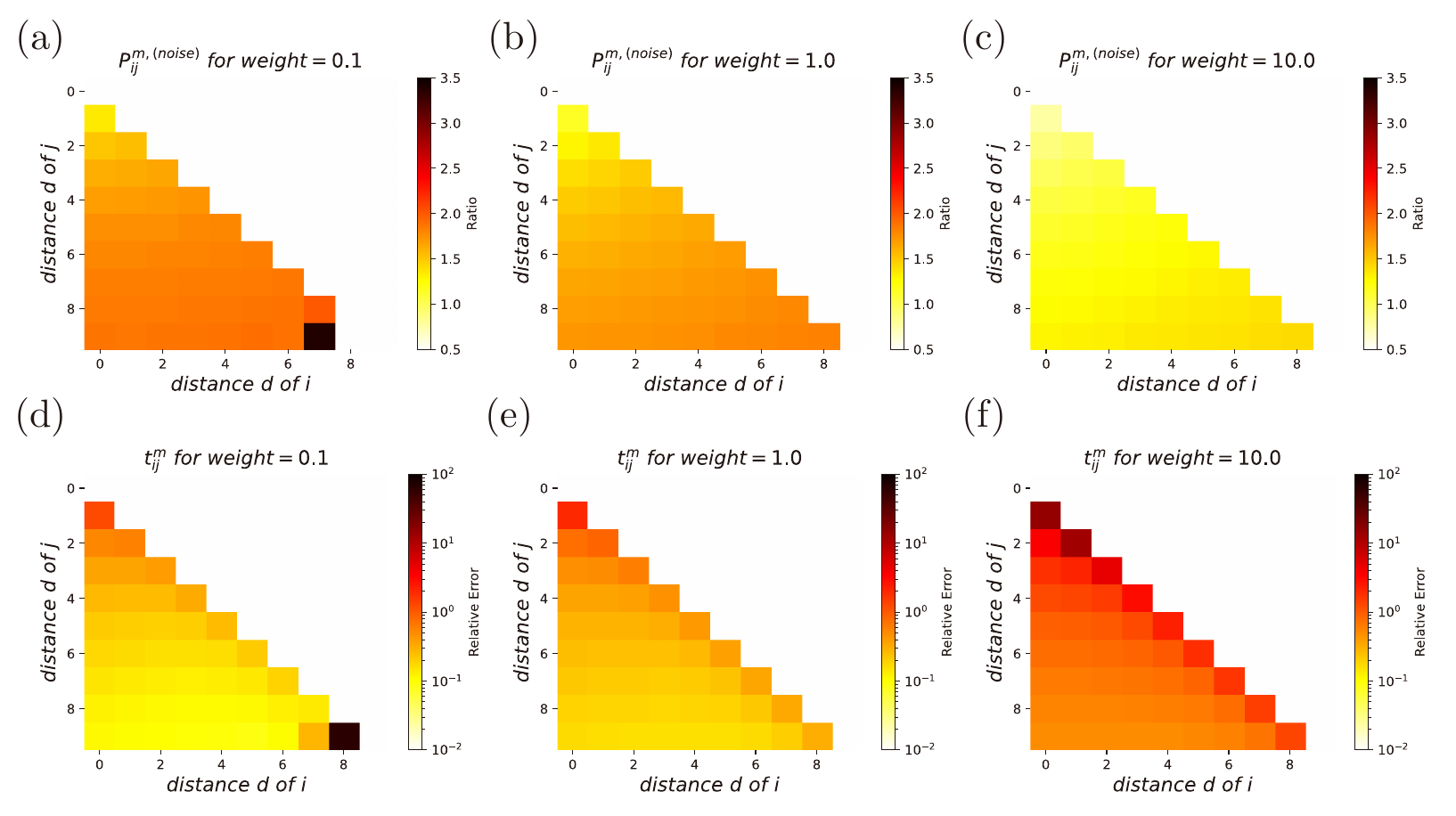}}
    \caption{Crosscovariance dynamics in an undirected chain (NSDD system, $\beta=1$) examining accuracy of peak response metrics for node pairs. Panels (a-c) show peak response ratio ($\text{thr } P/\text{ sim } P$) for interaction weights $\alpha = 0.1, 1.0, 10.0$ respectively, where accuracy improves near the input node across all weights and generally increases with higher $\alpha$ values (consistent color bar). Panels (d-f) display relative error of peak time ($|\text{sim } t - \text{thr } t|/\text{sim } t$) for $\alpha = 0.1, 1.0, 10.0$ respectively, showing improved error far from the input node, with best overall accuracy at $\alpha = 0.1$ (lightest colors in d).}
\label{cov_error}
\end{figure}

\begin{figure}[!ht]
    \centerline{\includegraphics[scale=0.75]{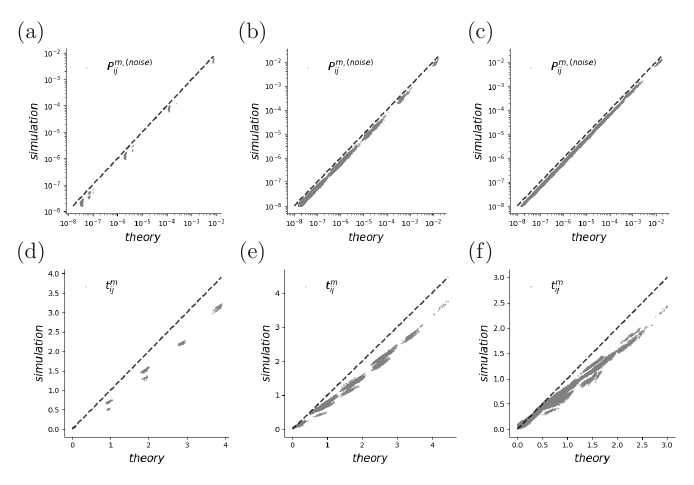}}
    \caption{White noise perturbation in ER random networks ($N=100$) with varying connecting probabilities $p$. Results for $p=0.02$ (a,d), $0.05$ (b,e), and $0.08$ (c,f) show: (a-c) Simulated vs. theoretical peak response; (d-f) Simulated vs. theoretical peak response time (theoretical estimates generally higher than simulations). For each $p$, $10$ random networks were generated and each simulated for $10$ rounds. Node pairs with distance $\geq 10$ are excluded due to significant errors.}
\label{cov_random}
\end{figure}

\begin{figure}[!ht]
    \centerline{\includegraphics[scale=0.6]{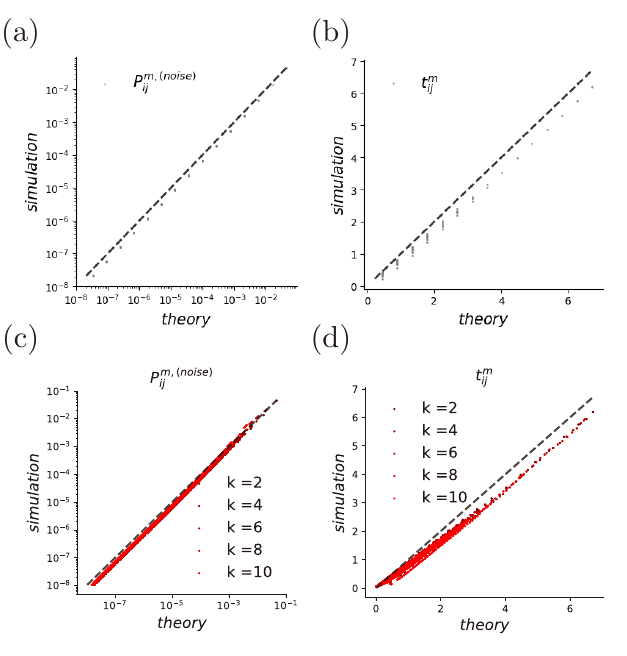}}
    \caption{White noise perturbation in a ring lattice network ($N=100$) with NSDD parameters ($\alpha=1$, $\beta=1$), targeting random nodes. Panels (a) and (b) show node degree $k=4$: (a) Simulated vs. theoretical peak response ($P_{ij}^{m,{(noise)}}$) for all node pairs (dotted lines indicate perfect agreement); (b) Simulated vs. theoretical peak response time ($t_{ij}^{m}$). Panels (c) and (d) show results across degrees $k=2,4,6,8,10$: (c) Peak response comparisons; (d) Peak response time comparisons. All panels analyze covariance between node pairs.}
\label{cov_regular}
\end{figure}

\clearpage

\section{Extension to general formalism for constant input}
\label{supp:negative}
\begin{figure}[!ht]
    \centerline{\includegraphics[scale=0.22]{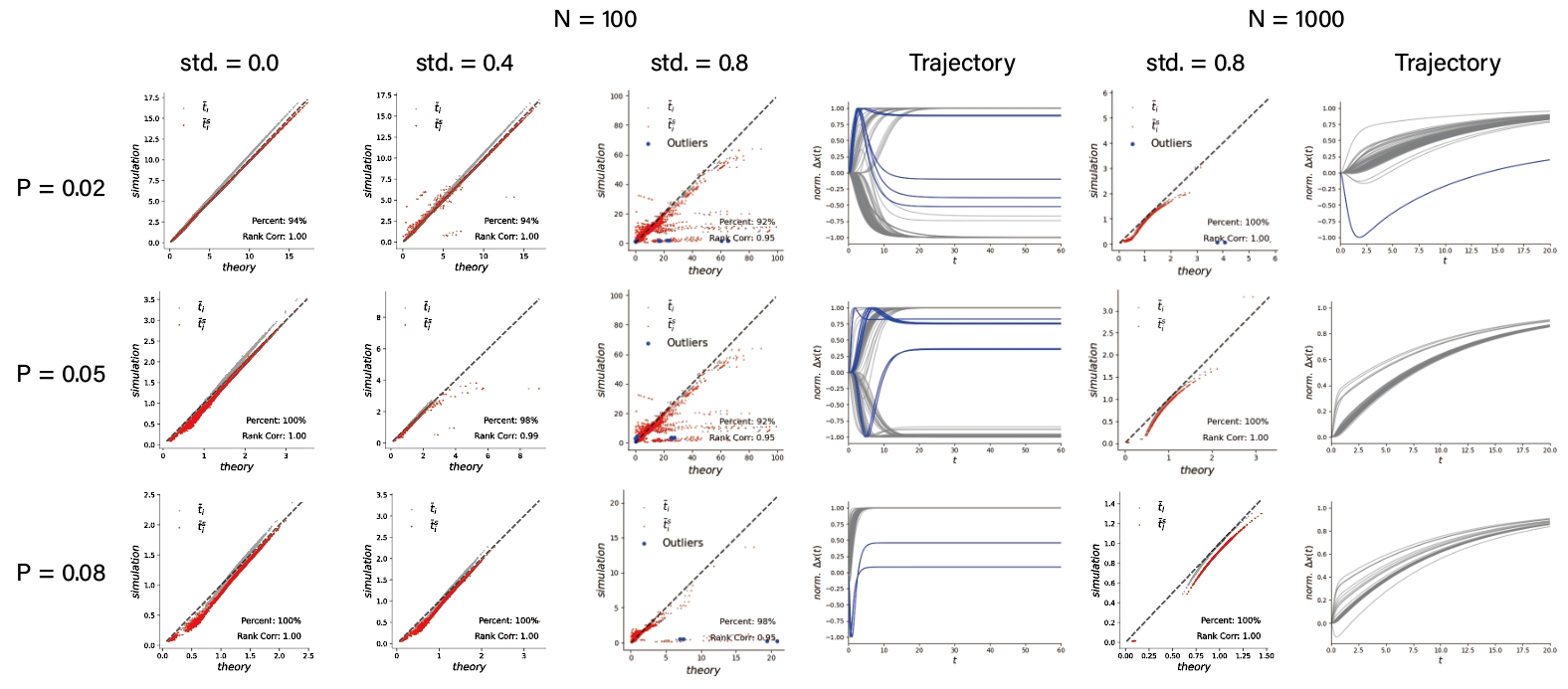}}
    \caption{Constant perturbation spreading in ER random networks with Gaussian weight distributions (mean $=1.0$). Results compare size $N=100$ vs. $1000$ and edge probabilities $p=0.02$ (top row), $0.05$ (middle), $0.08$ (bottom). Theoretical vs. simulated relative times ($\bar{t}_i$, $\bar{t}_i^s$) are shown from $100$ rounds. Theoretical estimates match simulations well in large dense networks (high average degree). For $std.=0.8$, negative weights alter nodal traces in sparse networks, causing non-monotonic traces with prominent peaks and larger theoretical estimates. Negative theoretical values and negligible responses are excluded (inclusion \% in bottom-right).}
\end{figure}

\begin{figure}[!ht]
    \centerline{\includegraphics[scale=0.22]{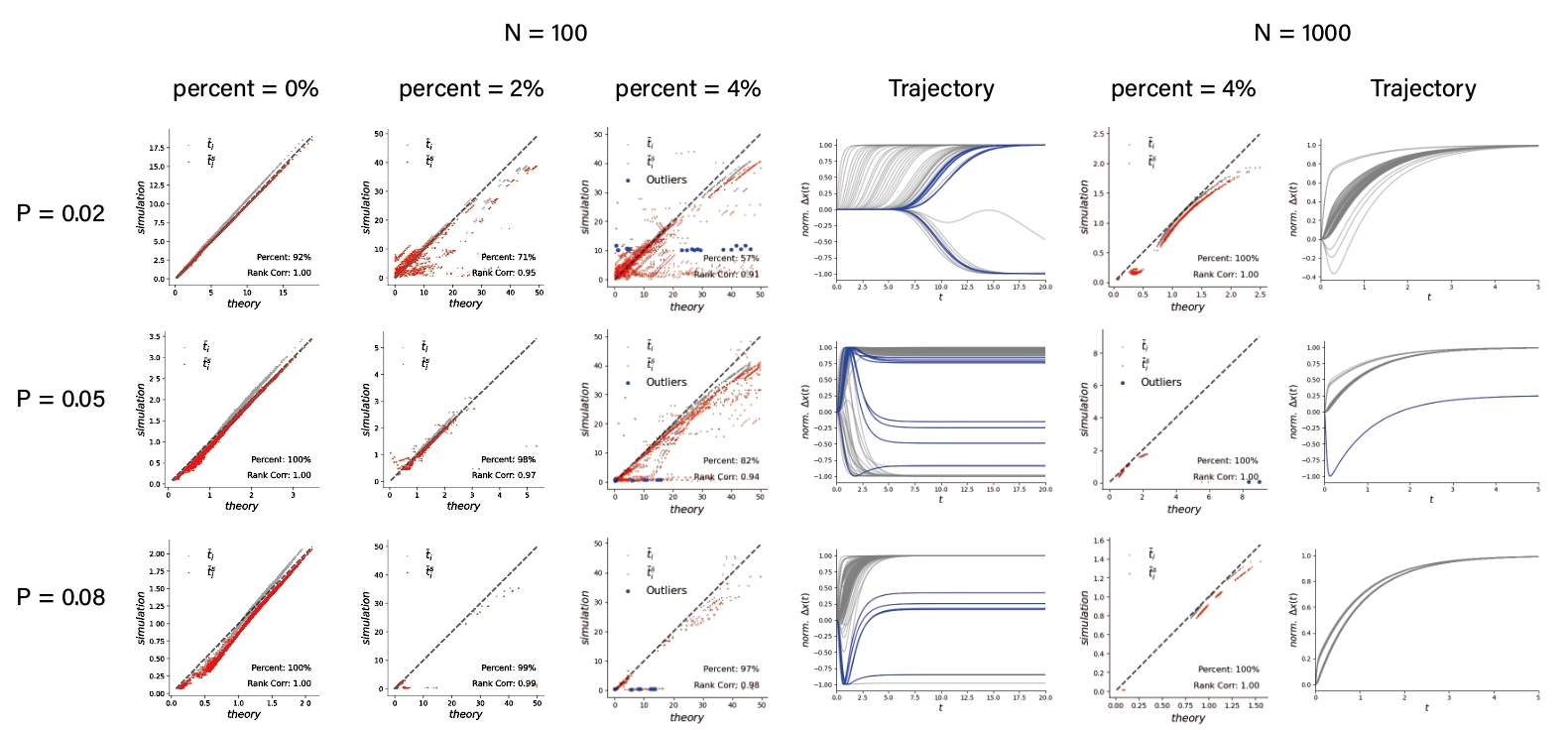}}
    \caption{Constant perturbation spreading in ER random networks with varying negative link percentages. Results compare sizes $N=100$ vs. $1000$ and negative links: $0\%$ (left column), $2\%$ (middle), $4\%$ (right). Theoretical vs. simulated relative times ($\bar{t}_i$, $\bar{t}_i^s$) from $100$ rounds show that higher negative link percentages significantly alter nodal traces, particularly in low average degree networks.}
\end{figure}

\begin{figure}[!ht]
    \centerline{\includegraphics[scale=0.23]{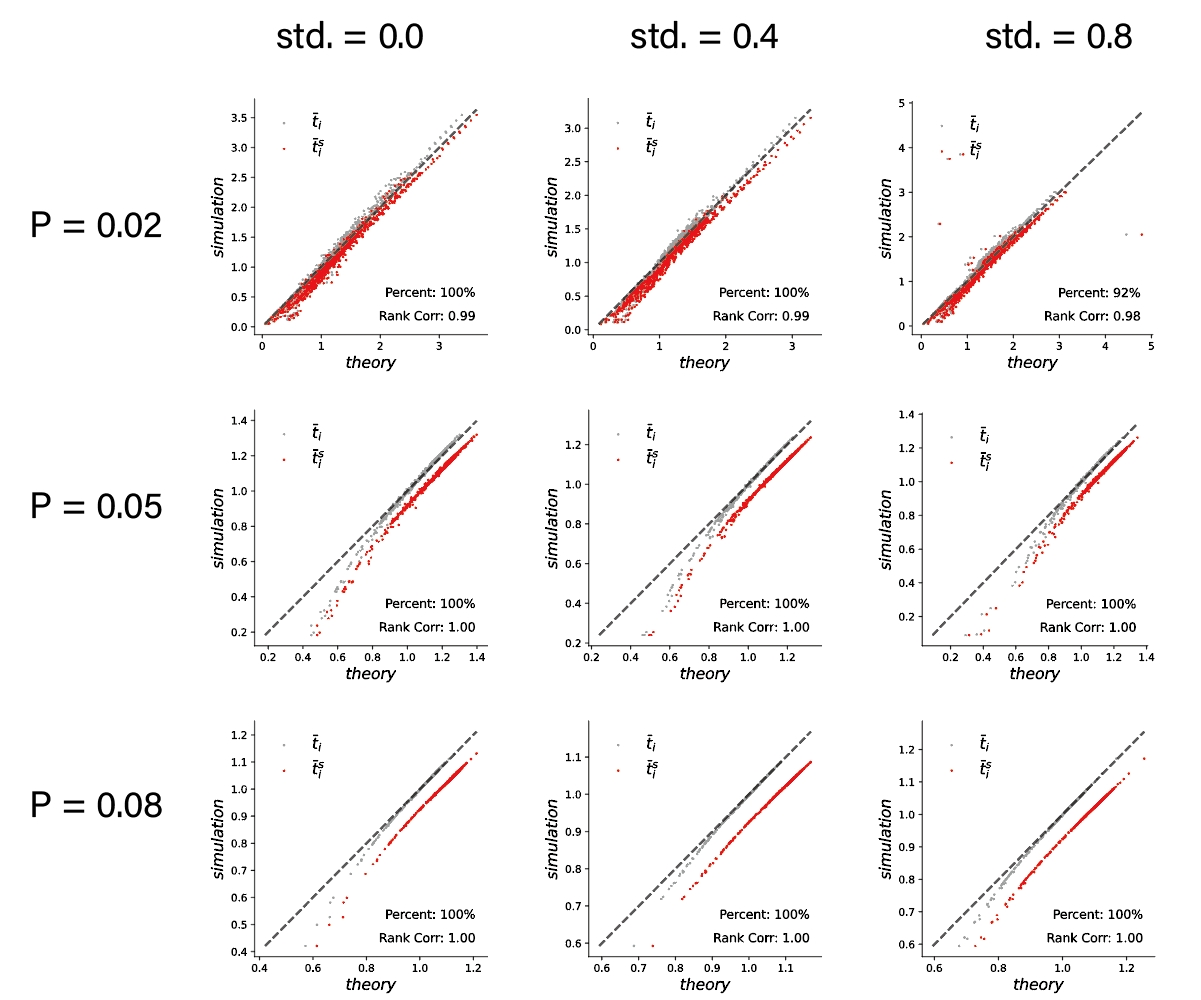}}
    \caption{Constant perturbation spreading in ER random networks ($N=100$) with Gaussian weights (mean $=10.0$, varying $std.$). Results for edge probabilities $p=0.02$ (top row), $0.05$ (middle), $0.08$ (bottom). Theoretical vs. simulated relative times ($\bar{t}_i$, $\bar{t}_i^s$) from $100$ rounds show excellent agreement, particularly with exponential estimates $\bar{t}_i$.}
\end{figure}

\begin{figure}[!ht]
    \centerline{\includegraphics[scale=0.2]{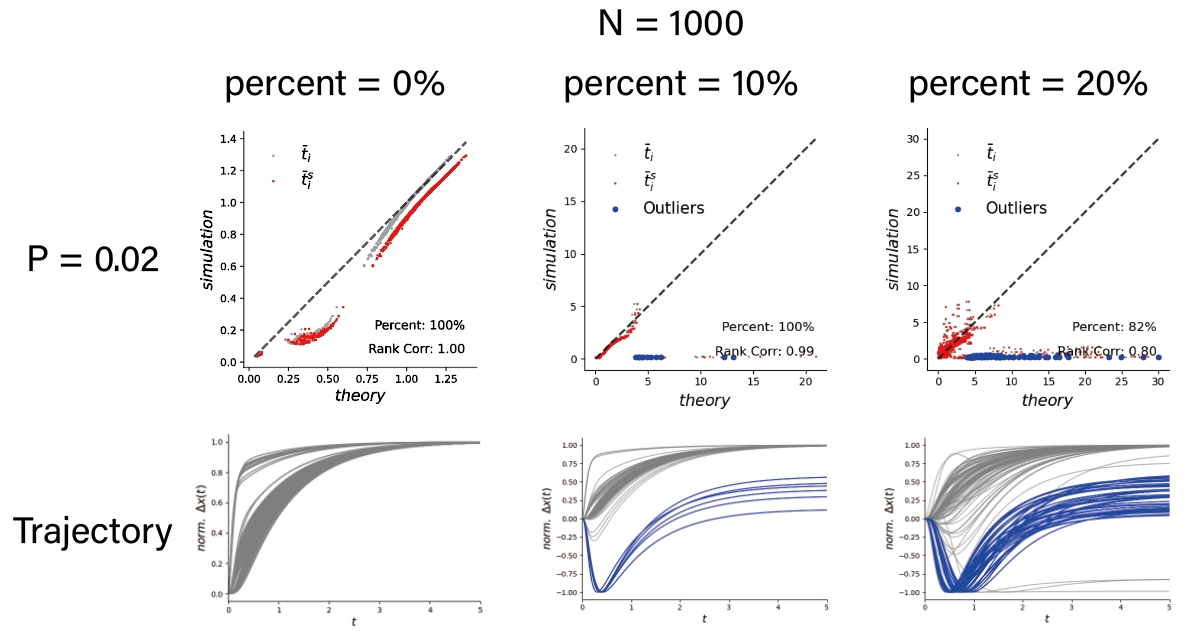}}
    \caption{Constant perturbation propagation in ER random networks with connection probability $0.02$ and varying proportions of negative links, for a network size of $1000$. Simulated and theoretical results are compared over $100$ realizations using extended relative time based on both exponential ($\bar{t}_i$) and sigmoidal ($\bar{t}_i^s$) measures. Theoretical estimates maintain high accuracy and strong rank correlations with simulation results for up to $20\%$ negative links. Beyond this threshold, increasing inhibitory links renders the system unstable.}
    \label{fig:negative:large}
\end{figure}

\begin{figure}[!ht]
    \centerline{\includegraphics[scale=0.25]{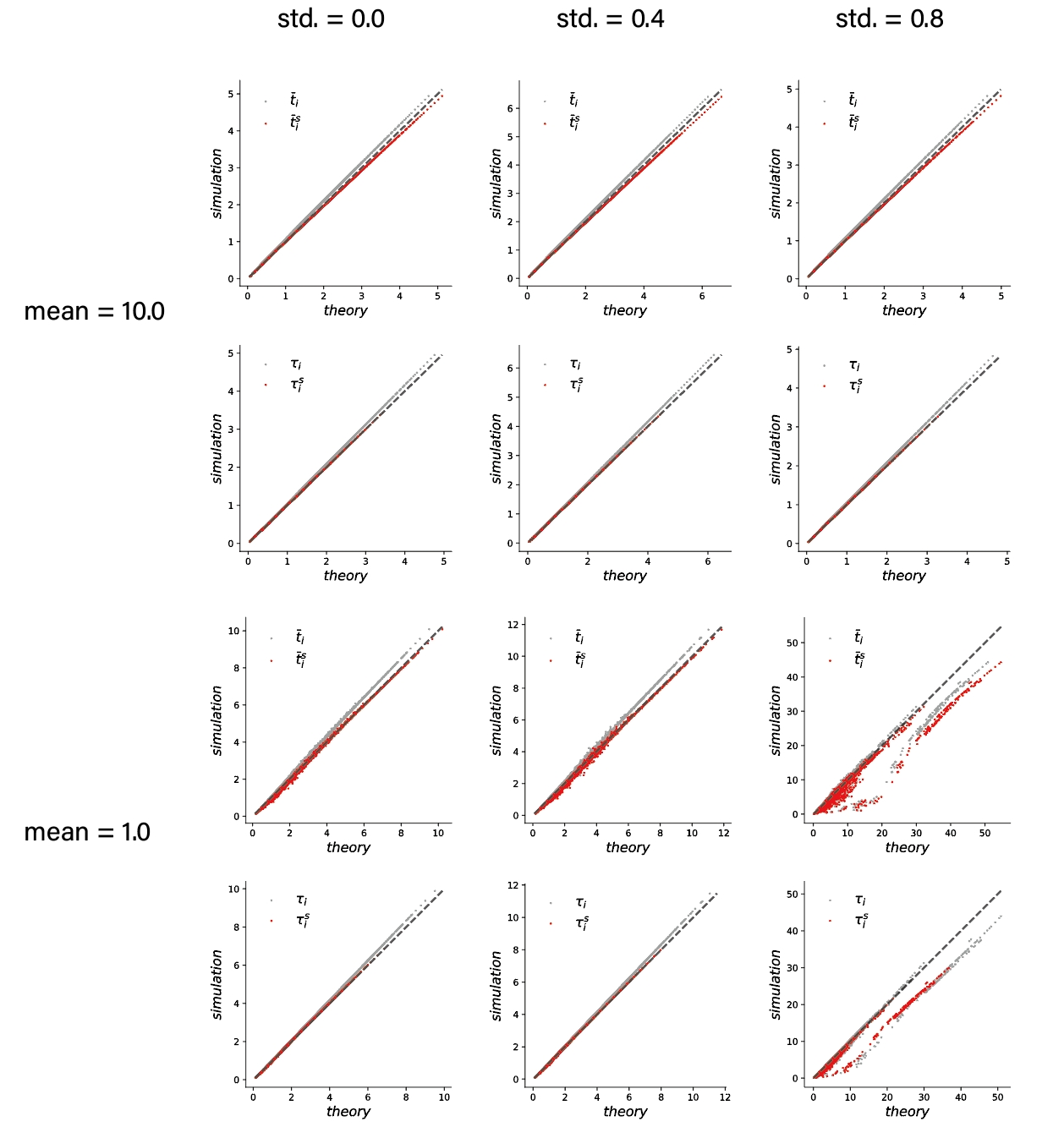}}
    \caption{Constant perturbation propagation in ER random networks with connection probability $0.02$ and network size $100$. Simulated and theoretical results for relative time are compared using both exponential ($\bar{t}_i$) and sigmoidal ($\bar{t}_i^s$) functions over $100$ realizations. The self-decay parameter is drawn from Gaussian distributions with means of $1.0$ and $10.0$, and varying standard deviations. When the mean is low, larger standard deviations reduce estimation accuracy due to the combined effects of self-decay and network interactions, causing the response traces to deviate from an exponential form, especially during the early, fast-response phase.}
\end{figure}

\begin{figure}[!ht]
    \centerline{\includegraphics[scale=0.28]{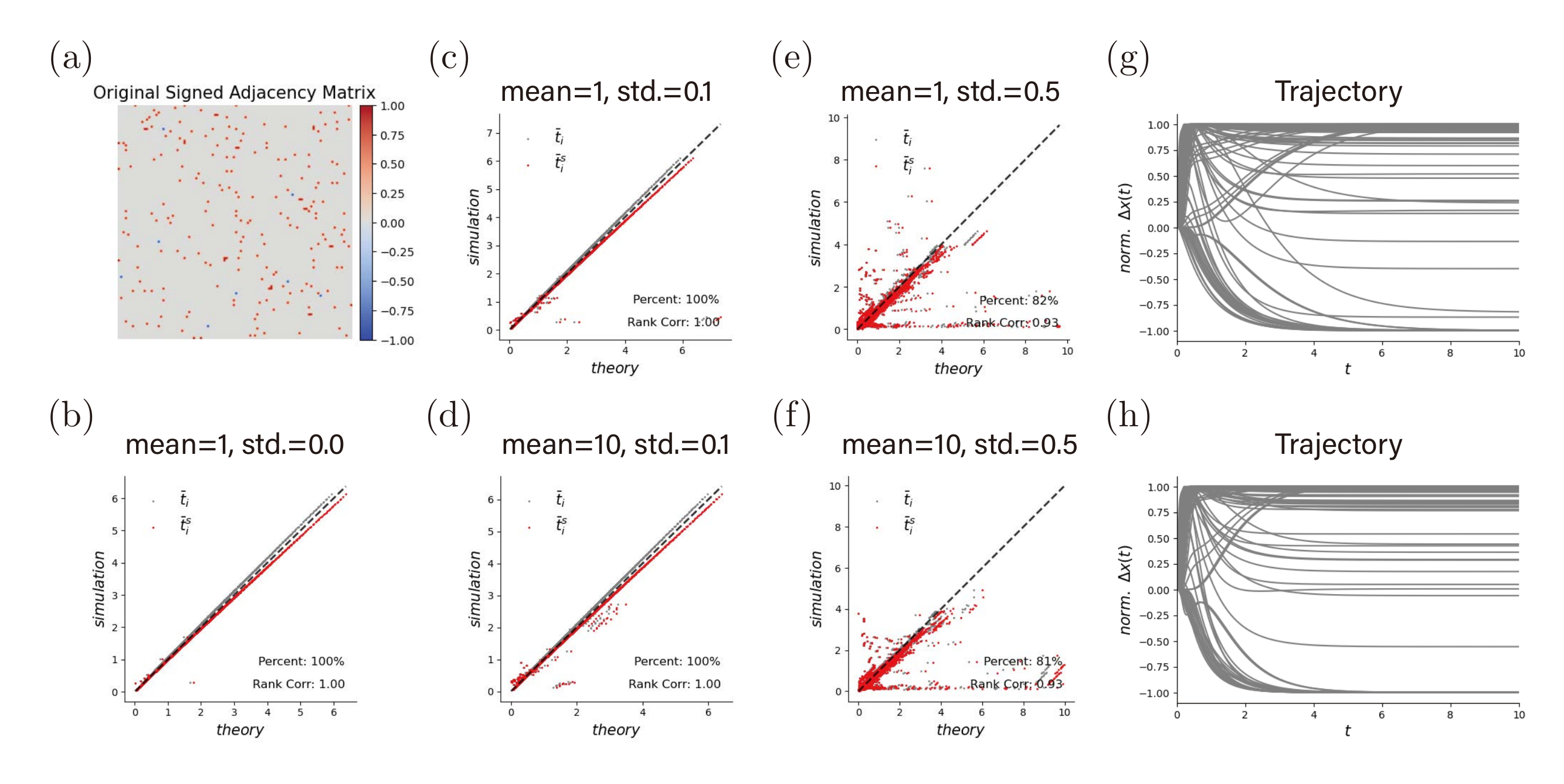}}
    \caption{Constant perturbation propagation in ER random networks with connection probability $0.02$, a fixed negative link ratio of $4\%$, and network size $100$. Simulated and theoretical results are compared for extended relative time using both exponential ($\bar{t}_i$) and sigmoidal ($\bar{t}_i^s$) functions over $100$ realizations. The self-decay parameter is set to $-11$. Increasing the standard deviation reduces accuracy, and excessively large values can lead to system instability.}
\end{figure}

\begin{figure}[!ht]
    \centerline{\includegraphics[scale=0.2]{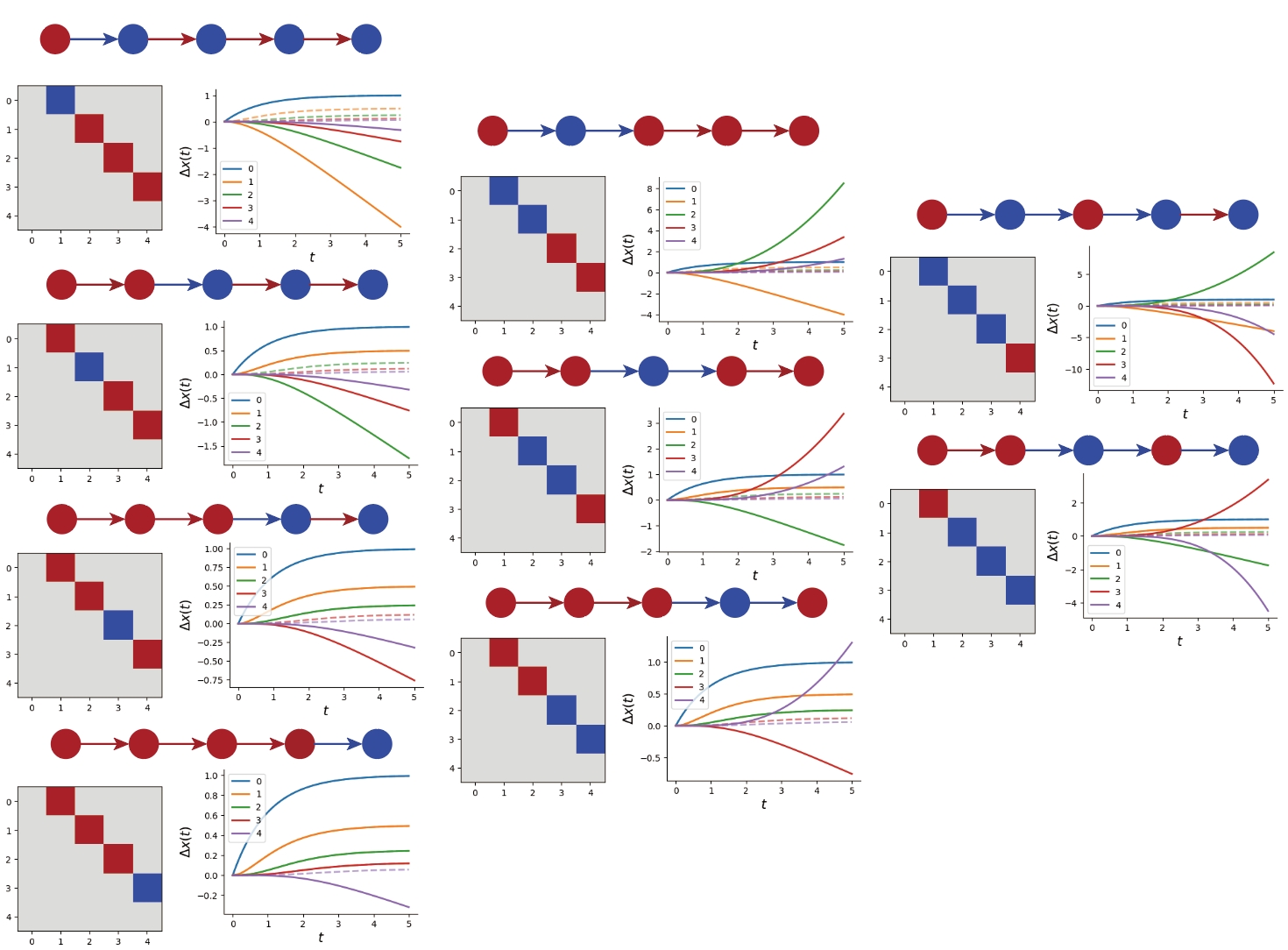}}
    \caption{Constant perturbation propagation in a directed chain with substituted negative links, indicated in blue. Node color reflects the sign of the response series, with blue denoting a sign reversal.}
\label{supp:chain:negative}
\end{figure}

\clearpage

\section{Expansion for homogeneous and heterogeneous in-degree}
\label{supp:homo}

\begin{figure}[!ht]
    \centerline{\includegraphics[scale=0.5]{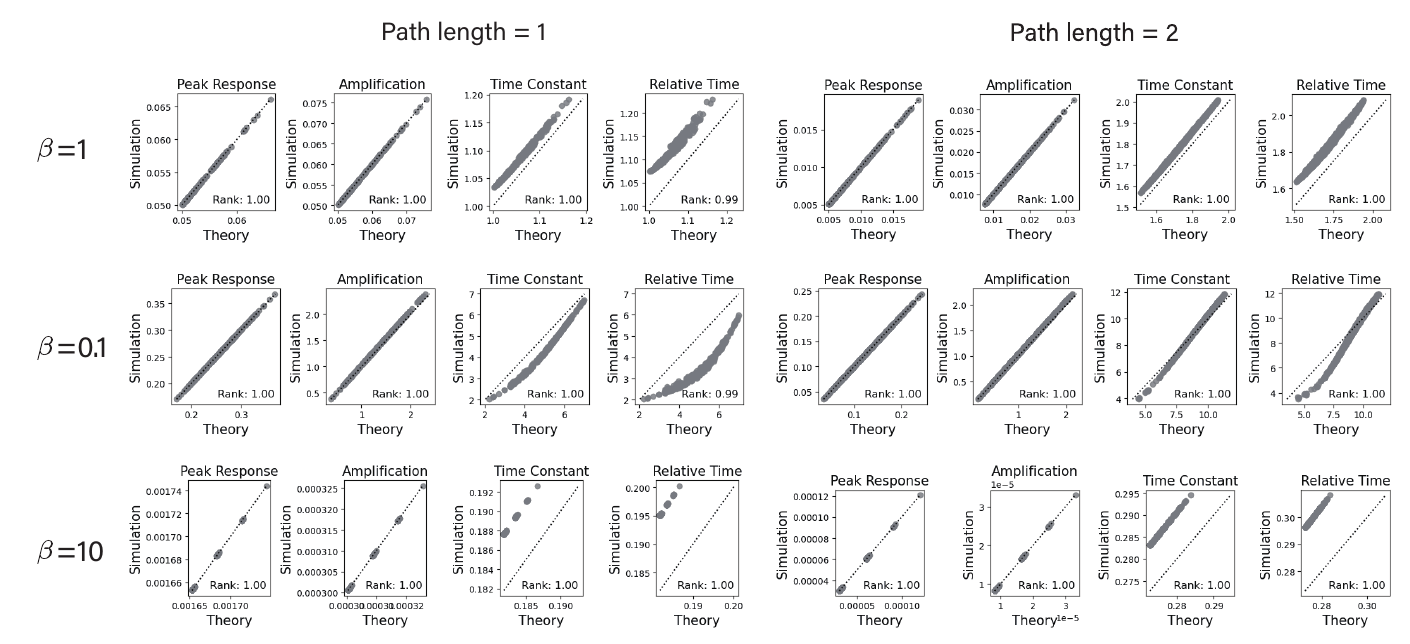}}
    \caption{Expansion under homogeneous in-degree for constant input in sparse ER random networks with varying decay rates. Theoretical expansions at order $15$ are evaluated on networks with $N = 100$ nodes and connection probability $0.05$, resulting in an average unweighted degree of $5$ and a weighted degree of $1$ (with interaction weight $\alpha = 0.2$). The response time threshold is set to $\eta = 1 - 1/e$. Although temporal estimations show a fixed numerical bias, their rank correlations remain high.}
\label{supp:homo:sparse}
\end{figure}

\begin{figure}[!ht]
    \centerline{\includegraphics[scale=0.5]{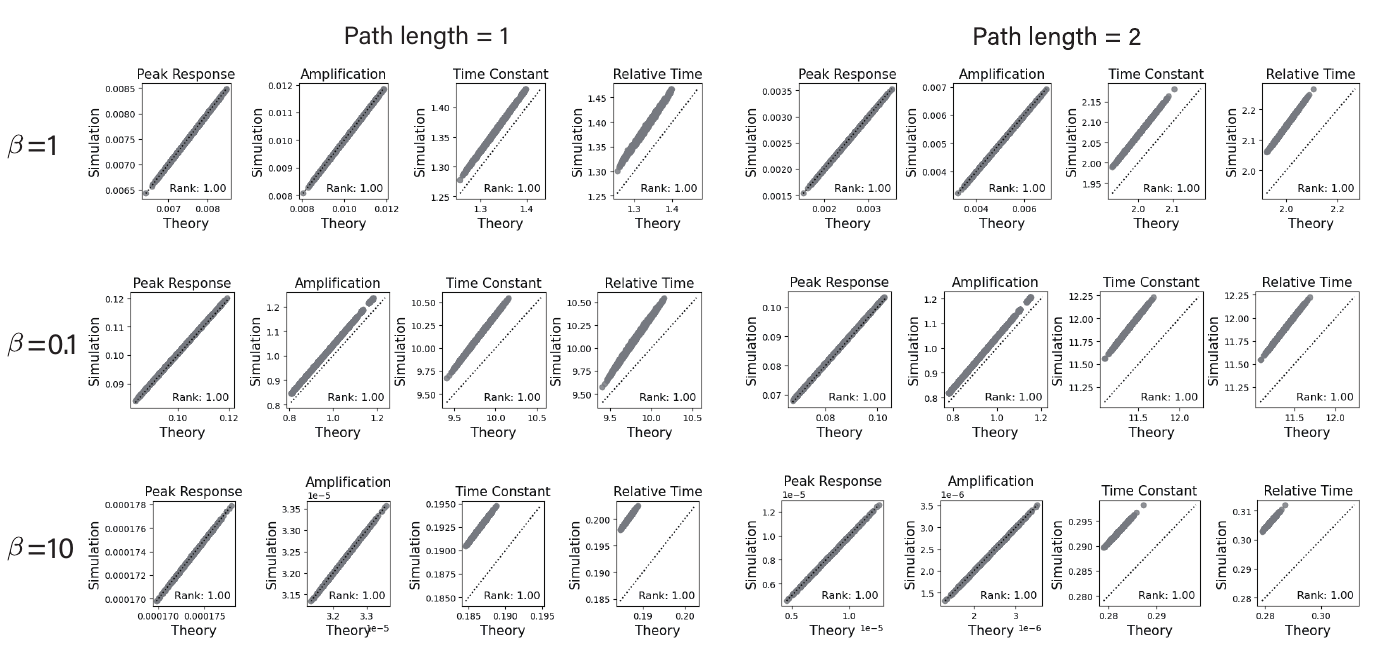}}
    \caption{Expansion under homogeneous in-degree for constant input in dense ER random networks with varying decay rates. Theoretical expansions at order $15$ are evaluated on networks with $N = 100$ nodes and connection probability $0.5$, yielding an average unweighted degree of 50 and a weighted degree of 1 (with interaction weight $\alpha = 0.02$). The response time threshold is set to $\eta = 1 - 1/e$. Although temporal estimations exhibit a fixed numerical bias, their rank correlations remain high and improve with increasing path length.}
\label{supp:homo:dense}
\end{figure}

\begin{figure}[!ht]
    \centerline{\includegraphics[scale=0.8]{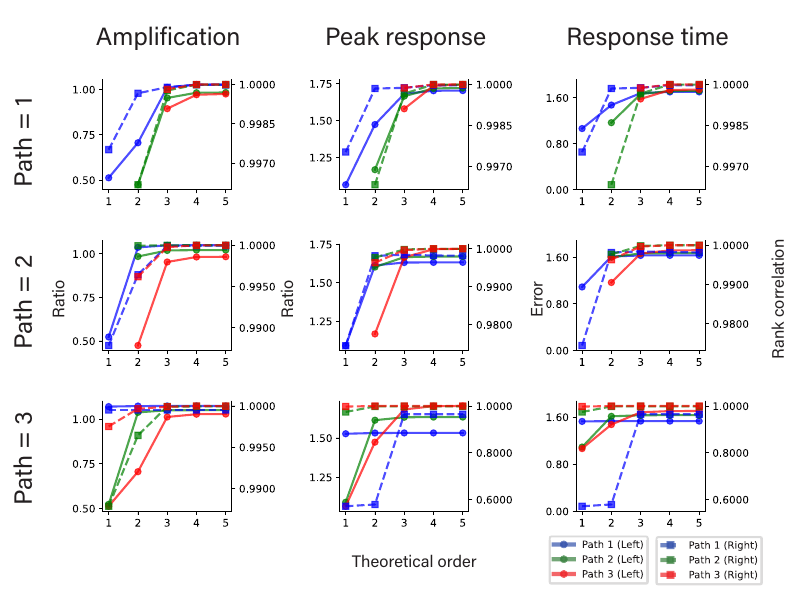}}
    \caption{Expansion accuracy for crosscovariance metrics $C_{ij}^m(\tau)$ in homogeneous in-degree ER networks with $i \neq j \neq m$. Results are averaged over $100$ realizations of networks with $N = 100$ nodes and connection probability $0.08$. The shortest path length from source node $m$ to target node $i$ is denoted as Path, while color coding—blue (Path = $1$), green (Path = $2$), red (Path = $3$)—represents the path length from $m$ to $j$. The x-axis indicates the theoretical expansion order $p$, corresponding to the power $A^p$ used in the series. For amplification, higher expansion orders and shorter paths lead to more accurate estimates and stronger rank correlations. For peak response and response time, a fixed estimation bias is present, but rank correlations still improve with increasing expansion order. Overall, deeper expansions enhance correlation performance despite persistent numerical bias.}
\label{}
\end{figure}

\begin{figure}[!ht]
    \centerline{\includegraphics[scale=0.5]{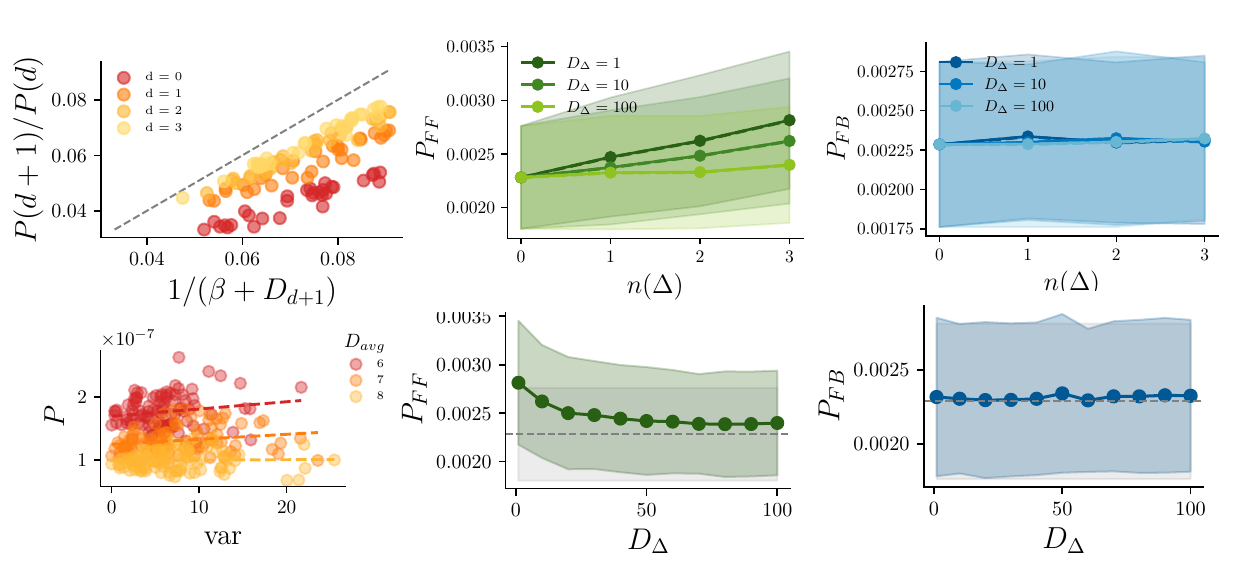}}
    \caption{Propagation of white-noise input along a single path under heterogeneous degree configurations and triangular motifs. Left (upper): A white-noise input applied to the first node $m$ produces propagation laws analogous to the constant-input case, with crosscovariance between source and target pairs scaling as ${P(d+1)}/{P(d)} \to {A_{d \to d+1}}/{(\beta + D_{d+1})}$, where $d$ is the shortest path length. Left (lower): Increasing the mean degree decreases $P$, whereas increasing degree variance enhances $P$. {Middle/Right:} Feedforward (FF) and feedback (FB) triangles amplify $P_{\mathrm{FF/FB}}$ proportionally to $n(\Delta)$, the number of triangular motifs, but with distinct slopes. Results are shown as averages over $1000$ realizations, with shading indicating $\pm 1$ SD. Large triangle-node degrees $D_\Delta$ suppress motif effects, recovering single-path dynamics ($n(\Delta)=3$). Parameters: nodal decay rate $\beta = 10$; total path length $D = 5$; input strength $I_0 = 100$; unit chain weight $A_{d \to d+1} = 1$ for all $d$.}
\label{}
\end{figure}

\begin{figure}[!ht]
    \centerline{\includegraphics[scale=0.5]{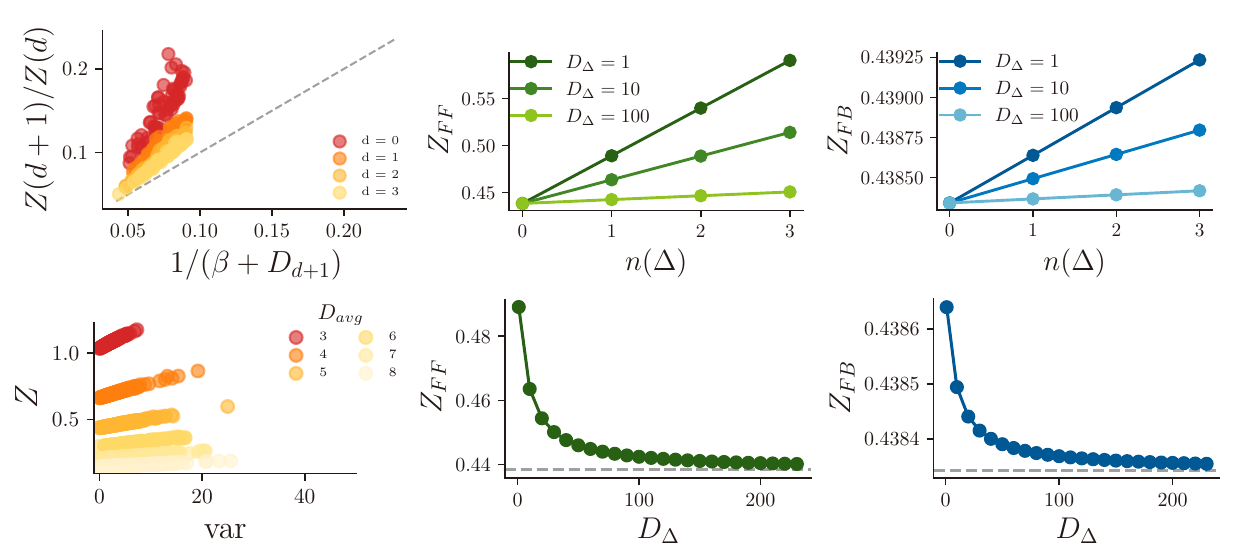}}
    \caption{Propagation of constant input along a single path under heterogeneous degree configurations and triangular motifs. Left (upper): A constant input applied to the first node $m$ produces propagation laws, with amplification between source and target pairs scaling as ${Z(d+1)}/{Z(d)} \to {A_{d \to d+1}}/{(\beta + D_{d+1})}$, where $d$ is the shortest path length. Left (lower): Increasing the mean degree decreases $Z$, whereas increasing degree variance enhances $Z$. {Middle/Right:} Feedforward (FF) and feedback (FB) triangles amplify $Z_{\mathrm{FF/FB}}$ proportionally to $n(\Delta)$, the number of triangular motifs, but with distinct slopes. Large triangle-node degrees $D_\Delta$ suppress motif effects, recovering single-path dynamics ($n(\Delta)=1$). Parameters: nodal decay rate $\beta = 10$; total path length $D = 5$; input strength $I_0 = 10^6$; unit chain weight $A_{d \to d+1} = 1$ for all $d$.}
\label{}
\end{figure}

\clearpage

\section{Numerical accuracy test for self-response}
\label{supp:self}
\begin{figure}[!h]
\centerline{\includegraphics[scale=0.5]{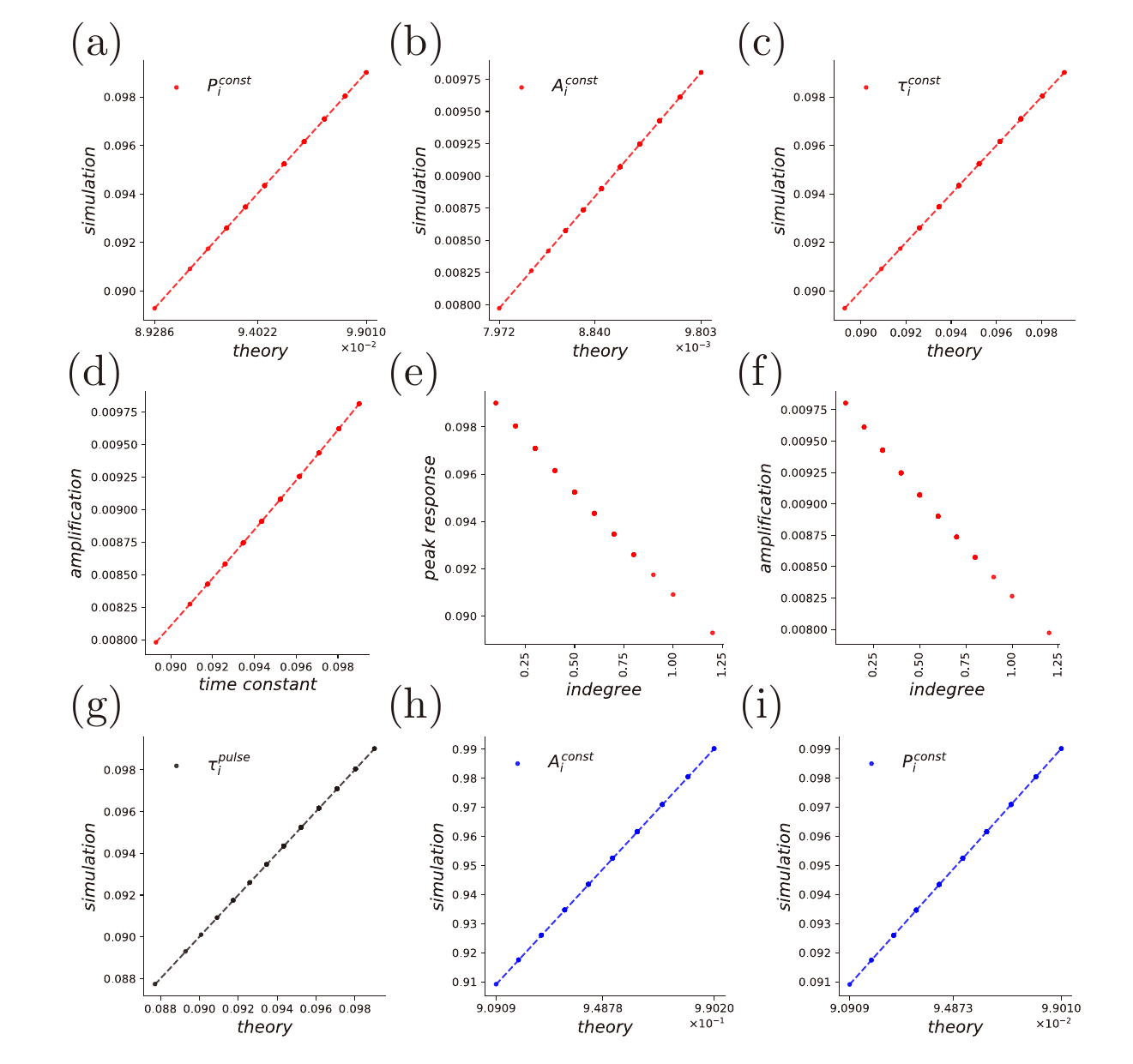}}
    \caption{Response metrics for target nodes receiving constant, pulse, and square inputs. Simulations are conducted on an ER network of size $100$, where each node receives input individually to observe its response. To satisfy expansion conditions, the interaction weight is set to $0.1$ and the self-decay parameter to $10$. (a–c) Simulated vs. theoretical results for peak response (a), amplification (b), and time constant (c) under constant input. (d) Peak response and estimated time constant share the same expression, yielding nearly identical simulated results. (e–f) Both peak response (e) and amplification (f) are inversely proportional to the indegree of the perturbed node. (g) Simulated vs. estimated time constant under pulse input. (h–i) Simulated vs. theoretical results for amplification (h) and peak response (i) under square input. Input duration is set to $10 ms$.}
\label{constant_mm}
\end{figure}

\begin{figure}[!h]
\centerline{\includegraphics[scale=0.5]{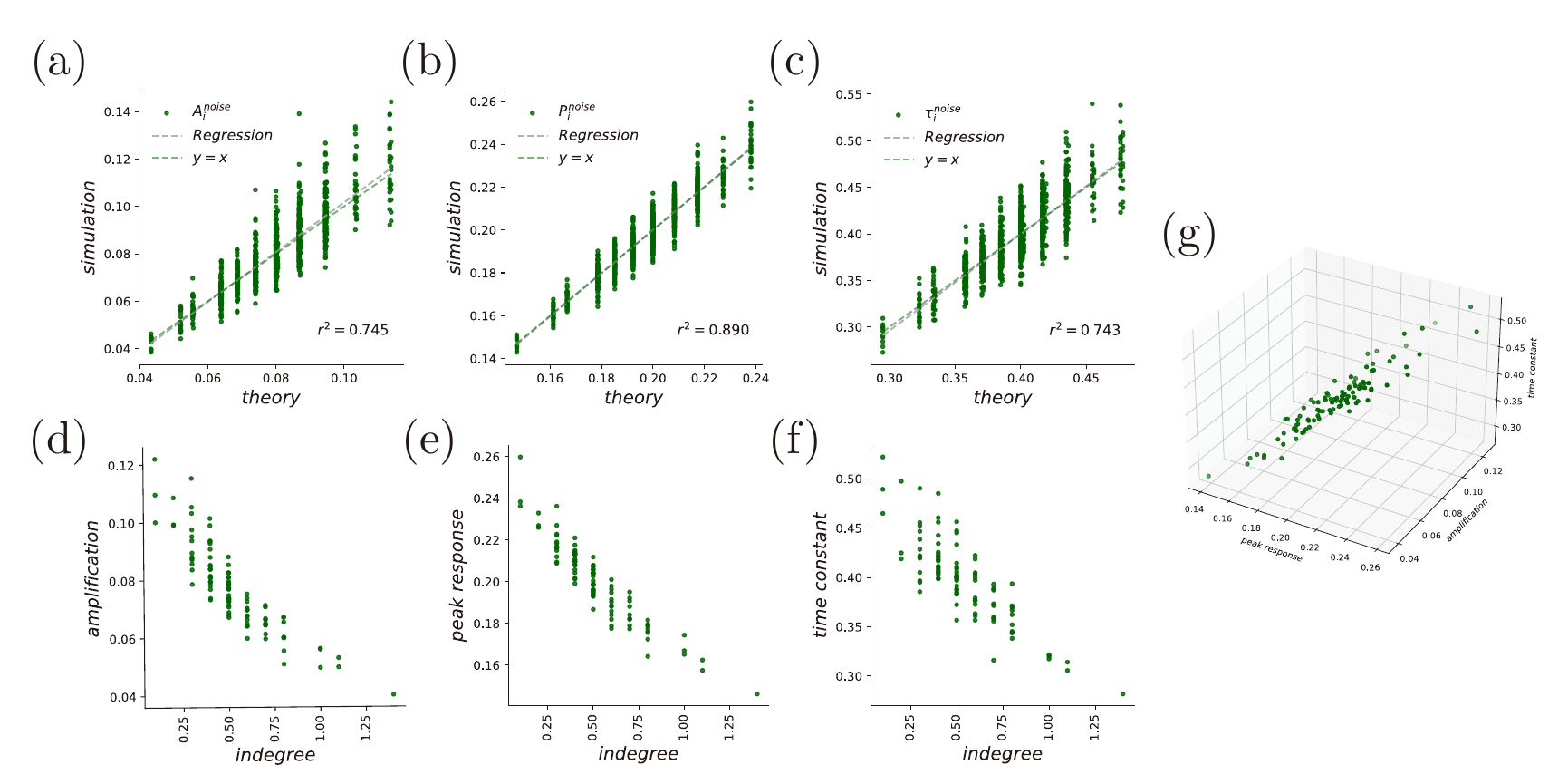}}
    \caption{
Numerical validation of theoretical estimations for self-response under noise input. Three response metrics are evaluated: amplification (a), peak response (b), and time constant (c), with theoretical predictions based on first-order truncation for simplicity. Simulations are conducted over $100$ realizations of ER random networks. All metrics show inverse dependence on nodal in-degree (d–f) and exhibit consistent trends, approximately lying along a one-dimensional manifold in the three-dimensional metric space (g).
}
\label{noise_mm}
\end{figure}